\documentclass[hidelinks, 11pt]{article}


\usepackage{array}
\newcolumntype{P}[1]{>{\centering\arraybackslash}p{#1}}


\usepackage{booktabs} 
\newcommand{\tabitem}{\llap{\textbullet}~}
\usepackage{longtable}

\usepackage{hyperref}

\usepackage{color,soul}
\usepackage{xcolor}
\usepackage{framed}
\colorlet{shadecolor}{yellow!20}

\usepackage{tikz}
\usepackage{verbatim}

\usepackage{algorithm} 
\usepackage{algpseudocode} 

%
\usepackage{amssymb}
\usepackage{import}
\usepackage[pagewise]{lineno}

\usepackage{mathtools,bbm}
\usepackage{graphicx,url}
\usepackage{graphics}
\usepackage{multirow}
\usepackage{tabularx}
\usepackage{threeparttable}
\usepackage{siunitx}
\usepackage[justification=centering, labelsep=period]{caption}
\usepackage[noabbrev, capitalise]{cleveref}
\usepackage{makecell}
\usepackage{cases} 

\DeclareMathOperator*{\E}{\mathbb{E}}


\makeatletter 
\def\underbar#1{\underline{\sbox\tw@{$#1$}\dp\tw@\z@\box\tw@}} 
\makeatother 

\usepackage{setspace,graphicx,epstopdf,amsmath,amsfonts,amssymb,amsthm,versionPO,bm} 
\usepackage{marginnote,datetime,enumitem,rotating,fancyvrb,scalerel}
\usepackage{hyperref}


\excludeversion{notes}    
\includeversion{links}          

\iflinks{}{\hypersetup{draft=true}}

\ifnotes{%
\usepackage[margin=1in,paperwidth=10in,right=2.5in]{geometry}%
\usepackage[textwidth=1.4in,shadow,colorinlistoftodos]{todonotes}%
}{%
\usepackage[margin=1in]{geometry}%
\usepackage[disable]{todonotes}%
}



\makeatletter\let\chapter\@undefined\makeatother 




\setcounter{tocdepth}{2}


\usepackage{indentfirst} 
\usepackage{jfe}          

\newtheorem{theorem}{Theorem}

\newtheorem{proposition}{Proposition}

\newtheorem{lemma}{Lemma}

\newtheorem{definition}{Definition}

\usepackage[
  backend=biber,natbib,
  sorting=nyt, 
   style=apa,
   uniquename=false 
]{biblatex}
\addbibresource{mybib.bib}

\DefineBibliographyStrings{german}{%
  andothers = {et al.},
}


\usepackage{float}
\usepackage[caption = false]{subfig}
\usepackage{blindtext}

\newcommand*\linenomathpatch[1]{%
  \cspreto{#1}{\linenomath}%
  \cspreto{#1*}{\linenomath}%
  \csappto{end#1}{\endlinenomath}%
  \csappto{end#1*}{\endlinenomath}%
}

\linenomathpatch{equation}
\linenomathpatch{gather}
\linenomathpatch{multline}
\linenomathpatch{align}
\linenomathpatch{alignat}
\linenomathpatch{flalign}


\RequirePackage{color}\definecolor{RED}{rgb}{1,0,0}\definecolor{BLUE}{rgb}{0,0,1} 
\definecolor{burgundy}{rgb}{0.7, 0.11, 0.11}
\RequirePackage[stable]{footmisc} 
\providecommand{\DIFaddtex}[1]{{\protect\color{black} #1}} 
\providecommand{\DIFdeltex}[1]{} 
\providecommand{\DIFaddbegin}{} 
\providecommand{\DIFaddend}{} 
\providecommand{\DIFdelbegin}{} 
\providecommand{\DIFdelend}{} 
\providecommand{\DIFaddFL}[1]{\DIFadd{#1}} 
\providecommand{\DIFdelFL}[1]{} 
\providecommand{\DIFaddbeginFL}{} 
\providecommand{\DIFaddendFL}{} 
\providecommand{\DIFdelbeginFL}{} 
\providecommand{\DIFdelendFL}{} 
\providecommand{\DIFadd}[1]{\texorpdfstring{\DIFaddtex{#1}}{#1}} 
\providecommand{\DIFdel}[1]{} 

\RequirePackage{settobox} 
\RequirePackage{letltxmacro} 
\newsavebox{\DIFdelgraphicsbox} 
\LetLtxMacro{\DIFOincludegraphics}{\includegraphics} 
\newcommand{\DIFaddincludegraphics}[2][]{{\color{black}\fbox{\DIFOincludegraphics[#1]{#2}}}} 
\newcommand{\DIFdelincludegraphics}[2][]{
} 
\LetLtxMacro{\DIFOaddbegin}{\DIFaddbegin} 
\LetLtxMacro{\DIFOaddend}{\DIFaddend} 
\LetLtxMacro{\DIFOdelbegin}{\DIFdelbegin} 
\LetLtxMacro{\DIFOdelend}{\DIFdelend} 
\DeclareRobustCommand{\DIFaddbegin}{\DIFOaddbegin \let\includegraphics\DIFaddincludegraphics} 
\DeclareRobustCommand{\DIFaddend}{\DIFOaddend \let\includegraphics\DIFOincludegraphics} 
\DeclareRobustCommand{\DIFdelbegin}{\DIFOdelbegin \let\includegraphics\DIFdelincludegraphics} 
\DeclareRobustCommand{\DIFdelend}{\DIFOaddend \let\includegraphics\DIFOincludegraphics} 
\LetLtxMacro{\DIFOaddbeginFL}{\DIFaddbeginFL} 
\LetLtxMacro{\DIFOaddendFL}{\DIFaddendFL} 
\LetLtxMacro{\DIFOdelbeginFL}{\DIFdelbeginFL} 
\LetLtxMacro{\DIFOdelendFL}{\DIFdelendFL} 
\DeclareRobustCommand{\DIFaddbeginFL}{\DIFOaddbeginFL \let\includegraphics\DIFaddincludegraphics} 
\DeclareRobustCommand{\DIFaddendFL}{\DIFOaddendFL \let\includegraphics\DIFOincludegraphics} 
\DeclareRobustCommand{\DIFdelbeginFL}{\DIFOdelbeginFL \let\includegraphics\DIFdelincludegraphics} 
\DeclareRobustCommand{\DIFdelendFL}{\DIFOaddendFL \let\includegraphics\DIFOincludegraphics} 
\RequirePackage{listings} 
\RequirePackage{color} 
\lstdefinelanguage{DIFcode}{ 
  moredelim=[il][\color{black}\scriptsize]{\%DIF\ <\ }, 
  moredelim=[il][\color{black}\sffamily]{\%DIF\ >\ } 
} 
\lstdefinestyle{DIFverbatimstyle}{ 
  language=DIFcode, 
  basicstyle=\ttfamily, 
  columns=fullflexible, 
  keepspaces=true 
} 
\lstnewenvironment{DIFverbatim}{\lstset{style=DIFverbatimstyle}}{} 
\lstnewenvironment{DIFverbatim*}{\lstset{style=DIFverbatimstyle,showspaces=true}}{} 

\begin{document}

\allowdisplaybreaks
\setlist{noitemsep}  

\title{\textbf{A general equilibrium model for multi-passenger ridesharing systems with stable matching}}

\author{Rui Yao\textsuperscript{1}, Shlomo Bekhor\textsuperscript{1,}\footnote{Corresponding author. \\E-mail address: sbekhor@technion.ac.il (Shlomo Bekhor)}
  \\
  1 - Department of Civil and Environmental Engineering, \\Technion – Israel Institute of Technology, Haifa, Israel}

\date{} 

\renewcommand{\thefootnote}{\fnsymbol{footnote}}
\singlespacing
\maketitle

\vspace{-.2in}
\begin{abstract}
\noindent
This paper proposes a general equilibrium model for multi-passenger ridesharing systems, in which interactions between ridesharing drivers, passengers, platforms, and transportation networks are endogenously captured. Stable matching is modeled as an equilibrium problem in which no ridesharing driver or passenger can reduce his/her ridesharing disutility by unilaterally switching to another matching sequence. 

This paper is one of the first studies that explicitly integrates the ridesharing platform's multi-passenger matching problem into the model. By \DIFdelbegin \DIFdel{proposing a }\DIFdelend \DIFaddbegin \DIFadd{integrating matching sequence with }\DIFaddend hyper-network\DIFdelbegin \DIFdel{approach}\DIFdelend , ridesharing-passenger transfers are avoided in a multi-passenger ridesharing system\DIFdelbegin \DIFdel{, and ridesharing driver's routes can be determined endogenously}\DIFdelend . Moreover, the matching stability between the ridesharing drivers and passengers is extended to address the multi-OD multi-passenger case \DIFaddbegin \DIFadd{in terms of matching sequence}\DIFaddend.  The paper provides a proof for the existence of the proposed general equilibrium. 

A sequence-bush algorithm is developed for solving the multi-passenger ridesharing equilibrium problem. This algorithm is capable to handle complex ridesharing constraints implicitly.
Results illustrate \DIFaddbegin \DIFadd{that the proposed sequence-bush algorithm outperforms general-purpose solver, and provides insights into the equilibrium of }\DIFaddend the \DIFdelbegin \DIFdel{equilibrium for the }\DIFdelend joint stable matching and route choice problem. Numerical experiments indicate that ridesharing trips are typically longer than average trip lengths. Sensitivity analysis suggests that a properly designed ridesharing unit price is necessary to achieve network benefits, and travelers with relatively lower values of time are more likely to participate in ridesharing.
\\

\end{abstract}

\medskip

\textit{Keywords}: General equilibrium; Multi-passenger ridesharing; Hyper-network; Stable matching\DIFaddbegin \DIFadd{; Sequence-bush assignment}\DIFaddend .

\hspace{\fill}

\hspace{\fill}

\hspace{\fill}

\hspace{\fill}

\hspace{\fill}

\hspace{\fill}

\hspace{\fill}

\noindent\fbox{%
    \parbox{\textwidth}{%
        \textbf{Please cite the published version:}

\noindent Yao, R., and Bekhor, S. (2023). A general equilibrium model for multi-passenger ridesharing systems with stable matching. Transportation Research Part B: Methodological, 175, 102775. doi:10.1016/j.trb.2023.05.012
    }%
}

\thispagestyle{empty}

\clearpage

\onehalfspacing
\setcounter{footnote}{0}
\renewcommand{\thefootnote}{\arabic{footnote}}
\setcounter{page}{1}

\section{Introduction}
\label{sec:Intro}

Shared mobility services, namely ride-sourcing services, have become an integral part of our transportation systems. Through ride-sourcing platforms like Uber, Lyft, and DiDi, passengers requesting these services are connected with drivers providing rides. 
One type of these services is ridesharing, in which peer drivers take multiple passengers at the same time. Different from dedicated drivers (as in the case of ride-hailing), peer drivers in ridesharing are also travelers, who are assumed to perform some activities at their designated destinations and not cruise in the network (\cite{wang2019ridesourcing}). Consequently, ridesharing services are generally expected to alleviate traffic congestion and reduce the costs of the participants (\cite{ma2022general}). In contrast, ride-hailing services have been criticized for inducing traffic in urban areas (\cite{erhardt2019transportation}, \cite{diao2021impacts}). However, the market share of ridesharing(pooling) services remains low, for example, only 20\% of the ride-sourcing trips of Uber are shared (\cite{Uber2018}). 
In order to improve the effectiveness of ridesharing services, it is necessary to investigate the intrinsic relations between traveler decisions, platform operations, and network congestion in a ridesharing system, so as to assess their long-term impacts on network congestion within a unified framework. On the other hand, these interactions pose additional challenges in finding equilibria of the overall system. 
In the following paragraphs, we detail the interactions between traveler decisions and platform operations over the underlying transportation networks, as well as the complications concerning the solution methods.   

\textbf{Traveler decisions.} Early studies in ridesharing(carpooling) modeling consider the driver's carpooling decision as a mode choice problem within multi-modal network equilibrium models, in which only vehicular flows are modeled and passenger decisions are ignored (\cite{daganzo1981equilibrium}; \cite{huang2000models}). To account for passenger mode choice decisions in ridesharing, passenger flows are incorporated in network models using super-networks (\cite{xu2015complementarity}; \cite{di2019unified}), or assumed traversing on a set of passenger routes (\cite{bahat2016incorporating}; \cite{li2020path}). By considering traveler mode choice behaviors, ridesharing driver supply and passenger demand are coupled to resemble the supply-demand imbalance in ridesharing systems. 

Alternatively, the interactions between driver supply and passenger demand are modeled as a two-sided market equilibrium problem, which typically assumes constant travel times and omits spatial imbalance (\cite{he2018pricing}; \cite{ke2020pricing}; \cite{zhang2021pool}). However, such assumptions might not be satisfied when planning to promote ridesharing. In this case, the increased number of ridesharing vehicles could have significant impacts on network congestion, and the route choice behavior of ridesharing drivers in congested networks would impact passenger travel times (\cite{chen2022unified}). In order to evaluate the impacts of ridesharing on network congestion, there is a need to not only model the matching between ridesharing drivers and passengers (as in the case of market equilibrium models), but also to consider the route choice behaviors in network models.  

\textbf{Platform operations.} The imbalance between ridesharing supply and demand might hinder the benefits of ridesharing. For example, ridesharing drivers, who cannot find passengers needing rides, could eventually leave ridesharing services. Consequently, ridesharing passengers, who experience longer waiting times due to low driver supply, might quit ridesharing as well (\cite{wang2019ridesourcing}). Ridesharing platforms are expected to (partially) resolve such imbalance by matching ridesharing drivers and passengers through mobile applications. However, improper platform operations (e.g., matching with long detours) could reduce both the service qualities and network benefits.

Platform operations are particularly challenging in a flexible setting, where one ridesharing driver is allowed to take multiple passengers from different origin-destination (OD) pairs at the same time. Such setting could bring additional network benefits due to efficient vehicle capacity utilization, but create challenges for platform operations and equilibrium modelings, as well as their solution methods. 

For platform operations, \DIFdelbegin \DIFdel{one of the challenges is to solve the }\DIFdelend \DIFaddbegin \DIFadd{efficiently solving the }\DIFaddend multi-passenger ridesharing matching problem \DIFdelbegin \DIFdel{efficiently }\DIFdelend \DIFaddbegin \DIFadd{is a challenge }\DIFaddend (\cite{agatz2012optimization}). \textcite{alonso2017demand} propose an efficient bi-level solution scheme, where the upper-level finds shareable driver-passenger groups based on graph cliques (\cite{santi2014quantifying}), and the lower-level solves the vehicle routing problem (VRP) for each group. 
\DIFdelbegin \DIFdel{As an extension, \mbox{
\textcite{yao2021dynamic} }\hskip0pt
incorporate }\DIFdelend \DIFaddbegin \DIFadd{\mbox{
\textcite{yao2021dynamic} }\hskip0pt
extend this approach by integrating }\DIFaddend a dynamic tree structure to efficiently determine feasible pickup and drop-off sequences for the lower-level VRP\DIFdelbegin \DIFdel{, where they also found network benefits of ridesharing are dependent on operational parameter settings. }\DIFdelend \DIFaddbegin \DIFadd{. Other studies incorporate additional realistic factors in the ridesharing matching problem. For example, \mbox{
\textcite{liu2020integrated} }\hskip0pt
and \mbox{
\textcite{zhou2022scalable} }\hskip0pt
consider endogenously the traffic congestion induced by ridesharing in the matching problem, with the goal to reduce platform operational costs. \mbox{
\textcite{mahmoudi2021many} }\hskip0pt
further consider optimizing jointly the passenger activity planning and the matching problem. In contrast, some studies simplify the matching problem by solving successive single-passenger matching problems }\citep{simonetto2019real}\DIFadd{, or reformulating the matching problem into a maximum flow problem to obtain only aggregate matching results }\citep{seo2021multi}\DIFadd{.
}\DIFaddend 

In terms of equilibrium modeling with platform operations, one of the questions is how should the matching problem be incorporated. Many studies consider an \textit{ideal} ridesharing service without any matching frictions, in which ridesharing drivers can take passengers as long as vehicle capacity permits (\cite{xu2015complementarity}; \cite{bahat2016incorporating}; \cite{di2018link}; \cite{di2019unified}; \cite{li2020restricted}). Only a handful of studies consider exogenous matching functions to account for matching frictions in ridesharing (\cite{ma2020ridesharing}; \cite{ma2022general}; \cite{noruzoliaee2022one}). However, matching functions provide aggregate performance for platforms with fixed strategies, and are not flexible enough to accommodate different operational strategies. Such flexibility is indeed important for evaluating the impacts of platform operations on traveler decisions and network congestion (\cite{liu2019framework}), and the impacts of regulatory policies on platform operations. There is a need to integrate explicitly the ridesharing matching problem in the equilibrium modeling framework.

Another challenge in equilibrium modeling with platform operations is related to ridesharing passenger pickup and drop-off handling. The existing literature on ridesharing network equilibrium can be broadly divided into link-based and path-based approaches. Link-based models \DIFaddbegin \DIFadd{typically }\DIFaddend accommodate multi-OD ridesharing by allowing passengers to be dropped off on the way, and picked up by another ridesharing driver within a hyper-network (\cite{xu2015complementarity}; \cite{di2018link}; \cite{di2019unified}). It is expected that such en-route transfers should be avoided, since they bring additional inconveniences to the ridesharing passengers. \DIFaddbegin \DIFadd{\mbox{
\textcite{mahmoudi2016finding} }\hskip0pt
and \mbox{
\textcite{liu2020integrated} }\hskip0pt
suggest enumerating all possible combinations of passenger pickup and drop-off }\textit{\DIFadd{states}} \DIFadd{within a high-dimensional hyper-network, to avoid passenger transfers in a link-based setting. }\DIFaddend Path-based models can prevent en-route transfers, but typically cannot handle ridesharing with travelers from multiple OD pairs (\cite{bahat2016incorporating}; \cite{li2020restricted}; \cite{ma2020ridesharing}; \cite{ma2022general}). Only a few path-based models relax the same-OD sharing setting to a certain extent. \textcite{li2020path} consider a multi-OD sharing scheme, in which ridesharing driver's route connects multiple same-OD passenger trips, i.e., passengers with different ODs cannot travel at the same time. In contrast, \textcite{chen2022unified} considers the case that ridesharing passengers cannot have the same origin. \textcite{noruzoliaee2022one} consider a more general case that travelers from different OD pairs can travel together in a \textit{transit-style} ride-pooling vehicle with predetermined routes. It is necessary to develop new models for multi-OD ridesharing without transfer, which also determine ridesharing driver's routes endogenously.

The other challenge in equilibrium modeling with platform operations is how should the interactions between travelers and ridesharing matching be considered. Conventional models assume ridesharing drivers and passengers accept ridesharing matching by default (\cite{alonso2017demand}; \cite{di2019unified}; \cite{ma2022general}). However, empirical studies have shown matching acceptance/rejection behaviors could have significant impacts on ridesharing system performances (\cite{alonso2021determinants}; \cite{ashkrof2022ride}; \cite{yao2022}). There are only a few studies that consider matching stability between passengers and drivers, in which a matching is considered stable if no driver or passenger prefers to be matched together other than their current matching. \textcite{wang2018stable} considers the problem of finding stable one-to-one matching from the platform operational perspective, in which ridesharing drivers and passengers are assigned with optimal cost-saving matching. \textcite{peng2022many} extends it for the stable many-to-one ride-pooling matching problem. \textcite{noruzoliaee2022one} takes into account the passenger matching stability in an equilibrium model for ride-pooling services. \textcite{li2020path} model the stable matching between ridesharing driver and same-OD passengers. It is important to consider matching stability in multi-OD multi-passenger ridesharing.

\DIFaddbegin \textbf{\DIFadd{Equilibrium solution.}} \DIFadd{Incorporating ridesharing into network models creates challenges for solving the equilibrium problem. The ridesharing network equilibrium models are related to the classic traffic assignment problems (TAP), but they include additional ridesharing constraints such as pickup and drop-off. Among assignment algorithms developed for solving TAP, Bush-based algorithms are one of the most efficient methods. These algorithms exploit the acyclic property of equilibrium link flows }\citep[][]{bar2002origin,nie2010class} \DIFadd{and avoid path enumeration with a compact link-based bush representation }\citep{dial2006path}\DIFadd{. Recently, \mbox{
\textcite{xu2022hyperbush} }\hskip0pt
extend the bush approach for solving transit assignment problems with hyper-bush. 
}

\DIFadd{However, the coupling ridesharing constraints make it challenging to apply existing TAP algorithms directly. These algorithms typically require decomposing the TAP into independent OD-based or origin-based subproblems, known as }\textit{\DIFadd{block}}\DIFadd{. 
The classical Augmented Lagrangian (AL) method addresses this issue by dualizing the coupling constraints into the objective function to allow decomposition, and adding quadratic penalties for robustness~}\citep{hestenes1969multiplier,powell1969method}\DIFadd{. The AL method iteratively solves the optimal primary variables for the given set of multipliers (also called the }\textit{\DIFadd{restricted problem}}\DIFadd{) and updates the multipliers, for which local convergence is established without requiring convexity~}\citep{rockafellar1974augmented,bertsekas1982constrained,kanzow2016augmented}\DIFadd{. 
}

\DIFadd{Nonetheless, solving large }\textit{\DIFadd{restricted problems}} \DIFadd{all at once can still be difficult. To tackle this issue, block coordinate descent (BCD) methods are combined to solve the AL problem, which, similar to Gauss-Seidel method, successively update each variable block while keeping other blocks fixed~}\citep{bertsekas1997nonlinear}\DIFadd{. One BCD variant is the Alternating Direction of Multiplier Method (ADMM), which simplifies the AL method by using only one pass of block minimization to approximately solve the }\textit{\DIFadd{restricted problem}} \DIFadd{at each iteration~}\citep{eckstein2012augmented,boyd2011distributed,parikh2014proximal}\DIFadd{. ADMM's simplicity has made it a popular choice for various transportation problems, such as vehicle routing and crew scheduling~}\citep[e.g.,][]{liu2020integrated,yao2019admm,feng2023admm}\DIFadd{. However, ADMM has limitations. It cannot guarantee convergence for problems with more than two variable blocks~}\citep{chen2016direct}\DIFadd{, and it requires strong convexity assumptions on the objective function~}\citep{boyd2011distributed,parikh2014proximal}\DIFadd{. 
}

\DIFadd{In contrast to ADMM, another type of BCD, the block proximal descent (BPD) method guarantees local convergence without these restrictive assumptions~}\citep{auslender1992asymptotic,grippo2000convergence,tseng2001convergence,bolte2014proximal}\DIFadd{. The BPD method essentially adds proximal regularization~}\citep{parikh2014proximal} \DIFadd{for updating each variable block, and performs multiple passes of block minimization to solve the }\textit{\DIFadd{restricted problem}} \DIFadd{optimally. As a result of this regularization, each update of a variable block takes only one gradient step, instead of jumping to the }\textit{\DIFadd{minimal}}\DIFadd{~}\citep{bolte2014proximal}\DIFadd{. Several studies in machine learning and large-scale optimization have shown that proximal regularization improves solution stability and reduces overall runtime~}\citep[e.g.,][]{beck2009fast,combettes2011proximal}\DIFadd{. It is worth noting that some state-of-the-art traffic assignment algorithms~}\citep[e.g.,][]{dial2006path,bar2002origin} \DIFadd{also implement a BPD-style solution scheme, even though they are not explicitly classified as BPD.
}

\DIFadd{Most existing studies solve the equilibrium problems using general-purpose solvers like GAMS \citep[e.g.,][]{di2018link,ban2019general,li2020path,chen2022unified}, there is no network assignment algorithm in the literature developed for solving multi-passenger ridesharing network equilibrium problems}\DIFadd{. However, we could not find a network assignment algorithm that is able to solve this problem. Therefore, there is a need to develop such an algorithm for solving ridesharing network equilibrium problems in a real-size network.
}

\DIFaddend 

A summary of existing ride-sourcing network equilibrium models is given in \cref{tab:literture}. 

\begin{table}[H]
    \caption{\\A comparison between the proposed and existing ride-sourcing network equilibrium models.}
    \label{tab:literture}
    \centering
    \scriptsize
    \setlength\tabcolsep{8pt}
    \begin{threeparttable}
        \begin{tabular}{p{0.14\linewidth}
                        *{8}{S[table-format=0.8]}
                       }
        \toprule
        {\multirowcell{3}{\textbf{Literature}}}
           & \multicolumn{3}{l}{\textbf{\makecell[l]{Modeling components$^{*}$}}}
              & \multicolumn{4}{l}{\textbf{\makecell[l]{Modeling features}}} 
                & {\multirowcell{3}{\textbf{Solution}\\\textbf{algorithm}}$^{***}$}\\
        \cmidrule{2-4}
        \cmidrule{5-8}
        & {\multirowcell{2}{Driver}} 
        & {\multirowcell{2}{Passenger}}  
        & {\multirowcell{2}{Platform \\ operations}}
        & {\multirowcell{2}{Multiple \\ OD sharing}}
        & {\multirowcell{2}{Avoid \\ transfer}}
        & {\multirowcell{2}{Explicit \\ matching}}
        & {\multirowcell{2}{Matching \\ stability}}
        \\
        \\
        \midrule
        \multicolumn{8}{l}{\textbf{Ride-hailing}}                                                                                                                                                                                                                                                                                                                                                                                                         \\
\textcite{ban2019general}                     &                 & \checkmark                  & \checkmark                                                                     & NR$^{**}$                                                                      & \checkmark                                                                  & \checkmark                                                                     & NR$^{**}$                                                                  &\\
\textcite{di2019unified}                     &                 & \checkmark                  & \checkmark                                                                     & NR$^{**}$                                                                      & \checkmark                                                                  & \checkmark                                                                     & NR$^{**}$                                                                  &\\
\multicolumn{8}{c}{}                                                                                                                                                                                                                                                                                                                                                                                                                              \\
\multicolumn{8}{l}{\textbf{Ride-pooling}}                                                                                                                                                                                                                                                                                                                                                                                                         \\
\textcite{noruzoliaee2022one}            &                 & \checkmark                  & \checkmark                                                                     & \checkmark                                                                       & \checkmark                                                                  &                                                                       & NR$^{**}$                                                                   &\\
\textcite{chen2022unified}                    &                 & \checkmark                  & \checkmark                                                                     & \checkmark                                                                       & \checkmark                                                                  & \checkmark                                                                     & NR$^{**}$                                                                  &\\
\multicolumn{8}{c}{}                                                                                                                                                                                                                                                                                                                                                                                                                              \\
\multicolumn{8}{l}{\textbf{Ride-sharing}}                                                                                                                                                                                                                                                                                                                                                                                                         \\
\textcite{xu2015complementarity}                      & \checkmark               & \checkmark                  &                                                                       & \checkmark                                                                       &                                                                    &                                                                       &                                                                     &\\
\textcite{bahat2016incorporating}                & \checkmark               & \checkmark                  &                                                                       &                                                                         & \checkmark                                                                  &                                                                       &                                                                     &\\
\textcite{di2018link}                      & \checkmark               & \checkmark                  &                                                                       & \checkmark                                                                       &                                                                    &                                                                       &                                                                     &\\
\textcite{di2019unified}                    & \checkmark               & \checkmark                  &                                                                       & \checkmark                                                                       &                                                                    &                                                                       &                                                                     &\\
\textcite{li2020restricted}                      & \checkmark               & \checkmark                  &                                                                       &                                                                         & \checkmark                                                                  &                                                                       &                                                                     &\\
\textcite{ma2020ridesharing}                     & \checkmark               & \checkmark                  & \checkmark                                                                     &                                                                         & \checkmark                                                                  &                                                                       &                                                                     &\\
\textcite{li2020path}                      & \checkmark               & \checkmark                  &                                                                       & \checkmark                                                                       & \checkmark                                                                  &                                                                       & \checkmark                                                                   &\\
\textcite{ma2022general}                      & \checkmark               & \checkmark                  & \checkmark                                                                     &                                                                         & \checkmark                                                                  &                                                                       &                                                                     &\\
\\
\textbf{This paper}                   & \textbf{\checkmark}      & \textbf{\checkmark}         & \textbf{\checkmark}                                                            & \textbf{\checkmark}                                                              & \textbf{\checkmark}                                                         & \textbf{\checkmark}                                                            & \textbf{\checkmark}                                                        &\textbf{\checkmark}\\                              
        \bottomrule
        \end{tabular}
    \begin{tablenotes}
        \item[*]All network equilibrium models consider network congestion.
        \item[**]NR: {\textit{Not relevant} for the service type.}
        \item[***]: Solution algorithm for multi-passenger ridesharing equilibrium problems.
    \end{tablenotes}
    \end{threeparttable}
\end{table}

As indicated in \DIFdelbegin \DIFdel{the table}\DIFdelend \DIFaddbegin \DIFadd{\mbox{
\cref{tab:literture}}\hskip0pt
}\DIFaddend , there is no comprehensive equilibrium model that accounts explicitly \DIFaddbegin \DIFadd{for }\DIFaddend the platform matching operations, multi-OD multi-passenger ridesharing without transfer, and the stability in ridesharing matching. Moreover, there is no solution algorithm dedicated \DIFdelbegin \DIFdel{for }\DIFdelend \DIFaddbegin \DIFadd{to the }\DIFaddend multi-passenger ridesharing equilibrium problems. The objectives and contributions of the paper are summarized in the following subsection.

\subsection{Objective and contributions}
\label{sec:contributions}
We propose a general network-based equilibrium model for multi-OD multi-passenger ridesharing systems\DIFdelbegin \DIFdel{, in which traveler }\DIFdelend \DIFaddbegin \DIFadd{. The proposed model endogenously accounts for driver and passenger }\DIFaddend decisions (i.e., mode choice, route choice, and matching choice)\DIFdelbegin \DIFdel{of both drivers and passengers, as well as }\DIFdelend \DIFaddbegin \DIFadd{, }\DIFaddend platform operations, \DIFdelbegin \DIFdel{are considered to influence traffic patterns in the network}\DIFdelend \DIFaddbegin \DIFadd{as well as their influences on traffic congestion}\DIFaddend . To the best of our knowledge, this paper is one of the first studies that explicitly consider \DIFdelbegin \DIFdel{multi-passenger ridesharing matching in ridesharing equilibrium models}\DIFdelend \DIFaddbegin \DIFadd{platform matching decisions in a multi-passenger ridesharing equilibrium problem}\DIFaddend . Moreover, \DIFdelbegin \DIFdel{the proposed model }\DIFdelend \DIFaddbegin \DIFadd{in contrast to most existing models~}\citep[e.g.,][]{xu2015complementarity,di2019unified}\DIFadd{, the proposed link-based formulation }\DIFaddend can accommodate multi-OD ridesharing without the need for en-route transfer\DIFdelbegin \DIFdel{, and consider the matching stability in the presence of multi-OD ridesharing. }\DIFdelend \DIFaddbegin \DIFadd{. This is achieved by integrating matching sequence with hyper-network, instead of enumerating pickup and drop-off combinations~}\citep[as required in][]{mahmoudi2016finding,liu2020integrated}\DIFadd{. Furthermore, we extend matching stability for multi-OD multi-passenger ridesharing in terms of matching sequence. Existing studies only considered stable matching for either ride-pooling services (i.e., cooperative dedicated drivers) or same-OD ridesharing. Our stability condition explicitly considers the driver and passenger detours for taking peer passengers at different locations, which depends not only on the matching }\citep{peng2022many} \DIFadd{but also the matching sequence.  The stable matching problem is then cast and solved as a route choice problem in the hyper-network, which allows ridesharing disutilities to be determined endogenously.
}\DIFaddend 

To capture the interactions between three entities: travelers, platform, and transportation network, the proposed equilibrium problem is composed of multiple players optimizing their own objectives within a non-cooperative game framework. For example, travelers are assumed to minimize their travel disutilities while fulfilling their travel demands, and ridesharing platforms aim at maximizing their company objectives. Travelers' ridesharing decisions influence platform operation decisions, which in turn impact the modal costs of the travelers. The proposed equilibrium model considers that platform matching decisions are constrained by the ridesharing demands, which makes the resulting model a generalized Nash equilibrium (\cite{rosen1965existence}, \cite{facchinei2007generalized}). \textcite{ban2019general} establishes the solution existence of a generalized equilibrium model for ride-hailing services, by using a penalty-based method to handle the complex coupling constraints (which makes it difficult to directly apply fixed-point theorems, \cite{Nagurney2009}). In this paper, we extend the work of \textcite{ban2019general} to show solution existence for the proposed multi-passenger ridesharing general equilibrium model. 

A sequence-bush assignment algorithm is developed in this paper for solving \DIFaddbegin \DIFadd{the }\DIFaddend multi-passenger ridesharing equilibrium \DIFdelbegin \DIFdel{problems}\DIFdelend \DIFaddbegin \DIFadd{problem}\DIFaddend . The proposed algorithm \DIFdelbegin \DIFdel{adapts the }\DIFdelend \DIFaddbegin \DIFadd{decomposes the problem into simpler subproblems, and solves each subproblem efficiently. Specifically, in contrast to existing decompositions that are based on a single vehicle~}\citep[e.g.,][]{liu2020integrated,yao2019admm}\DIFadd{, we apply Augmented Lagrangian method~}\citep{kanzow2016augmented} \DIFadd{to decompose the network problem for each driver-passenger group who are matched together, by exploiting the problem structure. To solve the network subproblem efficiently, we integrate matching sequences with }\DIFaddend bush-based \DIFdelbegin \DIFdel{algorithm for efficiently handling traveler routing problems between two tasks (pickup/drop-off), and combines a matching-sequence hyper-network approach to tackle the ridesharing constraints}\DIFdelend \DIFaddbegin \DIFadd{algorithms~}\citep{bar2002origin,dial2006path,nie2010class}\DIFadd{, and solve them with the block proximal algorithm~}\citep{bolte2014proximal}\DIFadd{. As a result, the complex ridesharing constraints are handled implicitly in the sequence-bush, and solving the network subproblem is similar to a classic TAP}\DIFaddend .  

The contributions of this paper are highlighted as follows:
\begin{itemize}
   \item Matching stability is extended for multi-OD multi-passenger ridesharing in \DIFdelbegin \DIFdel{our network equilibrium model, in which ridesharing driver and passenger matching-option preferences are cast }\DIFdelend \DIFaddbegin \DIFadd{terms of matching sequences, which particularly considers the differences in ridesharing detours due to variations in matching sequences. Moreover, the stable matching problem is cast and solved }\DIFaddend as a route choice problem \DIFdelbegin \DIFdel{within }\DIFdelend \DIFaddbegin \DIFadd{in }\DIFaddend the hyper-network. 
\item
 A sequence-bush algorithm is developed for solving the multi-passenger ridesharing equilibrium problem\DIFdelbegin \DIFdel{, in which bushes are connected in sequence to handle ridesharing constraints}\DIFdelend \DIFaddbegin \DIFadd{. The proposed algorithm decomposes the network problem based on ridesharing travelers who are matched together, which exploits the coupling constraints in multi-passenger ridesharing. Furthermore, matching sequences are integrated with bush-based algorithms to implicitly handle the complex ridesharing constraints, such that solving the assignment subproblem is simpler}\DIFaddend .
    \item \DIFdelbegin \DIFdel{A }\DIFdelend \DIFaddbegin \DIFadd{Hyper network is integrated with }\DIFaddend  matching sequences \DIFdelbegin \DIFdel{approach is developed to handle }\DIFdelend \DIFaddbegin \DIFadd{to model transfer-free }\DIFaddend multi-OD multi-passenger ridesharing \DIFdelbegin \DIFdel{. Specifically, by applying the }\DIFdelend \DIFaddbegin \DIFadd{in a link-based formulation. In contrast to existing methods, the proposed expanded }\DIFaddend hyper-network \DIFdelbegin \DIFdel{, the requirement for making transfers in multi-OD ridesharing is relaxed, and ridesharing driver routes can be determined endogenously}\DIFdelend \DIFaddbegin \DIFadd{is compact: its size increases linearly with respect to the number of pickups and drop-offs in a matching sequence}\DIFaddend . 
   \item \DIFdelbegin \DIFdel{A comprehensive planning tool for evaluating the impacts of traveler decisions, ridesharing matching on network congestion, and vice versa}\DIFdelend \DIFaddbegin \DIFadd{Solution existence is established for multi-passenger ridesharing general equilibrium under mild assumptions, by adapting the penalty-based method of \mbox{
\textcite{ban2019general}}\hskip0pt
}\DIFaddend .
\end{itemize}
The rest of the paper is organized as follows: \cref{sec:Setup} presents the general equilibrium scheme for multi-passenger ridesharing systems. \cref{sec:Hypernet} introduces the hyper-network approach for modeling the ridesharing system; this is followed by the traveler mode choice model in \cref{sec:ModeChoice}, and the platform matching decisions in \cref{sec:Matching}. The matching stability and route choice problems are jointly handled as a network problem in \cref{sec:RidesharingNetworkModel}. The overall equilibrium model is summarized in \cref{sec:Overview}, which also includes a preview of the proof of solution existence. \DIFdelbegin \DIFdel{\mbox{
\cref{sec:Existence} }\hskip0pt
presents this proof in detail. }\DIFdelend The sequence-bush assignment method is developed in \cref{sec:algorithm}. \DIFaddbegin \DIFadd{The contributions in model formulation and solution algorithm are illustrated with an example in \mbox{
\cref{sec:illustration}}\hskip0pt
.
}\DIFaddend We present selected model characteristics and numerical results \DIFaddbegin \DIFadd{with a real-size network }\DIFaddend in \cref{sec:Results}, discuss and summarize the paper in \cref{sec:Discussion} and \cref{sec:Summary}. \DIFaddbegin \DIFadd{The list of key mathematical notations used in the paper is provided in Appendix~\cref{tab:index_list}.}\DIFaddend

\section{A general equilibrium modeling scheme}
\label{sec:Setup}

\begin{figure}[H]
    \centering
    \includegraphics[width=0.6\textwidth]{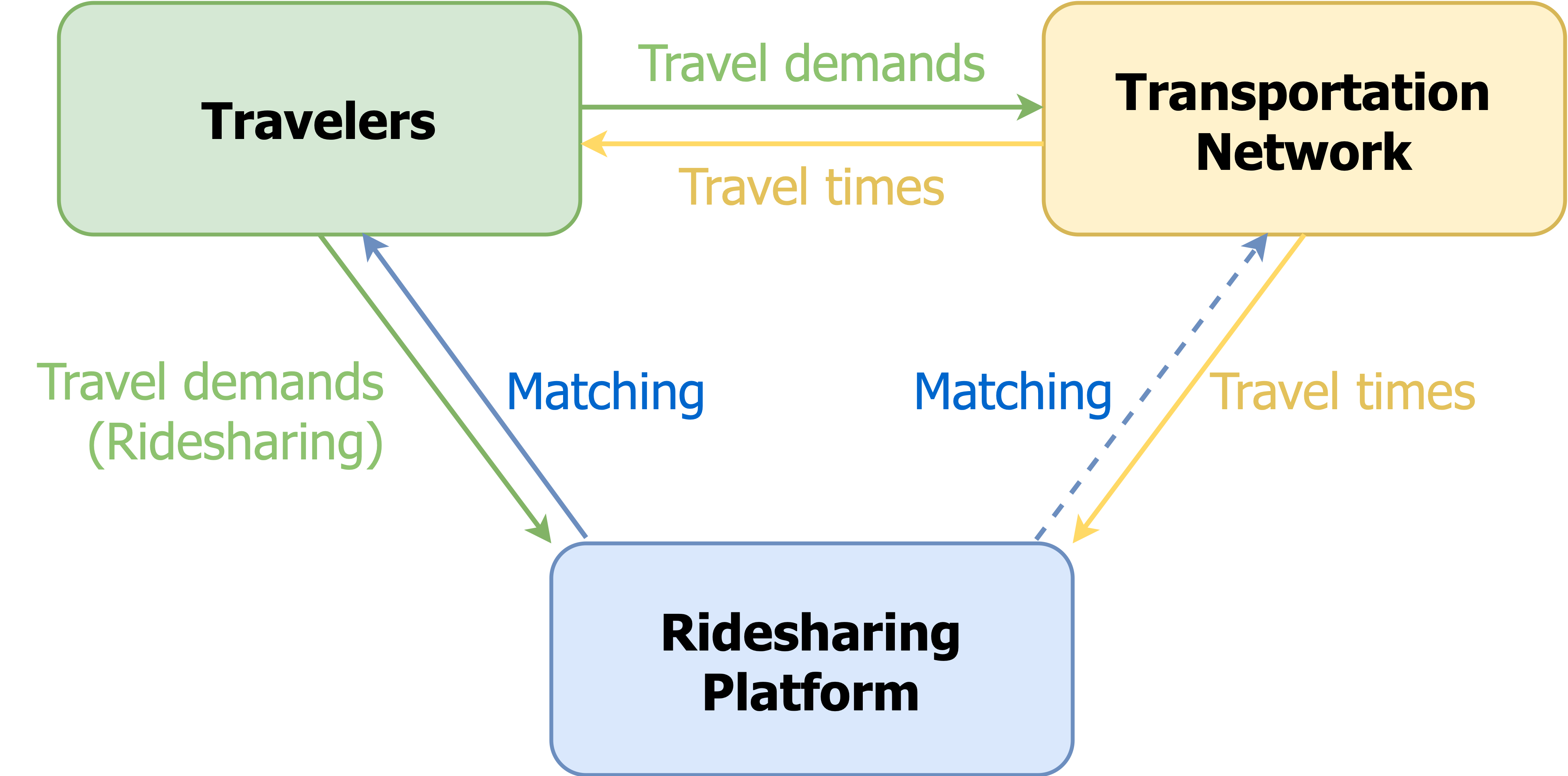}
    \caption{A general equilibrium modeling scheme with interactions between three entities}
    \label{fig:1.general_scheme}
\end{figure}

\DIFdelbegin \DIFdel{A }\DIFdelend \DIFaddbegin \DIFadd{As }\DIFaddend shown in \cref{fig:1.general_scheme}, we consider a general ridesharing system including three entities: \textbf{travelers} (drivers and passengers), \textbf{ridesharing platform}, and \textbf{transportation networks}, in which these entities make various decisions and interact with each other. For example, travelers make trips (travel demands) in the transportation network and experience traffic congestion (travel times). Ridesharing drivers and passengers requesting shared rides are matched by the platform based on ridesharing travel demands and travel times in the network. Consequently, ridesharing platform indirectly influences travel times in the network by matching ridesharing drivers and passengers. Furthermore, we consider a ridesharing service that allows: 1) a ridesharing driver to take passengers besides from the driver's own OD pair; 2) multiple ridesharing passengers from the same/different OD pairs to travel together without any transfer in a ridesharing vehicle. The following paragraphs detail the decisions of each entity and their interactions, which are also summarized in \cref{fig:2.process}.  

\textbf{\textit{Mode choice.}} We assume one traveler chooses the travel mode with minimum travel disutility among four alternatives: driving alone with a private vehicle (DA); ridesharing as a driver (RD); ridesharing as a passenger (RP); or taking public transport (PT). 

\textbf{\textit{Platform operations.}} Ridesharing platform then performs ridesharing matching between RD and RP and provides both RD and RP several options of matching sequence (for pickup and drop-off), and  limits the maximum number of matched RD and RP for each matching sequence.

\textbf{\textit{Matching stability}} We consider a matching is stable if the matched ridesharing driver and the ridesharing passengers consider the proposed matching as the best choice \textit{they can get}(\cite{li2020path}). To accommodate stable matching, we consider a flexible ridesharing setting where RD and RP choose either one of the matching-sequence options or even quit ridesharing, to minimize their generalized travel costs. In case of quitting ridesharing, travelers, who previously decided not to drive (i.e., chose RP), are assumed to take public transport. Similarly, for RD leaving ridesharing, they are assumed to continue driving as DA. 

\textbf{\textit{Route choice}} In the networks, DA and PT traverse their own selected routes, while ridesharing routes are jointly determined by RD and RP. It is assumed that only private vehicles (DA) and ridesharing vehicles (RD) contribute to network congestion, and public transport \DIFdelbegin \DIFdel{are }\DIFdelend \DIFaddbegin \DIFadd{is }\DIFaddend operated in a segregated system (e.g., BRT and metro). Furthermore, we consider a setting where RDs are only required to follow their chosen pickup and drop-off sequences, and can decide their routes between pickup/drop-off tasks. As a result, congestion costs affect the travel mode decision of all travelers, and matching-sequence choice of ridesharing travelers (RD, RP).

\DIFdelbegin 
{
\DIFdelFL{Model components and interactions}}

\DIFdelend We consider a general equilibrium, at which:
\begin{itemize}
   \item (\textit{Mode choice}) No traveler can reduce his/her generalized travel disutility by unilaterally switching his/her travel mode;
   \item (\textit{Platform operations}) Ridesharing platform cannot improve its matching objective by unilaterally switching its ridesharing driver-passenger matching sequences, and the limits on the maximum number of matched RD and RP for each matching sequence;
   \item (\textit{Matching stability}) No ridesharing traveler (RD, RP) can reduce his/her generalized travel costs by unilaterally switching his/her choice of matching sequence;
   \item (\textit{Route choice}) No traveler can reduce his/her generalized travel costs by unilaterally switching his/her route. 
\end{itemize}

\begin{figure}[H]
    \centering
    \includegraphics[width=0.65\textwidth]{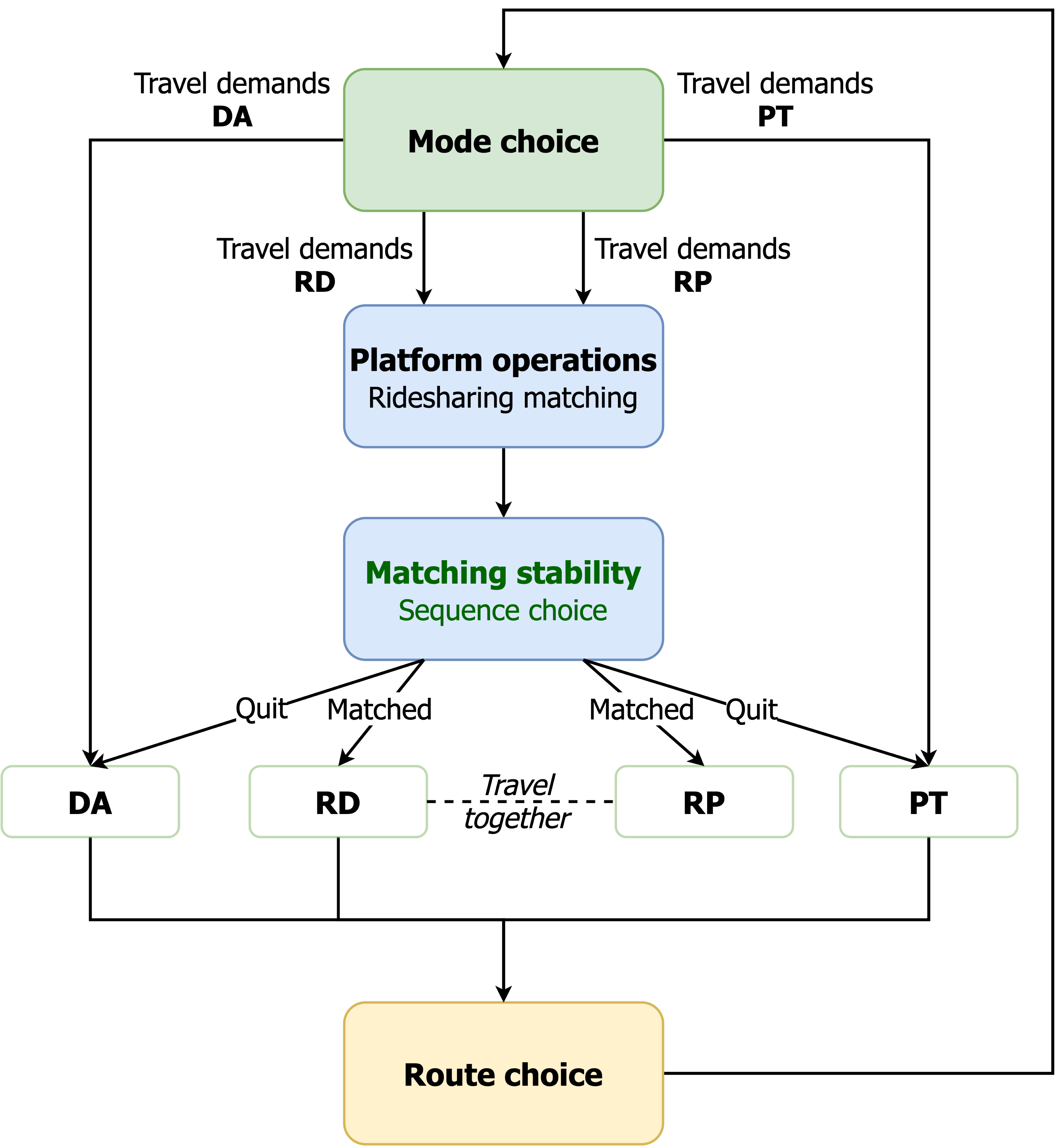}
    \caption{{Model components and interactions}}
    \label{fig:2.process}
\end{figure}
\section{Hyper-network approach}
\label{sec:Hypernet}
A transportation network typically consists of nodes and links, where nodes represent origins, destinations, and intermediate stops, and links represent directed roads connecting two nodes. Individuals travel from origins to destinations by traversing links in the network. Depending on their mode choice and route choice, individuals may experience different travel costs.

To incorporate mode choice within a unified network-based framework, super-networks are introduced, which augment the original networks with virtual links to represent mode choice decisions (\cite{sheffi1985urban}). \textcite{xu2015complementarity} and \textcite{di2018link} adapt the super-network approach for a multi-modal equilibrium problem with self-organized ridesharing (i.e., no ridesharing platform) by duplicating links for each mode, where passengers are assumed to make transfers. 

As outlined in the introduction section, we propose a network model that not only determines how travelers will choose their \textbf{modes}, \textbf{routes}, but also their choices of \textbf{matching sequences}. To relax the requirement for making transfers in multi-passenger ridesharing and to capture the ridesharing travelers' preferences for matching sequences, a multi-modal hyper-network is developed. This hyper-network is constructed by extending the network with additional levels (layers) to represent matching sequences. The proposed approach embeds the stable matching and route choice problem within a network modeling framework, which requires a detailed representation of the problems. Therefore, before introducing the hyper-network, we present the definitions and notations of matching sequences, which are the key building blocks of the proposed hyper-network.

\subsection{Matching sequence}
\label{sec:Matching sequence}
We consider a ridesharing setting that, ridesharing drivers depart from their origins, and finish their trips at their destinations. Similarly, a ridesharing passenger is picked up from his/her origin by a driver, and dropped off at his/her destination by the same driver without any transfer (\cite{yao2021dynamic}). 

Under such setting, RD's ridesharing itinerary is defined as a \textit{feasible matching sequence} denoted as $n$, which starts from one driver's origin $o$, and finishes at the driver's destination $d$, with each served RP trip being the \textit{subsequence} of $n$. Specifically, a matching sequence $n$ is served by only one RD and composed of several tasks, in which the $l^{th}$ task corresponds to one of the four task types: 1) departing from the driver's origin $o$; 2) arriving at the driver's destination $d$; 3) picking up one passenger from his/her origin $o$; 4) dropping off one passenger at his/her destination $d$. A matching sequence $n$ is considered feasible if a) the RD's vehicle occupancy never exceeds the given vehicle capacities at any point in $n$; b) all RP in $n$ are picked up at their origins and only dropped off at their destinations (i.e., without transfer); and c) the corresponding RD departs from his/her origin as the first task and reaches his/her destination as the last task.

\begin{figure}[H]
    \centering
    \includegraphics[width=1\textwidth]{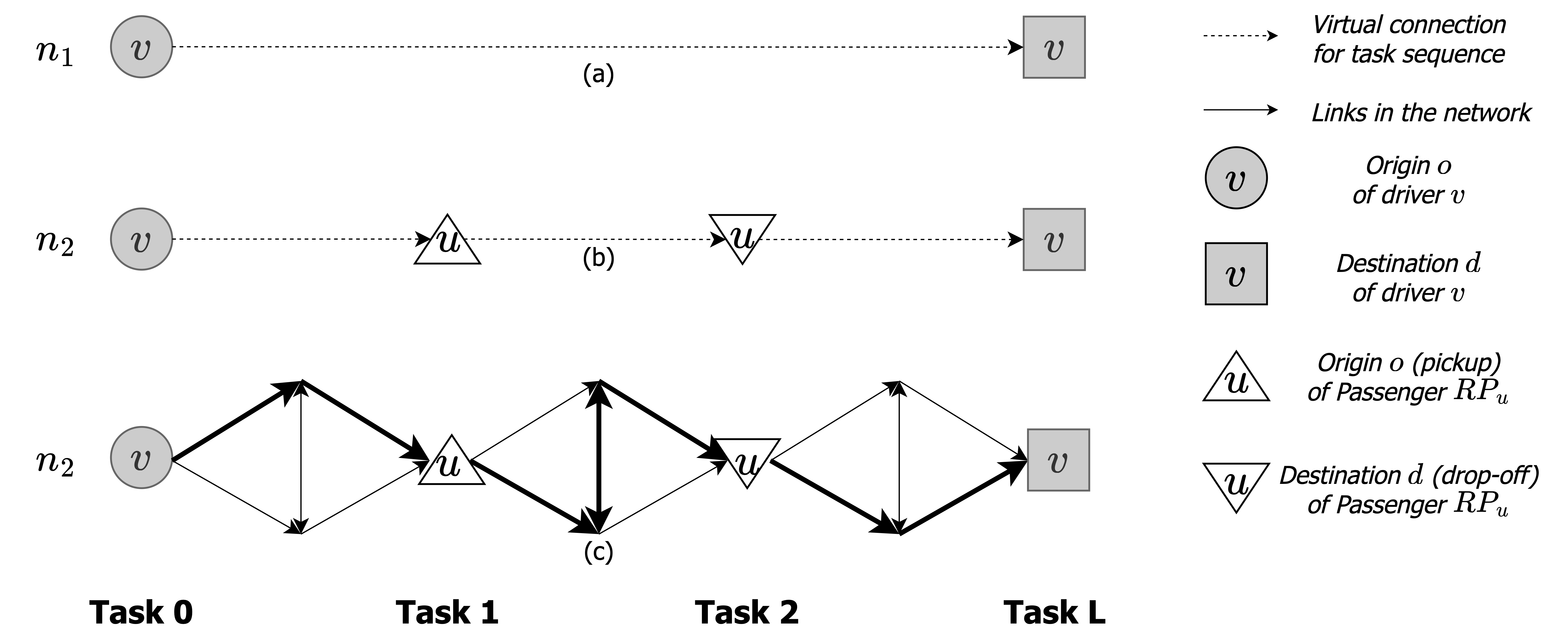}
    \caption{Matching sequences. (a) Sequence without any pickup/drop-off; (b) sequence with one pickup and drop-off; (c) Routing example of a sequence}
    \label{fig:3.matching_seqeunce}
\end{figure}

We illustrate the matching sequences for a toy example of 1 RD and 1 RP in \cref{fig:3.matching_seqeunce}, in which the RD has two feasible matching sequences $n_1$ representing traveling alone (\cref{fig:3.matching_seqeunce}a), and $n_2$ representing serving 1 RP (\cref{fig:3.matching_seqeunce}b). The virtual connection between two consecutive tasks represents RD (and RP) traverse links in the network to the next task, where the actual route is illustrated in \cref{fig:3.matching_seqeunce}(c). 

The proposed matching sequence representation resembles the bilevel structure adapted for solving dial-a-ride problems (DARP) \citep{alonso2017demand,yao2021dynamic}, where ridesharing platforms solve the optimal routing of a matching sequence at the lower level, and determine the system optimal matching at the upper level. However, in this paper, such approach is generalized and applied for ridesharing travelers (RD and RP) to determine their user optimal routes and matching sequences (as detailed in \cref{sec:RidesharingNetworkModel}). 

From \DIFaddbegin \DIFadd{a }\DIFaddend modeling perspective, the proposed matching sequence approach implies RD can freely choose their routes between two consecutive tasks, as long as they follow the matching sequence. This routing flexibility could be more favorable for ridesharing drivers, as opposed to dedicated drivers who are required to follow centralized routing decisions from their employed ridesharing platforms.

From \DIFaddbegin \DIFadd{an }\DIFaddend implementation perspective, by adapting a link-based route choice formulation, the proposed matching sequence approach avoids path set enumeration and reduces the number of decision variables. As illustrated in \cref{fig:3.matching_seqeunce}(c), matching sequence $n_2$ has in total of 64 simple paths (3 segments and each with 4 paths $=4^3=64$), compared to in total of only 18 links. Such dimension reduction would be more significant with the increase of sequence length, and could intensify the solution computations.

We now introduce the notations for matching sequences that are needed for the formulations later. Let $\omega$ denote an OD pair of the ridesharing drivers and passengers, and $\omega:(o, d)$, each matching sequence $n$ is defined by two incident matrices, where $s_1$ represents pickup tasks and $s_{-1}$ represents drop-off tasks. If $s_1(n,l,\omega)=1$, it represents the $l^{th}$ task of a matching sequence $n$ is to pick up a passenger at the origin of $\omega$ (i.e., $o$), otherwise $s_1(n,l,\omega)=0$. Similarly, $s_{-1}(n,l,\omega)=1$ represents to drop off a passenger at the destination of $\omega$ (i.e., $d$) as the $l^{th}$ task of a matching sequence $n$. Under such definition, let $L$ \DIFdelbegin \DIFdel{denotes }\DIFdelend \DIFaddbegin \DIFadd{denote }\DIFaddend the last task, each matching sequence $n$ is associated with a driver departing from an origin $o$ (where $s_1(n,0,\omega)=1$), and arriving at a destination (where $s_{-1}(n,L,\omega)=1$). Moreover, each ridesharing traveler (RD/RP) can have more than one matching sequence associated with him/her.

Note that, the proposed matching sequence generalizes the link-based approach of \textcite{xu2015complementarity} and \textcite{di2018link} where passenger transfers are needed. The matching sequence representation collapses to the link-based approach, if assuming drop-off nodes in matching sequences can be any node in the network.

\subsection{Construction of the hyper-network}
\label{subsec:hypernet_construct}
As mentioned in \cref{sec:Setup}, we consider a ridesharing system that travelers can choose among 4 alternative modes: drive alone (DA), ridesharing driver (RD), ridesharing passenger (RP), and public transport passenger (PT). In the proposed ridesharing service, passengers need not to transfer, and both ridesharing drivers and passengers can choose their preferred matching sequences. Hence, the original network is first extended for multi-modal in \cref{subsec:multimodal_extension}, and further expanded to account for matching sequences in \cref{subsec:matching_sequence_extension}.

\subsubsection{Multi-modal extension}
\label{subsec:multimodal_extension}
We consider a ridesharing system setting that:
\begin{itemize}
    \item Drive-alone (DA) and public transport (PT) do not switch to ridesharing modes (RD/RP) during their trips;
    \item Ridesharing drivers may switch to DA before they depart from their origins. This means that there is no \textit{en-route} mode switching for ridesharing drivers;
    \item Similarly, ridesharing passengers may switch to PT before they are picked up. 
\end{itemize}

Under this setting, the original networks are duplicated for each mode with additional links representing traveler mode switching and RD pickup/drop-off, as well as virtual nodes allowing RD and RP leaving ridesharing. We illustrate the multi-modal extended network in \cref{fig:4.multimodal_network}. 

\begin{figure}[H]
    \centering
    \includegraphics[width=0.8\textwidth]{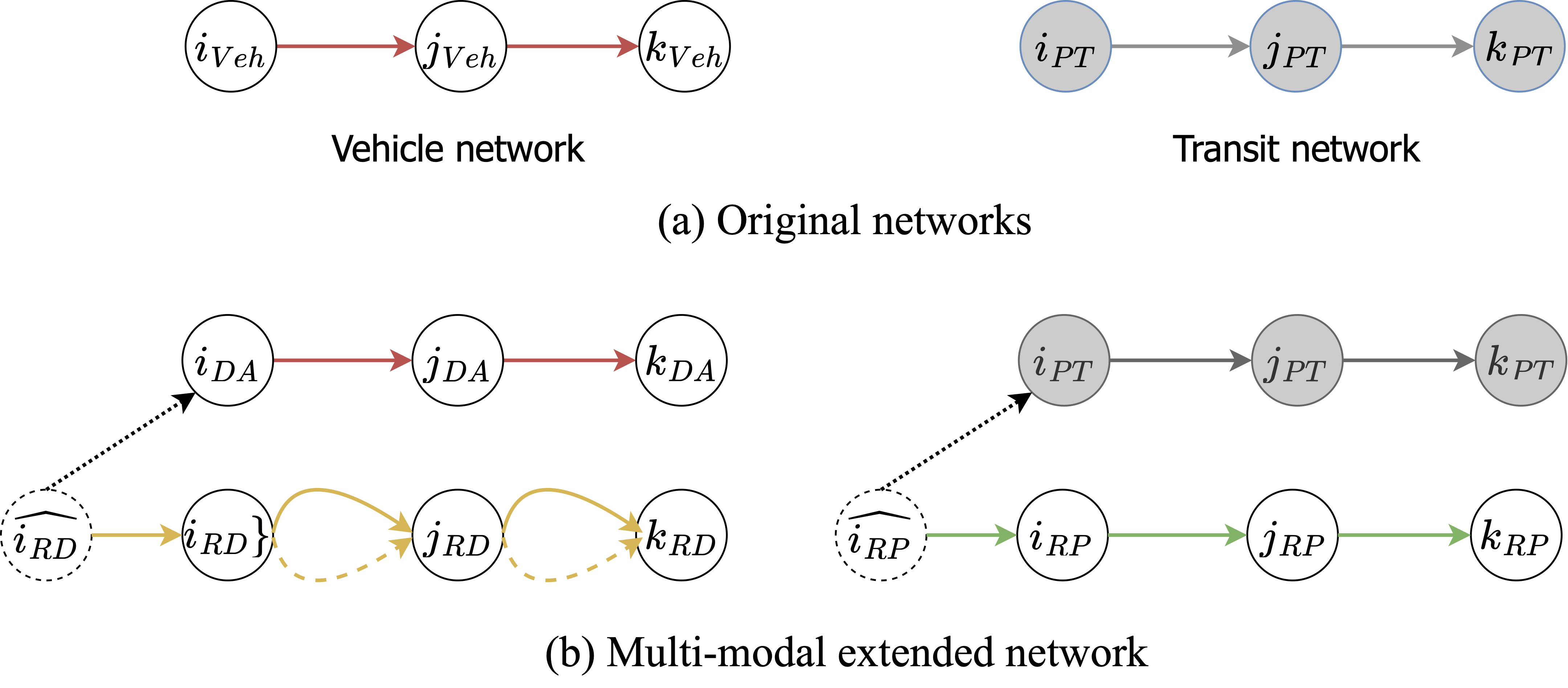}
    \caption{Multi-modal network extension}
    \label{fig:4.multimodal_network}
\end{figure}

In \cref{fig:4.multimodal_network}, virtual nodes $\widehat{i_{RD}}$ and $\widehat{i_{RP}}$ are created for all origins to represent travelers willing to join ridesharing as RD and RP, respectively. However, if they find ridesharing unattractive (e.g., due to long travel time, or even no matching), they can opt out of ridesharing through virtual links $(\widehat{i_{RD}}, i_{DA})$ and $(\widehat{i_{RP}}, i_{PT})$. RD departing from origin $i_{RD}$ starts his/her ridesharing trip to pickup/drop-off RP, during which they traverse on solid links if traveling with passenger, and dashed links if driving alone (for pickup or to reach RD's destination). While RP departs from his/her origin $i_{RP}$ only if picked up by a RD, and continues his/her trip with the matched RD.

Note that, although we do not explicitly connect the mode-specific origin nodes for demand conservations, such conservation is handled within the mode choice model shown in \cref{fig:2.process}.

\subsubsection{Matching sequence expansion}
\label{subsec:matching_sequence_extension}

In this subsection, we propose a network-based approach to model transfer-free ridesharing by incorporating matching sequence into the multi-modal extended network. The proposed matching-sequence expanded network is a compact link-based representation of \textit{all} feasible without-transfer paths belonging to matching sequences, without the need to enumerate the full path set (as in the case of path-based approach). 

Based on the observation made in \textcite{yao2021dynamic}, where matching sequences are shown to be acyclic and indicated by the ascending task index ($0, 1, ..., L$), the matching-sequence expanded ridesharing network is constructed by stacking the basic RD (or RP) subnetworks for $L$ levels. As illustrated in \cref{fig:5.hypernetwork}, each level $l$ in the expanded network represents one routing segment in the matching sequence, with virtual links connecting two consecutive levels and from virtual RD/RP origin nodes to the task level(s).
\begin{figure}[H]
    \centering
    \includegraphics[width=0.85\textwidth]{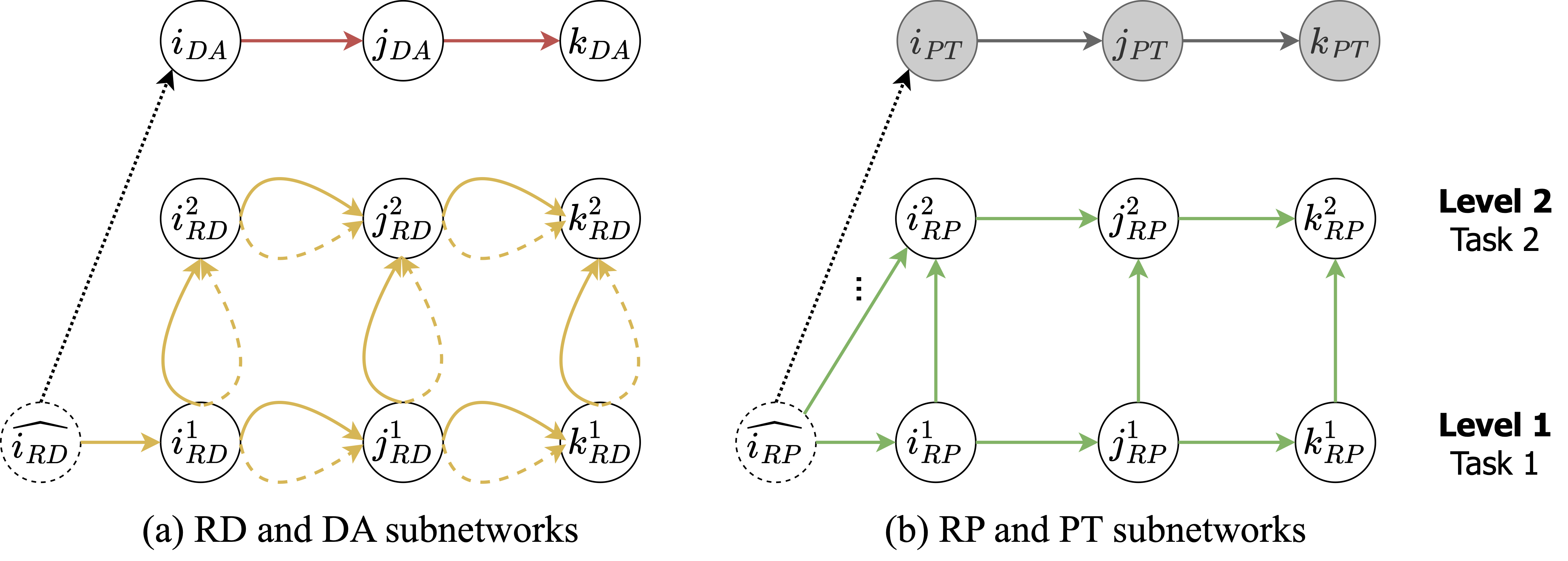}
    \caption{Hyper-network (by matching-sequence expansion)}
    \label{fig:5.hypernetwork}
\end{figure}
In \cref{fig:5.hypernetwork}(a), RD start their ridesharing trips by traversing link $(\widehat{{i}_{RD}}, i_{RD}^1)$, which means RD departs for the 1st task ($l=1$) in the matching sequence. At each level, RD traverse dashed links only if driving without passenger, otherwise, they travel with passengers on the solid links. After picking up/dropping off \DIFaddbegin \DIFadd{a }\DIFaddend passenger at some node $j_{RD}^l$, RD are required to traverse virtual link $(j_{RD}^l, j_{RD}^{l+1})$ representing finishing task $l$ and heading for the next task ($l+1$).

Differently, as shown in \cref{fig:5.hypernetwork}(b), there are multiple virtual links $(\widehat{i_{RP}}, i_{RP}^l), \forall l \in \{1,...,L-1\}$ from virtual RP origin node to multiple levels, which represent passengers from the same origin could be picked at different times (i.e., $l \neq l'$).  After being picked up at some node $j_{RP}^l$, RP traverse links $(j_{RP}^l, j_{RP}^{l+1})$ with their RD to pickup/drop-off other passengers, i.e., passengers and drivers will be coupled (detailed in \cref{sec:RidesharingNetworkModel}). Note that, in the simple case where all pickups are from the same location, the RP virtual links $(\widehat{i_{RP}}, i_{RP}^l), \forall l \in \{2,...,L-1\}$ could be redundant. However, these additional virtual links provide the possibility to impose different pickup waiting times for passengers from the same origin due to matchings.

As described above, RP transfer can be avoided by both RD and RP following transfer-free matching sequences. The matching-sequence expanded network provides a tool that allows imposing additional constraints in the network (as detailed in \cref{sec:RidesharingNetworkModel}) to ensure RD and RP comply with their chosen matching sequences. Moreover, the link-based hyper-network not only avoids path set enumeration, but also suggests a decomposable structure of the problem (hyper-network) where each level could be solved separately if cross-level constraints are satisfied. This observation motivates the development of a sequence-bush algorithm in \cref{sec:algorithm}.  

We \DIFdelbegin \DIFdel{now define the matching-sequence expanded }\DIFdelend \DIFaddbegin \DIFadd{summarize the components of the proposed }\DIFaddend hyper-network \DIFdelbegin \DIFdel{$\mathcal{G}$. Let $\mathcal{G}_{Veh}$ and $\mathcal{G}_{PT}$ denote the original vehicle and public transport networks, respectively; where $\mathcal{G}_{Veh} = (\mathcal{N}_{Veh}, \mathcal{E}_{Veh})$ and $\mathcal{G}_{PT} = (\mathcal{N}_{PT}, \mathcal{E}_{PT})$ with $\mathcal{N}_{Veh}, ~\mathcal{N}_{PT}$ and $\mathcal{E}_{Veh}, ~\mathcal{E}_{PT}$ denoting nodes and edges in $\mathcal{G}_{Veh}$ and $\mathcal{G}_{PT}$; and the set of traveler origins and destinations are denoted as $\mathcal{N}_O$  and $\mathcal{N}_D$, respectively.  The hyper-network $\mathcal{G}$ is composed of four subnetworks $\mathcal{G}_{DA}, ~\mathcal{G}_{RD}, ~\mathcal{G}_{RP}, ~\mathcal{G}_{PT}$ as summarized in \mbox{
\cref{tab:hypernet}}\hskip0pt
:
}\DIFdelend \DIFaddbegin \DIFadd{in Appendix~\ref{appendix.hyper_net_def}.
}\DIFaddend 

\DIFdel{Note that }\DIFdelend \DIFaddbegin \DIFadd{Note that several studies have adapted hyper-network for modeling ridesharing. For example, }\DIFaddend \textcite{yao2021dynamic} applied a similar \DIFdelbegin \DIFdel{matching-sequences }\DIFdelend \DIFaddbegin \DIFadd{matching-sequence }\DIFaddend approach for ridesharing matching\DIFdelbegin \DIFdel{. This previous study focused on real-time operation with the assumption of fixed travel time and correspondingly, only one fixed shortest path is taken by travelers between tasks. }\DIFdelend \DIFaddbegin \DIFadd{, but assumed fixed shortest paths. \mbox{
\textcite{mahmoudi2016finding} }\hskip0pt
and \mbox{
\textcite{liu2020integrated} }\hskip0pt
proposed to enumerate all transfer-free pickup and drop-off combinations for the }\textit{\DIFadd{state}} \DIFadd{dimension in a hyper-network, whose size grows exponentially with sequence length. 
}\DIFaddend In this study, we generalize the matching sequences for capturing \DIFaddbegin \DIFadd{the }\DIFaddend long-term impacts of traveler mode choice and matching on network congestion\DIFdelbegin \DIFdel{in terms of flow-dependent travel times and route choice}\DIFdelend \DIFaddbegin \DIFadd{, where hyper-network size grows linearly with respect to sequence length}\DIFaddend .   

\DIFdelbegin 

\DIFdelend \section{Traveler mode choice model}
\label{sec:ModeChoice}
As mentioned in the previous section, traveler flows in the hyper-network are determined by the mode choice model. Specifically, travelers are assumed to choose among 4 alternative modes prior to their departures: driving alone (DA); ridesharing as a driver (RD); ridesharing as a passenger (RP); or taking public transport (PT). Let $\mathcal{M}=\{DA, RD, RP, PT \}$ \DIFdelbegin \DIFdel{denotes }\DIFdelend \DIFaddbegin \DIFadd{denote }\DIFaddend the set of alternative modes. We consider that total travel demands for each OD pair, $\omega \coloneqq (o, d) \in \mathcal{W}$, is given as $q_{\omega} > 0$, and the mode choice model splits the total demands $q_{\omega}$ into $q_{\omega}^{DA}, q_{\omega}^{RD}, q_{\omega}^{RP}$ and $q_{\omega}^{PT}$ depending on the modal costs.

\subsection{Modal costs}
\label{sec:modal_costs}
The modal costs (i.e., travel disutility for each mode) are defined for each OD pair $\omega$ as follows:
\begin{subequations}
Drive alone (DA):
\begin{align} 
\label{eq:DA_modal_costs}
    C_{\omega}^{DA} = 
    \underbrace{\alpha^{DA} \cdot t_{\omega}^{DA}}_{\text{Travel time cost}} 
    + \underbrace{\beta \cdot d_{\omega}^{DA}}_{\text{Car operational cost}}
\end{align}
Ridesharing driver (RD):
\begin{align} 
\label{eq:RD_modal_costs}
    C_{\omega}^{RD} =  
    \underbrace{\alpha^{RD} \cdot t_{\omega}^{RD}}_{\text{Travel time cost}} 
    + \underbrace{\beta \cdot d_{\omega}^{RD}}_{\text{Car operational cost}} 
    + \underbrace{g_{\omega}^{RD}}_{\text{Inconvenience cost}}
    - \underbrace{p_{\omega}^{RD}}_{\text{Compensation to driver}}
\end{align}
Ridesharing passenger (RP):
\begin{align} 
\label{eq:RP_modal_costs}
    C_{\omega}^{RP} =  
      \underbrace{\alpha^{RP} \cdot t_{\omega}^{RP}}_{\text{Travel time cost}} 
    + \underbrace{g_{\omega}^{RP}}_{\text{Inconvenience cost}}
    + \underbrace{p_{\omega}^{RP}}_{\text{Passenger payment}}
\end{align}
Public transport passenger (PT):
\begin{align} 
\label{eq:PT_modal_costs}
    C_{\omega}^{PT} =  
      \underbrace{\alpha^{PT} \cdot t_{\omega}^{PT}}_{\text{Travel time cost}} 
    + \underbrace{g_{\omega}^{PT}}_{\text{Inconvenience cost}}
    + \underbrace{p_{\omega}^{PT}}_{\text{Passenger payment}}
\end{align}
\end{subequations}
where, $t_{\omega}^{m} (m \in \mathcal{M})$ is the travel time between OD $\omega$ using mode $m$, and $\alpha^{m}$ represents the value of time (VOT) of travelers using mode $m$; car operational cost of the drivers (DA and RD) is assumed proportional to their travel distances $d_{\omega}^{DA/RD}$ with factor $\beta$; ridesharing drivers experience inconvenience cost $g_{\omega}^{RD}$ for taking passengers, but  are compensated with $p_{\omega}^{RD}$; both ridesharing and public transport passengers experience  inconvenience $g_{\omega}^{RP/PT}$ and pay fares $p_{\omega}^{RP/PT}$ for their trips, without the need to operate a vehicle.

Modal costs defined by \cref{eq:DA_modal_costs}-\eqref{eq:PT_modal_costs} are general. In this paper, we specify the inconvenience costs as linear functions of the shared trip duration and shared distance (similar to \cite{li2020path}), while the shared portion is related to the matching sequence. We consider \DIFdelbegin \DIFdel{the whole RP/PT trip is shared (at least with the driver), and }\DIFdelend \DIFaddbegin \DIFadd{that }\DIFaddend only a portion of the RD trip is shared, where $\gamma \leq 1$ \DIFdelbegin \DIFdel{denotes the portion of shared RD trip }\DIFdelend \DIFaddbegin \DIFadd{denote the shared portion }\DIFaddend in a matching sequence. The inconvenience costs \DIFaddbegin \DIFadd{and compensations for each mode }\DIFaddend are formulated in \cref{eq:RD_inconvenience}-\eqref{eq:PT_inconvenience}:
\begin{subequations}
\begin{align}
g_{\omega}^{RD} &=  
      \gamma \cdot  (\tau_{t}^{RD} \cdot t_{\omega}^{RD}
    + \tau_{d}^{RD} \cdot d_{\omega}^{RD} ) \DIFaddbegin \quad &  
\DIFadd{p_{\omega}^{RD} }&\DIFadd{=  
    \gamma \cdot  (\nu_{t}^{RD} \cdot t_{\omega}^{RD}
    + \nu_{d}^{RD} \cdot d_{\omega}^{RD} ) }\DIFaddend \label{eq:RD_inconvenience}\\
g_{\omega}^{RP} &=  
      \tau_{t}^{RP} \cdot t_{\omega}^{RP}
    + \tau_{d}^{RP} \cdot d_{\omega}^{RP} \DIFaddbegin \quad & 
\DIFadd{p_{\omega}^{RP} }&\DIFadd{=  
      \nu_{t}^{RP} \cdot t_{\omega}^{RP}
    + \nu_{d}^{RP} \cdot d_{\omega}^{RP}
    }\DIFaddend \label{eq:RP_inconvenience} \\
g_{\omega}^{PT} &=  
      \tau_{t}^{PT} \cdot t_{\omega}^{PT}
    + \tau_{d}^{PT} \cdot d_{\omega}^{PT} \DIFaddbegin \quad & 
\DIFadd{p_{\omega}^{PT} }&\DIFadd{=  
      \nu_{t}^{PT} \cdot t_{\omega}^{PT}
    + \nu_{d}^{PT} \cdot d_{\omega}^{PT}
    }\DIFaddend \label{eq:PT_inconvenience}
\end{align}
\end{subequations}
where, \DIFaddbegin \DIFadd{for $m' \in \{RD, RP, PT\}$, }\DIFaddend $\tau_{t}^{m'}$ and $\tau_{d}^{m'}$  \DIFdelbegin \DIFdel{$(m' \in \{RD, RP, PT\})$ }\DIFdelend represent the unit inconvenience cost per unit time and unit distance, respectively\DIFaddbegin \DIFadd{; and }\DIFaddend $\nu_{t}^{m'}$ and $\nu_{d}^{m'}$ \DIFdelbegin \DIFdel{$(m' \in \{RD, RP, PT\})$ }\DIFdelend represent the unit ridesharing price per unit time and unit distance, respectively. 
\DIFaddbegin 

\DIFaddend Note that, there exist different functional forms for inconvenience costs and ridesharing prices in the literature. For example, \textcite{di2019unified} assume non-linear RD inconvenience (link) cost and compensation functions to account for different occupancies. However, as shown in \textcite{ban2019general}, such non-linearity could bring further challenges for showing solution existence. Our proposed hyper-network model embeds the vehicle occupancy in the matching sequence, where occupancy-specific costs can be accommodated by augmenting links with different occupancies and associating different costs (instead of only with/without passenger). The linear inconvenience costs and ridesharing prices are considered here for simplicity.

\DIFdelbegin 

\DIFdelend \subsection{Mode choice}
\label{sec:mode_choice}
As shown in \cref{fig:2.process}, the mode choice model considers the travel times, distances, and matching sequences as input to the model. The proposed modal costs \cref{eq:DA_modal_costs}-\eqref{eq:PT_modal_costs} resemble such setting, in which all the variables in the modal costs are determined by the ridesharing platform's matching decisions (e.g., $\gamma$) and route choice decisions in the network model ($t_{\omega}^{m}, d_{\omega}^{m}$).

Given the modal costs, the traveler's mode choice problem can be formulated as the following mixed complementarity conditions (MCP) (\cite{facchinei2003finite}, \cite{Nagurney2009}):

\DIFdelbegin 

\DIFdelend \noindent
[MCP - Mode choice model]
\begin{subequations}
\label{eq:mode_choice_opt}
\begin{align}
    0 \leq \left[C_{\omega}^{m} - \pi_{\omega}\right] \perp q_{\omega}^{m} \geq 0, \forall \omega, m \in \mathcal{M} \label{eq:mcp_mode_choice} \\
    0 \leq \left[\sum_{m \in \mathcal{M}} {q_{\omega}^{m}} - q_{\omega}\right] \perp \pi_{\omega} \geq 0, \forall \omega \label{eq:mcp_mode_conservation}
\end{align}
\end{subequations}
where, $\perp$ represents the inner product between two vectors, and $\pi_{\omega}$ is the multiplier for the demand conservation constraint on OD pair $\omega$.

\DIFdelbegin \DIFdel{We consider }\DIFdelend \DIFaddbegin \DIFadd{By interpreting the multiplier $\pi_{\omega}$ as the minimum modal cost, MCP \eqref{eq:mcp_mode_choice} and \eqref{eq:mcp_mode_conservation} state }\DIFaddend that, at equilibrium, traveler chooses mode $m$ for traveling between OD pair $\omega$ (i.e., $q_{\omega}^m>0$) only if its modal cost $C_{\omega}^m$ is the minimum \DIFaddbegin \DIFadd{(i.e., $C_{\omega}^{m} - \pi_{\omega} = 0$) }\DIFaddend among four alternative modes. 
\section{Ridesharing matching model}
\label{sec:Matching}
In a ridesharing system, \DIFdelbegin \DIFdel{platforms match a set of travelers who decided to rideshare as drivers (RD ) and passengers (RP )}\DIFdelend \DIFaddbegin \DIFadd{platform matches RD with RP to optimize its objective}\DIFaddend . We assume the \DIFdelbegin \DIFdel{platforms determine }\DIFdelend \DIFaddbegin \DIFadd{platform determines }\DIFaddend the matching in terms of matching sequences (\cref{sec:Matching sequence})\DIFdelbegin \DIFdel{. In a matching sequence $n$, the matched driver first departs from his/her origin as the initial task (}\textit{\DIFdel{Task 0}}
\DIFdel{) and finishes at his/her destination as the last task (}\textit{\DIFdel{Task}} 
\DIFdel{$L$); a matched passenger is first picked up at his/her origin (}\textit{\DIFdel{Task}} 
\DIFdel{$l$), and later dropped off at his/her destination (}\textit{\DIFdel{Task}} 
\DIFdel{$l', l'>l$). Under such setting, a ridesharing platform }\DIFdelend \DIFaddbegin \DIFadd{, which }\DIFaddend needs to first compute \DIFdelbegin \DIFdel{these }\DIFdelend \DIFaddbegin \DIFadd{candidate }\DIFaddend matching sequences with respect to some feasibility constraints, and assign RD and RP to \DIFdelbegin \DIFdel{these }\DIFdelend \DIFaddbegin \DIFadd{the }\DIFaddend matching sequences (\cite{agatz2012optimization}).

One of the challenges in ridesharing matching is to efficiently find \DIFdelbegin \DIFdel{these }\DIFdelend \DIFaddbegin \DIFadd{candidate }\DIFaddend matching sequences for \DIFdelbegin \DIFdel{a set of RD and RP}\DIFdelend \DIFaddbegin \DIFadd{RDs and RPs}\DIFaddend , which is known as the NP-hard dial-a-ride problem (DARP) \DIFdelbegin \DIFdel{in the literature }\DIFdelend (\cite{savelsbergh1995general}). We adapt the dynamic tree algorithm (\cite{yao2021dynamic}) subroutine for solving the DARP, which avoids enumeration of candidate matching sequences by casting the problem into a graph-theoretic framework~\citep[][]{santi2014quantifying,alonso2017demand}. This subroutine resembles a bilevel structure: at the \DIFdelbegin \DIFdel{upper-level}\DIFdelend \DIFaddbegin \DIFadd{upper level}\DIFaddend , it finds feasible RD-RP groups incrementally in size as clique problems; at the \DIFdelbegin \DIFdel{lower-level}\DIFdelend \DIFaddbegin \DIFadd{lower level}\DIFaddend , given a RD-RP group, it adapts a tree structure to generate feasible matching sequences\DIFdelbegin \DIFdel{based on RD vehicle capacity and platform level-of-service guarantee, as well as to determine matching sequence objective values from the platform's perspective}\DIFdelend . Only if there is at least one feasible matching sequence, the corresponding RD-RP group is considered feasible. \DIFdelbegin \DIFdel{A more detailed description of the subroutine is provided in Appendix~\ref{appendix.dynamic_tree}}\DIFdelend \DIFaddbegin \DIFadd{We refer to \mbox{
\textcite{yao2021dynamic} }\hskip0pt
for implementation details of the DARP subroutine}\DIFaddend .

Given the matching sequences, $\{n\}$, and the platform's objective \DIFdelbegin \DIFdel{values, $R_n$, }\DIFdelend \DIFaddbegin \DIFadd{value }\DIFaddend associated with each matching sequence, \DIFaddbegin \DIFadd{$R_n$, }\DIFaddend the ridesharing matching problem is reduced to a matching-sequence assignment problem with decision variable $Z_n$ denoting the number of matched $n$ as follows:
\begin{subequations}
\label{eq:matching_opt}
    \begin{alignat}{2}
    &\max_{Z} 			& \qquad 	& \sum_{n} {R_n \cdot Z_n} \label{eq:matching_obj} \\
    &\textrm{subject to}& 			& \nonumber \\
    & 					& 			& \sum_{n} {s_1{(n,0,\omega)} \cdot Z_n} \leq q_{\omega}^{RD}, \forall \omega \label{eq:RD_demand_constraint} \\
    & 					& 			& \sum_{n} {\sum_{1 \leq l \leq L-1} {s_1{(n,l,\omega)} \cdot Z_n} \leq q_{\omega}^{RP}}, \forall \omega \label{eq:RP_demand_constraint}
    \end{alignat}
\end{subequations}
where, \DIFdelbegin \DIFdel{matching-sequence assignment }\DIFdelend $Z_n$ is mapped to RD and RP trips using the pickup incident matrix $s_1$ \DIFdelbegin \DIFdel{. Remind that, }\DIFdelend \DIFaddbegin \DIFadd{(}\DIFaddend as defined in \cref{sec:Matching sequence}\DIFdelbegin \DIFdel{, the pickup incident $s_1(n, 0, \omega)=1$ only if the driver in matching sequence $n$ travels between OD pair $\omega$. 
Similarly, only if the passenger with OD pair $\omega^{'}$ is picked up as the $l^{th}$ task in matching sequence $n$, $s_1(n, l, \omega^{'})=1$.
}\DIFdelend \DIFaddbegin \DIFadd{). 
}\DIFaddend 

\DIFdelbegin \DIFdel{Following this notion, the }\DIFdelend \DIFaddbegin \DIFadd{The }\DIFaddend left-hand side of constraint \eqref{eq:RD_demand_constraint} represents the total number of \textit{matched} RD \DIFdelbegin \DIFdel{of OD pair }\DIFdelend $\omega$, who might be matched to multiple matching sequences (\DIFaddbegin \DIFadd{as indicated }\DIFaddend by $\sum_{n}$), and it is constrained by the total RD demands \DIFdelbegin \DIFdel{of OD pair $\omega$ (}\DIFdelend $q_{\omega}^{RD}$\DIFdelbegin \DIFdel{)}\DIFdelend . Similarly, the total number of \textit{matched} RP \DIFdelbegin \DIFdel{of OD pair }\DIFdelend $\omega$ is captured in the left-hand side of constraint \eqref{eq:RP_demand_constraint}, and constrained by the total RP demands \DIFdelbegin \DIFdel{of OD pair $\omega$ (}\DIFdelend $q_{\omega}^{RP}$\DIFdelbegin \DIFdel{)}\DIFdelend , in which multi-passenger ridesharing is explicitly considered (by including $\sum_{l}$ to represent multiple pickups).

Note that, the proposed ridesharing matching problem \eqref{eq:matching_opt} includes only ridesharing (RD/RP) demand constraints \eqref{eq:RD_demand_constraint}-\eqref{eq:RP_demand_constraint}, while matching sequence feasibility constraints are omitted. This is because the \DIFdelbegin \DIFdel{dynamic tree subroutine , which accounts for }\DIFdelend \DIFaddbegin \DIFadd{DARP subroutine handles the }\DIFaddend matching-sequence feasibility constraints \DIFdelbegin \DIFdel{, is performed before solving problem~\eqref{eq:matching_opt}}\DIFdelend \DIFaddbegin \DIFadd{implicitly}\DIFaddend . For example, \DIFdelbegin \DIFdel{preceding constraints in the subroutine ensure passenger will be dropped off only after being picked up }\DIFdelend \DIFaddbegin \DIFadd{the DARP subroutine ensures that RPs are picked up and dropped off without any transfer }\DIFaddend in a matching sequence\DIFdelbegin \DIFdel{, which also avoids passenger transfers}\DIFdelend ; and vehicle capacity \DIFdelbegin \DIFdel{constraints guarantee that the maximum number of passengers for a single driver will be }\DIFdelend \DIFaddbegin \DIFadd{constraint is }\DIFaddend satisfied anytime in \DIFdelbegin \DIFdel{the }\DIFdelend \DIFaddbegin \DIFadd{a }\DIFaddend matching sequence, which is crucial in a multi-passenger ridesharing setting. \DIFdelbegin \DIFdel{Note }\DIFdelend \DIFaddbegin \DIFadd{It is worth noting }\DIFaddend that, in contrast to the average vehicle-occupancy constraint approach (\cite{xu2015complementarity}, \cite{di2019unified}), the proposed matching-sequence approach considers \DIFdelbegin \DIFdel{explicitly the }\DIFdelend \DIFaddbegin \DIFadd{explicit }\DIFaddend vehicle capacity constraints \DIFaddbegin \DIFadd{in the DARP subroutine}\DIFaddend . The ability to impose vehicle capacity constraints in our model provides new possibilities to study the impacts of heterogeneous vehicle capacities. 

The ridesharing platform matching model \eqref{eq:matching_opt} is general in the sense that, it can accommodate different platform objective functions in \cref{eq:matching_obj}\DIFdelbegin \DIFdel{by computing accordingly the objective value $R_n$ for each matching sequence}\DIFdelend . For example, commercial ridesharing platforms \DIFdelbegin \DIFdel{might consider to maximize their revenues , such that }\DIFdelend \DIFaddbegin \DIFadd{maximizing revenues can substitute commission fees as their }\DIFaddend $R_n$\DIFdelbegin \DIFdel{represents the platform earnings (e.g., in terms of commission fees )}\DIFdelend . In this paper, we assume a public ridesharing platform \DIFdelbegin \DIFdel{which }\DIFdelend \DIFaddbegin \DIFadd{that }\DIFaddend aims at maximizing social benefits in terms of vehicle-kilometer traveled (VKT) savings, where $R_n$ is computed as the difference between ridesharing trip distances and total trip distances (in case all participants traveling alone).

Furthermore, as mentioned in \cref{sec:contributions}, one key feature of the proposed generalized Nash equilibrium model is that, decision variables of other players are considered exogenous to current player\DIFdelbegin \DIFdel{, where }\DIFdelend \DIFaddbegin \DIFadd{. Consequently, }\DIFaddend the objectives and \textbf{constraints} of this player might be dependent on the decision variables of other players. \DIFdelbegin \DIFdel{As this }\DIFdelend \DIFaddbegin \DIFadd{This }\DIFaddend is the case in the proposed ridesharing matching problem \eqref{eq:matching_opt}, \DIFaddbegin \DIFadd{where }\DIFaddend the traveler's mode choice \DIFdelbegin \DIFdel{decision variables }\DIFdelend \DIFaddbegin \DIFadd{decisions }\DIFaddend $q_{\omega}^{RD/RP}$ \DIFdelbegin \DIFdel{appear in }\DIFdelend \DIFaddbegin \DIFadd{interact with platform matching decisions $Z_n$ in the form of }\DIFaddend constraints \eqref{eq:RD_demand_constraint}-\eqref{eq:RP_demand_constraint}\DIFdelbegin \DIFdel{of the ridesharing platform's matching optimization problem (i.
e., coupling constraints), which represent the interactions between travelers and ridesharing platforms.
}\DIFdelend \DIFaddbegin \DIFadd{.
}\DIFaddend 

Let $\mu_{1}^{\omega}$ denote the dual variable of the RD demand constraint \eqref{eq:RD_demand_constraint}, and $\mu_{2}^{\omega}$ \DIFdelbegin \DIFdel{denotes }\DIFdelend \DIFaddbegin \DIFadd{denote }\DIFaddend the dual variable of the ridesharing passenger (RP) demand constraint \eqref{eq:RP_demand_constraint}. The optimality conditions of ridesharing matching problem \eqref{eq:matching_opt} are given by the following complementarity conditions:

\noindent
[MCP - Ridesharing matching model]
\begin{flalign} 
\label{eq:mcp_matching_obj}
    0 \leq \left[
    -R_n 
    + \underbrace{\sum_{\omega} {\mu_{1}^{\omega} \cdot s_1{(n,0,\omega)}}}_{\text{Multiplier of the matched driver}} 
    + \underbrace{\sum_{\omega} {\mu_{2}^{\omega} \cdot \sum_{1 \leq l \leq L-1} {s_1{(n,l,\omega)}}}}_{\text{Multipliers of the matched passengers}} \right] \perp Z_n \geq 0, \forall n
\end{flalign}
\begin{flalign} 
\label{eq:mcp_RD_constraint}
    0 \leq \left[q_{\omega}^{RD} - \sum_{n} {s_1{(n,0,\omega)} \cdot Z_n} \right] \perp \mu_{1}^{\omega} \geq 0, \forall \omega
\end{flalign}
\begin{flalign} 
\label{eq:mcp_RP_constraint}
    0 \leq \left[q_{\omega}^{RP} - \sum_{n} {\sum_{1 \leq l \leq L-1} {s_1{(n,l,\omega)} \cdot Z_n}} \right] \perp \mu_{2}^{\omega} \geq 0, \forall \omega
\end{flalign}

The dual variables $\mu_{1}^{\omega}$ and $\mu_{2}^{\omega}$ of the complementarity constraints \eqref{eq:mcp_RD_constraint} and \eqref{eq:mcp_RP_constraint} can be interpreted as the incentives that the ridesharing platform pays to the ridesharing drivers and passengers. When there \DIFdelbegin \DIFdel{are more }\DIFdelend \DIFaddbegin \DIFadd{is a surplus in }\DIFaddend RD/RP \DIFdelbegin \DIFdel{than needed, ridesharing platforms need not to provide incentive . }\DIFdelend \DIFaddbegin \DIFadd{demands (i.e. demand constraints~\eqref{eq:RD_demand_constraint}, \eqref{eq:RP_demand_constraint} not binding), no incentive is needed (i.e., $\mu_{1}^{\omega}=0, \mu_{2}^{\omega}=0$). }\DIFaddend Conversely, if there are few RD/RP \DIFdelbegin \DIFdel{, platforms need }\DIFdelend \DIFaddbegin \DIFadd{(that is demand constraints are already binding), platform needs }\DIFaddend to promote ridesharing services by providing incentives \DIFaddbegin \DIFadd{($\mu_{1}^{\omega}>0, \mu_{2}^{\omega}>0$).  
}\DIFaddend 

\section{Network model for joint route choice and stable matching}
\label{sec:RidesharingNetworkModel}
As described in \cref{sec:Hypernet}, the hyper-network is composed of four mode-specific subnetworks, where RD and RP subnetworks are expanded for matching sequences. In addition, virtual links are constructed to accommodate \DIFdelbegin \DIFdel{both }\DIFdelend multi-passenger ridesharing without transfer \DIFdelbegin \DIFdel{, }\DIFdelend and matching stability. In this section, a link-based network model is formulated over the hyper-network to jointly handle the route choice and stable matching problems, where travelers' routes are determined endogenously. 

\begin{figure}[H]
    \centering
    \includegraphics[width=0.9\textwidth]{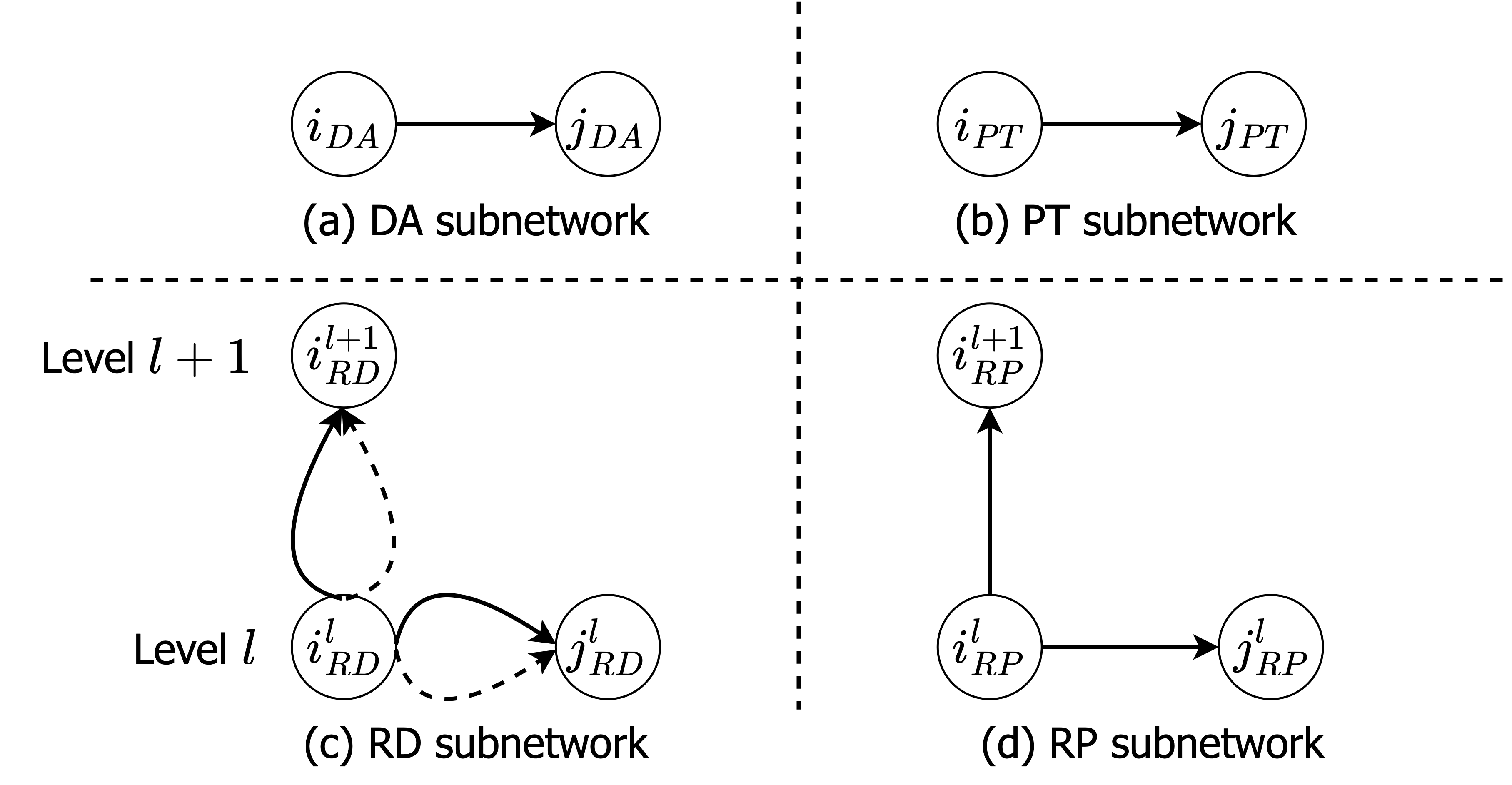}
    \caption{Basic graph components of the hyper-network}
    \label{fig:7.basci_hypernet}
\end{figure}

Before formulating the network model, node-potential and link-flow variables are introduced for the hyper-network. We summarize the index for defining these variables in \cref{tab:hypernet_link_flow}, with respect to the basic graph components in the hyper-network (\cref{fig:7.basci_hypernet}). 

\DIFdelbegin \DIFdel{In network equilibrium models, node-potential variables correspond to the dual variables of the flow conservation constraints at each node. Let $\mathcal{D}$ denote the set of traveler destinations, we consider the case that flow conservations are specified for each mode ($m \in \mathcal{M}$) with respect to traveler destinations $e \in \mathcal{D}$ (i.e., where traveler flows leave the network). To account for the setting that RD and RP follow the matching sequences to travel in the hyper-network, their node-potential variables are augmented with additional index $n$ for matching sequences. Note that, RD's destination index is omitted since it is already embedded in the matching sequence $n$.
}

\DIFdel{Similarly, link flow variables are specified for each mode $m$ with respect to traveler destinations $e \in \mathcal{D}$ and matching sequences $n$ (in the case of RD and RP). Whereas link flows on virtual links $(\widehat{i_{RD}}, i_{DA})$ and $(\widehat{i_{RP}}, i_{PT})$ are specified only for destinations ($e$) such that they are consistent with the DA and PT subnetworks.  
}


\DIFdelend \begin{table}[h]
    \caption{Summary of index for node-potential and link-flow variables.}
    \label{tab:hypernet_link_flow}
    \centering
    \small
    \setlength\tabcolsep{1.5pt}
    \begin{threeparttable}
        \begin{tabular}{c| c| c P{0.11\linewidth} P{0.10\linewidth}| ccc| ccc P{0.11\linewidth} P{0.10\linewidth}
                       }
        \toprule
        \multicolumn{1}{l|}{\textbf{\makecell{Network}}}
           & \multicolumn{1}{l|}{\textbf{\makecell{Node}}}
           & \multicolumn{3}{l|}{\textbf{\makecell[l]{Node potential$^{*}$}}}
              & \multicolumn{3}{l|}{\textbf{\makecell[l]{Link}}}
              & \multicolumn{5}{l}{\textbf{\makecell[l]{Link flow}}} \\

        	&  
        	& 
        Node &
        Destination &
        Matching sequence &
        Tail &
        Head &
        Status &
        Link &
         	&
        	&
        Destination &
        Matching sequence \\

        \midrule
        Original network 									& $i, j$	& -	&	&	& $(i,$	& $j)$	& 	& -	& 	& 	& 	&  		\\
        \midrule
        DA subnetwork 											& $i, j$& $i$	&	$e$&				& $(i,$	& $j)$	& 	& $(i,$	& $j)$	& 	& 	$e$&  		\\
        \midrule
        PT subnetwork 											& $i, j$	& $i$	&	$e$&			& $(i,$	& $j)$	& 	& $(i,$	& $j)$	& 	& 	$e$&  		\\
        \midrule
        {\multirow{2}{*}{\makecell{RD subnetwork$^{**}$}}} 			& 
        	$i^{l_1}:(i, l_1),$							& 
        	{\multirow{2}{*}{\makecell{$i^{l_1}$}}}	&	
        		&	
        		{\multirow{2}{*}{\makecell{$n$}}}&
        		$(i^{l_1},$						& 
        			$j^{l_2},$					& 	
        				$0)$					& 
        				$(i^{l_1},$				& 
        					$j^{l_2},$			& 	
        							$0)$		& 	
        										&  		
        									{\multirow{2}{*}{\makecell{$n$}}}\\

        													& 
        	$j^{l_2}:(j, l_2)$							& 
        		&	
        		&	
        		&
        		$(i^{l_1},$						& 
        			$j^{l_2},$					& 	
        				$1)$					& 
        				$(i^{l_1},$				& 
        					$j^{l_2},$			& 	
        							$1)$		& 	
        										&  		
        									\\
		\midrule
        {\multirow{2}{*}{\makecell{RP subnetwork$^{**}$}}} 			& 
        	$i^{l_1}:(i, l_1),$							&
        	{\multirow{2}{*}{\makecell{$i^{l_1}$}}}	&	
        		{\multirow{2}{*}{\makecell{$e$}}}&	
        		{\multirow{2}{*}{\makecell{$n$}}}& 
        		{\multirow{2}{*}{\makecell{$(i^{l_1},$}}}						& 
        			{\multirow{2}{*}{\makecell{$j^{l_2})$}}}					& 	
        								& 
        				{\multirow{2}{*}{\makecell{$(i^{l_1},$}}}			& 
        					{\multirow{2}{*}{\makecell{$j^{l_2})$}}}			& 	
        									& 	
        								{\multirow{2}{*}{\makecell{$e$}}}		&  		
        									{\multirow{2}{*}{\makecell{$n$}}}\\

        													& 
        	$j^{l_2}:(j, l_2)$							& 
        	&
        	&
        	&
        								& 
        								& 	
        									& 
        								& 
        								& 	
        									& 	
        										&  		
        									\\

        \bottomrule
        \end{tabular}
    	\begin{tablenotes}
        \item[*]The same set of indexes for node $i$ applies to node $j$.
        \item[**] As illustrated in \cref{fig:7.basci_hypernet}, $j^{l_2}$can be either $j^l$(i.e., $l_2 = l_1$) or $i^{l+1}$ (i.e., $l_2 = l_1 +1$).
    \end{tablenotes}
    \end{threeparttable}
\end{table}

In network equilibrium models, node-potential variables correspond to the dual variables of the flow conservation constraints at each node. Specifically, let $\mathcal{D}$ denote the set of traveler destinations, node-potential variables are specified for each mode ($m \in \mathcal{M}$) and each destination $e \in \mathcal{D}$. 
To account for the setting that RD and RP follow the matching sequences to travel in the hyper-network, their node-potential variables are augmented with additional index $n$ for matching sequences. Note that, RD's destination index is omitted since it is already embedded in the matching sequence $n$.

Similarly, link flow variables are specified for each mode $m$ with respect to traveler destinations $e \in \mathcal{D}$ and matching sequences $n$ (in the case of RD and RP). Whereas link flow variables on virtual links $(\widehat{i_{RD}}, i_{DA})$ and $(\widehat{i_{RP}}, i_{PT})$ are specified only for destinations ($e$) such that they are consistent with the DA and PT subnetworks.  

One important feature of the proposed link-based formulation is that, path enumeration can be avoided for multi-passenger ridesharing. With the proposed hyper-network (\cref{fig:5.hypernetwork}), ridesharing routes are determined endogenously by \DIFdelbegin \DIFdel{drivers }\DIFdelend \DIFaddbegin \DIFadd{RD }\DIFaddend traversing links in the hyper-network with minimum node potentials (explained in the following subsections). Note also that, RP node potentials and link flows of a matching sequence $n$ are specified with respect to destinations, but not OD pairs (as in \cite{li2020path}, \cite{chen2022unified}), which could improve efficiency in large-scale problems.

In our network model, \DIFdelbegin \DIFdel{travelers driving alone (DA ) and taking public transport (PT ) }\DIFdelend \DIFaddbegin \DIFadd{DA and PT travelers }\DIFaddend enter directly into the \DIFdelbegin \DIFdel{DA and PT }\DIFdelend \DIFaddbegin \DIFadd{corresponding }\DIFaddend subnetworks, while \DIFdelbegin \DIFdel{travelers ridesharing as driver (RD ) or passenger (PT) first }\DIFdelend \DIFaddbegin \DIFadd{RD (RP) }\DIFaddend enter the virtual origin nodes $\widehat{i_{RD}}$ (\DIFdelbegin \DIFdel{or }\DIFdelend $\widehat{i_{RP}}$) to choose one of the matching sequences or leave ridesharing. After entering the subnetworks, travelers select the routes that minimize their generalized travel costs, in which RD and RP travel together. In the following subsections, stable matching is first formulated as a route choice problem in the hyper-network (\cref{sec:stable_matching}), and the travelers' route choice decisions, as well as ridesharing constraints, are formulated in \cref{sec:route_choice}.

\subsection{Stable matching as a route choice problem}
\label{sec:stable_matching}
Given the matching sequences, $\{n\}$, and the number of matched RD and RP on each matching sequence, $Z_n$, matching stability is reached when no \DIFdelbegin \DIFdel{ridesharing driver (RD ) and passenger (RP ) }\DIFdelend \DIFaddbegin \DIFadd{RD and RP }\DIFaddend prefer to be matched together other than their current matching (\cite{wang2018stable}). 
\DIFdelbegin \DIFdel{Specifically, in }\DIFdelend \DIFaddbegin \DIFadd{In }\DIFaddend the context of multi-passenger ridesharing, stable matching typically \DIFdelbegin \DIFdel{describes the mutual agreements on a driver-passengers matching }\textit{\DIFdel{group}} 
\DIFdel{between the matched }\DIFdelend \DIFaddbegin \DIFadd{refers to the mutual agreement between the }\DIFaddend driver and several passengers \DIFaddbegin \DIFadd{on a RD-RP matching }\textit{\DIFadd{group}} \DIFaddend (\cite{peng2022many}). 
However, \DIFdelbegin \DIFdel{ridesharing drivers and passengers might experience different ridesharing disutilities for the same driver-passengers matching group, due to possible different matching sequences}\DIFdelend \DIFaddbegin \DIFadd{different RD-RP }\textit{\DIFadd{matching sequences}} \DIFadd{can lead to different ridesharing experiences for RD and RP in the same RD-RP matching }\textit{\DIFadd{group}}\DIFadd{, particularly in a multi-OD multi-passenger scenario where RP may detour to pick up peer RPs}\DIFaddend . 
To account for such differences, we propose to extend the multi-passenger stable matching in terms of \textit{matching sequence}, \DIFdelbegin \DIFdel{instead of only driver-passengers matching group. 
Such extension also allows modeling the }\DIFdelend \DIFaddbegin \DIFadd{rather than just RD-RP matching }\textit{\DIFadd{group}}\DIFadd{. 
This expansion enables the modeling of }\DIFaddend multi-passenger stable matching problem in the spatial context, where ridesharing disutilities are jointly determined by the travelers' choice of \textbf{matching sequence} and \textbf{route} in the network.

Formally, we consider a stable matching setting that, each ridesharing traveler $\omega$ (RD/RP) has a preference list for a set matching sequences he/she is matched with $\{n | n:s_{1}(n,\cdot,\omega)=1\}$, ranked based on the ridesharing disutilities of each matching sequence, denoted as $\pi^{n, w}$. It is assumed that, a ridesharing traveler $\omega$ strictly prefers matching sequence $n'$ to $n$
, if its ridesharing disutility is lower, $\pi^{n',w} < \pi^{n,w}$. 
Let $\{w\}_n$ denote the set of participants (the RD and several RPs) in matching sequence $n$, and $\mathcal{S}_w = \{n' | n':s_{1}(n',\cdot,w)=1\}$ \DIFdelbegin \DIFdel{denotes }\DIFdelend \DIFaddbegin \DIFadd{denote }\DIFaddend the set of matching sequences for $w$, the stable multi-passenger ridesharing matching is formally defined for a set of matchings $\{n\}$ as follows:
\begin{definition}[Stable multi-passenger ridesharing matching]

If for every matched $n$, we have:
\begin{align}
\pi^{n,w} \leq \pi^{n',w}, \forall w \in \{w\}_n, n' \in \mathcal{S}_w \label{eq:def_stability_condition}
\end{align}
then, the following stability condition \citep[][]{sotomayor1992multiple} holds for every matched $n$:
\begin{align}
\sum_{w \in \{w\}_n } \pi^{n, w} \leq \sum_{w \in \{w\}_{n} } \inf_{\substack{n' \in \mathcal{S}_w\\n' \neq n} } {\pi^{n', w}},  \nonumber
\end{align}
and $\{n\}$ corresponds to a stable multi-passenger ridesharing matching.
\end{definition}
The above stability condition states that in a stable multi-passenger ridesharing matching there is no matching sequence $n$ such that all participants (the driver and several passengers) in matching sequence $n$ strictly prefer other matching sequences $n'$.

By introducing virtual ridesharing origin nodes $\widehat{i_{RD}}$ and $\widehat{i_{RP}}$, and virtual links (with corresponding flow variables), the multi-passenger ridesharing stable matching problem is cast as a route choice problem in the hyper-network with ridesharing disutilities determined endogenously. This subsection \DIFdelbegin \DIFdel{focus }\DIFdelend \DIFaddbegin \DIFadd{focuses }\DIFaddend on the virtual ridesharing nodes and links in the hyper-network. 

As explained in the previous section, we denote the node potentials in the RD and RP subnetworks as $\pi_{i^1}^{n, RD}$ and $\pi_{i^l}^{n, e, RP}$, whereas $\pi_{\widehat{i_{RD}}}^{e, RD}$
and $\pi_{\widehat{i_{RP}}}^{e, RP}$ are denoted for virtual origin $\widehat{i_{RD}}$ and $\widehat{i_{RP}}$. Link flow variables on links $(\widehat{i_{RD}}, i^1)$ and $(\widehat{i_{RP}}, i^l)$ are augmented to represent ridesharing travelers' choices on matching sequences, denoted by $x_{\widehat{i_{RD}}, i^1}^{n, RD}$ and $x_{\widehat{i_{RP}}, i^l}^{n, e, RP}$, while link flows representing travelers leaving ridesharing are augmented for their destinations $x_{\widehat{i_{RD}}, i_{DA}}^{e, RD}$ and $x_{\widehat{i_{RP}}, i_{PT}}^{e, RP}$. We denote the \textit{generalized} link costs (summation of the link cost and multipliers, as in \cite{larsson1999side}) for the above link flow variables as $\tilde{c}_{\widehat{i_{RD}}, i^1}^{n, RD}$, $\tilde{c}_{\widehat{i_{RP}}, i^l}^{n, e, RP}$, $\tilde{c}_{\widehat{i_{RD}}, i_{DA}}^{e, RD}$ and $\tilde{c}_{\widehat{i_{RP}}, i_{PT}}^{e, {RP}}$. The multi-passenger ridesharing stable matching problem is formulated as the following MCP:
\\\textbf{{Ridesharing driver matching sequence preference as route choice:}}
\begin{subequations}
\label{eq:RD_stable_route}
    \begin{alignat}{2}
    & 0 \leq \left[  
    				\underbrace{\pi_{i^1}^{n, RD} 
    				+ \tilde{c}_{\widehat{i_{RD}}, i^1}^{n, RD}}_{\text{Ridesharing disutility of }n}
    				- \pi_{\widehat{i_{RD}}}^{e, RD}  
    	\right] 
    \perp 
    				x_{\widehat{i_{RD}}, i^1}^{n, RD}
    \geq 0, 
    				 \forall i, n \label{eq:RD_seq_choice} & \\
   	& 0 \leq \left[  
    				\underbrace{\pi_{i}^{e, DA} 
    				+ \tilde{c}_{\widehat{i_{RD}}, i_{DA}}^{e, RD}}_{\text{Disutility of leaving ridesharing}}
    				- \pi_{\widehat{i_{RD}}}^{e, RD}  
    	\right] 
    \perp 
    				x_{\widehat{i_{RD}}, i_{DA}}^{e, RD} 
    \geq 0, 
    				\forall i, e \in \mathcal{D} &  \label{eq:RD_to_DA} \\
    & 0 \leq \left[  
    				\underbrace{
    				\sum_{n} {x_{\widehat{i_{RD}}, i^1}^{n, RD}} 
    				+ x_{\widehat{i_{RD}}, i_{DA}}^{e, RD}
    				- q_{(i,e)}^{RD}}_{\text{RD demand conservation}}
    	\right] 
    \perp 
    				\pi_{\widehat{i_{RD}}}^{e, RD}
    \geq 0, 
    				\forall i, e \in \mathcal{D} &\label{eq:RD_stable_matching_conservation}
    \end{alignat}
\end{subequations}
\textbf{{Ridesharing passenger matching sequence preference as route choice:}}
\begin{subequations}
\label{eq:RP_stable_route}
    \begin{alignat}{2}
    & 0 \leq \left[  
    				\underbrace{\pi_{i^l}^{n, e, RP} 
    				+ \tilde{c}_{\widehat{i_{RP}}, i^l}^{n, e, RP} }_{\text{Ridesharing disutility of }n}
    				- \pi_{\widehat{i_{RP}}}^{e, RP}
    	\right] 
    \perp 
    				x_{\widehat{i_{RP}}, i^l}^{n, e, RP}
    \geq 0, 
    				 \forall i, e, n, l \label{eq:RP_seq_choice} & \\
   	& 0 \leq \left[  
    				\underbrace{\pi_{i}^{e, PT} 
    				+ \tilde{c}_{\widehat{i_{RP}}, i_{PT}}^{e, {RP}}}_{\text{Disutility of leaving ridesharing}}
    				- \pi_{\widehat{i_{RP}}}^{e, RP}  
    	\right] 
    \perp 
    				x_{\widehat{i_{RP}}, i_{PT}}^{e, RP} 
    \geq 0, 
    				\forall i, e \in \mathcal{D} &  \label{eq:RP_to_PT} \\
    & 0 \leq \left[  
    				\underbrace{
    				\sum_{n} {\sum_{1 \leq l \leq L-1} {x_{\widehat{i_{RP}}, i^l}^{n, e, RP}}} 
    				+ x_{\widehat{i_{RP}}, i_{PT}}^{e, RP}
    				- q_{(i,e)}^{RP}}_{\text{RP demand conservation}}
    	\right] 
    \perp 
    				\pi_{\widehat{i_{RP}}}^{e, RP}
    \geq 0, 
    				\forall i, e \in \mathcal{D} &\label{eq:RP_stable_matching_conservation}
    \end{alignat}
\end{subequations}
Subject to:
\\\textbf{{Stable matching constraint:}}
\begin{flalign}
\label{eq:stable_matching_constraint}
	x_{\widehat{i_{RD}}, i^1}^{n, RD} = x_{\widehat{j_{RP}}, j^l}^{n, e, RP}, & \forall n, \\
    	& \underbrace{i:s_1(n, 0, (i, \cdot))=1}_{\text{Origin of the matched driver in }n}, \nonumber \\
    	& (j, l, e) \in \underbrace{\{(j, l, e)|s_1(n, l, (j, e)) = 1, 1 \leq l \leq L - 1\}}_{\text{Set of matched passengers in }n} \nonumber
\end{flalign}
\textbf{{Platform matching constraint:}}
\begin{gather}
	x_{\widehat{i_{RD}}, i^1}^{n, RD} \leq s_1(n, 0, (i, \cdot)) \cdot Z_n, \forall n, i \in O \label{eq:RD_matching_cap} \\
	x_{\widehat{i_{RP}}, i^l}^{n, e, RP} \leq s_1(n, l, (i, e)) \cdot Z_n, \forall n, (i, e): \omega, 1 \leq l \leq L-1 \label{eq:RP_matching_cap}
\end{gather}

\cref{eq:RD_stable_route}-\eqref{eq:stable_matching_constraint} define the multi-passenger ridesharing stable matching, \DIFdelbegin \DIFdel{where }\DIFdelend \DIFaddbegin \DIFadd{in which }\DIFaddend ridesharing traveler's (RD/RP) preferences to matching sequences with lowest ridesharing disutility (stability condition Eq.~\ref{eq:def_stability_condition}) is captured in the route choice problem \eqref{eq:RD_stable_route} and \eqref{eq:RP_stable_route}. \DIFdelbegin \DIFdel{Whereas the }\DIFdelend \DIFaddbegin \DIFadd{The }\DIFaddend maximum number of RD and RP on each matching sequence are constrained by the platform's decision variables $Z_n$ in \eqref{eq:RD_matching_cap}-\eqref{eq:RP_matching_cap}. 

The stable matching constraint \eqref{eq:stable_matching_constraint} can be interpreted as the mutual agreement on a matching sequence $n$ between all the participants in $n$. Intuitively, if all the ridesharing travelers reach mutual agreements \eqref{eq:stable_matching_constraint} that their matching sequences are the best option \textit{they can get} \eqref{eq:RD_stable_route}-\eqref{eq:RP_stable_route}, we have a stable matching. We show the equivalence of the proposed formulation to the stable matching problem in Appendix \cref{prop:equivalence_stable_matching}. 

Given that vehicle capacity constraints are guaranteed in the matching sequences and RD follows these matching sequences, stable matching constraint \eqref{eq:stable_matching_constraint} also ensures that vehicle capacity constraints are satisfied in the network model. In essence, \cref{eq:stable_matching_constraint} states that, for a matching sequence, the number of passengers being picked up (and dropped off) at each pickup (drop-off) node $i^l$, $x_{\widehat{i_{RP}}, i^l}^{n, e, RP}$, equals to the number of drivers, $x_{\widehat{i_{RD}}, i^1}^{n, RD}$. This is equivalently saying that, there are $x_{\widehat{i_{RD}}, i^1}^{n, RD}$ number of drivers traveling on matching sequence $n$, in which each one of them picks up or drops off one passenger (and in total $1\cdot x_{\widehat{i_{RD}}, i^1}^{n, RD}$ RPs being served) at each task node\DIFaddbegin \DIFadd{, }\DIFaddend exactly as specified in the matching sequence\DIFdelbegin \DIFdel{such that his/her }\DIFdelend \DIFaddbegin \DIFadd{. As a result, the }\DIFaddend vehicle capacity constraint is always satisfied.

The stable matching constraint \eqref{eq:stable_matching_constraint} is a key constraint in the formulation. It represents one of the shared constraints that couples the decision variables of \DIFdelbegin \DIFdel{ridesharing drivers and passengers}\DIFdelend \DIFaddbegin \DIFadd{RD and RP}\DIFaddend . From the driver perspective, $x_{\widehat{i_{RD}}, i^1}^{n, RD}$ are the decision variables and $x_{\widehat{i_{RP}}, i^l}^{n, e, RP}$ are the anticipated variables, and reversed for the passengers. Similar to \textcite{ban2019general}, the dual variables of constraint \eqref{eq:stable_matching_constraint} \DIFdelbegin \DIFdel{(denoted as $\phi_{5}^{n, l}$ for RD, and $\widehat{\phi}_{5}^{n, l}$ for RP) need not to }\DIFdelend \DIFaddbegin \DIFadd{need not }\DIFaddend be equivalent, as these dual variables represent the marginal price of the constraint for RD and RP, respectively. \DIFdelbegin \DIFdel{To prove the solution existence of the overall equilibrium model, a }\DIFdelend \DIFaddbegin \DIFadd{A }\DIFaddend normalized relationship between these multipliers will be introduced \DIFdelbegin \DIFdel{in \mbox{
\cref{sec:Existence}}\hskip0pt
.
}\DIFdelend \DIFaddbegin \DIFadd{to prove the solution existence of the overall equilibrium model in Appendix~\mbox{
\cref{sec:Existence}}\hskip0pt
. 
}\DIFaddend 
Note that, as indicated in the introduction section, the explicit consideration of stable matching was missing in the literature, apart from \textcite{li2020path}. The addition of \DIFdelbegin \DIFdel{the }\DIFdelend stable matching constraints in the model formulation resembles a more realistic setting, in which drivers and passengers may reject the matching sequences. \DIFaddbegin \DIFadd{We also summarize the MCP of the stable matching problem with its corresponding multipliers in Appendix~\ref{appendix.MCP_network_model}.
}\DIFaddend 
\subsection{Traveler route choice model}
\label{sec:route_choice}
After ridesharing travelers (RD/RP) choose their preferred matching sequence, or switch their modes into DA or PT at the virtual nodes, all travelers depart from their origins in the four subnetworks. In these subnetworks, DA and PT travelers are assumed to make route choice decisions to minimize their generalized travel costs, while ridesharing routes are jointly determined by RD and RP. One of the key features of the proposed link-based model is that, RD can decide their routes between two consecutive tasks, as long as they follow their chosen matching sequences. Consequently, each RP is served by one RD, and needs not to make transfer. 

\subsubsection{Link costs}
\label{sec:link_cost}
To ensure consistency between the mode choice model and network model, we assume link cost components are the same as in modal costs functions \eqref{eq:DA_modal_costs}-\eqref{eq:PT_modal_costs}, such that the travel times and distances on the actual traveled route can be retrieved for the mode choice model. The travel costs for each type of flow on link $(i, j)$ are defined as follows:
\begin{table}[H]
\caption{Link cost components}
\label{tab:link_cost_final}
\centering
\setlength\tabcolsep{20pt}
\begin{tabular}{cc|cc}
  \toprule
   \multicolumn{2}{l}{\makecell{Link cost}} & $t_{i,j}$ & $d_{i,j}$  \\
   \midrule
DA &$c_{i, j}^{DA}$ & $\alpha^{DA}$          & $\beta$                    \\[0.5em]
\midrule
RD &$c_{i^{l_1},j^{l_2}, 0}^{RD}$ & $\alpha^{RD}$          & $\beta$                \\[0.5em]
 &$c_{i^{l_1},j^{l_2}, 1}^{RD}$ & $\alpha^{RD} + \tau_{t}^{RD} - \nu_{t}^{RD}$          & $\beta  + \tau_{d}^{RD} - \nu_{d}^{RD}$                \\[0.5em]
 \midrule
RP &$c_{i^{l_1}, j^{l_2}}^{RP}$ & $\alpha^{RP} + \tau_{t}^{RP} + \nu_{t}^{RP}$          & $\tau_{d}^{RP} + \nu_{d}^{RP}$             \\[0.5em]
\midrule
PT &$c_{i,j}^{PT}$ & $\alpha^{PT} + \tau_{t}^{PT} + \nu_{t}^{PT}$          & $\tau_{d}^{PT} + \nu_{d}^{PT}$               \\
\bottomrule   
\end{tabular}
\end{table}
\noindent
where, $t_{i, j}$ is the flow-dependent travel time on link $(i, j)$ for DA, RD and RP, and segregated transit network link travel time for PT; $d_{i, j}$ are the link length in the vehicle network and transit network, respectively. It is also assumed that, link costs on virtual links $(i^{l}, i^{l+1})$ are zero (i.e., $c_{i^{l},i^{l+1}, 0/1}^{RD} = 0, c_{i^{l}, i^{l+1}}^{RP} = 0$), representing no additional link time/distance to pick up/drop off a passenger at the same location.

Note that, compared to the RD link cost functions in \cref{tab:link_cost_final}, shared portion variables $\gamma$ are included in RD modal cost components Eq.~\eqref{eq:RD_inconvenience} \DIFdelbegin \DIFdel{and Eq.~\eqref{eq:RD_price} }\DIFdelend to represent the average ridesharing disutility. Using the proposed network model, the ridesharing disutility of a matching sequence can be computed endogeneously, by accumulating separately for the shared and non-shared portion using the with-passenger and without-passenger link costs.

\subsubsection{Route choice}
\label{sec:network_route_choice}
Following previous studies, travelers in a ridesharing system are assumed to depart their origins and select minimum-generalized-cost routes to reach their destinations. Let $\mathbf{A}$ \DIFdelbegin \DIFdel{denotes }\DIFdelend \DIFaddbegin \DIFadd{denote }\DIFaddend the node-link incident matrix of the hyper-network, in which, for node $j$, $a_{j}^{ij} = 1$ for all incoming links $(i,j)$, and $a_{j}^{jk} = -1$ for all outgoing links $(j,k)$. Let $\mathbf{x}$ and $\bm{\pi}$ denote the vectors of link flows \DIFdelbegin \DIFdel{node potentials and  }\DIFdelend \DIFaddbegin \DIFadd{and node potentials }\DIFaddend with respect to mode $m$, destinations $e$ and matching sequences $n$ (as defined in \cref{tab:hypernet_link_flow}), $\mathbf{q}$ \DIFdelbegin \DIFdel{denotes }\DIFdelend \DIFaddbegin \DIFadd{denote }\DIFaddend the vector of demands for each mode at each origin\DIFdelbegin \DIFdel{, }\DIFdelend \DIFaddbegin \DIFadd{. The }\DIFaddend flow conservation constraints can be represented as follows:
\begin{subequations}
\begin{align} 
\label{eq:flow_conservation}
(\mathbf{Ax} - \mathbf{q}) \odot \bm{\pi} = \mathbf{0} \\
\mathbf{Ax} - \mathbf{q} \geq \mathbf{0}\DIFdelbegin 
\DIFdelend \DIFaddbegin \DIFadd{, }\quad  \DIFaddend \bm{\pi} \geq \mathbf{0}
\end{align}
\end{subequations}
where $\odot$ represents the component-wise multiplication. Correspondingly, let $\tilde{\mathbf{c}}$ denote the vector of \textit{generalized} link costs, the route choice problem is represented as follows:
\begin{subequations}
\begin{align} 
\label{eq:network_route_choice_matrix}
(\mathbf{A}^\intercal \bm{\pi} + \tilde{\mathbf{c}}) \odot \mathbf{x} = \mathbf{0} \\
\mathbf{A}^\intercal \bm{\pi} + \tilde{\mathbf{c}} \geq \mathbf{0}\DIFdelbegin 
\DIFdelend \DIFaddbegin \DIFadd{, }\quad \DIFaddend \mathbf{x} \geq \mathbf{0}
\end{align}
\end{subequations}
We refer to Appendix~\DIFdelbegin \DIFdel{\ref{subsec:MCP_route_choice_descriptive} }\DIFdelend \DIFaddbegin \DIFadd{\ref{appendix.MCP_network_model} }\DIFaddend for a more descriptive MCP formulation for the flow conservation and route choice problem.

Note that, the above route choice problem is similar to a classic traffic assignment problem (TAP). However, in a multi-passenger ridesharing problem, the route choice behaviors of RD and RP are expected to influence each other, i.e., there exists coupling between \DIFdelbegin \DIFdel{ridesharing driver and the passenger }\DIFdelend \DIFaddbegin \DIFadd{RD and RP }\DIFaddend link flow variables. In the following subsection, we introduce a set of constraints that captures the interactions between RD and RP in a multi-passenger ridesharing problem.

\DIFdelbegin 

\DIFdelend \subsubsection{Multi-passenger ridesharing constraints}
\label{sec:network_ridesharing_constraint}
This subsection deals with another key component of the proposed model. \DIFdelbegin \DIFdel{In this subsection, multi-passenger }\DIFdelend \DIFaddbegin \DIFadd{Multi-passenger }\DIFaddend ridesharing constraints are introduced for the hyper-network, such that ridesharing passengers will travel together with drivers, and passenger transfer is avoided\DIFaddbegin \DIFadd{. This is achieved }\DIFaddend by constraining ridesharing drivers to follow their selected matching sequence, whereas routes between two tasks are endogenously determined in the route choice model. Note that, relying on the total flow conservation constraints (see Eq.~\DIFdelbegin \DIFdel{\eqref{eq:RD_network_conservation}-\eqref{eq:RP_network_conservation}}\DIFdelend \DIFaddbegin \DIFadd{\eqref{eq:flow_conservation}}\DIFaddend ), the multi-passenger ridesharing constraints are specified for the with-passenger RD flows, such that without-passenger flows also correspond to the proposed multi-passenger ridesharing setting.
\begin{figure}[H]
    \centering
    \includegraphics[width=0.65\textwidth]{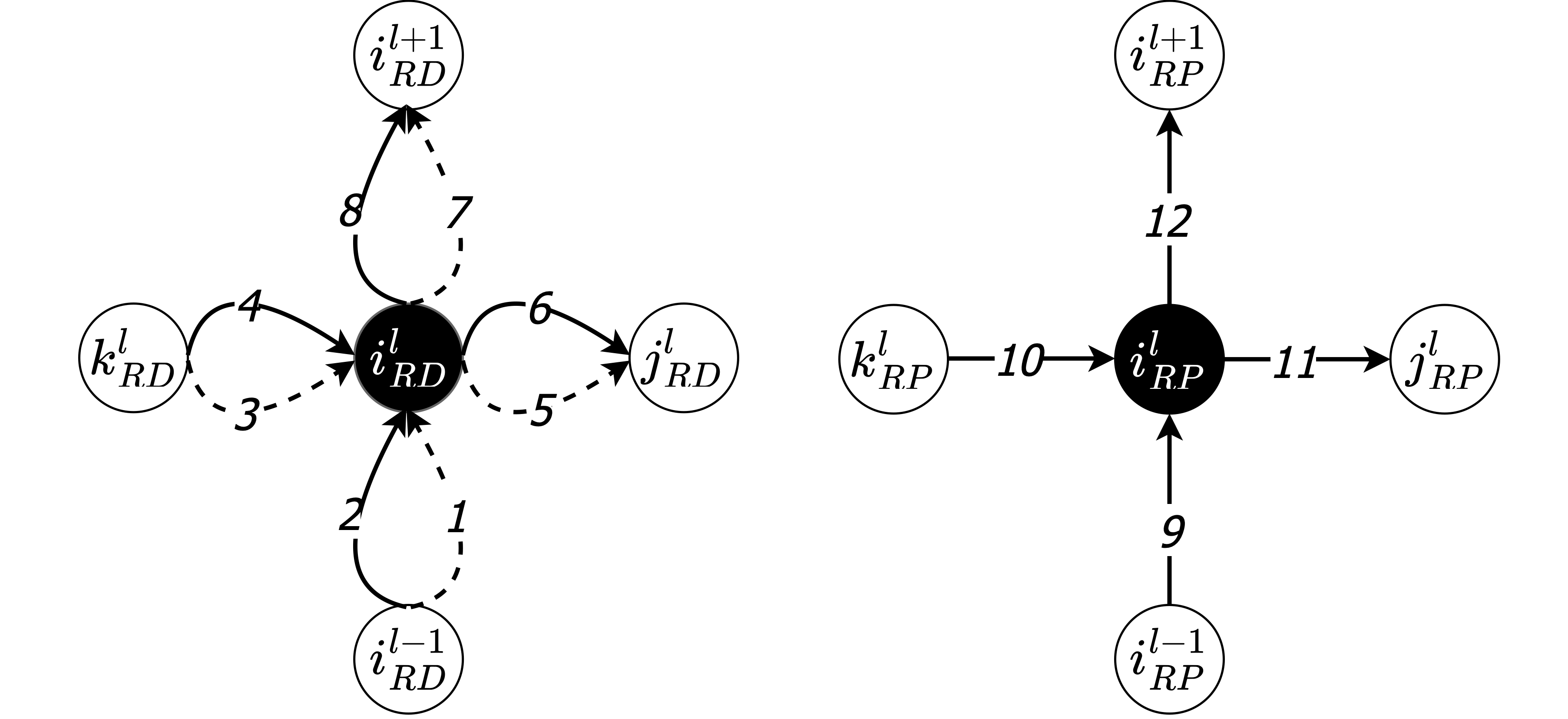}
    \caption{Example of multi-passenger coupling constraints (with each link indexed)}
    \label{fig:8.constraint_illustration}
\end{figure}
\textbf{Ridesharing driver-passenger coupling constraints:}
\begin{flalign} 
\label{eq:RD_RP_coupling_upper}
		\underbrace{\sum_{e \in \mathcal{D}}{x_{i^{l}, j^{l'}}^{n, e, RP}}}_{\text{On-board passengers}} 
		\leq 
		\underbrace{\left[ \sum_{\omega}  {\sum_{l'' \leq l' - 1} \left(s_1(n, l'', \omega) - s_{-1}(n, l'', \omega)\right) } - 1 \right]}_{\text{Vehicle occupancy defined in matching sequence }n}
		\cdot x_{i^{l}, j^{l'}, 1}^{n, RD}
        \quad, \forall i, n, l, j^{l'}:(i^{l}, j^{l'}) \in \mathcal{E}_{RD, 1}
\end{flalign} 
\begin{flalign} 
\label{eq:RD_RP_coupling_lower}
		\sum_{e \in \mathcal{D}}{x_{i^{l}, j^{l'}}^{n, e, RP}} 
		\geq 
		x_{i^{l}, j^{l'}, 1}^{n, RD}
        \quad, \forall i, n, l, j^{l'}:(i^{l}, j^{l'}) \in \mathcal{E}_{RD, 1}
\end{flalign}
Constraint \eqref{eq:RD_RP_coupling_upper}-\eqref{eq:RD_RP_coupling_lower} couple the with-passenger RD link flow with RP link flow, and allow driver to traverse with-passenger links only if the vehicle occupancy defined in the matching sequence is positive. Equation \eqref{eq:RD_RP_coupling_upper} states that, if the driver is traversing link $(i, j)$, the matched passengers can travel on link $(i, j)$, i.e., driver and passengers are traveling together. Moreover, only if the vehicle occupancy at task $l$ is positive (the driver is with some passengers), RP link flow on $(i, j)$ can be positive. Consequently, \cref{eq:RD_RP_coupling_lower} ensures that only drivers with positive vehicle occupancy can travel on with-passenger links (otherwise bounded by zero RP link flow), such that vehicle capacity constraints are also satisfied.

\DIFdelbegin \DIFdel{We illustrate these constraints in \mbox{
\cref{fig:8.constraint_illustration}}\hskip0pt
. Constraint \eqref{eq:RD_RP_coupling_upper} states that ridesharing passengers travel together with drivers. If a ridesharing driver does not travel on link 2, it implies that passengers are not on link 9 in the RP subnetwork; and if passengers have already been traveling on link 9, there must be drivers traveling with them on link 2. Moreover, if the driver travels from node $k$ to pickup at $i$ without any on-board passenger, there is no passenger on link 9 or 10. Consequently, \mbox{
\cref{eq:RD_RP_coupling_lower} }\hskip0pt
indicates that the driver should not use links 2 and 4, which represent traveling with passengers; and only take link 1 or link 3 (due to total flow conservation). 
}\DIFdelend 

Note that, constraints \eqref{eq:RD_RP_coupling_upper}-\eqref{eq:RD_RP_coupling_lower} generalize the capacity constraints in \textcite{xu2015complementarity} and \textcite{di2018link}, by replacing their homogeneous vehicle capacity with matching sequence vehicle occupancy. As discussed in \cref{sec:Matching}, vehicle capacity constraints are handled within the DARP subroutine, which allows heterogeneous vehicle capacities. Moreover, the proposed formulation in this paper explicitly considers the impacts of ridesharing platform matching decisions on route choice behavior of ridesharing drivers and passengers, through pickup/drop-off incidents $s_1$ and $s_{-1}$ in constraint \eqref{eq:RD_RP_coupling_upper}.
\\\textbf{Ridesharing passenger-transfer avoidance constraints:}

One of the unique features of the proposed multi-passenger ridesharing network model is that, passengers need not to make transfers in the ridesharing services. This is accomplished by the following set of constraints that: 1) avoid passengers being dropped off at any intermediate node between two tasks; 2) pickup and drop-off tasks are executed, and 3) sequentially, as defined in the matching sequences. 

\textit{Conservation of with-passenger RD flow at intermediate nodes:}   
\begin{flalign} 
\label{eq:intermediate_conservation_dropoff}
	\DIFdelbegin 
\DIFdelend \DIFaddbegin \begin{split}
		\underbrace{\sum_{k^{l''}:(k^{l''}, i^{l}) \in \mathcal{E}_{RD, 1}} {x_{k^{l''}, i^{l}, 1}^{n, RD}}}_{\text{Incoming with-passenger RD flows}} 
		&- \underbrace{\sum_{j^{l'}:(i^{l}, j^{l'}) \in \mathcal{E}_{RD, 1}} {x_{i^{l}, j^{l'}, 1}^{n, RD}}}_{\text{Outgoing with-passenger RD flows}}
		\leq 
		\underbrace{\left(\sum_{o \in \mathcal{O}}{s_{-1}(n, l, (o, i))}  \right)}_{\text{Drop-off at }i} \\
		{}&\cdot 
		\underbrace{\left(
		\sum_{k^{l''}:(k^{l''}, i^{l}) \in \mathcal{E}_{RD, 0}} {x_{k^{l''}, i^{l}, 0}^{n, RD}}
		+ \sum_{k^{l''}:(k^{l''}, i^{l}) \in \mathcal{E}_{RD, 1}} {x_{k^{l''}, i^{l}, 1}^{n, RD}}
		\right)}_{\text{Total incoming flows}},\; \forall i, n, l \geq 2
   	\end{split}\DIFaddend 
\end{flalign} 
\begin{flalign} 
\label{eq:intermediate_conservation_pickup}
	\DIFdelbegin 
\DIFdelend \DIFaddbegin \begin{split}
		\underbrace{\sum_{j^{l'}:(i^{l}, j^{l'}) \in \mathcal{E}_{RD, 1}} {x_{i^{l}, j^{l'}, 1}^{n, RD}}}_{\text{Outgoing with-passenger RD flows}}
		& - \underbrace{\sum_{k^{l''}:(k^{l''}, i^{l}) \in \mathcal{E}_{RD, 1}} {x_{k^{l''}, i^{l}, 1}^{n, RD}}}_{\text{Incoming with-passenger RD flows}} 	
		\leq 
		\underbrace{\left(\sum_{o \in \mathcal{O}}{s_{1}(n, l, (o, i))}  \right)}_{\text{Pickup at }i} \\
		{}&\cdot 
		\underbrace{\left(
		\sum_{k^{l''}:(k^{l''}, i^{l}) \in \mathcal{E}_{RD, 0}} {x_{k^{l''}, i^{l}, 0}^{n, RD}}
		+ \sum_{k^{l''}:(k^{l''}, i^{l}) \in \mathcal{E}_{RD, 1}} {x_{k^{l''}, i^{l}, 1}^{n, RD}}
		\right)}_{\text{Total incoming flows}},\; \forall i, n, l \geq 2
   	\end{split}\DIFaddend 
\end{flalign}
Constraints \eqref{eq:intermediate_conservation_dropoff}-\eqref{eq:intermediate_conservation_pickup} state that, if there is no pickup or drop-off at node $i^{l}$ in matching sequence $n$, the with-passenger RD link flow is conserved (i.e., \eqref{eq:intermediate_conservation_dropoff}-\eqref{eq:intermediate_conservation_pickup} represent the equality constraint). In case of any pickup/drop-off at this node, the right-hand side of the constraints \eqref{eq:intermediate_conservation_dropoff}-\eqref{eq:intermediate_conservation_pickup} will not be active due to total conservation \DIFdelbegin \DIFdel{\eqref{eq:RD_network_conservation}}\DIFdelend \DIFaddbegin \DIFadd{\eqref{eq:flow_conservation}}\DIFaddend . Note that, constraints \eqref{eq:intermediate_conservation_dropoff}-\eqref{eq:intermediate_conservation_pickup} need not to be defined for level $1$ due to coupling constraints \eqref{eq:RD_RP_coupling_upper}-\eqref{eq:RD_RP_coupling_lower}. When constraints \eqref{eq:intermediate_conservation_dropoff}-\eqref{eq:intermediate_conservation_pickup} are satisfied, it can be interpreted that passengers are not being dropped off at any intermediate node between two tasks.

Using the example in \cref{fig:8.constraint_illustration}, constraints \eqref{eq:intermediate_conservation_dropoff}-\eqref{eq:intermediate_conservation_pickup} state that, if there is no drop-off or pickup at node $i^l$, drivers on link 2 and 4 will only continue with link 6 or 8. Correspondingly, passengers on link 9 and 10 will continue to travel with the driver on link 11 or 12 and not dropped off at this intermediate node.

\textit{Matching sequence intra-task constraint:}

Remind that, each level $l$ in the hyper-network represents that ridesharing drivers are heading to either pick up/drop off at some node according to the matching sequence. After finishing this task, drivers move on to the next level in the hyper-network for performing the next task. This means that executing the corresponding pickup/drop-off task is the only way to exit current level: if there is no pickup or drop-off at node $i^l$, the driver should not cross to the next level. Formally,
\begin{flalign} 
\label{eq:RD_no_pickup_remain}
	\begin{split}
		\sum_{j^{l+1}:(i^{l}, j^{l+1}) \in \mathcal{E}_{RD, 0}^{l, l+1}} {x_{i^{l}, j^{l+1}, 0}^{n, RD}}
    	+& \sum_{j^{l+1}:(i^{l}, j^{l+1}) \in \mathcal{E}_{RD, 1}^{l, l+1}} {x_{i^{l}, j^{l+1}, 1}^{n, RD}}    	 \\
    	 \leq & \left(
    	 	\underbrace{\sum_{e\in \mathcal{D}} {s_1(n,l,(i, e))}}_{\text{Pickups at }i^l} 
    	 	+ \underbrace{\sum_{o\in \mathcal{O}} {s_{-1}(n,l,(o, i))}}_{\text{Drop-offs at }i^l}
    	 \right) \\
    	&\cdot \underbrace{\left( 
    		x_{\widehat{i_{RD}}, i^{l}}^{n, RD}
    		+ \sum_{k^{l''}:(k^{l''}, i^{l}) \in \mathcal{E}_{RD, 0}} {x_{k^{l''}, i^{l}, 0}^{n, RD}}
    		+ \sum_{k^{l''}:(k^{l''}, i^{l}) \in \mathcal{E}_{RD, 1}} {x_{k^{l''}, i^{l}, 1}^{n, RD}} 
    	\right)}_{\text{Total incoming flows}}
    	,  \forall i, n, l
	\end{split}
\end{flalign}
Constraints \eqref{eq:RD_no_pickup_remain} states that, if there is no pickup or drop-off at node $i^l$ (i.e., this is not a task node), the left-hand side of the constraint, total outgoing flows to the next level, should be zero. When the driver performs any pickup/drop-off at this node, these constraints will not be active due to total flow conservations. Constraints \eqref{eq:RD_no_pickup_remain} ensures that pickup/drop-off task at each level is executed, such that passengers are picked up from their origins \textit{or} dropped off at their destinations. These constraints will be completed by additional constraints shown later, to ensure passengers are first picked up at their destinations \textit{and} later dropped off at their destinations.

With the example in \cref{fig:8.constraint_illustration}, constraints \eqref{eq:RD_no_pickup_remain} states, if there is no pickup/drop-off task at node $i^l$, drivers cannot enter \DIFdelbegin \DIFdel{to }\DIFdelend the next level through links 7 or 8, and \DIFdelbegin \DIFdel{consequently }\DIFdelend they remain at level $l$ until their tasks are executed. As a result, \DIFdelbegin \DIFdel{on-board }\DIFdelend \DIFaddbegin \DIFadd{onboard }\DIFaddend passengers continue on link 11 with their drivers.

\textit{Matching sequence inter-task constraint:}

The above intra-task constraint \eqref{eq:RD_no_pickup_remain} ensures each task is being executed. Here, we complete these constraints by enforcing tasks being executed in sequence according to the matching sequence, such that passengers are first picked up from their origins and later dropped off at their destinations, and vehicle capacity constraints are satisfied endogenously. Specifically, this is achieved by enforcing ridesharing drivers to move to the next level (i.e., heading to the next task), after the pickup/drop-off task is finished. Formally, 
\begin{flalign} 
\label{eq:RD_pickup_cross}
		\underbrace{\left(
			\sum_{j^{l}:(i^{l}, j^{l}) \in \mathcal{E}_{RD, 0}^{l, l}} {x_{i^{l}, j^{l}, 0}^{n, RD}}
    		+ \sum_{j^{l}:(i^{l}, j^{l}) \in \mathcal{E}_{RD, 1}^{l, l}} {x_{i^{l}, j^{l}, 1}^{n, RD}}
    	\right)}_{\text{Outgoing flows remaining at level }l}
    	\cdot
       	\left(
    	 	\underbrace{\sum_{e\in \mathcal{D}} {s_1(n,l,(i, e))}}_{\text{Pickups at }i^l} 
    	 	+ \underbrace{\sum_{o\in \mathcal{O}} {s_{-1}(n,l,(o, i))}}_{\text{Drop-offs at }i^l}
    	 \right)
    	 = 0
    	,  \forall i, n, l
\end{flalign}
Constraints \eqref{eq:RD_pickup_cross} states that, if there is pickup or drop-off at node $i^l$ (i.e., this is a task node), the total outgoing flows remaining at current level $l$, should be zero. By total flow conservation, these constraints ensure drivers move to the next level to execute the next task in a matching sequence. Together with intra-task constraint \eqref{eq:RD_no_pickup_remain}, drivers in the proposed network model follow their chosen matching sequences to pick up and drop off passengers.

Constraints \eqref{eq:RD_pickup_cross} can be illustrated in \cref{fig:8.constraint_illustration} that, if driers pick up or drop off passengers at node $i^l$, they will not continue traveling on link 5 or 6. Instead, drivers only move to the next level (together with constraint \eqref{eq:RD_no_pickup_remain}), and their \DIFdelbegin \DIFdel{on-board }\DIFdelend \DIFaddbegin \DIFadd{onboard }\DIFaddend passengers travel together with them on link 12. 

\DIFdelbegin \DIFdel{In the following paragraphs, the complementarity conditions }\DIFdelend \DIFaddbegin \DIFadd{The corresponding MCP formulation }\DIFaddend for the multi-passenger ridesharing constraints \DIFdelbegin \DIFdel{\eqref{eq:RD_RP_coupling_upper}-\eqref{eq:RD_pickup_cross} are formulated, and their dual variables will be included into the route choice model. As previously discussed, the multipliers for the shared constraints could be interpreted differently for RD and RP, respectively. We denotes the dual variables for the coupling constraint \eqref{eq:RD_RP_coupling_upper} as $\lambda_{6}^{i^l, j^{l'}, n}$, and $\widehat{\lambda}_{6}^{i^l, j^{l'},n}$; and $\lambda_{7}^{i^l, j^{l'}, n}$, and $\widehat{\lambda}_{7}^{i^l, j^{l'}, n}$ for the coupling constraint \eqref{eq:RD_RP_coupling_lower}, for ridesharing drivers and passengers, respectively. Let $\lambda_{8}^{i^l, n}$, $\lambda_{9}^{i^l, n}$, $\lambda_{10}^{i^l, n}$, $\lambda_{11}^{i^l, n}$ denote the dual variables for constraints \eqref{eq:intermediate_conservation_dropoff},\eqref{eq:intermediate_conservation_pickup}, \eqref{eq:RD_no_pickup_remain}, and \eqref{eq:RD_pickup_cross}, respectively; and let $\lambda_{8}^{i^1, n}=0$, and $\lambda_{9}^{i^1, n}=0$ for completeness. The complementarity conditions for the joint stable matching and route choice model are summarized as follows:
}\DIFdelend \DIFaddbegin \DIFadd{is provided in Appendix~\ref{appendix.MCP_network_model}. 
}\DIFaddend

\section{The overall general equilibrium model and solution existence}
\label{sec:Overview}
\noindent
As previously shown in \cref{fig:1.general_scheme}, the general equilibrium model is composed of travelers, ridesharing platforms, and transportation networks. Travelers are assumed to make mode choice decisions $q$ with modal costs \eqref{eq:DA_modal_costs}-\eqref{eq:PT_modal_costs} dependent on the ridesharing matching decisions $Z$ and generalized travel costs in the network $\pi$. Ridesharing platforms make matching decisions $Z$ based on ridesharing demands ($q$) and network congestion ($\pi$). The network model determines the link flows $x$ and node potentials $\pi$ given travel demands $q$ and matching sequences ($Z$). Similar to \textcite{ban2019general}, the proposed general equilibrium model is defined formally as:

\begin{definition}[GEM-mpr: General equilibrium for multi-passenger ridesharing systems]
  \label{def:general_equilibrium}
  By definition, this is a tuple $\left(q, Z, x, \pi \right)$ such that, 
  the tuple $\left\{q_{\omega}^{m}: (\omega, m) \in \mathcal{W} \times \mathcal{M} \right\}$ is an optimal solution of mode choice model \eqref{eq:mode_choice_opt} with modal costs $C_{\omega}^{m}$ given by \eqref{eq:DA_modal_costs}-\eqref{eq:PT_modal_costs}; the tuple $\left\{Z_n \right\}$ is an optimal solution of the ridesharing matching problem \eqref{eq:matching_opt} with respect to $R_{n}$; and the tuple $\{(x, \pi)\}$ satisfies the network model \eqref{eq:mcp_group_coupling_constraints}-\eqref{eq:mcP_RP_stable_route_final}.
\end{definition}

\subsection{A preview of the proof of solution existence}
\label{sec:proof_preview}
Solution existence for the equilibrium models are typically cast as a variational inequality (VI) or complementarity problem (CP), and applied the theory of these problems (\cite{Nagurney2009}; \cite{facchinei2003finite}). However, certain realistic features of the proposed general equilibrium model make the demonstration of the solution existence challenging. For example, unlike standard (symmetric) traffic assignment problems (\cite{dafermos1969traffic}), in which constraints contain only flow variables from the same OD (viewed as one \textit{player}). The proposed equilibrium model includes traveler's mode choice decision variables $q$ in the constraints of the ridesharing platform's matching optimization problem \eqref{eq:RD_demand_constraint}-\eqref{eq:RP_demand_constraint}. Similarly, in the stable matching problem, constraint \eqref{eq:stable_matching_constraint} couples the RD player with the RP player. These features make the proposed model a generalized type of Nash equilibrium (\cite{facchinei2007generalized}), which should be carefully addressed before applying general theories in VI/CP and a standard fixed-point approach to our problem.

To summarize, the proof of the existence of an equilibrium solution for the proposed general equilibrium model is challenging from the following aspects:
\begin{itemize}
  \item \textbf{Shared constraints}. Coupling of variables between various players within one player's own optimization problem (e.g., constraints \eqref{eq:stable_matching_constraint}, \eqref{eq:RD_RP_coupling_upper}, and \eqref{eq:RD_RP_coupling_lower}). This brings challenges to the proof of solution existence, since the dual variables of these shared constraints represent the player-specific marginal prices, which need not be the same for different players.
  \item \textbf{Unboundedness}. The lack of explicit bounds on some model variables makes the prerequisite of being a compact convex set in applying a fixed point theorem difficult to satisfy (\cite{ban2019general}). 
  \item \textbf{Asymmetric multipliers}. Some of the shared constraints appear only in one player's optimization problem, while not in the other player's problem. As a result, these asymmetric coupling constraints cannot be simply added to the VI formulation, otherwise, the multipliers will be added to both players. 
\end{itemize}  

\DIFdelbegin \DIFdel{In the following section, we }\DIFdelend \DIFaddbegin We extend the work of \textcite{ban2019general} \DIFdelbegin \DIFdel{to show }\DIFdelend \DIFaddbegin \DIFadd{for showing }\DIFaddend that the proposed general equilibrium model for multi-passenger ridesharing admits at least one solution under mild assumptions. \DIFdelbegin \DIFdel{Readers may skip the detailed proof and proceed directly to the numerical result section. To tackle the challenges mentioned above, several main steps are taken for the proof :
}\DIFdelend \DIFaddbegin \DIFadd{The main steps of the proof is summarized as follows:
}

\DIFaddend \begin{itemize}
  \item \textbf{Normalized equilibrium} (\cite{rosen1965existence}): a \textit{normalized} equilibrium assumes that the multipliers corresponding to a shared constraint are proportional to the player's own proportionality constant. Specifically, this means there exist positive constants $\eta_{5}^{n}, \eta_{67}^{i^l,j^{l'},n}$ such that $\widehat{\phi}_{5}^{n,l}=\eta_{5}^{n}{\phi}_{5}^{n,l}, \widehat{\lambda}_{6}^{i^l,j^{l'},n}=\eta_{67}^{i^l,j^{l'},n}{\lambda}_{6}^{i^l,j^{l'},n}$ , and $\widehat{\lambda}_{7}^{i^l,j^{l'},n}=\eta_{67}^{i^l,j^{l'},n}{\lambda}_{7}^{i^l,j^{l'},n}$.
  \item \textbf{VI existence theorem for unbounded set and linear complementarity lemma}: The existence of the primary decision variables is shown alternatively using existence theorem for unbounded feasible set (Theorem 2 in \cite{Nagurney2009}). The existence of unbounded dual variables is completed using linear complementarity lemma (\cite{cottle2009linear}).
  \item \textbf{Relaxation of the asymmetric shared constraints through penalization}: The asymmetric multipliers are relaxed with a penalty scheme, and shown to recover the proposed general equilibrium model 
  by taking the limit of the penalties at infinity (\cite{ban2019general}).
\end{itemize}

\begin{theorem}
The proposed GEM-mpr admits a normalized equilibrium solution, under mild assumptions.
\end{theorem}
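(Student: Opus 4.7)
The plan is to cast GEM-mpr as a single variational inequality (VI) over the joint tuple $(q, Z, x, \pi)$ together with its dual variables, and to establish existence by adapting the three-step strategy of \textcite{ban2019general} to the multi-passenger setting. Because the model is a generalized Nash equilibrium problem with shared constraints, a direct application of Brouwer or Kakutani fails: the feasible set of each player depends on the decisions of the others, and the dual variables associated with the shared constraints in \eqref{eq:stable_matching_constraint}, \eqref{eq:RD_RP_coupling_upper}, and \eqref{eq:RD_RP_coupling_lower} need not agree across players. The proof therefore proceeds by (i) reducing to a normalized equilibrium, (ii) proving existence on the resulting unbounded VI, and (iii) recovering the original equilibrium via a penalty limit.

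The first step invokes the normalized-equilibrium device of \textcite{rosen1965existence}. For each shared constraint I would introduce strictly positive proportionality constants $\eta_5^n$ and $\eta_{67}^{i^l,j^{l'},n}$ that tie the RD-side multipliers to the RP-side multipliers, so that each shared constraint contributes only one free multiplier to the KKT system. Concatenating the MCP conditions of the mode choice model \eqref{eq:mode_choice_opt}, the platform matching model \eqref{eq:mcp_matching_obj}--\eqref{eq:mcp_RP_constraint}, and the network model of \cref{sec:RidesharingNetworkModel}, the normalized KKT system can be written as a monotone-plus affine VI in the primary variables $(q, Z, x, \pi)$ together with nonnegative multipliers.

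The second step establishes existence on this VI. The demand conservation equations force $q_\omega^m$, $Z_n$, and hence all link flows $x$ to lie in a compact polyhedron, so only the node potentials $\pi$ and certain multipliers are a priori unbounded. I would apply the VI existence theorem for unbounded feasible sets (Theorem~2 in \textcite{Nagurney2009}) by truncating the unbounded components to a ball of radius $r$, applying Brouwer's theorem to the continuous cost mapping on the resulting compact convex set, and then showing that for sufficiently large $r$ the truncation is inactive. Boundedness of the dual variables can then be extracted from a linear complementarity argument (\cite{cottle2009linear}) applied blockwise to the KKT systems: the linearity of the coupling constraints, together with strict positivity of travel times via the BPR-type cost functions, yields a Hoffman-type bound on the multipliers independent of $r$.

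The third step, and the main obstacle, is the penalty-based recovery of the asymmetric shared multipliers. Following \textcite{ban2019general}, I would replace the hard shared constraints by quadratic penalties with parameter $\rho$, solve the penalized normalized VI for each $\rho$ by the preceding argument, and then pass to the limit $\rho \to \infty$. The delicate point is verifying that the penalty multipliers $\rho \cdot (\text{violation})$ remain uniformly bounded as $\rho \to \infty$, so that any cluster point of the penalized solutions satisfies the original asymmetric KKT conditions rather than a degenerate relaxation. This requires a constraint qualification at the limit point, which I would obtain from the sequence-bush structure of the hyper-network: the explicit linear form of constraints \eqref{eq:RD_RP_coupling_upper}--\eqref{eq:RD_pickup_cross}, together with the freedom to route between consecutive tasks, ensures that Mangasarian--Fromovitz-type conditions hold generically. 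Once uniform boundedness is secured, a standard diagonal extraction delivers a cluster point $(q^*, Z^*, x^*, \pi^*)$ that is feasible for GEM-mpr and satisfies all KKT conditions, establishing the claimed normalized equilibrium under the assumptions (continuity and strict positivity of link cost functions, finite positive demands, nonempty feasible matching sequences) inherited from \textcite{ban2019general}.
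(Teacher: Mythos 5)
Your overall architecture matches the paper's: the same three steps (Rosen-type normalization of the shared-constraint multipliers via $\eta_5^n$ and $\eta_{67}^{i^l,j^{l'},n}$; existence on the unbounded VI via the restricted-set/redundant-bound device of Theorem 2 in \textcite{Nagurney2009}, with the no-cycle and recursive node-potential arguments supplying the explicit bounds; and a penalty sequence $\rho_\alpha \to \infty$ applied to the asymmetric constraints \eqref{eq:RD_demand_constraint}--\eqref{eq:RP_demand_constraint} and \eqref{eq:RD_matching_cap}--\eqref{eq:RP_matching_cap} in the spirit of \textcite{ban2019general}). One small caution on step two: compactness of the link flows does not follow from demand conservation alone, since flow could circulate on cycles; the paper needs strictly positive link costs and a Wardrop no-cycle argument to bound $x$ by total demand, so you should attach your positivity assumption there rather than only to the multiplier bound.

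The genuine divergence, and the gap, is in your step three. You propose to show that the penalty multipliers $\rho_\alpha\cdot(\text{violation})$ stay uniformly bounded by invoking a Mangasarian--Fromovitz-type constraint qualification ``generically'' from the sequence-bush structure. That claim is unsubstantiated: in a generalized Nash problem with shared, asymmetric coupling constraints (the very reason the paper resorts to penalization rather than adding these constraints to the VI), verifying MFCQ at an unknown limit point is exactly the hard part, and nothing in the hyper-network construction hands it to you. The paper avoids this entirely. It tests the penalized VI inequality against a comparison point $\bar{\mathbf{w}}_\alpha$ that is \emph{feasible} for the relaxed constraints; the inequality then reads
\begin{align}
0 \;\le\; -\rho_\alpha \sum (\text{violation})^2 \;+\; (\text{terms bounded independently of } \rho_\alpha), \nonumber
\end{align}
which forces the violations to vanish in the limit without any bound on $\rho_\alpha\cdot(\text{violation})$. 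The limiting multipliers are then not obtained as limits of the penalty terms at all, but are \emph{constructed} from the limiting linear complementarity system via the lemma of \textcite{cottle2009linear}, given convergence of the primal block $(q,Z,Y,x)$. If you keep your MFCQ route you must actually prove the constraint qualification; otherwise, replace that step with the feasible-comparison-point argument and the LCP solvability lemma, which close the proof without it.
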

\begin{proof}

See Appendix~\ref{sec:Existence}.
\end{proof}

\subsection{Discussion of uniqueness}
For a highly complex general equilibrium model like the GEM-mpr proposed in this paper, it is difficult to expect a general solution uniqueness. One of the typical prerequisites for solution uniqueness in VI/CP is that, the cost functions are strictly monotone (\cite{facchinei2003finite}, \cite{Nagurney2009}). However, this is often difficult to satisfy in the proposed general model for multi-passenger ridesharing, as only vehicle flows directly contribute to the traffic congestion, the increase of passenger flows need not \DIFdelbegin \DIFdel{to }\DIFdelend increase the link travel time. To a certain extent, the proposed general model resembles some of the properties of asymmetric traffic assignment problems, where strictly monotone link travel times are generally not satisfied as well. Solution uniqueness could be established under restrictive assumptions on the link cost functions (\cite{li2020restricted}, \cite{li2020path}). For the generality of the proposed model, we leave such specifications for future research.

\section{Sequence-bush assignment method}
\label{sec:algorithm}

The interactions between the players in the \DIFdelbegin \DIFdel{proposed }\DIFdelend GEM-mpr model indicate a non-separable \DIFdelbegin \DIFdel{structure of the problem , which is difficult }\DIFdelend \DIFaddbegin \DIFadd{problem structure, which makes it challenging }\DIFaddend to solve. The \DIFdelbegin \DIFdel{motivations of the }\DIFdelend proposed solution method \DIFdelbegin \DIFdel{is to decouple the player's decisions through augmented Lagrangian method}\DIFdelend \DIFaddbegin \DIFadd{tackles this challenge by: a) decomposing the complex GEM-mpr problem into simpler subproblems, and b) solving each subproblem efficiently.
}

\DIFadd{We propose using the Augmented Lagrangian method~}\DIFaddend \citep{kanzow2016augmented} \DIFdelbegin \DIFdel{, and }\DIFdelend to decompose the complex \DIFdelbegin \DIFdel{network model (\mbox{
\cref{sec:RidesharingNetworkModel}}\hskip0pt
) into (relatively) simpler subproblems. Yet, the stable matchingconstraints \eqref{eq:stable_matching_constraint} and the multi-passenger ridesharing constraints \eqref{eq:RD_RP_coupling_upper}-\eqref{eq:RD_pickup_cross} pose challenges for solving the subproblems efficiently}\DIFdelend \DIFaddbegin \DIFadd{GEM-mpr problem into three subproblems: traveler mode choice, ridesharing matching, and network assignment. Since the network assignment subproblem is the most challenging to solve, we exploit the problem structure to further decompose it for each RD-RP group who are matched together. We summarize the proposed AL solution scheme in the flowchart (Appendix~\mbox{
\cref{fig:algorithm_flowchart}}\hskip0pt
)}\DIFaddend . 

To \DIFdelbegin \DIFdel{tackle the above problem, we propose a sequence-bush assignment method which embeds these constraints into the route-flow variables. As will be shown, }\DIFdelend \DIFaddbegin \DIFadd{efficiently solve the challenging network assignment subproblem, we develop an assignment algorithm that integrates matching sequence with bush-based algorithm~}\citep{dial2006path,nie2010class}\DIFadd{. Several steps are taken in the sequence-bush algorithm to handle the complex ridesharing constraints, such that }\DIFaddend solving a subproblem is similar to a classic \DIFdelbegin \DIFdel{traffic assignment problem. 
The following subsections correspond to steps taken to the sequence-bush assignment method: 1) auxiliary variables }\DIFdelend \DIFaddbegin \DIFadd{TAP. 
Specifically, the stable matching constraints \eqref{eq:stable_matching_constraint} and transfer avoidance constraints \eqref{eq:intermediate_conservation_dropoff}-\eqref{eq:RD_pickup_cross} are embedded into the auxiliary sequence-flow variables }\DIFaddend $F$\DIFdelbegin \DIFdel{representing flows on the matching sequence are introduced into the network model to embed the complex constraints; 2) the network model is }\DIFdelend \DIFaddbegin \DIFadd{. The network model is also }\DIFaddend rewritten in terms of route-flow variables \DIFdelbegin \DIFdel{with }\DIFdelend \DIFaddbegin \DIFadd{$f$ with an }\DIFaddend unrestricted path set\DIFdelbegin \DIFdel{to bridge the }\DIFdelend \DIFaddbegin \DIFadd{. This approach not only bridges }\DIFaddend sequence-flow \DIFdelbegin \DIFdel{variables }\DIFdelend $F$ \DIFdelbegin \DIFdel{with }\DIFdelend \DIFaddbegin \DIFadd{and }\DIFaddend link-flow variables $x$\DIFdelbegin \DIFdel{; 3)the }\DIFdelend \DIFaddbegin \DIFadd{, but also enables the coupling of RD and RP flows (constraints~\ref{eq:RD_RP_coupling_upper}-\ref{eq:RD_RP_coupling_lower}).
}

\DIFadd{The block proximal algorithm~}\citep{bolte2014proximal} \DIFadd{is adapted to successively solve the subproblems at each iteration, whereas the multipliers are updated once the inner loop converges~}\citep{kanzow2016augmented}\DIFadd{. In the following subsections, we first detail the derivations of a sequence-bush assignment problem, and present the }\DIFaddend overall solution framework\DIFdelbegin \DIFdel{is presented}\DIFdelend .

\subsection{An equivalent reformulation with auxiliary sequence-flow variables}
\label{subsec:alg_step1}
\noindent
In this subsection, special attention is given to the RD and RP demands in the network model, whereas DA and PT demands are considered as special cases of RD and RP demands without any pickup or drop-off. To unify the notations, we first define the set of link-flow classes for each mode:
\begin{itemize}
  \item \DIFdelbegin \DIFdel{matching sequences $n$ associated with RDbetween OD $w$, indexed by level $l$ and occupancy (-1}\DIFdelend \DIFaddbegin \DIFadd{RD}\DIFaddend : \DIFdelbegin \DIFdel{without-passenger, 0: with-passenger):
}\DIFdelend \DIFaddbegin \DIFadd{$\Psi^{w, RD}=\{(n, l, -1), (n, l, 0)|n:s_1(n,0,w)=1, 1\leq l \leq L \}$
}\DIFaddend 

  \DIFdelbegin \DIFdel{$\Psi^{w, RD}=\{(n, l, -1), (n, l, 0)|n:s_1(n,0,w)=1, 1\leq l \leq L \}$, }\DIFdelend \DIFaddbegin \DIFadd{where, RD traveling without passenger is represented by -1, and 0 otherwise.
  }\DIFaddend \item \DIFdelbegin \DIFdel{matching sequences associated with RP between OD $w$, indexed by level $l$, pickup index $u$, and matched RD $w'$. Let $(n^{w, PT}, 1, 1)$ denotes RP leaving for PT, we have
}

\DIFdelend \DIFaddbegin \DIFadd{RP: }\DIFaddend $\Psi^{w, RP}= \{(n^{w, PT}, 1, 1)\} \cup_{w'} \Psi^{w, RP}_{w'}$

  where, \DIFaddbegin \DIFadd{$(n^{w, PT}, 1, 1)$ denote RP leaving for PT, and $\Psi^{w, RP}_{w'}$ denote the set of matching sequences served by RD $w'$. Specifically, }\DIFaddend by assuming RP are dropped off \textit{as-soon-as-possible}, \DIFaddbegin \DIFadd{we have:
}

  \DIFaddend $\Psi^{w, RP}_{w'}=\{(n, l, u)|n:s_1(n,u,w)=1,s_1(n,0,w')=1, u\leq l \leq l':\min\{l':s_{-1}(n,l',w)=1, l'>u \} \}$ 
  \DIFdelbegin \DIFdel{representing the set of matching subsequences of RP $w$ served by RD $w'$, 
  }\DIFdelend \item \DIFdelbegin \DIFdel{for completeness, }\DIFdelend \DIFaddbegin \DIFadd{DA and RP: }\DIFaddend $\Psi^{w, DA}=\{(n^{w, DA}, 1, 0)\}$ and $\Psi^{w, PT}=\{(n^{w, PT}, 1, 0)\}$,
  \item \DIFdelbegin \DIFdel{and the set of all }\DIFdelend \DIFaddbegin \DIFadd{All }\DIFaddend link-flow classes\DIFdelbegin \DIFdel{of OD $w$, }\DIFdelend \DIFaddbegin \DIFadd{: }\DIFaddend $\Psi^w = \cup_{m \in \mathcal{M}} \Psi^{w, m}$.
\end{itemize}

Recall that, the stable matching (Eq.~\ref{eq:stable_matching_constraint}) and passenger pickup drop-off (Eqs.~\ref{eq:RD_no_pickup_remain}-\ref{eq:RD_pickup_cross}) are handled through link flow coupling in the network model. These constraints essentially state that 1) \DIFdelbegin \DIFdel{the }\DIFdelend \DIFaddbegin \DIFadd{Equal }\DIFaddend RD and RP demands on the same matching sequence \DIFdelbegin \DIFdel{are equal for a }\DIFdelend \DIFaddbegin \DIFadd{under }\DIFaddend stable matching; and 2) \DIFaddbegin \DIFadd{Conservation of }\DIFaddend RD and RP demands between consecutive levels \DIFdelbegin \DIFdel{should be conserved }\DIFdelend with occupancy updated according to the matching sequences. Motivated by this observation, we introduce \DIFdelbegin \DIFdel{an }\DIFdelend auxiliary sequence-flow \DIFdelbegin \DIFdel{variable }\DIFdelend \DIFaddbegin \DIFadd{variables }\DIFaddend $F_{n}$ \DIFdelbegin \DIFdel{indexed by its matching sequence $n$ }\DIFdelend to represent these constraints at the sequence-flow space.

Following the gap function formulation \citep{lo2000reformulating}, the \DIFaddbegin \DIFadd{link-based }\DIFaddend network model presented in \cref{sec:RidesharingNetworkModel} can be rewritten as follows:
\begin{subequations}\label{eq:reformulation_sequence_flow}
\begin{align}
    \min_{x, \pi, F}\quad & \sum_{w\in\mathcal{W}} 
    \sum_{\psi \in \Psi^w} \sum_{ij}  x_{ij}^{\psi, w} (\pi_j^{\psi, w} + c_{ij}^{\psi}(x_{ij}) - \pi_i^{\psi, w}) & \label{eq:reformulation_sequence_obj}\\
    s.t. \quad & \sum_{k:(j,k)}x_{jk}^{\psi, w} -  \sum_{i:(i, j)}x_{ij}^{\psi, w} = p_j^{\psi, w},\; \forall j \in \mathcal{N}, w \in \mathcal{W}, \psi \in \Psi^w &\label{eq:reformulation_link_conservation} \\
    & x_{ij}^{RD, w} = \sum_{\psi \in \Psi^{w, RD}} x_{ij}^{\psi, w} ,\; \forall (i,j) \in \mathcal{E}_{Veh}, w \in \mathcal{W} \label{eq:reformulation_def_constrant_first}\\
    & x_{ij}^{RP, w} = \sum_{\psi \in \Psi^{w, RP}} x_{ij}^{\psi, w} & \nonumber \\
    & = \underbrace{\sum_{\psi \in \Psi_{w'}^{w, RP}} x_{ij}^{\psi, w}}_{\text{RP traveling with RD }w'}+ \sum_{w'' \neq w'} \sum_{\psi \in \Psi^{w, RD}_{w''}} x_{ij}^{\psi, w} +  x_{ij}^{(n^{w, PT},1, 1), w},\; \forall (i,j) \in \mathcal{E}_{Veh}, w \in \mathcal{W}  & \\
    & x_{ij} = \begin{cases}
    \sum_{w\in\mathcal{W}} \left(x_{ij}^{RD, w} + x_{ij}^{(n^{w, DA}, 1, 0), w} \right), \; \forall (i,j) \in \mathcal{E}_{Veh} &\\
    \sum_{w\in\mathcal{W}} \left(x_{ij}^{(n^{w, PT}, 1, 1), w} + x_{ij}^{(n^{w, PT}, 1, 0), w} \right), \; \forall (i,j) \in \mathcal{E}_{PT} &
    \end{cases}  \label{eq:reformulation_def_constrant_last}\\
    & \underbrace{\sum_{n \in \{n: (n, \cdot, \cdot) \in \Psi^{w, RD}\}}  F_{n}}_{\text{All sequence flows associated with RD }w} = q_w^{RD}, \; \forall w \in \mathcal{W}& \label{eq:reformulation_RD_seq_flow}\\
    & \underbrace{\sum_{n \in \{n:(n,\cdot,\cdot) \in \Psi^{w, RP}\}} F_{n}}_{\text{All sequence flows associated with RP }w} = q_w^{RP} , \; \forall w \in \mathcal{W}& \label{eq:reformulation_RP_seq_flow}\\
    & F_n \leq Z_n, \; \forall n  & \label{eq:reformulation_cap}\\
    & x_{ij}^{\psi, w} \geq 0,\; \pi_j^{\psi, w}\geq0, \;\pi_j^{\psi, w} + c_{ij}^{\psi}(x_{ij}) - \pi_i^{\psi, w} \geq 0, \forall (i,j) \in \mathcal{E}, w \in \mathcal{W}, \psi \in \Psi^w \label{eq:reformulation_nonnegative}\\
    & \textbf{Ridesharing driver-passenger coupling constraints} \quad\quad\quad \text{\eqref{eq:RD_RP_coupling_upper}-\eqref{eq:RD_RP_coupling_lower}} \nonumber &
\end{align}
\end{subequations}
\DIFdelbegin \DIFdel{Let $B_{n, l} \in \{-1,0\}$ denotes the RD status of sequence $n$ at level $l$, and $B_{n, l}=-1$ if traveling alone at $l$: $\sum_{l' < l}\sum_{w\in\mathcal{W}} \left(s_1(n, l, w)-s_{-1}(n, l, w)\right) - 1 = 0$, and $B_{n, l}=0$ , otherwise; and $s(n, l) \rightarrow i: \sum_{j \in \mathcal{O} \cup \mathcal{D}} (s_1(n, l, (i, j)) + s_{-1}(n, l, (j, i))) = 1$ maps the pickup/drop-off node of $n$ at level $l$. }\DIFdelend We define $p_j^{\psi, w}$ for the conservation constraint \eqref{eq:reformulation_link_conservation} as follows:
\begin{subnumcases}{p_j^{\psi, w} = }
  q_w^{\psi}, & $\psi \in \{(n^{w, DA}, 0, 0), (n^{w, PT}, 0, 0)\}, j = o(w)$ 
  \\
  - q_w^{\psi}, & $\psi \in \{(n^{w, DA}, 0, 0), (n^{w, PT}, 0, 0)\}, j = d(w)$
  \\
   F_n, & $\psi:(n, l, b)\in \Psi^{w, RD}, j = s(n ,l-1), B_{n,l} = b \label{eq:reformulation_RD_ori}$ \\
  -F_n, & $\psi:(n, l, b)\in \Psi^{w, RD}, j = s(n ,l), B_{n, l} = b \label{eq:reformulation_RD_des}$
  \\
  F_n,  & $\psi:(n, l, u) \in \Psi^{w, RP} , l = u, j = o(w)$ \label{eq:reformulation_RP_ori}\\
  -F_n,  & $\psi:(n, l, u) \in \Psi^{w, RP} , l > u, j = d(w)$ \label{eq:reformulation_RP_des} \\
  0, & \text{otherwise}
  \end{subnumcases}
\DIFaddbegin \DIFadd{where, $B_{n, l} \in \{-1,0\}$ denote the RD status of sequence $n$ at level $l$, and $B_{n, l}=-1$ if traveling alone at $l$: $\sum_{l' < l}\sum_{w\in\mathcal{W}} \left(s_1(n, l, w)-s_{-1}(n, l, w)\right) - 1 = 0$, and $B_{n, l}=0$ , otherwise; and $s(n, l) \rightarrow i: \sum_{j \in \mathcal{O} \cup \mathcal{D}} (s_1(n, l, (i, j)) + s_{-1}(n, l, (j, i))) = 1$ maps the pickup/drop-off node of $n$ at level $l$.
}\DIFaddend 

At optimal, objective function \eqref{eq:reformulation_sequence_obj} is equal to 0, which, together with non-negative constraints \eqref{eq:reformulation_nonnegative}, represents the equilibrium conditions for route choice in \cref{sec:RidesharingNetworkModel}. 
The stable matching constraint \eqref{eq:stable_matching_constraint} is captured by sharing the same variable $F_n$ for both RD and RP in Eqs.~\eqref{eq:reformulation_RD_seq_flow}-\eqref{eq:reformulation_RP_seq_flow} and Eqs.~\eqref{eq:reformulation_RD_ori}-\eqref{eq:reformulation_RP_des}, such that RD and RP demands on the same matching sequence \DIFdelbegin \DIFdel{is }\DIFdelend \DIFaddbegin \DIFadd{are }\DIFaddend equal. 
The \textit{ridesharing passenger-transfer avoidance constraints}~\eqref{eq:intermediate_conservation_dropoff}-\eqref{eq:RD_pickup_cross} are rewritten into conservation conditions~\eqref{eq:reformulation_RD_ori}-\eqref{eq:reformulation_RD_des}, which are specified at pickup and drop-off nodes. Consequently, class \DIFdelbegin \DIFdel{link flows $(n, l, b)$ are only updated at pickup/drop-off nodes, such that class }\DIFdelend (e.g., with/without-passenger) \DIFaddbegin \DIFadd{link }\DIFaddend flows at intermediate nodes are conserved (as in the original constraints~\ref{eq:intermediate_conservation_dropoff}-\ref{eq:intermediate_conservation_pickup}). 
Moreover, through conservation condition~\eqref{eq:reformulation_link_conservation}, the hyper-network can be decomposed into $L$ levels, and associated each level $l$ with a virtual origin $s(n, l-1)$ and a virtual destination $s(n,l)$ with variable demands $F_n$. As a result, conservation between consecutive levels (as described in the original constraints~\eqref{eq:RD_no_pickup_remain}-\eqref{eq:RD_pickup_cross}) is also satisfied by the same sequence flow variable $F_n$. 

The formulation presented above \DIFdelbegin \DIFdel{suggests }\DIFdelend \DIFaddbegin \DIFadd{indicates }\DIFaddend that each matching sequence \DIFdelbegin \DIFdel{could be decomposed into several levels }\DIFdelend \DIFaddbegin \DIFadd{is composed of several levels connected by the sequence-flow variable }\DIFaddend (motivation of a \textit{sequence-bush} method). Whereas the \DIFdelbegin \DIFdel{sequence flow }\DIFdelend \DIFaddbegin \DIFadd{shared }\DIFaddend variable $F_n$ \DIFdelbegin \DIFdel{suggests the link flows associated with RP and their matched RD }\DIFdelend \DIFaddbegin \DIFadd{between RD and RP suggests that the network assignment subproblem }\DIFaddend should be solved collectively \DIFaddbegin \DIFadd{for the RD and their matched RPs }\DIFaddend (motivation of \DIFdelbegin \DIFdel{the decomposed subproblems}\DIFdelend \DIFaddbegin \DIFadd{a RD-RP group decomposition}\DIFaddend ).

Note that, the proposed formulation is one step closer to the classic \DIFdelbegin \DIFdel{traffic assignment problems}\DIFdelend \DIFaddbegin \DIFadd{TAP}\DIFaddend , except the \DIFdelbegin \DIFdel{ridesharing driver-passenger }\DIFdelend \DIFaddbegin \DIFadd{RD-RP }\DIFaddend coupling constraints~\eqref{eq:RD_RP_coupling_upper}-\eqref{eq:RD_RP_coupling_lower}, the additional sequence-flow variables $F_n$ introduced here, and the capacity constraints~\eqref{eq:reformulation_cap}. The following subsections deal with these constraints to allow solving this problem with traffic assignment methods. 

\subsection{A path-based representation with unrestricted path set}
\label{subsec:alg_step2}
\noindent
The auxiliary sequence-flow variables $F_n$ presented in the previous subsection simplify the set of constraints at the cost of additional computation of variables $F_n$. In this subsection, we introduce path-flow variables $f$ with \DIFaddbegin \DIFadd{an }\DIFaddend unrestricted route set to restate link flow variables $x$ and sequence-flow variables $F$, such that the problem requires solving only $f$. 

Let $K_{od}$ denote the set of all acyclic feasible paths between node $o$ and $d$, and $f_k^{\psi, w}$ \DIFdelbegin \DIFdel{denotes }\DIFdelend \DIFaddbegin \DIFadd{denote }\DIFaddend the flow on route $k$ for class $\psi:(n,l,b)$, where route $k \in K_{o^{nl}d^{nl}}$ (i.e., $k$ is one of the paths that originate from previous task location $o^{nl}=s(n, l-1)$ to current task location $d^{nl}=s(n, l)$). We now restate the link-flow variables $x$ and sequence-flow variables $F_n$ in terms of route-flow variables $f_k^{\psi, w}$ as follows:
\begin{align}
x_{ij}^{\psi, w} &= \sum_{k \in K_{o^{nl}d^{nl}}} \delta_{ij}^k f_{k}^{\psi, w}, \; \forall (i,j) \in \mathcal{E}, w \in \mathcal{W}, \psi:(n, l, \cdot) \in \Psi^{w} \label{eq:link_flow_route_flow_map}\\
F_n &=\begin{cases}
 \sum_{k \in K_{o^{nl}d^{nl}}} f_{k}^{\psi, w}, \; \forall  w \in \mathcal{W}, \psi:(n, l, b) \in \Psi^{w, RD}, b = B_{n,l}\\
 \sum_{k \in K_{o^{nl}d^{nl}}} f_{k}^{\psi, w}, \; \forall  w \in \mathcal{W}, \psi:(n, l, u) \in \Psi^{w, RP}
\end{cases} \label{eq:seq_flow_route_flow_map}
\end{align}  
where, $\delta_{ij}^{k}=1$ if path $k$ traverses link $(i, j)$, and $\delta_{ij}^{k} = 0$, otherwise. Note that Eq.~\eqref{eq:seq_flow_route_flow_map} is defined for each level $l$ in matching sequence $n$, with its corresponding occupancy status $b=B_{n,l}$ (in case of RD). This definition implies RD finishing one task will all head to the next task (i.e., conservation between levels), and their occupancy status (with/without passenger) \DIFdelbegin \DIFdel{are }\DIFdelend \DIFaddbegin \DIFadd{is }\DIFaddend explicitly defined according to the matching sequences. Furthermore, stable matching is established by the equal RD and RP demands ($F_n$), which are aggregated over all route flows between two tasks in a matching sequence (Eq.~\ref{eq:seq_flow_route_flow_map}). 

By substituting route-flows variable $f_k^{\psi, w}$ into \DIFdelbegin \DIFdel{sequence-flow }\DIFdelend \DIFaddbegin \DIFadd{the link-based }\DIFaddend formulation~\eqref{eq:reformulation_sequence_flow}, we obtain the following problem with $f_{k}^{\psi, w}$ being the primary decision variable:
\begin{subequations}\label{eq:reformulation_route_flow}
\begin{align}
    \min_{f, \pi}\quad & \sum_{w\in\mathcal{W}} 
    \sum_{\psi:(n,l,\cdot) \in \Psi^w} \sum_{k \in K_{o^{nl}d^{nl}}} f_{k}^{\psi, w} \left(\sum_{ij}\delta_{ij}^{k}c_{ij}^{\psi} - \pi_{o^{nl}}^{\psi, w}\right) & \label{eq:reformulation_route_obj}\\
    s.t. \quad & \underbrace{\sum_{\substack{\psi:(n,l',\cdot) \in \Psi^{w, m},\\ l' = l}} \sum_{k \in K_{o^{nl}d^{nl}}}  f_{k}^{\psi, w} = q_w^{m}}_{\text{Modal demand conservation}}, \; \forall w \in \mathcal{W}, m \in \mathcal{M}, 1\leq l \leq L& \label{eq:reformulation_route_demand_conservation}\\
    & \underbrace{\sum_{k \in K_{o^{nl}d^{nl}}}  f_{k}^{\psi, w} \leq Z_n}_{\text{Platform matched demand constraint}}, \; \forall w \in \mathcal{W}, \psi:(n,l,\cdot) \in \Psi^{w, RD}  & \label{eq:reformulation_route_cap}\\
    & \underbrace{f_{k}^{\psi, w} = f_{k}^{\psi^{'}, w'}}_{\substack{\text{RD-RP coupling} \\{\text{\& stable matching}}}}, \; \forall w, w' \in \mathcal{W}, \psi:(n,l,b) \in \Psi^{w, RD}, b=0, \nonumber\\ 
    & \quad\quad\quad\quad\quad\quad\quad \psi^{'}:(n',l',\cdot) \in \Psi^{w', RP}, n = n', l = l', k \in K_{o^{nl}d^{nl}}  & \label{eq:reformulation_route_RD_RP_coupling}\\
    & \underbrace{\sum_{k \in K_{o^{nl}d^{nl}}} f_{k}^{\psi, w} = \sum_{k' \in K_{o^{nl+1}d^{nl+1}}} f_{k}^{(n,l+1,B_{n,l+1}), w}}_{\text{RD conservation between two consecutive levels}}, \; \forall w \in \mathcal{W}, \psi:(n,l,b) \in \Psi^{w, RD}, b = B_{n,l } \label{eq:reformulation_route_cross_level_conservation}\\ 
    & f_{k}^{\psi, w} \geq 0,\; \pi_j^{\psi, w}\geq0, \;\pi_j^{\psi, w} + c_{ij}^{\psi} - \pi_i^{\psi, w} \geq 0, \nonumber \\
    & \quad\quad\quad\quad\quad\quad\quad \forall (i,j) \in \mathcal{E}, w \in \mathcal{W}, \psi:(n,l,\cdot) \in \Psi^w, k \in K_{o^{nl}d^{nl}} \label{eq:reformulation_route_nonnegative}
\end{align}
\end{subequations}
We include here two new constraints (Eqs.~\ref{eq:reformulation_route_RD_RP_coupling}-\ref{eq:reformulation_route_cross_level_conservation}) that are not presented in the sequence-flow formulation, to handle RP transfer avoidance, RD-RP coupling and stable matching. Specifically, the RD-RP coupling is handled at the route-flow space (Eq.~\ref{eq:reformulation_route_RD_RP_coupling}) to represent that RD and RP \DIFdelbegin \DIFdel{in a matching sequence are traveling together }\DIFdelend \DIFaddbegin \DIFadd{travel }\DIFaddend on the same route \DIFdelbegin \DIFdel{. Together with constraint~\eqref{eq:reformulation_route_cross_level_conservation} , which requires }\DIFdelend \DIFaddbegin \DIFadd{in a matching sequence. Constraint~\eqref{eq:reformulation_route_cross_level_conservation} ensures no RP transfer by requiring }\DIFaddend RD to follow the matching sequence for pickup and drop-off\DIFdelbegin \DIFdel{, RP transfer is avoided}\DIFdelend . Furthermore, coupling constraint~\eqref{eq:reformulation_route_RD_RP_coupling} implies \DIFaddbegin \DIFadd{also }\DIFaddend stable matching, which can be verified by aggregating route flows using Eq.~\eqref{eq:seq_flow_route_flow_map}.

\DIFdelbegin 

\DIFdelend Note that, most of the constraints in formulation~\eqref{eq:reformulation_route_flow} are only defined for RD, while RP are jointly constrained through coupling (Eq.~\ref{eq:reformulation_route_RD_RP_coupling}). This problem structure suggests a decomposition \DIFdelbegin \DIFdel{, in the spirit of Gauss-Seidel method, }\DIFdelend with respect to RD-RP group who \DIFdelbegin \DIFdel{share the same matching sequences}\DIFdelend \DIFaddbegin \DIFadd{are matched together}\DIFaddend , namely, for RD $w'$ and RPs $\{w | \Psi^{w, RP}_{w'} \neq \emptyset \}$. Such decomposition is similar to the bush approach, which decomposes the traffic assignment problem with respect to origins~\citep{bar2002origin,dial2006path}.
\DIFaddbegin 

\DIFaddend We now define the subproblem with respect to a RD-RP group $w$, while treating flows contributed by other RD-RP groups as constant background flows, as follows:
\begin{subequations}\label{eq:reformulation_subprob_route_flow}
\begin{align}
    \min_{f, \pi}\quad &  
    \sum_{\psi:(n,l,\cdot) \in \Psi^{w, RD}} \sum_{k \in K_{o^{nl}d^{nl}}} f_{k}^{\psi, w} \left(\sum_{ij}\delta_{ij}^{k}c_{ij}^{\psi} - \pi_{o^{nl}}^{\psi, w}\right) & \nonumber\\
    & + \sum_{w' \in \mathcal{W}} \sum_{\psi:(n,l,\cdot) \in \Psi^{w', RP}_{w} \cup \{(n^{w', PT}, 1,1)\}} \sum_{k \in K_{o^{nl}d^{nl}}} f_{k}^{\psi, w'} \left(\sum_{ij}\delta_{ij}^{k}c_{ij}^{\psi} - \pi_{o^{nl}}^{\psi, w'}\right)  & \label{eq:reformulation_subprob_route_obj}\\
    s.t. \quad & \sum_{\substack{\psi:(n,l',\cdot) \in \Psi^{w, RD},\\ l' = l}} \sum_{k \in K_{o^{nl}d^{nl}}}  f_{k}^{\psi, w} = q_w^{RD}, \; \forall 1\leq l \leq L& \label{eq:reformulation_subprob_RD_conservation}\\
    &  f_{0}^{w', RD} + \sum_{\substack{\psi:(n,l',\cdot) \in \Psi^{w', RP}_{w} \cup \{(n^{w', PT}, 1, 1)\},\\ l' = l}} \sum_{k \in K_{o^{nl}d^{nl}}}  f_{k}^{\psi, w'} = q_w^{RP} , \; \forall w' \in \{w' | \Psi^{w', RP}_{w} \neq \emptyset \}, 1 \leq l \leq L & \label{eq:reformulation_subprob_RP_conservation}\\
    & \sum_{k \in K_{o^{nl}d^{nl}}}  f_{k}^{\psi, w} \leq Z_n, \; \forall \psi:(n,l,\cdot) \in \Psi^{w, RD}  & \label{eq:reformulation_subprob_route_cap} \\
    & \text{RD-RP route coupling constraint: \quad\quad \eqref{eq:reformulation_route_RD_RP_coupling}} & \nonumber\\
     & \text{RD cross-level flow conservation: \quad\quad \eqref{eq:reformulation_route_cross_level_conservation}} & \nonumber\\
     & \text{Non-negative constraints: \quad\quad \eqref{eq:reformulation_route_nonnegative}} \nonumber
\end{align}
\end{subequations}
where, $f_{0}^{w', RD}$ \DIFdelbegin \DIFdel{denotes the RP }\DIFdelend \DIFaddbegin \DIFadd{denote the RP demand }\DIFaddend served by other \DIFdelbegin \DIFdel{RD. 
}\DIFdelend \DIFaddbegin \DIFadd{RDs, and the coupling of RP demands in different RD-RP groups are captured by the RP quitting option $(n^{w', PT}, 1,1)$ in constraint~\eqref{eq:reformulation_subprob_RP_conservation}, which allows RPs to join other RD-RP groups. Note that, in contrast to \mbox{
\textcite{lo2000reformulating}}\hskip0pt
, our objective function~\eqref{eq:reformulation_subprob_route_obj} is non-convex. Therefore, we propose to adapt the AL solution scheme~}\citep{kanzow2016augmented} \DIFadd{with BPD method~}\citep{bolte2014proximal} \DIFadd{for their convergence guarantees.
}\DIFaddend 

The subproblem formulation~\eqref{eq:reformulation_subprob_route_flow} is similar to the traffic assignment problem with capacity constraints~\citep[e.g.,][]{nie2004models}, except that the \textit{capacity} depends on the platform's decision (as the asymmetric constraint indicated in \cref{sec:Existence}). We \DIFdelbegin \DIFdel{propose to }\DIFdelend apply the augmented Lagrangian method~\citep[][]{kanzow2016augmented} \DIFdelbegin \DIFdel{, such that }\DIFdelend \DIFaddbegin \DIFadd{to decouple }\DIFaddend the interaction between the platform and ridesharing travelers\DIFdelbegin \DIFdel{are decoupled}\DIFdelend . Let $h_n(f, Z) = \sum_{k \in K_{o^{nl}d^{nl}}}  f_{k}^{\psi, w} - Z_n$, the corresponding augmented Lagrangian function \citep[adapted from][]{nie2004models} relaxes capacity constraints~\eqref{eq:reformulation_subprob_route_cap} as follows:
\begin{align}
\label{eq:augmented_lagrangian_problem}
\min \mathcal{L} \quad &  =g(f, \pi) + \frac{1}{2\rho} \sum_{n}\left[\max(0, \tilde{\mu}_n + \rho h_n(f, Z) )^2 -  \tilde{\mu}_n^2 \right] & \\
s.t. \quad & \text{RD and RP demand conservations: \quad\quad \eqref{eq:reformulation_subprob_RD_conservation}-\eqref{eq:reformulation_subprob_RP_conservation}} & \nonumber\\
& \text{RD-RP route coupling constraint: \quad\quad \eqref{eq:reformulation_route_RD_RP_coupling}} & \nonumber\\
& \text{RD cross-level flow conservation: \quad\quad \eqref{eq:reformulation_route_cross_level_conservation}} & \nonumber\\
& \text{Non-negative constraints: \quad\quad \eqref{eq:reformulation_route_nonnegative}} \nonumber
\end{align}
where, $\rho$ and $\tilde{\mu}_n$ are penalty parameters \DIFdelbegin \DIFdel{which }\DIFdelend \DIFaddbegin \DIFadd{that }\DIFaddend are updated at each \DIFaddbegin \DIFadd{outer AL }\DIFaddend iteration.

\DIFdelbegin 

\DIFdelend \subsection{A sequence-bush assignment method}
\label{subsec:sequence_bush}
\noindent
The key challenges in solving the subproblems are twofold: 1) to determine the full path sets $K_{od}$ and 2) to ensure cross-level conservation. Enumerating the full path set is impractical in real-size networks. 
The bush approach~\citep{bar2002origin,dial2006path,nie2010class} proposes a compact link-based representation of the unrestricted path set, which exploits the acyclic property of link flows for equilibrium in the full path set space. The bush approach is suitable for assigning flows at each level (i.e., as individual TAP with OD $(o^{nl}, d^{nl})$), but additional procedure is needed for cross-level conservations.

We proposed to \DIFdelbegin \DIFdel{connect }\DIFdelend \DIFaddbegin \DIFadd{chain }\DIFaddend bushes in sequence for each $n$ (termed as \textit{sequence-bush}) to ensure cross-level conservations. Recall that, the destination of level $l$ is the origin of level $l+1$ (i.e., $d^{nl}=o^{nl+1}$). By connecting bushes according to the sequence of $[(o^{n1}, d^{n1}), (o^{n2}, d^{n2}), ..., (o^{nL}, d^{nL})]$, cross-level conservations are satisfied\DIFdelbegin \DIFdel{and }\DIFdelend \DIFaddbegin \DIFadd{. Moreover, it provides a compact sequence-bush representation for }\DIFaddend the set of full trajectories \DIFdelbegin \DIFdel{for }\DIFdelend \DIFaddbegin \DIFadd{of }\DIFaddend a matching sequence\DIFdelbegin \DIFdel{is represented as a sequence-bush.
}\DIFdelend \DIFaddbegin \DIFadd{.
}

\DIFaddend In the following paragraphs, we first present the overall iterative solution framework for the GEM-mpr, then detail the \textit{bush update} and \textit{Sequence-bush flow pushing} procedures.

\begin{algorithm}[H]
\caption{Sequence-bush solution framework}\label{alg:sequence_bush}
\begin{algorithmic}
\State  \textbf{Step 0}\; \textit{Initialization}
\State  \quad\textit{\textbf{Step 0.1.}}\;- \textit{Bush construction}:\; Construct initial bushes $B$ for all origins $\mathcal{O}$ and destinations $\mathcal{D}$, with respect to DA, with/without-passenger RD, RP and PT link costs
\State  \quad\textit{\textbf{Step 0.2.}}\;-  Set iteration $p=0$, and choose feasible values $(q^{(0)}, Z^{(0)}, f^{(0)}, \tilde{\mu}^{(0)}, \rho^{(0)})$.\\
\State  \textbf{Step 1}\; \textit{Inner loop}
\State  \quad\textit{\textbf{Step 1.1.}}\;- \textit{Modal split update} $(q)$:\; Shift modal demands to the cheapest mode while conserving the total demands

\State  \quad\textit{\textbf{Step 1.2.}}\;- \textit{Matching update} $(z)$:\; Solve the gradient projection problem with respect to RD and RP demand constraints
\State  \quad\textit{\textbf{Step 1.3.}}\;- \textit{Bush update}:\; Update bushes by removing unused links and adding links that have the potentials to reduce OD travel costs
\State  \quad\textit{\textbf{Step 1.4.}}\;- \textit{Sequence-bush flow pushing} $(f)$:\; For each RD-RP group, flow shifting for the sequence-bush
\State  \quad\textit{\textbf{Step 1.5.}}\;- \textit{Inner convergence check}:\; If converged, go to \textbf{Step 2}; otherwise, go to \textit{\textbf{Step 1.1}}. \\
\State  \textbf{Step 2}\; \textit{Update augmented Lagrangian parameters} $(\tilde{\mu}, \rho)$\\
\State  \textbf{Step 3}\; \textit{Overall convergence check}:\; If converged, stop; otherwise, go to \textbf{Step 1}.
\end{algorithmic}
\end{algorithm}

At initialization, bushes are constructed as minimum-cost path trees rooted at the origins and destinations ($\mathcal{O} \cup \mathcal{D}$).  
The total demand at each OD is evenly split into 4 modes ($q_w^{m, (0)}$), in which RD and RP demands are matched (\DIFdelbegin \DIFdel{$z^{(0)}$}\DIFdelend \DIFaddbegin \DIFadd{$Z^{(0)}$}\DIFaddend ). 
The augmented Lagrangian parameters $\tilde{\mu}^{(0)}$ and  $\rho^{(0)}$ are initialized as 0 (as suggested in \textcite{kanzow2016augmented}). All the modal demands are assigned into the minimum-cost path trees (\DIFdelbegin \DIFdel{$f$}\DIFdelend \DIFaddbegin \DIFadd{$f^{(0)}$}\DIFaddend ), in which RD and RP are initialized respectively as \textit{leaving for DA} and \textit{PT} to ensure feasibility.  

For the given $\tilde{\mu}^{(p)}$ and  $\rho^{(p)}$, the inner loop (Steps 1.1-1.5) is performed until convergence of the inner problem is reached. Based on inner-loop solutions, the augmented Lagrangian parameters are updated. If overall convergence is reached, we stop the algorithm, otherwise, start the inner loop with the updated $\tilde{\mu}^{(p+1)}$ and  $\rho^{(p+1)}$. In the following paragraphs, we explain the inner-loop steps.

For the mode choice problem (\textit{\textbf{Step 1.1}}), let $\underbar{m}$ denote the cheapest mode \DIFdelbegin \DIFdel{for each OD }\DIFdelend \DIFaddbegin \DIFadd{with cost $C_w^{\underbar{m}, (p)}$ for OD $w$, }\DIFaddend and $\theta_1^{p}$ \DIFdelbegin \DIFdel{denotes }\DIFdelend \DIFaddbegin \DIFadd{denote }\DIFaddend the step size, the \DIFdelbegin \DIFdel{modal split is updated according to the }\DIFdelend \DIFaddbegin \DIFadd{block proximal descent method~}\citep{bolte2014proximal} \DIFadd{is integrated to update the modal split for each OD $w$ as follows:
}\begin{subequations}
\begin{align}
\DIFadd{\label{eq:mode_choicee_BPD}
\bm{q}_w^{(p+1)} = \arg \min_{\bm{q}_w \in \iota_q} \sum_m \left[q_w^m \cdot \left(C_w^{m, (p)} - C_w^{\underbar{m}, (p)}\right) + \underbrace{\frac{1}{2\theta_1^{p}} ||q_w^m - q_w^{m, (p)}||^2}_{\text{Proximal regularization}}\right]
}\end{align}
\DIFadd{where, $\iota_q$ denote the feasible modal splits. By setting the subgradient of the right-hand side in Eq.~\eqref{eq:mode_choicee_BPD} as 0, the }\DIFaddend following closed-form formulas \DIFdelbegin \DIFdel{at each inner iteration:
}
\DIFdelend \DIFaddbegin \DIFadd{can be derived:
}\DIFaddend \begin{align}
\Delta q_w^m =& \min \left(q_w^{m, (p)}, \theta_1^{p}(C_w^{m, (p)} - C_w^{\underbar{m}, (p)})\right) &\\
q_w^{m,(p+1)} =& q_w^{m,(p)} - \Delta q_w^m&\\
q_w^{\underbar{m},(p+1)} =& q_w^{\underbar{m},(p)} + \sum_m\Delta q_w^m&
\end{align}
\end{subequations}
\DIFaddbegin 

\DIFaddend Given the updated modal split, the set of feasible matching is denoted as \DIFdelbegin \DIFdel{$\iota_z(q_w^{RD,(p+1)},q_w^{RP,(p+1)})$}\DIFdelend \DIFaddbegin \DIFadd{$\iota_Z(q_w^{RD,(p+1)},q_w^{RP,(p+1)})$}\DIFaddend . To update the matching (\textit{\textbf{Step 1.2}}), we solve the following \DIFdelbegin \DIFdel{gradient projection }\DIFdelend \DIFaddbegin \DIFadd{BPD }\DIFaddend problem with step size $\theta_2^{(p)}$ \DIFdelbegin \DIFdel{:
}\DIFdelend \DIFaddbegin \DIFadd{for the platform's maximization objective:
}\DIFaddend \begin{align}
\DIFdelbegin \DIFdel{z}\DIFdelend \DIFaddbegin \bm{Z}\DIFaddend ^{(p+1)} = \arg \DIFdelbegin \DIFdel{\min_{z \in \iota_z} }
\DIFdel{z^{(p+1)} - z}\DIFdelend \DIFaddbegin \DIFadd{\max_{\bm{Z} \in \iota_Z} \sum_n }\left[\DIFadd{R_n}\DIFaddend ^{(p)} \DIFaddbegin \DIFadd{\cdot Z_n }\DIFaddend -\DIFdelbegin \DIFdel{\theta_2}\DIFdelend \DIFaddbegin \DIFadd{\frac{1}{2\theta_2^{p}} ||Z_n- Z_n}\DIFaddend ^{(p)}\DIFdelbegin \DIFdel{R^{(p)}}
\DIFdelend \DIFaddbegin \DIFadd{||}\DIFaddend ^2\DIFaddbegin \right]
\DIFaddend \end{align}  
\DIFdelbegin \DIFdel{For the given $q_w^{m,(p+1)}$ and $z^{(p+1)}$}\DIFdelend \DIFaddbegin 

\DIFadd{In the following paragraphs, we detail the algorithm steps for solving the network assignment subproblem. Given modal splits $\bm{q}_w^{(p+1)}$ and matching $\bm{Z}^{(p+1)}$}\DIFaddend , the \textit{\textbf{Step 1.3} bush update} procedure \citep[adapted from][]{bar2002origin,dial2006path,nie2010class} is summarized as follows:
\begin{algorithm}[H]
\begin{algorithmic}
\Procedure{Bush update}{}
\State \textbf{\textit{Step 1.3.1}}\;- \textit{Unused link removal}:\;

Remove all links with zero flow in each bush (indexed by bush root $o$), while maintaining the connectivity from the bush origin to all other nodes.

\State \textbf{\textit{Step 1.3.2}}\;- \textit{Topology ordering and label setting}:\; 
\begin{itemize}
\item Relying on the acyclicity of \DIFdelbegin \DIFdel{the }\DIFdelend bush $o$, the topology of a bush can be obtained by running depth-first searches originating from nodes in the bush. Let $\mathcal{T}_i^o$ \DIFdelbegin \DIFdel{denotes }\DIFdelend \DIFaddbegin \DIFadd{denote }\DIFaddend the index of node $i$ in the topology order.
\item During traversing the bush $o$, the minimum-cost and maximum-cost labels from the bush origin to node $i$: $\tilde{\pi}_i^o$ and $\tilde{U}_i^{n,o}$ (for each incident matching sequence $n$) can be set simultaneously, with the corresponding cheapest route $\underbar{k}_i^o$ and costliest \DIFaddbegin \DIFadd{route }\DIFaddend $\bar{k}_i^{n,o}$ recorded. 
\end{itemize}

\State \textbf{\textit{Step 1.3.3}}\;- \textit{Bush expansion}:\; 

The bush is expanded by adding links with the potential to reduce OD cost (i.e., result in smaller $\tilde{\pi}_i^o$), while ensuring the acyclicity of the bush.\\
\begin{itemize}
\item Let $P_1$ denote the set of links with reduced cost, namely, $P_1 = \{ij|\tilde{\pi}_i^o + \tilde{c}_{ij}(x_{ij}) < \tilde{\pi}_j^o  \}$, where $x_{ij}$ is the link flow prior to the bush update. 
\item Let $P_2$ \DIFdelbegin \DIFdel{denotes }\DIFdelend \DIFaddbegin \DIFadd{denote }\DIFaddend the set of links guarantee acyclicity, $P_2 = \{ij|\mathcal{T}_i^o<\mathcal{T}_j^o \}$
\item The set of links added to the bush $P = P_1 \cap P_2$. In case $P_1 \cap P_2 = \emptyset$ and $P_2 \neq \emptyset$, $P=P_2$ to avoid breakdown.
\end{itemize}

\EndProcedure
\end{algorithmic}
\end{algorithm}

Note that, \textbf{\textit{Step 1.3.2}} implies that, in each bush $o$, path enumeration and storage can be avoided if flows are only shifted from the maximum-cost route $\bar{k}_i^{n,o}$ to the minimum-cost route $\underbar{k}_i^o$ \citep{dial2006path}. The acyclicity property of a bush allows max-cost route to be found efficiently using depth-first search, which would be difficult in general networks. 
We also propose to share a bush for matching sequences with the same bush origin and only update $\tilde{U}_i^{n,o}$ for each matching sequence, such that redundant bushes are avoided. 

To also avoid enumeration and storage of full trajectories, we propose to \DIFdelbegin \DIFdel{consider only solving }\DIFdelend \DIFaddbegin \DIFadd{solve only }\DIFaddend for the maximum-cost and minimum-cost \textit{sequence routes} \DIFaddbegin \DIFadd{at each inner iteration}\DIFaddend . Let $\bar{\kappa}_n$ and $\underbar{\kappa}_n$ denote the maximum-cost and minimum-cost \textit{sequence routes} for $n$, \DIFaddbegin \DIFadd{the }\DIFaddend $\bar{\kappa}_n$ is constructed by connecting maximum-cost routes $\bar{k}_{d^{nl}}^{n,o^{nl}}$ in sequence (for $l=1,...,L$), whereas $\underbar{\kappa}_n$ similarly connects minimum-cost routes $\underbar{k}_{d^{nl}}^{o^{nl}}$. \DIFdelbegin \DIFdel{Furthermore, we consider for each RD-RP group $w$ a flow shifting problem }\DIFdelend \DIFaddbegin \DIFadd{Consequently, shifting flows between }\textit{\DIFadd{sequence routes}} \DIFadd{guarantees cross-level flow conservation.
}

\DIFadd{Based on }\textit{\DIFadd{sequence routes}}\DIFadd{, the sequence-bush flow pushing problem shifts flows }\DIFaddend from $\bar{\kappa}_{\bar{n}}$ of the costliest matching sequence $\bar{n}$ to $\underbar{\kappa}_{\underbar{n}}$ of the cheapest matching sequence $\underbar{n}$ \DIFdelbegin \DIFdel{, such that RD cross-level flow conservation is implicitly satisfied}\DIFdelend \DIFaddbegin \DIFadd{for each RD-RP group}\DIFaddend . The \textit{\textbf{Step 1.4} sequence-bush flow pushing} procedure for RD-RP group $w$ is detailed as follows:

\begin{algorithm}[H]
\begin{algorithmic}
\Procedure{Sequence-bush flow pushing}{}
\State \textbf{\textit{Step 1.4.1}}\;- \textit{Label updating}:\;
\begin{itemize}
\item Perform \textit{label setting} \DIFaddbegin \DIFadd{$(\pi^{(p+1)})$ }\DIFaddend in \textbf{\textit{Step 1.3.2}} for all relevant bushes in case travel costs have been updated
\item For each matching sequence $n$ associated with RD $w$, calculate the minimum cost $\underbar{C}_n^w$ and maximum cost $\bar{C}_n^w$ according to the bush labels, record also the minimum-cost and maximum-cost RD \DIFdelbegin \DIFdel{route sequences }\DIFdelend \DIFaddbegin \textit{\DIFadd{sequences routes}} \DIFaddend $\underbar{\kappa}_n$ and $\bar{\kappa}_n$. 
\end{itemize}

\State \textbf{\textit{Step 1.4.2}}\;- \textit{Sequence-bush flow shift}:\;
\begin{itemize}
\item Let $\underbar{n}$ \DIFdelbegin \DIFdel{denotes }\DIFdelend \DIFaddbegin \DIFadd{denote }\DIFaddend the cheapest matching sequence, $\underbar{n}=\arg\min_n \{ \underbar{C}_n^w \}$, and $\bar{n}$ \DIFdelbegin \DIFdel{denotes }\DIFdelend \DIFaddbegin \DIFadd{denote }\DIFaddend the costliest matching sequence, $\bar{n}=\arg\max_n \{ \bar{C}_n^w \}$. 
\item For simplicity, let $\{f_{\bar{k}} \}$ and $\{f_{\underbar{k}} \}$ denote the set of RD and RP route flows \DIFaddbegin \DIFadd{(which are equal due to coupling constraints) }\DIFaddend on $\bar{k} \in \bar{\kappa}_{\bar{n}}$ and $\underbar{k} \in \underbar{\kappa}_{\underbar{n}}$, respectively. \DIFaddbegin \DIFadd{RD and the matched RP flows are updated as follows:
}\DIFaddend \end{itemize}
\begin{subequations}
\DIFaddbegin \label{eq:gradient_projection_problem}
\DIFaddend \begin{align}
\DIFdelbegin 
\DIFdelend \Delta f_{\bar{k}}^{(p+1)} = \quad& \arg \min\DIFaddbegin \DIFadd{_{f_{\bar{k}}} }\left[\DIFadd{\mathcal{L} }\DIFaddend \left(f_{\bar{k}}\DIFaddbegin \DIFadd{, }\bm{f}\DIFadd{_{k \neq {\bar{k}}}^{(p)}, \pi}\DIFaddend ^{(p+1)}\DIFaddbegin \DIFadd{, Z^{(p+1)}}\right) \DIFadd{+ \frac{1}{2\theta^{(p)}} ||f_{\bar{k}} }\DIFaddend - f_{\bar{k}}^{(p)} \DIFdelbegin \DIFdel{+ \theta^{(p)} \nabla_{f_{\bar{k}}} \mathcal{L} }
\DIFdelend \DIFaddbegin \DIFadd{||}\DIFaddend ^2 \DIFaddbegin \right] \DIFaddend - f_{\bar{k}}^{(p)} , \; \forall \bar{k} \in \bar{\kappa}_{\bar{n}}  & \\
s.t., &\quad \text{constraints } \DIFdelbegin \DIFdel{\eqref{eq:reformulation_subprob_RD_conservation},\eqref{eq:reformulation_subprob_RP_conservation},}\DIFdelend \eqref{eq:reformulation_route_RD_RP_coupling}, \eqref{eq:reformulation_route_nonnegative} & \nonumber \\
\Delta f^{(p+1)} = \quad& \min \{\Delta f_{\bar{k}}^{(p+1)} \} & \\
f_{\bar{k}}^{(p+1)} = \quad& f_{\bar{k}}^{(p)} - \Delta f^{(p+1)}, \forall \bar{k} \in \bar{\kappa}_{\bar{n}} \\
f_{\underbar{k}}^{(p+1)} = \quad& f_{\underbar{k}}^{(p)} + \Delta f^{(p+1)}, \forall \underbar{k} \in \underbar{\kappa}_{\underbar{n}}
\end{align}
\end{subequations}
\EndProcedure
\end{algorithmic}
\end{algorithm}

\DIFdelbegin 

\DIFdelend We apply the self-regulated averaging method proposed by \textcite{liu2009method} for updating step sizes $\theta^{(p)}, \theta_1^{(p)}, \theta_2^{(p)}$, in which step size is regulated if \DIFaddbegin \DIFadd{the }\DIFaddend solution at current iteration greatly diverges, otherwise a relatively larger step size is applied for exploration.

The inner convergence is checked against two conditions: 1) normalized gap in modal costs $G_M \leq \epsilon_M$, and 2) normalized gap in the network model $G_N \leq \epsilon_N$. We consider that when these two conditions are met, the mode choice and network problems reach equilibrium, which implies the ridesharing matching problem is also at equilibrium due to the interactions between three components. Specifically, the normalized gaps are specified as follows:
\begin{align}
G_M = \frac{\sum_w \sum_{m} q_w^m \left(C_w^{m} - C_w^{\underbar{m}} \right)}{\sum_w q_w}, \; G_N = \frac{\sum_{w\in\mathcal{W}} 
    \sum_{\psi:(n,l,\cdot) \in \Psi^w} \left(c_{\bar{k}}^{nl} -c_{\underbar{k}}^{\underbar{n}l} \right) \sum_{k \in K_{o^{nl}d^{nl}}} f_{k}^{\psi, w} }{\sum_w q_w}
\end{align}  
where, $c_{\bar{k}}^{nl}$ is the maximum route cost of $n$ at $l$, and $c_{\underbar{k}}^{\underbar{n}l}$ is the minimum route cost of cheapest sequence $\underbar{n}$ at $l$.
We adopt the augmented Lagrangian parameter update scheme proposed by \textcite{nie2004models} as follows:
\begin{subequations}
\begin{align}
\tilde{\mu}_n^{(p+1)} &= \max \left[0, \tilde{\mu}_n^{(p)} + \rho^{(p)}h_n(f^{(p)}, Z^{(p)}) \right]  \\
\rho^{(p+1)} &= \begin{cases}
\sigma_1 \rho^{(p)}, \; \text{if } \|\mathbf{f}^{(p)} - \mathbf{Z}^{(p)} \| \geq \sigma_2 \|\mathbf{f}^{(p-1)} - \mathbf{Z}^{(p-1)} \|\\
\rho^{(p)}, \; \text{Otherwise}
\end{cases}
\end{align}
\end{subequations}
where, $\sigma_1$ and $\sigma_2$ are additional parameters for the update scheme. As suggested in \textcite{kanzow2016augmented}, after first outer iteration, $\tilde{\mu}_n$ is initialized by solving $\nabla \mathcal{L}=0$ for RD, and $\rho$ is set as the ratio between $g(f, \pi)$ and $\sum_n h_n(f, Z)$. 

We consider the overall problem is converged if constraint violation is relatively small~\citep[similar to][]{nie2004models}. Let $\mathcal{A}$ being the augmented Lagrangian terms in $\mathcal{L}$ (Eq.\ref{eq:augmented_lagrangian_problem}), namely, $\mathcal{A}(f, Z, \tilde{\mu}, \rho) = \mathcal{L} - g(f, \pi)$, the outer convergence condition is defined as:
\begin{align}
\label{eq:outer_convergence}
\frac{\mathcal{A}(f, Z, \tilde{\mu}, \rho)}{\mathcal{A}(f, Z, \tilde{\mu}, \rho) + g(f, \pi)} \leq \epsilon_3
\end{align} 

\DIFdelbegin 
\DIFdelend \DIFaddbegin \section[Illustrative example]{\DIFaddend Illustrative example\DIFaddbegin \footnote{\DIFadd{Codes available at: https://github.com/andyYaoR/SequenceBush}}\DIFaddend }
\label{sec:illustration}
In \DIFdelbegin \DIFdel{\mbox{
\cref{fig:illustrative_example}}\hskip0pt
}\DIFdelend \DIFaddbegin \DIFadd{this section}\DIFaddend , we illustrate \DIFdelbegin \DIFdel{several }\DIFdelend \DIFaddbegin \DIFadd{selected }\DIFaddend features of our model and the solution algorithm \DIFdelbegin \DIFdel{for 1 RD and 2 RPs.
}\DIFdelend \DIFaddbegin \DIFadd{in a simple network (\mbox{
\cref{fig:illustrative_network}}\hskip0pt
), and compare the runtime performances between the proposed algorithm and the GAMS/PATH solver~}\citep{ferris2000complementarity}\DIFadd{. We first present the example setup and equilibrium solution, and discuss the contributions of the paper in the following subsections.
}

\DIFaddend \begin{figure}[H]
    \centering
    \DIFdelbeginFL 
{
\DIFdelFL{Illustrative example (a) matching sequence with 2 RPs (b) Corresponding link flows in the hyper-network (with RD-RP coupling and cross-level flow conservation)}}
\DIFdelendFL \DIFaddbeginFL \includegraphics[width=0.95\textwidth]{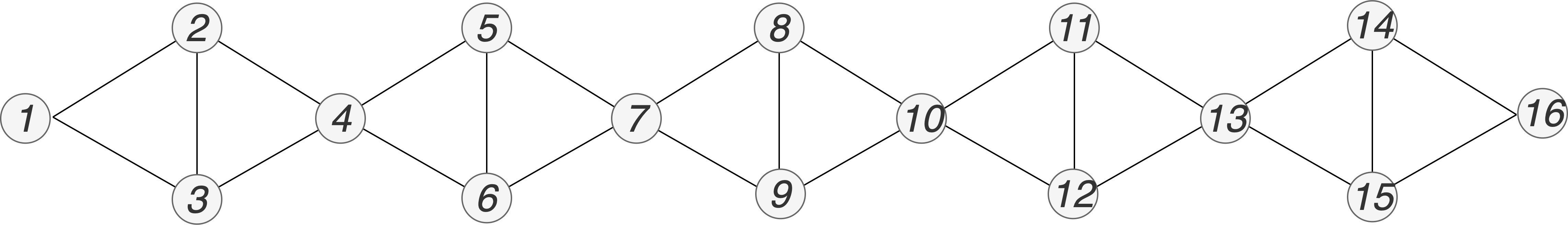}
    \caption{\DIFaddFL{Physical network for the illustrative example}}
    \label{fig:illustrative_network}
\DIFaddendFL \end{figure}
\DIFdelbegin \DIFdel{As illustrated in \mbox{
\cref{fig:illustrative_example}}\hskip0pt
(a ), we consider a matching sequence $n$ for RD $v$ with vehicle capacity of 2 taking 2 RPs, $u_2$ and $u_3$, who have different origins and destinations. The matching sequence ensures the pickup and drop-off of multiple RPs at different locations, as well as the vehicle capacity constraints. In this example , RP $u_2$ is first picked up at his/her origin, and travel with RD $v$ to pickup RP $u_3$ at a different location (as indicated in $v$ and $u_2$ traversing network links at level $l=2$).
After picking up $u_3$, }\DIFdelend \DIFaddbegin 

\DIFadd{We assume a single OD pair for }\DIFaddend RD $v$\DIFdelbegin \DIFdel{reaches maximum occupancy }\DIFdelend \DIFaddbegin \DIFadd{, with vehicle capacity }\DIFaddend of 2 \DIFdelbegin \DIFdel{at $l=3$, during which RD }\DIFdelend \DIFaddbegin \DIFadd{for each }\DIFaddend $v$\DIFdelbegin \DIFdel{and RPs $u_2, u_3$ travel together to the drop-off location of }\DIFdelend \DIFaddbegin \DIFadd{. There are two OD pairs for RPs }\DIFaddend $u_2$ \DIFdelbegin \DIFdel{. Following the drop-off of $u_3$ at level $l=4$, RD $v$ travels to his/her own destination at $l = 5$. This example illustrates that matching sequence implicitly accounts for vehicle capacity constraints and multi-passenger ridesharing pickup and drop-off at different locations. Specifically, routing between two consecutive tasks $l-1$ }\DIFdelend and \DIFdelbegin \DIFdel{$l$ is represented as }\textit{\DIFdel{level}} 
\DIFdel{$l$ in the hyper-network, in which the pickup/drop-off location of previous task $s(n, l-1)$ and current task $s(n,l)$ are considered as the origin and the destination in $l$, respectively. 
}

\DIFdel{The matching sequence $n$ is an abstraction of the actual routing (paths) inthe road network, which are explicitly represented in the corresponding hyper-network \mbox{
\cref{fig:illustrative_example}}\hskip0pt
(b). For simplicity, we assume in this example each level consists of 5 links with equilibrium link costs for each class specified in \mbox{
\cref{tab:illustrative_example_link_costs}}\hskip0pt
, and the corresponding paths in \mbox{
\cref{tab:illustrative_example_path_costs}}\hskip0pt
:
}\DIFdelend \DIFaddbegin \DIFadd{$u_3$. Assume that there are 12 matching sequences and additional options to quit ridesharing for RD $v$ and RPs $u_2, u_3$. The example setup is summarized in~\mbox{
\cref{tab:illustrative_example_setup}}\hskip0pt
:
}

\begin{table}[H]
\caption{Illustrative example setup}
\label{tab:illustrative_example_setup}
\centering
\setlength\tabcolsep{10pt}
\begin{tabular}{lrcll}
  \toprule
   \textbf{Physical network}
   & $(\mathcal{N, E})$ 
   & 
   & \textbf{Link costs}
   & $(c_{ij}^m)$ \\
Number of nodes & 16                        &    & DA          &      $t_{ij} + 20$         \\
Number of links (bidirectional) & 25        &    & RD - Without RP         &  $t_{ij} + 20$     \\
 &                                     &    & RD - With RP          &  $t_{ij} + 10$             \\
\textbf{Modal demands} & $(q_w^m)$                                      &    & RP         & $t_{ij} + 10$              \\
\textit{RD} &                                      &    & PT         & $t_{ij} + 15$               \\
$v:(1, 16)$ &     $40,000$                                 &    &         &                \\
\textit{RP} &                                      &    & \textbf{Travel time function}         & $(t_{ij})$               \\
$u_2:(4, 10)$ &     $20,000$                                 &    & \multicolumn{2}{l}{$t_{ij} = 5 \cdot \left(1 + 0.15 \cdot (\frac{x_{DA} + x_{RD}}{10,000})^{4}\right)$}              \\
$u_3:(7, 13)$ &     $20,000$                                 &    & \textbf{Link length} & $l_{ij} = 5$      \\
\bottomrule   
\end{tabular}
\end{table}

\DIFaddbegin \DIFadd{The solution of this example is reported in \mbox{
\cref{tab:illustrative_stable_results}}\hskip0pt
, which presents the 5 cheapest matching sequences in terms of RD generalized costs.
The corresponding equilibrium ridesharing link flows and aggregated vehicular flows are in \mbox{
\cref{fig:illustrative_example}}\hskip0pt
(b) and \mbox{
\cref{fig:illustrative_example}}\hskip0pt
(c), respectively. 
}

\begin{table}[H]
\caption{Matching sequence equilibrium results (Top 5 cheapest)}
\label{tab:illustrative_stable_results}
\centering
\setlength\tabcolsep{6pt}
\begin{tabular}{c|c|cccccc}
  \toprule
$n$   & Matching sequence                                      & Generalized cost & RD cost$^{\ast}$ & RP cost$^{\ast}$ & RD flow$^{\ast}$ & RP flow$^{\ast}$ \\
   &                                                       &     (RD Cost + Multipliers)    &       $v$    & $u_2 , u_3$  &     $v$        &   $u_2, u_3$       \\
   \midrule
1& $(1, 4, 7, 10, 13, 16)$ & 370          & 310       &   108          &    20,000 & 20,000         \\
2& RD/RP Quit & 370       & 370       & 128         & 20,000 & 0         \\
3& $(1, 4, 10, 16)$ & 390       &   330      &    108      & 0 & 0            \\
4& $(1, 7, 13, 16)$ & 390       &   330      &    108      & 0 & 0           \\
5& $(1, 4, 4, 10, 10, 16)$ & 390       &   330      &    108      & 0 & 0             \\
\midrule
        \multicolumn{2}{c}{Total} &   & & & 40,000 & 20,000  \\
\bottomrule   
\end{tabular}
\begin{tablenotes}
  \item *: \textit{Equilibrium solutions are the same for the sequence-bush algorithm and GAMS.}
\end{tablenotes}
\end{table}

\DIFdelbegin \DIFdel{In this example}\DIFdelend \DIFaddbegin \DIFadd{At equilibrium, all active physical links are highlighted in \mbox{
\cref{fig:illustrative_example}}\hskip0pt
(c), where each active link has vehicular flow of 20}\DIFaddend ,\DIFdelbegin \DIFdel{RP }\DIFdelend \DIFaddbegin \DIFadd{000 and travel time of 17. In this specific example, the equilibrium solution splits the flows between RD and DA trips (due to RD quitting), as shown in \mbox{
\cref{fig:illustrative_example}}\hskip0pt
(b).
}

\begin{figure}[H]
    \centering
    \includegraphics[width=0.95\textwidth]{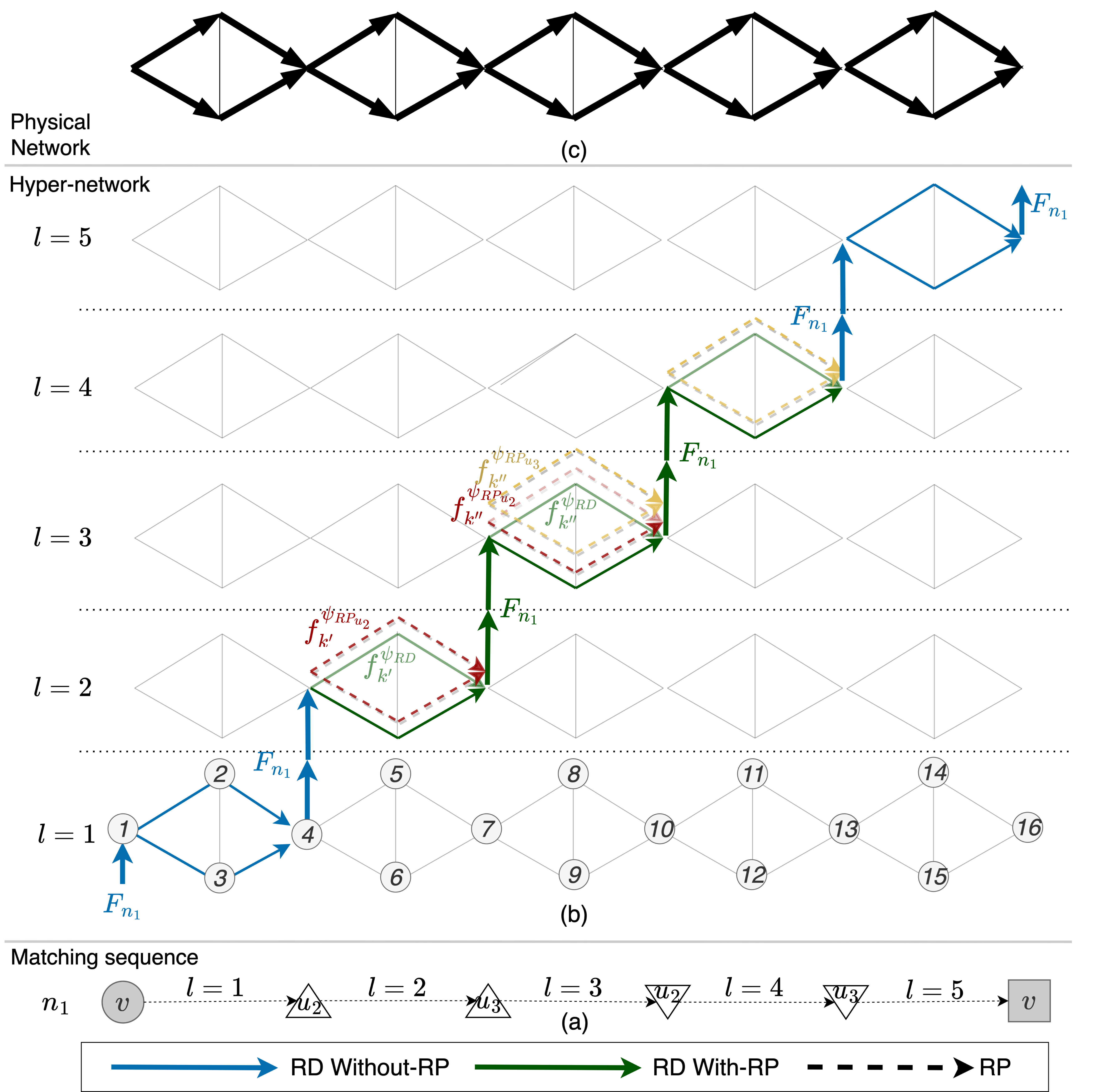}
    \caption{\DIFaddFL{Illustrative example (a) matching sequence $n_1$; (b) Hyper-network ridesharing flows; (c) Physical network flows}}
    \label{fig:illustrative_example}
\end{figure}

\subsection{\DIFadd{Model feature: Matching sequence and hyper-network}}

\DIFadd{One key feature of the proposed model is the integration of matching sequence and hyper-network for multi-passenger ridesharing for a link-based formulation. The matching sequences handle the multiple RP pickup/drop-off and vehicle capacity constraints, while the hyper-network handles the routing.
}

\DIFadd{As illustrated in \mbox{
\cref{fig:illustrative_example}}\hskip0pt
(a) for the optimal matching sequence $n_1:(1, 4, 7, 10, 13, 16)$, the matching sequence ensures both }\DIFaddend $u_2$ \DIFdelbegin \DIFdel{makes detours with RD $v$ to pickup RP $u_3$ (termed as }\textit{\DIFdel{detour}} 
\DIFdel{trip for $u_2$), whereas }\DIFdelend \DIFaddbegin \DIFadd{and }\DIFaddend $u_3$ \DIFdelbegin \DIFdel{make detours to }\DIFdelend \DIFaddbegin \DIFadd{are picked up and dropped off at their origins (node 4 and 7) and destinations (node 10 and 13). Note that this example also ensures RP need not make any transfer, in contrast to previous studies for multi-passenger ridesharing. 
}

\DIFadd{Moreover, the maximum occupancy of 2 is reached at $l=3$ after picking up $u_3$, and the vehicle capacity constraint is satisfied for the whole matching sequence.
}

\DIFadd{The matching sequence is an abstraction of the actual routing (paths) in the road network. The explicit representation of the routing of matching sequence $n_1$ is illustrated in a hyper-network in \mbox{
\cref{fig:illustrative_example}}\hskip0pt
(b). In this example, the hyper-network consists of 5 levels, which corresponds to the 5 routing segments in $n_1$. At each level $l$, RD and RP can choose any path from the pickup/}\DIFaddend drop-off \DIFdelbegin \DIFdel{$u_2$ (termed as }\textit{\DIFdel{drop-off}} 
\DIFdel{trip for }\DIFdelend \DIFaddbegin \DIFadd{location of previous task $s(n, l-1)$ to current task $s(n,l)$.  For example, at $l=2$, the origin is node 4 (pickup }\DIFaddend $u_2$\DIFdelbegin \DIFdel{and }\textit{\DIFdel{detour}} 
\DIFdel{trip for }\DIFdelend \DIFaddbegin \DIFadd{) and the destination is node 7 (pickup }\DIFaddend $u_3$)\DIFdelbegin \DIFdel{. Consequently, RP }\DIFdelend \DIFaddbegin \DIFadd{, where 2 routes (highlighted in  \mbox{
\cref{fig:illustrative_example}}\hskip0pt
(b)) are chosen. After picking up $u_3$, RD $v$ (and on-board RPs $u_2, u_3$) traverse the virtual link $(7^{l=2}, 7^{l=3})$ for dropping off }\DIFaddend $u_2$ \DIFdelbegin \DIFdel{'s cost for taking matching sequence $n$  equals }\DIFdelend \DIFaddbegin \DIFadd{at $l=3$, such that the matching sequence $n_1$ is retained.
}

\DIFadd{Our formulation integrates matching sequence with hyper-network, which endogenously determines the matching sequence costs by traversing RD/RP's trajectory in the hyper-network and summing up the link costs. For example, the cost of matching sequence $n_1$ for RP $u_2$ is equal }\DIFaddend to the sum of detour \DIFdelbegin \DIFdel{(}\DIFdelend \DIFaddbegin \DIFadd{trip cost for picking up peer RP $u_3$ at }\DIFaddend $l=2$\DIFdelbegin \DIFdel{) and drop-off (}\DIFdelend \DIFaddbegin \DIFadd{, and being dropped off at }\DIFaddend $l=3$\DIFdelbegin \DIFdel{) trip costs ($28 = 14 + 14$). 
}\DIFdelend \DIFaddbegin \DIFadd{, which is $\underbrace{2\cdot (17 + 10)}_{l=2} + \underbrace{2\cdot (17 + 10)}_{l=3} = 108$ . 
}

\DIFaddend In case RPs are not on-board at the same time, their matching sequence costs correspond only to their individual \DIFdelbegin \DIFdel{drop-off }\DIFdelend trips, which could be cheaper for RPs but \DIFdelbegin \DIFdel{at the cost of RD making }\DIFdelend \DIFaddbegin \DIFadd{might require RD to make }\DIFaddend more detours.
\DIFdelbegin \DIFdel{This trade-off is captured by the stable matching constraint inour model}\DIFdelend \DIFaddbegin 

\subsection{\DIFadd{Model feature: Stable matching}}
\DIFadd{In principle, RD and RP can choose the matching sequence with their minimum costs. However, this might not lead to a stable matching, as RD and RP can have conflicting objectives. Our formulation considers this issue by stating that a matching is stable if no RD or RP can reduce his/her cost by unilaterally switching to another matching sequence. 
}

\DIFadd{As shown in~\mbox{
\cref{tab:illustrative_stable_results}}\hskip0pt
, RPs $u_2, u_3$ are matched with RD $v$ on the matching sequence $n_1$ with minimum cost of $108$; while half of the RDs cannot be matched (that is, they quit RD and become DA) with higher cost ($370$ instead of $310$). One way to interpret this result is through demand constraints for the RP: modal demands for RP $u_2$ and $u_3$ are only $20,000$, so there is no remaining RP for the unmatched RD. Another interpretation relies on the stability condition: unmatched RD cannot choose another matching sequence, because no RP, who are currently matched, is willing to deviate from their matching $n_1$. As a result, the matching is stable, which is also verified by the same RD generalized cost for the matching sequences with positive flows $n_1$ and $n_2$ in~\mbox{
\cref{tab:illustrative_stable_results}}\hskip0pt
}\DIFaddend .

\DIFdelbegin \DIFdel{This example also shows how the proposed model could be applied for evaluating the effects }\DIFdelend \DIFaddbegin \DIFadd{Apart from capturing the stability between RD and RP, the stable-matching feature also allows evaluating the long-term performance }\DIFaddend of platform operational strategies (e.g., matching objective, pricing scheme)\DIFdelbegin \DIFdel{on ridesharing travelers as well as on its long-term performance. Suppose }\DIFdelend \DIFaddbegin \DIFadd{, under more realistic behavioral assumptions. For example, if }\DIFaddend the platform takes a commission fee based on trip length, the longest route \DIFdelbegin \DIFdel{, $k^3$, }\DIFdelend should maximize the platform's revenue. \DIFdelbegin \DIFdel{From }\DIFdelend \DIFaddbegin \DIFadd{As shown in \mbox{
\cref{tab:illustrative_example_path_costs}}\hskip0pt
, from }\DIFaddend RD and RP's perspectives, both $k^1$ and $k^2$ are obviously cheaper than \DIFaddbegin \DIFadd{the longest route }\DIFaddend $k^3$. If RD and RP are only allowed to follow the platform's optimal route $k^3$, they will leave ridesharing \DIFaddbegin \DIFadd{in a stable matching, }\DIFaddend and the actual platform revenue is zero. \DIFdelbegin \DIFdel{If }\DIFdelend \DIFaddbegin \DIFadd{On the other hand, if }\DIFaddend ridesharing travelers are allowed to deviate from $k^3$, \DIFdelbegin \DIFdel{the platform, however, }\DIFdelend \DIFaddbegin \DIFadd{these matching }\DIFaddend might still be profitable \DIFaddbegin \DIFadd{for the platform}\DIFaddend . 

\DIFdelbegin \DIFdel{We now illustrate the proposed sequence-flow assignment algorithm. 
With the notion of sequence flows $F_n$ and path flows $f_k^{\psi, RD (RP)}$, the ridesharing constraints 1) RP transfer avoidance and stable matching, and 2) RD-RP coupling are illustrated in \mbox{
\cref{fig:illustrative_example}}\hskip0pt
(}\DIFdelend \DIFaddbegin \begin{table}[H]
\caption{\DIFaddFL{Illustration of path costs.}}
\label{tab:illustrative_example_path_costs}
\centering
\setlength\tabcolsep{10pt}
\begin{tabular}{l|lll}
  \toprule
  \multirow{2}{*}{\textbf{\makecell{Path}}}
   & \multirow{2}{*}{\textbf{\makecell{Length}}} 
   & \multicolumn{2}{l}{\textbf{{\makecell[l]{Path costs}}}} \\
    &  & \DIFaddFL{DA, RD (Without RP) }& \DIFaddFL{RP, RD (with RP) }\\
   \midrule
\DIFaddFL{$k^1:(1,2,4)$ }& \DIFaddFL{$10=2\cdot 5$        }& \DIFaddFL{$74=2\cdot\underbrace{(17+20)}_{c_{ij}^{DA}}$        }&   \DIFaddFL{$54=2\cdot(17+10)$                      }\\
\DIFaddFL{$k^2:(1,3,4)$ }& \DIFaddFL{10        }& \DIFaddFL{74      }& \DIFaddFL{54           }\\
\DIFaddFL{$k^3:(1,2,3,4)$ }& \DIFaddFL{15        }&    \DIFaddFL{111     }& \DIFaddFL{81                }\\
\bottomrule   
\end{tabular}
\end{table}

\subsection{\DIFadd{Solution algorithm: Sequence-bush representation}}

\DIFadd{The first two subsections illustrate the contributions related to the link-based formulation of the GEM-mpr problem. This subsection exemplifies the contributions related to the bush-based solution algorithm. 
The bush-based algorithm is efficient for large-scale network problems~}\citep{bar2002origin,nie2004models}\DIFadd{.
Our solution algorithm adapts the bush-based method by suitable chaining the bushes as indicated in this example. 
}

\begin{figure}[H]
    \centering
    \includegraphics[width=0.6\textwidth]{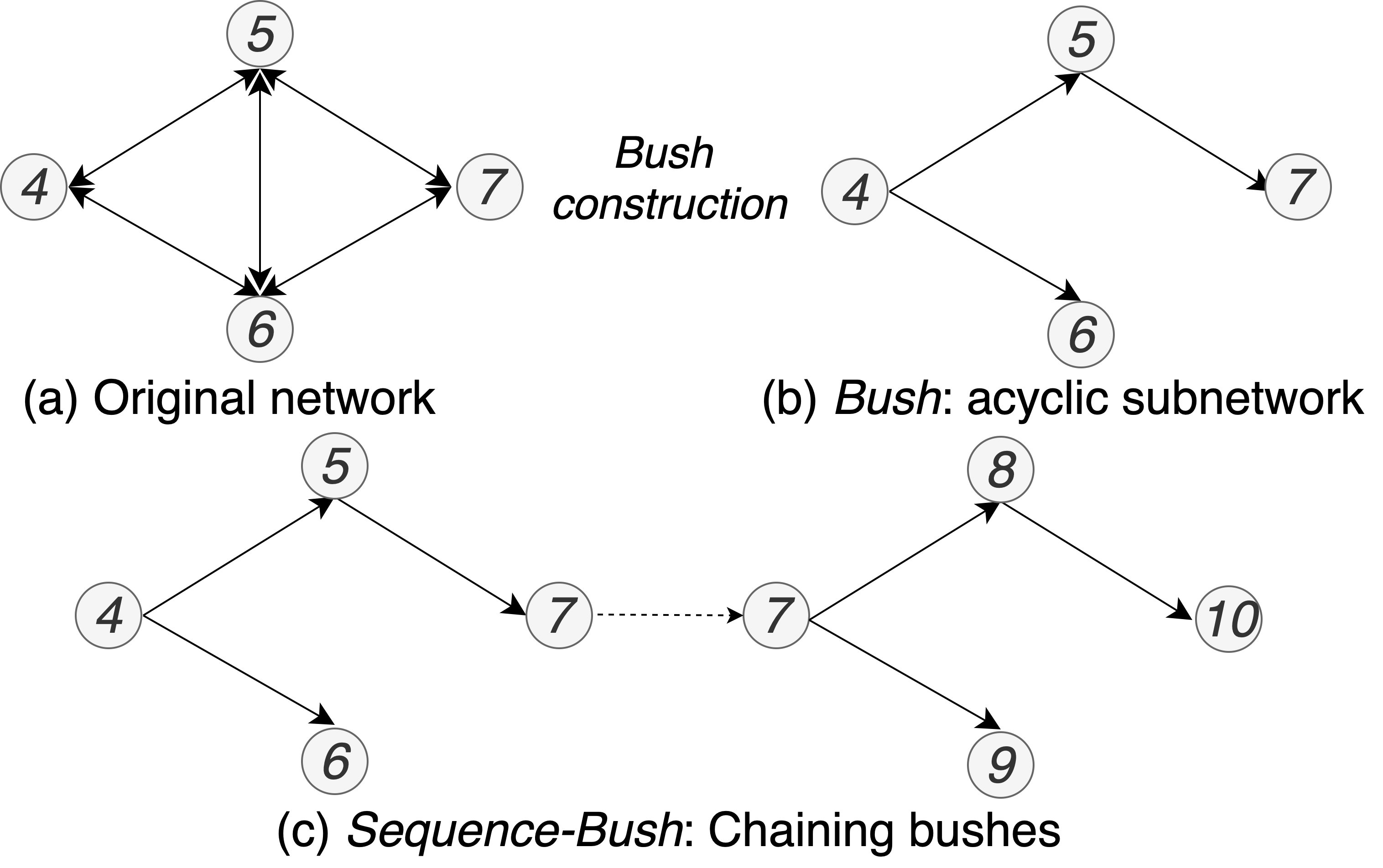}
    \caption{\DIFaddFL{Illustration of bush and sequence-bush}}
    \label{fig:bush_illustration}
\end{figure}

\DIFadd{As illustrated in \mbox{
\cref{fig:bush_illustration}}\hskip0pt
(a)-(}\DIFaddend b), \DIFdelbegin \DIFdel{which corresponds to the equivalent reformulation steps (Subsections~\ref{subsec:alg_step1}-\ref{subsec:alg_step2}). Recall that matching sequences are already transfer-free by construction, which meansRP can avoid transfer if the matched RD follows the matching sequence. Consider the case of a single RD choosing $n$ (i. 
e., $F_n = 1$), he/she follows the matching sequence if and only if RD flows }\DIFdelend \DIFaddbegin \DIFadd{a bush, rooted at RP $u_2$'s origin node 4, is constructed by pruning the original network, such that the resulting subnetwork is acyclic. At each iteration, links with }\textit{\DIFadd{reduced costs}}\citep{nie2010class} \DIFadd{are added to the bush, while other unused links are removed. This means, at initialization, the shortest path tree is used as the bush. For example, the cheapest path between node 4 and 7, $(4, 5, 7)$, is included in this initial bush, which will be assigned with path flows $f_k^{\psi, RD (RP)}$ (detailed in the next subsection). 
}

\DIFadd{To account for matching sequences, sequence-bush is created by chaining bushes. In \mbox{
\cref{fig:bush_illustration}}\hskip0pt
(c), the sequence-bush for RP $u_2$ in $n_1$ is composed of two RP bushes, rooted at node 4 (for $l=2$) and node 7 (for $l=3$), respectively. In \mbox{
\cref{fig:illustrative_example}}\hskip0pt
(b), flows of RP $u_2$ }\DIFaddend are conserved at \DIFdelbegin \DIFdel{each task node }\DIFdelend \DIFaddbegin \DIFadd{node 7 }\DIFaddend (as indicated by the \DIFdelbegin \DIFdel{equal $F_n$). Without loss of generality, we also assume one RD pickup/drop-off one RP at each task (and for $F_n$ RD, the total number of served RP at each task node is $1 \cdot F_n$).
Consequently, RPs are not dropped off at any intermediate task node. }\DIFdelend \DIFaddbegin \DIFadd{same link width and equal sequence-flow $F_{n_1}$). As a result of this conservation, RP $u_2$ follows the matching sequence $n_1$, which that implies $u_2$ is picked up and dropped off accordingly.
}

\subsection{\DIFadd{Solution algorithm: Sequence-bush assignment}}

\begin{figure}[H]
    \centering
    \includegraphics[width=1\textwidth]{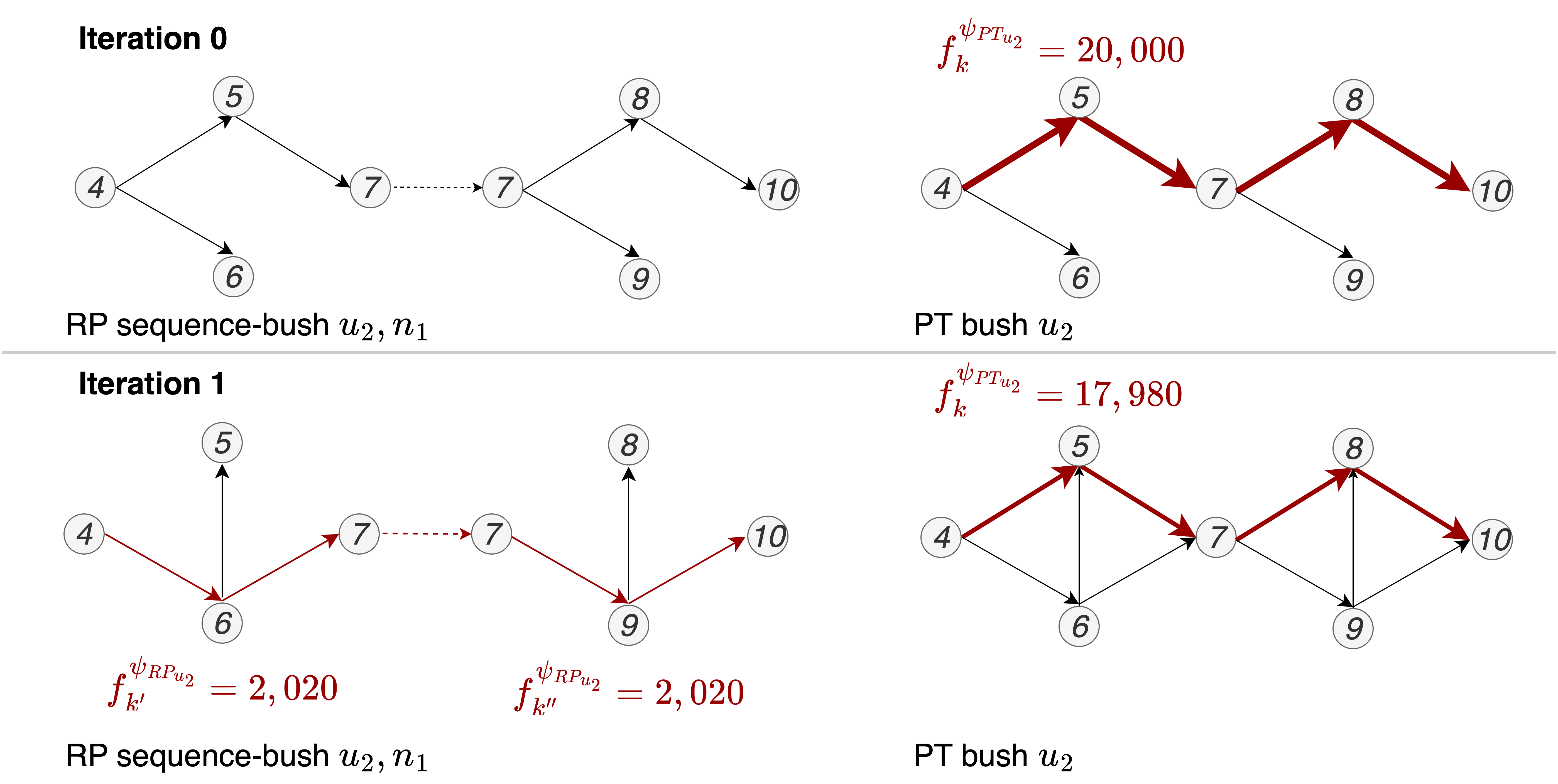}
    \caption{\DIFaddFL{Illustration of sequence-bush assignment iterations}}
    \label{fig:bush_assignment}
\end{figure}

\DIFadd{Given the sequence-bush, this subsection illustrates the sequence-bush assignment algorithm. In \mbox{
\cref{fig:bush_assignment}}\hskip0pt
, all RP demands of $u_2$ are assigned on the cheapest PT route $(4, 5, 7, 8, 10)$ at iteration 0. Similarly, all RD demands are loaded on the cheapest DA route, such that these initial flows are feasible.
}

\DIFadd{After the initial loading, link costs are updated for every mode, and correspondingly the sequence-bushes. This means that cheaper links are added to the bushes and unused links are removed. }\DIFaddend For example, \DIFdelbegin \DIFdel{flows of the on-board RP $u_2$ are conserved at the pickup node of $u_3$ (by the equal incoming and outgoing $1\cdot F_n$), and are only absorbed at the destination of }\DIFdelend \DIFaddbegin \DIFadd{at the end of iteration 0, links $(4, 5), (5, 7)$ become very congested. At the beginning of iteration 1, these links are removed from the RP bush, and new cheapest links $(6, 5), (6, 7)$ are added in both RP and PT bushes. Note that, since there are still PT flows on links $(4, 5), (5, 7)$, they are kept in the PT bush.
}

\DIFadd{The costliest and cheapest sequence-route can be retrieved from these sequence-bushes, namely, the PT route $(4, 5, 7, 8, 10)$ and RP sequence-routes in $n_1$, $k'-k^{''}:(4, 6, 7)-(7, 9, 10)$ in this example. After solving Eq.~\eqref{eq:gradient_projection_problem}, $2,020$ unit of flows are deducted from the PT route $k$, and are loaded on the RP sequence-route $k'-k^{''}$ for RP }\DIFaddend $u_2$. \DIFdelbegin \DIFdel{Furthermore, the equal $F_n$ in RD }\DIFdelend \DIFaddbegin \DIFadd{Moreover, the same $2,020$ unit of flows are simultaneously shifted for the corresponding RD $v$ and peer RP $u_3$ on the same sequence-route $k'-k^{''}$ (from their }\textit{\DIFadd{DA}} \DIFaddend and \DIFdelbegin \DIFdel{RP flows (at each pickup) }\DIFdelend \DIFaddbegin \textit{\DIFadd{PT}} \DIFadd{flows). 
}

\DIFadd{This flow shifting scheme exploits the RD-RP coupling constraints, for which on-board RPs must travel together with their RD on the same route. Furthermore, due to flow conservation at pickup/drop-off nodes, the resulting sequence-flows $F_{n_1}$ for each traveler in $n_1$ are equal, which }\DIFaddend implies stable matching, since either RD or RP leaving \DIFdelbegin \DIFdel{$n$ }\DIFdelend \DIFaddbegin \DIFadd{${n_1}$ }\DIFaddend will break the equality and stability. 
\DIFdelbegin \DIFdel{To handle RD-RP coupling, }\DIFdelend \DIFaddbegin 

\subsection{\DIFadd{Runtime performance and analysis}}

\DIFadd{In \mbox{
\cref{tab:algorithm_performance}}\hskip0pt
, we compare the algorithm performance of the proposed sequence-bush assignment algorithm and }\DIFaddend the \DIFdelbegin \DIFdel{sequence flow is decomposed into path flows $f_k^{\psi, RD (RP)}$ at each level, such that RD and RP traveling together are represented by the equal path flows (as indicated by the same edge width in \mbox{
\cref{fig:illustrative_example}}\hskip0pt
). }\DIFdelend \DIFaddbegin \DIFadd{GAMS/PATH solver for the joint route choice and stable matching problem in \mbox{
\cref{sec:RidesharingNetworkModel}}\hskip0pt
. The sequence-bush algorithm is implemented in Python 3.8 and run on a 6-core machine, while the GAMS is executed on a 32-core server.
}\DIFaddend 

\DIFdelbegin \DIFdel{Based on }\DIFdelend \DIFaddbegin \begin{table}[h]
\caption{\DIFaddFL{Algorithm performance comparison between sequence-bush and GAMS}}
\label{tab:algorithm_performance}
\centering
\setlength\tabcolsep{10pt}
\begin{tabular}{c|cc}
\toprule
      & \textbf{\DIFaddFL{Sequence-bush assignment}} & \textbf{\DIFaddFL{GAMS/PATH solver}} \\
\midrule
\DIFaddFL{Number of iterations }&   \DIFaddFL{82 }& \DIFaddFL{265,827 }\\
\DIFaddFL{Runtime }[\DIFaddFL{s}]   & \DIFaddFL{$3.92$ }& \DIFaddFL{$446.34$ }\\
\DIFaddFL{Number of variables   }& \DIFaddFL{$780$ }& \DIFaddFL{$119,642$}\\
\bottomrule   
\end{tabular}
\end{table}

\DIFadd{Results show that the proposed sequence-bush assignment algorithm significantly outperforms the GAMS solver in terms of runtime and number of iterations for the illustrative example. This can be explained by the number of variables involved at each iteration for the sequence-bush algorithm and GAMS solver. 
}

\DIFadd{We could estimate the order for the number of variables for the two methods: 1) for the sequence-bush method, it requires $O(|\mathcal{O}| + |\mathcal{D}|)$ bushes for each flow type, and 2 routes for each matching sequence $O(|\mathcal{S}|)$ at each level $O(|\mathcal{L}|)$, which results in total of $O\left[(|\mathcal{O}| + |\mathcal{D}|) \cdot |\mathcal{S}| \cdot |\mathcal{L}| \right]$ for the number of path variables; 2) for the link-based formulation used in GAMS, }\DIFaddend the \DIFdelbegin \DIFdel{gap-based path-flow reformulation~\eqref{eq:reformulation_route_flow},}\DIFdelend \DIFaddbegin \DIFadd{highest cardinality of the link-flow sets is the one for RP, which is defined for each link in }\DIFaddend the \DIFaddbegin \DIFadd{hyper-network $O(|\mathcal{E}| \cdot |\mathcal{L}|)$ for each matching sequence $O(|\mathcal{S}|)$ and each destination $O(|\mathcal{D}|)$, and in total of $O\left[|\mathcal{E}| \cdot |\mathcal{L}| \cdot |\mathcal{S}| \cdot |\mathcal{D}|  \right]$. As a result, it is more difficult to solve at each iteration for the link-based formulation in GAMS. On the other hand, the }\DIFaddend sequence-bush \DIFdelbegin \DIFdel{algorithm exploits this stability constraint and solves only the RD path flow $f_k^{\psi, RD}$ (and RP path flow is set equal to RD flow) . To maintain conservation, flows are shifted from the costliest to the cheapest sequence-route, which are created by connecting costliest (cheapest) routes at each level in sequence.  As shown in \mbox{
\cref{tab:illustrative_example_path_costs}}\hskip0pt
, all with-flow RD and RP paths, as well as the corresponding sequence-routes, are the cheapest.
Note that, by the adaption of a }\DIFdelend \DIFaddbegin \DIFadd{algorithm exploits the network property to reduce the solution space and improves the algorithm runtime performance, while the computation overhead for maintaining the bushes is negligible~}\citep{nie2004models}\DIFadd{.
}

\DIFadd{Although the }\DIFaddend sequence-bush \DIFdelbegin \DIFdel{representation, at each iteration, we only need 2 routes (costliest and cheapest)at each level, which requires finding in total $2\cdot 5=10$ routesin this example, compared to enumerating $4^5=1024$ routes}\DIFdelend \DIFaddbegin \DIFadd{method and the column-generation approach both add cheapest path into the solution space dynamically, the key difference is that the sequence-bush can determine simultaneously the cheapest and costliest routes with one simple forward pass in an acyclic subnetwork (bush). Moreover, the sequence-bush avoids path set storage, while path-based approach requires keeping $|k|^{|\mathcal{L}|}$ exponential number of routes}\DIFaddend . 

\section{Numerical results}
\label{sec:Results}
In this section, we present numerical results on the proposed general model for multi-passenger ridesharing systems. It is worth noticing that, solving the proposed model for large networks is challenging. As suggested in \textcite{chen2022unified}, general-purpose MCP solvers, like GAMS, might not be suitable for real-size networks. To the best of our knowledge, the proposed sequence-bush algorithm is one of the first methods that is able to solve real-size problems. We illustrate our model using the full Sioux-Falls network (\cite{bargeraNetwork}), which has 24 nodes, 76 links, 529 OD pairs, and a total demands of 360,600 trips (we do not find previous studies able to solve the full Sioux-Falls network). We set cost gap criteria as $\epsilon_M \leq 10^{-2}, \epsilon_N \leq 10^{-2}$ and flow infeasibility criteria as $\epsilon_3 \leq 5\cdot10^{-3}$. For all the Sioux Falls network experiments, each outer iteration takes about 20 minutes, and the overall convergence is reached within 5 outer iterations. 

Similar to \textcite{li2020path}, we assume the link cost functions are specified as in \cref{tab:link_cost_final} with the \DIFdelbegin \DIFdel{following parameters :
}\DIFdelend \DIFaddbegin \DIFadd{parameters in \mbox{
\cref{tab:parameters}}\hskip0pt
, whereas the volume delay functions are specified as in the dataset~}\citep{bargeraNetwork}\DIFadd{.
}\DIFaddend \begin{table}[H]
\caption{Parameter setting for the numerical examples.}
\label{tab:parameters}
\centering
\setlength\tabcolsep{10pt}
\begin{tabular}{c|cccccc}
  \toprule
   Mode & $\alpha$ & $\beta$ & $\tau_t$ & $\tau_d$ & $\nu_t$ & $\nu_d$ \\
   \midrule
DA & 1.0          & 1.0       &             &             &         &         \\
RD & 1.0        & 1.0       & 0.3         & 0.2         & 0.3       & 0.7     \\
RP & 0.6        &         & 0.3         & 0.1         & 0.1       & 0.4     \\
PT & 0.4        &         & 0.6         & 0.6           & 0       & 0.4     \\
\bottomrule   
\end{tabular}
\end{table}

Moreover, for simplicity, we assume public transport \DIFdelbegin \DIFdel{are }\DIFdelend \DIFaddbegin \DIFadd{is }\DIFaddend operated in a segregated system such that their travel times are flow-independent. A public ridesharing platform is also assumed in this paper, in which the matching objective is to maximize the VKT savings (\cite{yao2021dynamic}). 
\subsection{Example of the stable matching and route choice problem}
One of the key contributions of this paper is the transformation of the stable matching problem into a network problem, such that stable matching is jointly determined by the route choice model. In this subsection, we focus on illustrating the network component of the overall GEM-mpr, by showing the equilibrium flows and generalized costs for the matching sequences, and interpreting matching stability from the results. 

\begin{figure}[H]
    \centering
    \includegraphics[width=0.55\textwidth]{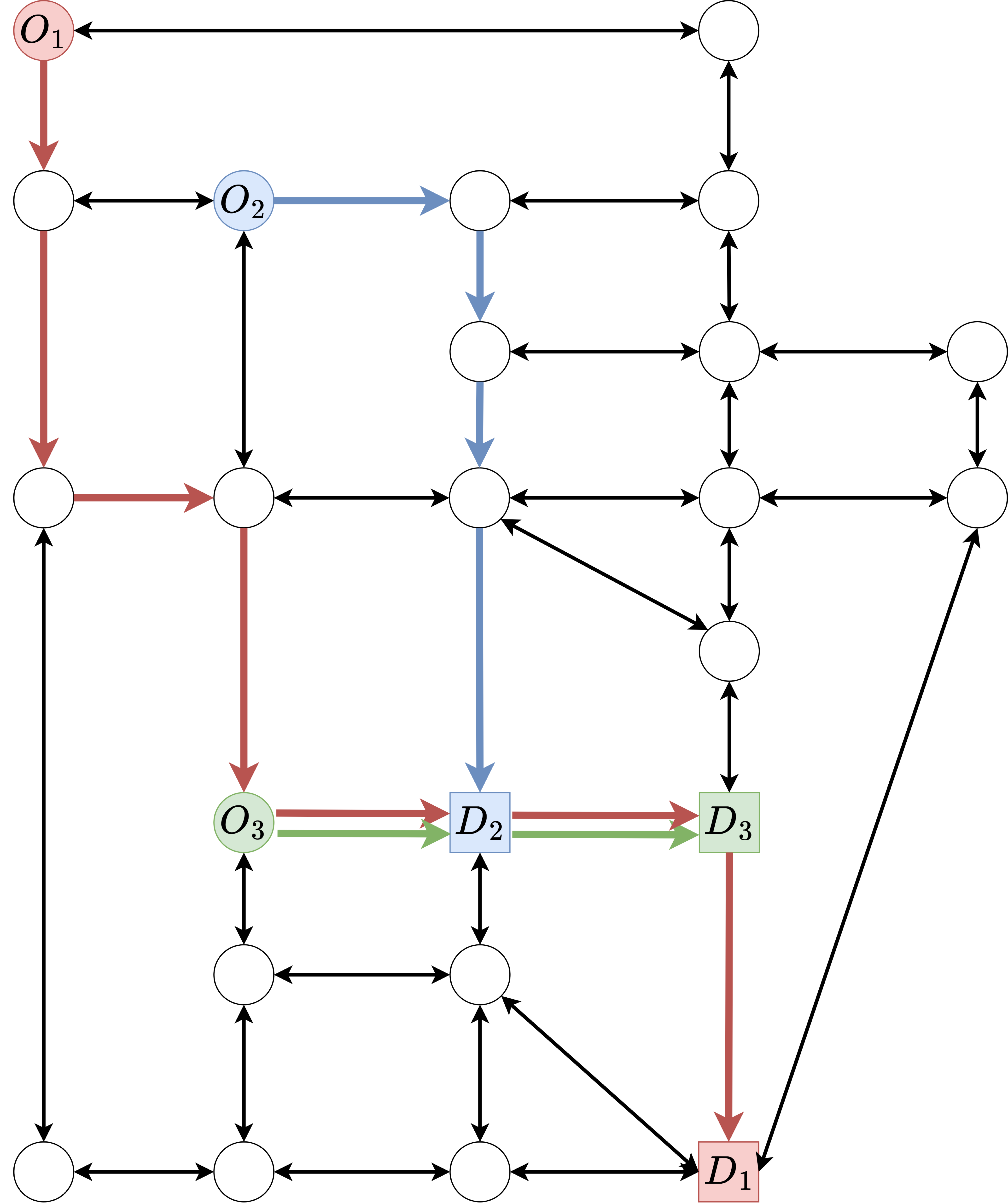}
    \caption{Sioux-Falls example (with drive-alone \DIFdelbeginFL \DIFdelFL{shortest }\DIFdelendFL \DIFaddbeginFL \DIFaddFL{cheapest }\DIFaddendFL routes marked)}
    \label{fig:9.siouxfalls_illustration}
\end{figure}

As shown in \cref{fig:9.siouxfalls_illustration}, we assume in this example, there are RD with vehicle capacity of 2, traveling between $(O_1, D_1)$ with demands of 10,000; and RP traveling between $(O_2, D_2)$ and  $(O_3, D_3)$, with demands of 20,000 for each OD pair. The cheapest drive-alone routes are also marked for each OD pair in \cref{fig:9.siouxfalls_illustration}. This illustrative example has in total of 12 matching sequences, and both RD and RP can choose to quit ridesharing. We show in the following the equilibrium flows and generalized costs of the matching sequences from the RD's perspective. 
\begin{table}[h]
\caption{Stable matching and route choice results (Top 5 cheapest)}
\label{tab:toy_results}
\centering
\setlength\tabcolsep{10pt}
\begin{tabular}{c|c|cccc}
  \toprule
   & Matching sequence                                      & Generalized cost & RD cost & RP cost & Flow \\
   &                                                       &     (RD Cost + Multipliers)    &             &             &             \\
   \midrule
1& $(O_1, O_3, O_3, D_3, D_3, D_1)$ & 45.59          & 43.55       &   22.62          &    5451.75         \\
2& $(O_1, O_2, O_2, D_2, D_2, D_1)$ & 45.59        & 45.59       & 19.58         & 4548.75           \\
3& Quit & 48.00        &   48.00      &    -      & 0.00            \\
4& $(O_1, O_2, D_2, D_1)$ & 48.59        &  47.09       & 21.08         & 0.00             \\
5& $(O_1, O_3, D_3, D_1)$ & 51.36        &    45.40     & 27.56        & 0.00             \\
\midrule
        \multicolumn{2}{c}{\textbf{Gap}$^{*}$} &           $3.37 \times 10^{-5}$ & & & \\
\bottomrule   
\end{tabular}
\begin{tablenotes}
  \item *: Gap between cheapest and with-flow most costly route costs in the hyper-network. 
\end{tablenotes}
\end{table}

As presented in \cref{tab:toy_results}, only the two cheapest (in terms of generalized cost) matching sequences 1 and 2 have flows, which correspond to serve 2 RPs between the same OD pair $(O_3, D_3)$and $(O_2, D_2)$, respectively. Also, the gap in generalized cost between the cheapest and \textit{with-flow} most costly routes, that correspond to these two matching sequences, is only $3.37 \times 10^{-5}$. From the equilibrium perspective, no RD can reduce his/her generalized cost by unilateral switching to another matching sequence, which captures matching stability. Moreover, RD/RP cost of the matching sequences are accumulated by traversing links in the network. In terms of route choice, such negligible gap in generalized costs also suggests equilibrium route choice.

Apart from showing equilibrium, these results also demonstrate the \textit{stability} in matching choices. From the driver's point of view, matching sequences 1 and 5 are preferable, since the driver need not make detours to pickup passengers from $(O_3, D_3)$ (i.e., on the driver's cheapest drive-alone path, marked in red). However, matching sequence 5 has the highest generalized cost among 5 options. This is because passengers from $(O_3, D_3)$ are not willing to form such matching, as they have a better option, matching sequence 1, available for them (i.e., this observation corresponds to the no \textit{blocking pair} in stable matching literature). Another interesting observation is that, although matching sequence 4 has a higher RD cost than matching sequence 5, it has a lower generalized cost than option 5. This is because passengers in matching sequence 5 have a higher tendency (in terms of differences in RP costs) to choose option 1 \DIFdelbegin \DIFdel{that }\DIFdelend \DIFaddbegin \DIFadd{which }\DIFaddend has lower RP costs, compared to passengers switching from matching sequence 4 to 2. This, again, shows our model is able to capture stable matching within a hyper-network modeling framework.

\subsection{Comparison between without and with ridesharing}
In this subsection, we compare between the base scenario, transportation system without ridesharing, and the ridesharing scenario, to illustrate the potential network benefits of ridesharing. Note that, for the following subsections, we apply the full demands of the Sioux Falls network (529 OD pairs and 360,600 trips) instead of selected OD pairs.
\begin{table}[h]
\caption{Network performance without/with ridesharing}
\label{tab:with_without_RS_results}
\centering
\setlength\tabcolsep{40pt}
\begin{tabular}{c|cc}
\toprule
      & \textbf{Without RS} & \textbf{With RS} \\
\midrule
VKT$^{*}$   & 2,651,200.47 & 2,597,822.91 \\
VHT$^{*}$   & 3,489,151.65 & 3,418,496.93 \\
Trip saved   & \multicolumn{2}{c}{4,963.96 (1.38\%)}  \\
VKT saved   & \multicolumn{2}{c}{53,377.56 (2.01\%)} \\
VHT saved   & \multicolumn{2}{c}{70,654.72 (2.02\%)} \\
\bottomrule   
\end{tabular}
\begin{tablenotes}
  \item *: Public transport passenger car equivalent (PCE) assumed 3. 
\end{tablenotes}
\end{table}

Before introducing ridesharing in the transportation system, DA has a modal split of 77.93\%, and PT is of 22.07\%. After the introduction of ridesharing, modal splits of DA dropped to 75.60\%, PT modal split reduced to 21.49\%, with 1.54\% as RD and the remaining 1.38\% as RP. Compared to PT (-0.58\%), there is a larger reduction in DA (-2.33\%) after the introduction of ridesharing, which suggests, under the assumed parameter setting, ridesharing is attracting more travelers from the DA. This is confirmed by the reduced VKT and VHT in \cref{tab:with_without_RS_results}, in which there are 1.38\% saved trips, and 2.01\% and 2.02\% savings in VKT and VHT, respectively. Such results illustrate the potential of ridesharing services to reduce traffic congestion in the network. 
\begin{table}[h]
\caption{Ridesharing origin-destination (OD) pairs}
\label{tab:RS_pairs}
\centering
\setlength\tabcolsep{14pt}
\begin{tabular}{c|cc}
\toprule
      & \textbf{OD pair} & \textbf{Avg. trip} \\
      & (origin, destination) & \textbf{length [km]} \\
\midrule
RD   & (1, 13), (6, 13), (10, 1), (10, 2), (11, 1), (13, 1), (13, 2), (13, 8), (14, 1), & 17.54 \\
   & (14, 2), (17, 1), (18, 1), (19, 1), (19, 4), (19, 5), (22, 4), (24, 6) &  \\
\midrule
RP   & (1, 13), (2, 13), (7, 1), (9, 1), (9, 2), (11, 1), (11, 2), (12, 2),  & 15.89 \\
   &  (12, 6), (13, 1), (13, 6), (15, 1), (15, 3), (15, 4), (15, 5), (16, 1) &  \\
\bottomrule   
\end{tabular}
\end{table}

We also observe higher percentage of VKT and VHT savings compared to trip savings (\cref{tab:with_without_RS_results}). As shown in \cref{tab:RS_pairs}, this is because the average ridesharing (RD and RP) trip lengths are significantly longer than overall average trip length (8.82 km). This observation is \DIFdelbegin \DIFdel{consistency }\DIFdelend \DIFaddbegin \DIFadd{consistent }\DIFaddend with our previous findings (\cite{yao2022}), in which, for a reasonable ridesharing price, only drivers and passengers with relatively longer trips will participate in ridesharing.
\begin{figure}[H]
    \centering
    \includegraphics[width=0.52\textwidth]{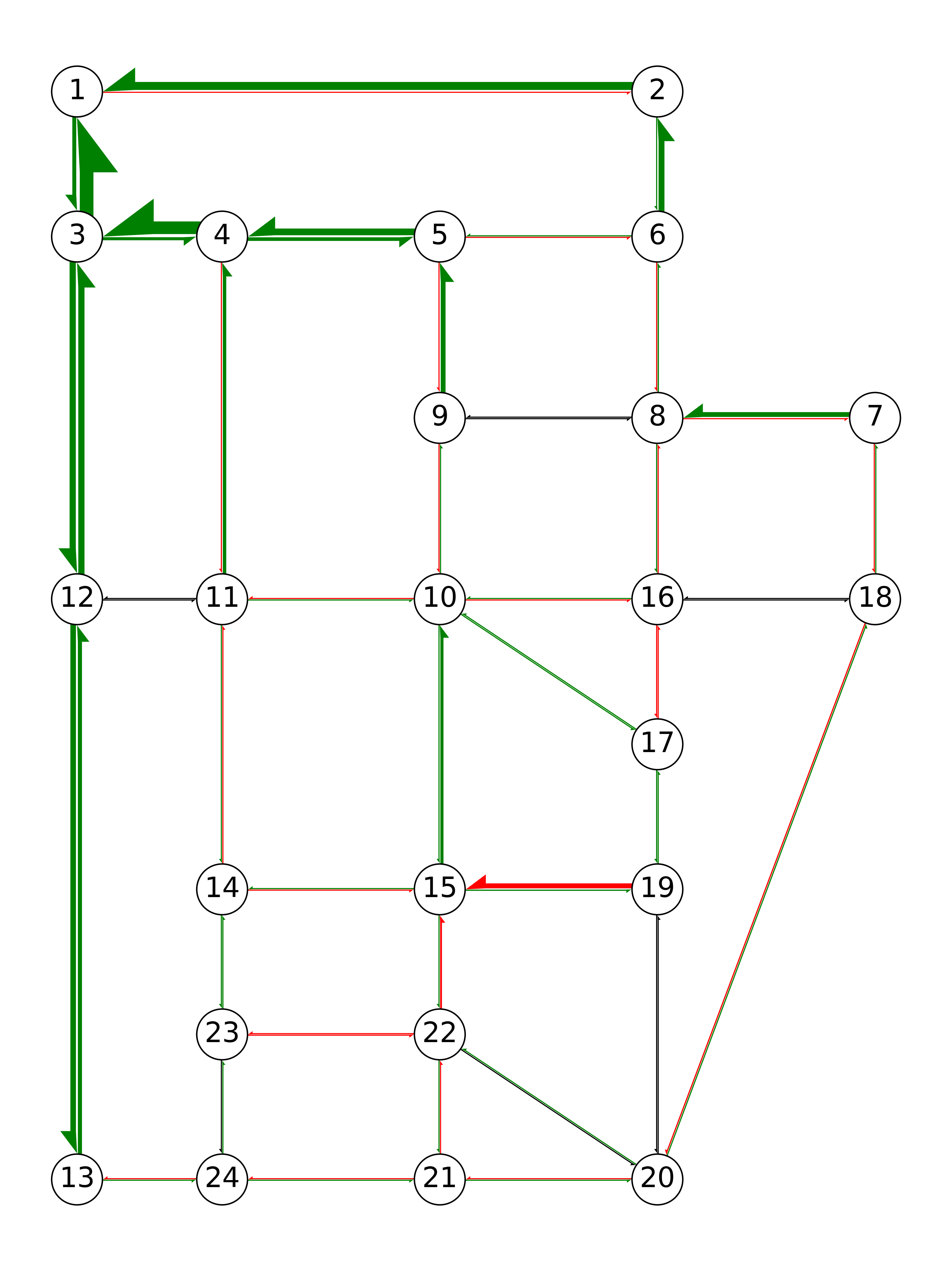}
    \caption{Sioux-Falls link flow changes (reduced flow in green, increased flow in red)}
    \label{fig:11.with_RD_link_flow}
\end{figure}

The impacts of ridesharing on the network link flows are further demonstrated in \cref{fig:11.with_RD_link_flow}. The link flow changes (compared to the without ridesharing scenario) are consistent with tables \eqref{tab:with_without_RS_results}-\eqref{tab:RS_pairs}, in which the magnitudes of the link flow reductions (links in green) are larger than the increased traffic flows (links in red). The increased traffic flows correspond to our ridesharing setting that RD may detour to pick up and drop off multiple passengers from different OD, with ridesharing matching objective of maximizing VKT savings. Moreover, we observe that most link flow reductions are around the main RP destinations (\cref{tab:RS_pairs}), in which many of these trips cross the network and have longer trip lengths.  

\subsection{Sensitivity analysis - ridesharing unit price}
In this subsection, we perform sensitivity analysis to investigate the impacts of ridesharing unit price on ridesharing modal splits and network benefits. For simplicity, we vary only the distance-based unit price, $\nu_d^{RD}$ between 0 and 1.5, and assume $\nu_d^{RP}=0.5\nu_d^{RD}$.
\begin{figure}[H]
    \centering
    \includegraphics[width=0.9\textwidth]{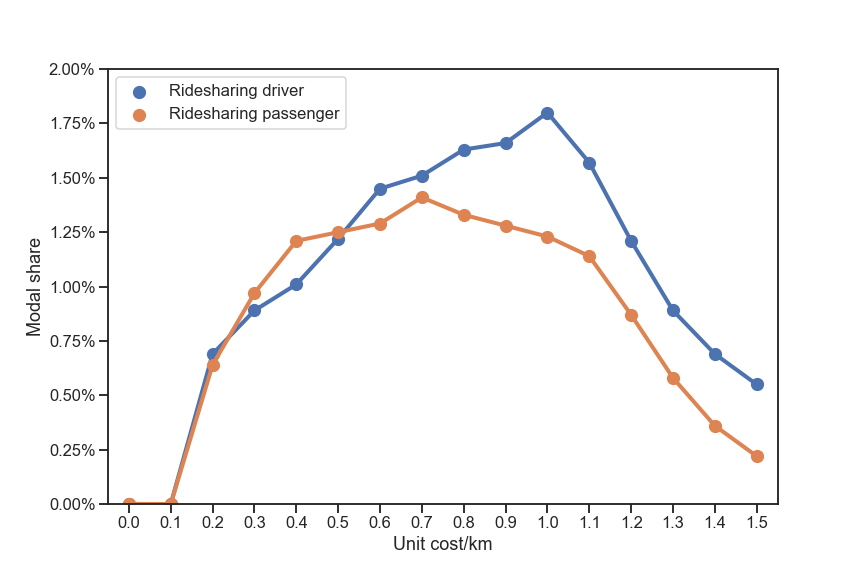}
    \caption{Impact of unit price on ridesharing modal shares}
    \label{fig:10a.sensitivity_pricing_1}
\end{figure}
As shown in \cref{fig:10a.sensitivity_pricing_1}, when unit price is low (0.0-0.1), there is neither RD nor RP at equilibrium. From the behavioral point of view, when the price is low, passengers are very much willing to participate in ridesharing, but drivers have no incentive to take passengers. Therefore, due to coupling between RD and RP, RP modal splits at equilibrium should be low. Our sensitivity analysis results demonstrate such coupling effects.

When the ridesharing unit price increase, more drivers start picking up passengers to compensate their costs. As a result, both RD and RP modal split increase. In this example, there is a passenger surplus for unit cost $\nu_d^{RD}< 0.5$, which suggests these prices are beneficial towards passengers. With larger unit cost ($\nu_d^{RD}> 0.5$), driver surplus is observed, which suggests such pricing strategy acts towards the drivers' benefits.

As the unit price gets too high, passengers start leaving the ridesharing system, while more drivers consider ridesharing attractive for its high compensations. But due to RP and RD coupling, when there is not enough RP, RD not matched with passengers would eventually leave the system. These results indicate the proposed GEM-mpr properly captures the interactions between drivers and passengers, so as their responses to different operational strategies.

\begin{figure}[H]
    \centering
    \includegraphics[width=0.9\textwidth]{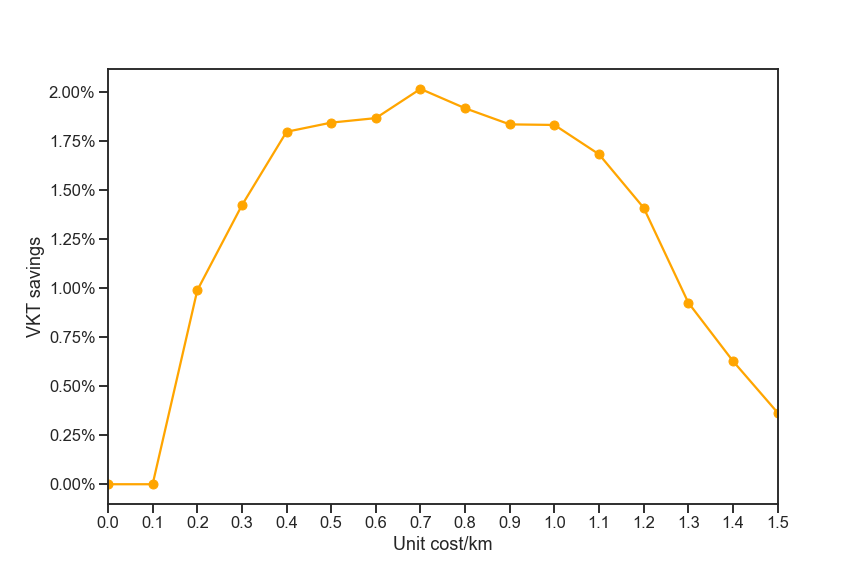}
    \caption{Impact of unit price on network benefits}
    \label{fig:10b.sensitivity_pricing_2}
\end{figure}

We show in \cref{fig:10b.sensitivity_pricing_2} the impacts of ridesharing pricing \DIFdelbegin \DIFdel{in }\DIFdelend \DIFaddbegin \DIFadd{on }\DIFaddend network benefits. As already suggested \DIFdelbegin \DIFdel{from }\DIFdelend \DIFaddbegin \DIFadd{in }\DIFaddend the previous subsection, ridesharing network benefits are related to the RP trips, in which RP travel with RD to reduce vehicle trips. VKT saving results are consistent with the RP modal share in \cref{fig:10a.sensitivity_pricing_1}. As the unit price increase, there are more passengers being matched, therefore the VKT savings increase. However, when the unit price is too high, RP start leaving the ridesharing systems, so as the VKT savings start decreasing. These results suggest ridesharing prices should be carefully designed to realize ridesharing network benefits.

\subsection{Sensitivity analysis - driver value of time (VOT)}
In this subsection, the impacts of the driver's (DA and RD) VOTs on ridesharing modal shares and network performances are analyzed. We assume the driver's VOTs vary between 1 and 2, and $\alpha^{DA}=\alpha^{RD}$ for simplicity.

As shown in \cref{fig:12a.sensitivity_VOT_1}, the RD, RP and PT modal shares decrease as the driver's VOTs increase, while only DA modal share increases. It implies that, under the proposed parameter setting, ridesharing compensations cannot attract travelers with high VOT to become RD. Consequently, RP, who cannot be matched with drivers, leave ridesharing. These results are consistent with the literature (\cite{alonso2021determinants}), in which high VOT individuals are less willing to rideshare, or take public transport. 
\begin{figure}[H]
    \centering
    \includegraphics[width=0.9\textwidth]{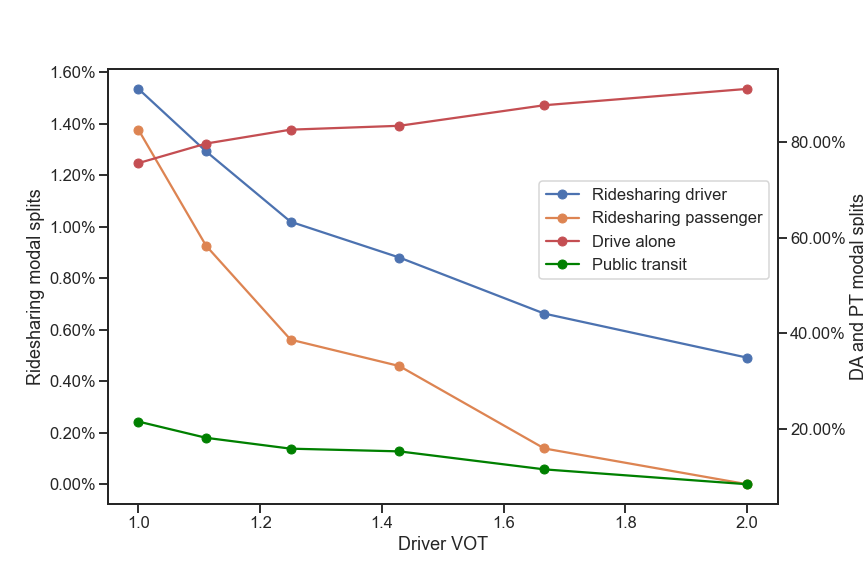}
    \caption{Impact of driver VOT on modal shares}
    \label{fig:12a.sensitivity_VOT_1}
\end{figure}
\begin{figure}[H]
    \centering
    \includegraphics[width=0.9\textwidth]{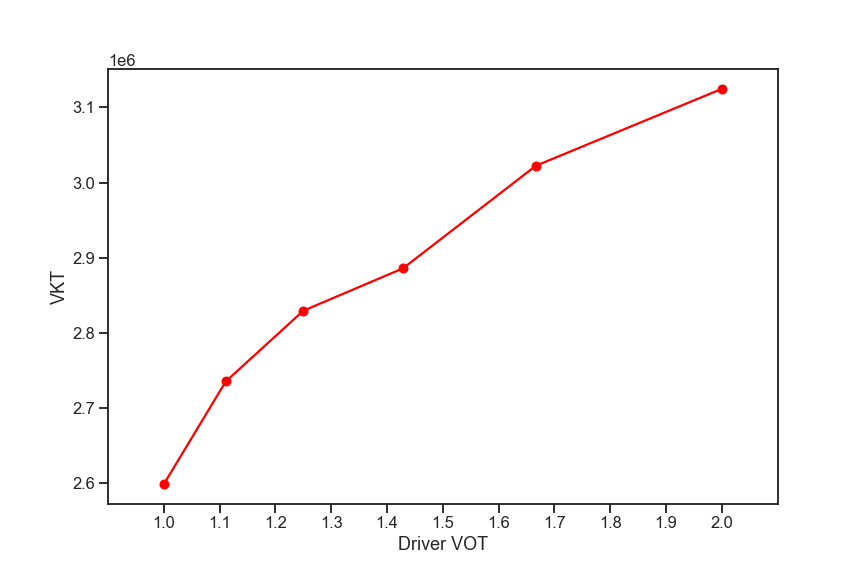}
    \caption{Impacts of driver VOT on network performances}
    \label{fig:12b.sensitivity_VOT_2}
\end{figure}

We further evaluate the impacts of driver's VOTs on network performances in \cref{fig:12b.sensitivity_VOT_2}, in which the VKT increases as the driver VOT increases. This is related to the increases in DA modal splits (\cref{fig:12a.sensitivity_VOT_1}), as more travelers choose to drive alone, the more congested the network becomes. These results suggest that ridesharing services might not be attractive for travelers with high VOT, and additional measures should be taken to promote these services and reduce traffic congestion. 

\subsection{Algorithm convergence}
In this subsection, we show \DIFdelbegin \DIFdel{in \mbox{
\cref{fig:convergence} }\hskip0pt
}\DIFdelend the algorithm convergence results for both the last inner iterations (sequence-bush assignment) and the outer iterations (augmented Lagrangian). Recall that, for all the runs with Sioux Falls network, we set the convergence criteria in terms of normalized gap between the costliest and cheapest mode option $\epsilon_M \leq 10^{-2}$ and route option $\epsilon_N \leq 10^{-2}$, that is, the inner iteration stops when both the mode and route cost differences are no more than 0.01\$. For the outer iterations, we consider the overall problem converges if the flow infeasibility $\epsilon_3 \leq 5\cdot10^{-3}$. 

As indicated in \DIFdelbegin \DIFdel{\mbox{
\cref{fig:convergence}}\hskip0pt
(a)}\DIFdelend \DIFaddbegin \DIFadd{\mbox{
\cref{fig:convergence_a}}\hskip0pt
}\DIFaddend , the route cost normalized gap reaches $10^{-3}$ whereas the modal cost normalized gap converges at $10^{-2}$ (\DIFdelbegin \DIFdel{\mbox{
\cref{fig:convergence}}\hskip0pt
b)}\DIFdelend \DIFaddbegin \DIFadd{\mbox{
\cref{fig:convergence_b}}\hskip0pt
}\DIFaddend). This result suggests the network model is important for the convergence of the inner iteration, and the sequence-bush assignment algorithm is able to provide sufficient accuracy for the algorithm convergence. The overall convergence result in \DIFdelbegin \DIFdel{\mbox{
\cref{fig:convergence}}\hskip0pt
}\DIFdelend \DIFaddbegin \DIFadd{\mbox{
\cref{fig:outer_convergence}}\hskip0pt
}\DIFaddend (c) indicates the augmented Lagrangian method is suitable for finding the overall equilibrium of the proposed model. 

\begin{figure}[H]
  \centering 
  {\includegraphics[width=0.85\linewidth]{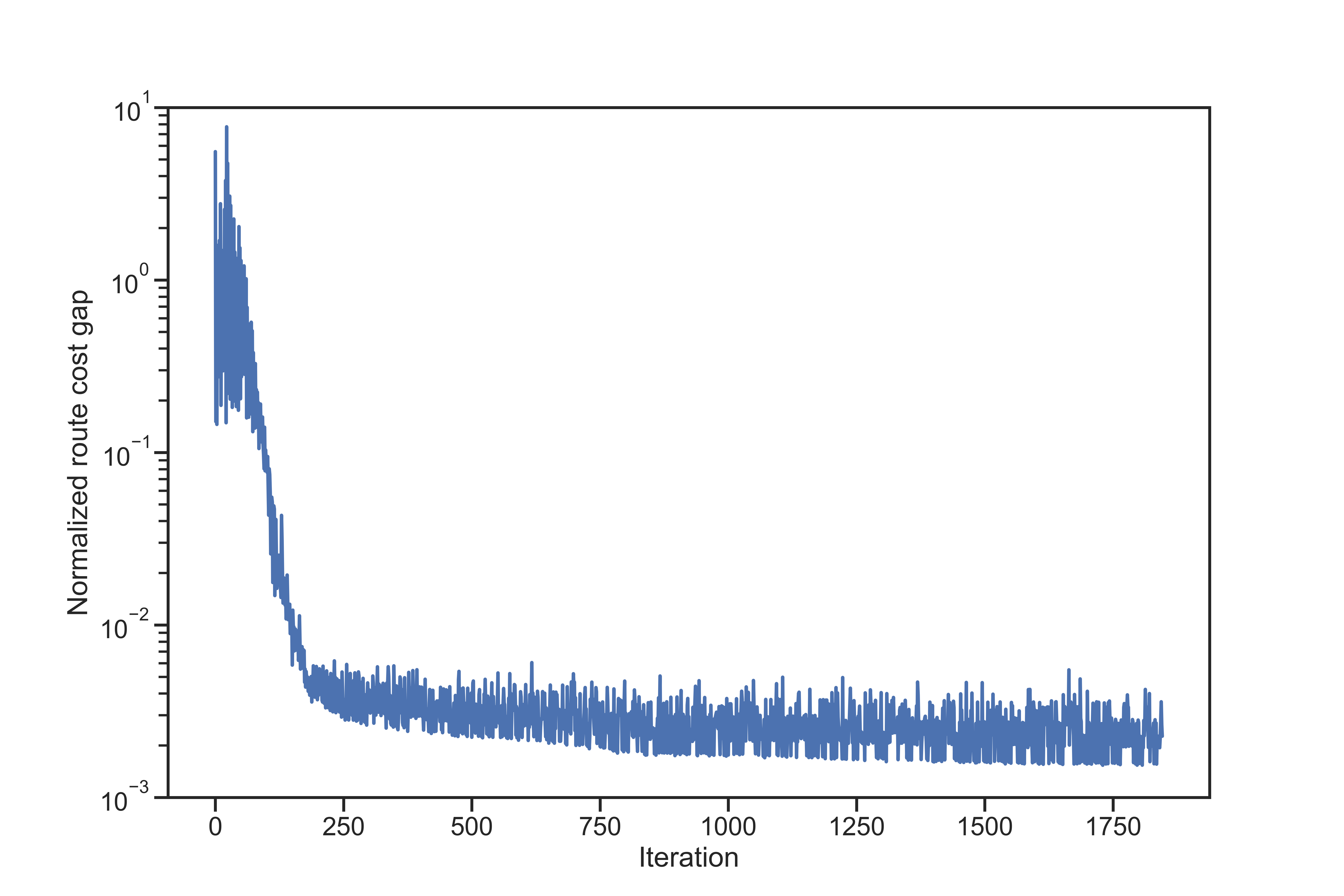}}
  \caption{Algorithm convergence: Normalized route cost gap (Sioux Falls with RS)}
  \label{fig:convergence_a}
\end{figure}
\begin{figure}[H]
  \centering 
  {\includegraphics[width=0.85\linewidth]{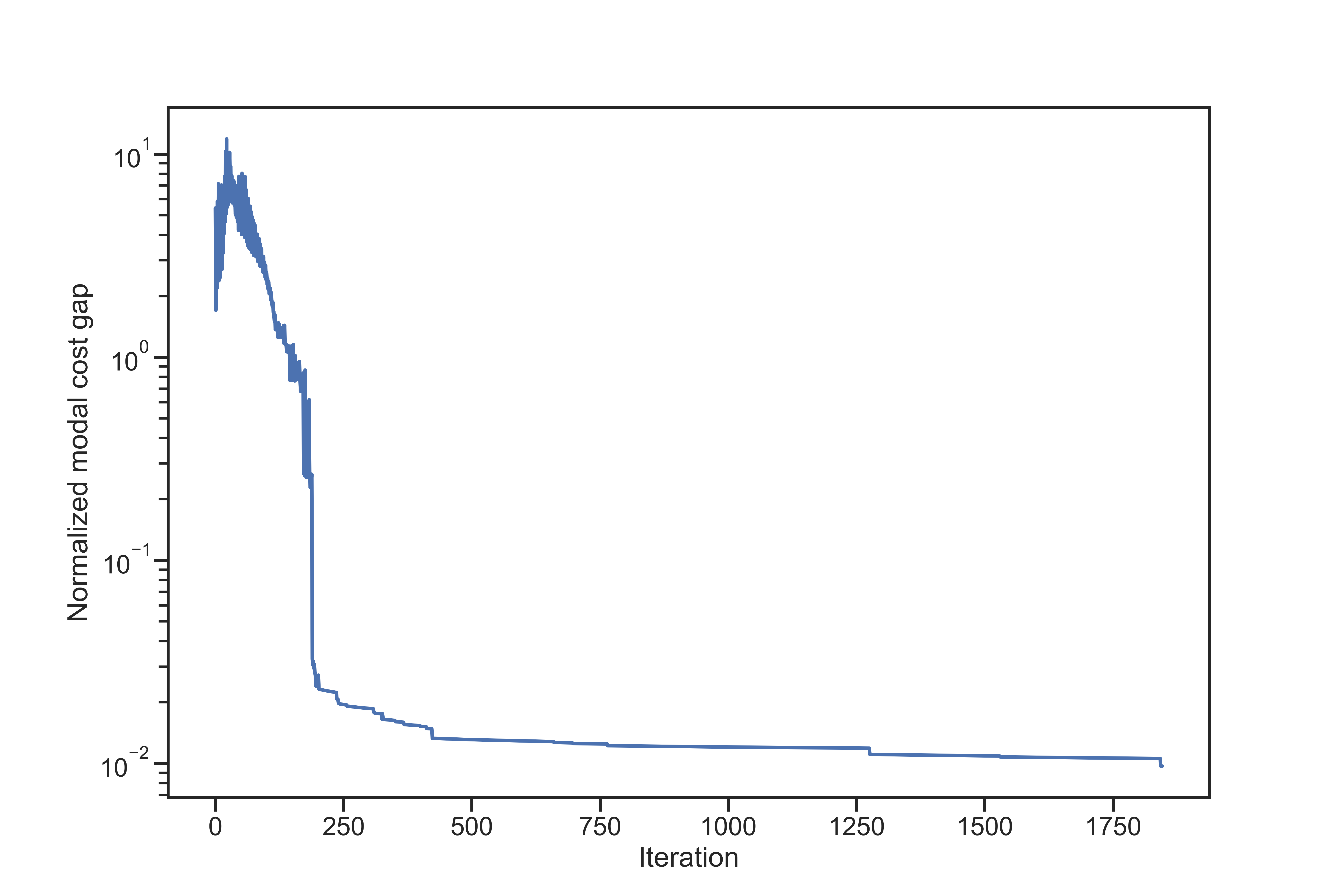}}
  \caption{Algorithm convergence: Normalized modal cost gap (Sioux Falls with RS)}
  \label{fig:convergence_b}
\end{figure} 
\begin{figure}[H]
  \centering 
  {\includegraphics[width=0.85\linewidth]{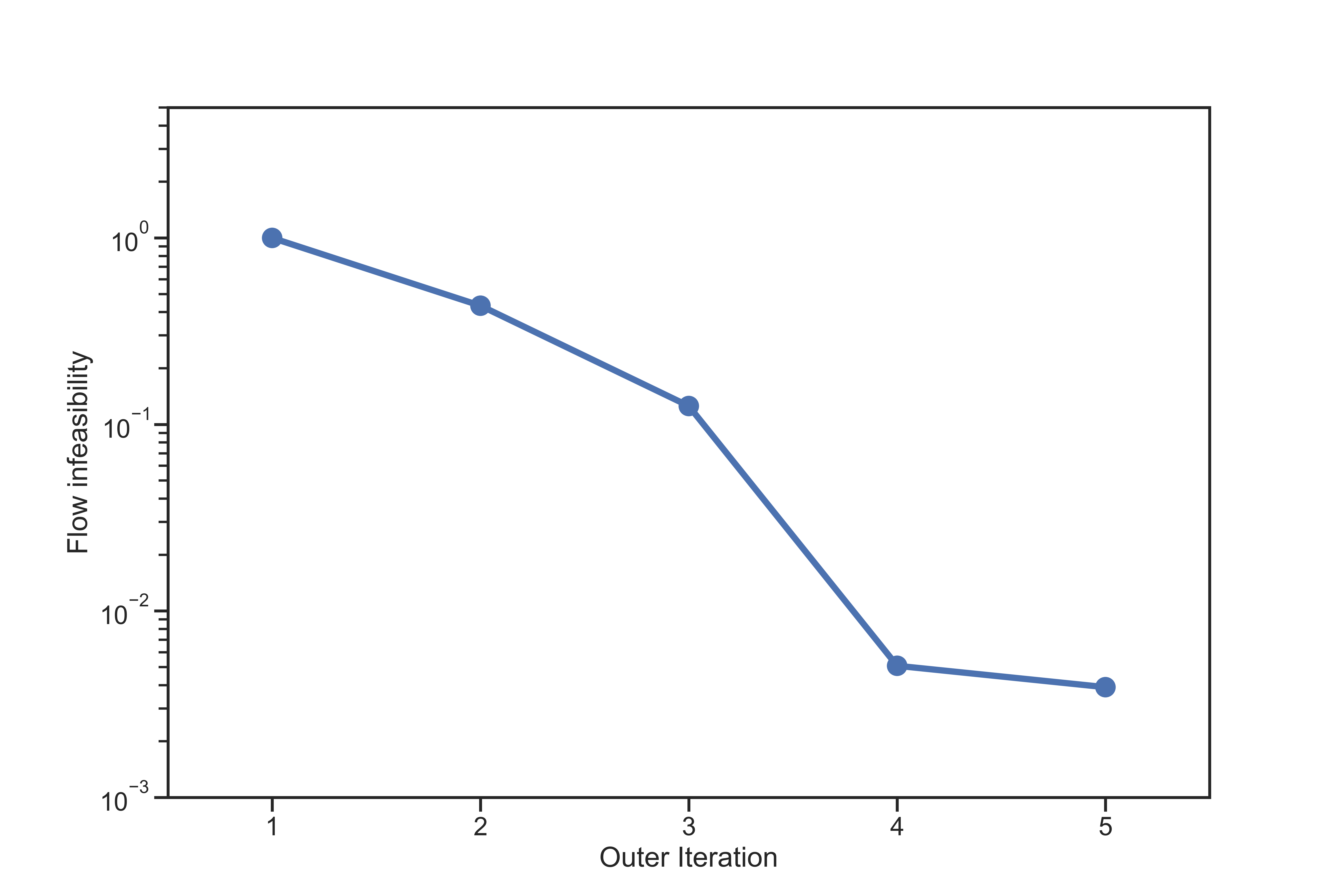}}
  \caption{Algorithm convergence: Overall convergence by Eq.~\eqref{eq:outer_convergence} (Sioux Falls with RS)}
  \label{fig:outer_convergence}
\end{figure}

\DIFaddbegin \DIFadd{Note that, similar to most first-order methods (e.g., gradient descent), the proposed algorithm also slows down when approaching the optimal, due to small gradient values. Future research could incorporate second-order information~}\citep[e.g.,][]{du2021faster} \DIFadd{or accelerated gradients~}\citep{nesterov1983method,beck2009fast} \DIFadd{to speedup the convergence. 
}

\DIFaddend \section{Discussion}
\label{sec:Discussion}
\noindent
\textbf{Extension with waiting time.} The proposed hyper-network approach (\cref{fig:5.hypernetwork}) could be extended to accommodate pickup waiting times, by imposing additional costs on virtual links $(\widehat{i_{RP}}, i_{RP}^l)$. The pickup times could be endogenously determined by computing RD's travel time to the RP's pickup location through traversing trajectories in the matching sequences. 
Whereas matching waiting times could be accounted for endogenously or exogenously in the RD and RP modal cost functions~\eqref{eq:RD_modal_costs}-\eqref{eq:RP_modal_costs}.

\noindent
\textbf{Comparison between sequence-bush and DARP.} The sequence-bush approach differs from the DARP in the following aspects: 1) entity and objective, and 2) travel time. In terms of entity and objective, DARPs are typically solved for an operator (or platform) who aims to optimize its objective (e.g., maximizing the profit), which corresponds to a \textit{system optimal} solution. Whereas sequence-bush solves for a group of travelers (RD and several RPs), in which each of them aims to minimize his/her own travel disutility (\textit{user equilibrium}). In terms of travel times, DARP typically uses flow-independent travel times~\citep[e.g.,][]{yao2021dynamic}, while sequence-bush approach uses flow-dependent travel times and assigns travelers to the network.
Hypothetically, if each RD and RP chooses to maximize the ridesharing platform's objective (e.g., profit), the sequence-bush solution could collapse to the DARP solution. Otherwise, RD and RP could deviate from the platform's matching and routing. 

\noindent
\textbf{Matching-sequence approach from the viewpoint of a joint assignment.} Ridesharing matchings are typically made between a group of RDs and RPs, such that each RD picks up a few RPs. Under such setting, RD and RPs are naturally considered as discrete (i.e., one RD and one RP). The proposed matching sequence corresponds to this discrete setting, in which one RD follows a matching sequence to pickup/drop-off one RP at each task node. Consequently, the proposed matching-sequence can be seen as a joint assignment, in which $F_n$ RD choose matching sequence $n$ and $1\cdot F_n$ RPs are served at each task node, whereas the continuous $F_n$ can be interpreted as the proportion of RD (and RP) choosing $n$.

\section{Summary}
\label{sec:Summary}
\noindent
In this paper, we propose a general equilibrium model for multi-passenger ridesharing systems. The proposed model considers the interactions between ridesharing drivers, passengers, platforms, and the transportation network. Most studies did not explicitly consider multi-passenger ridesharing matching in ridesharing equilibrium models. Therefore, we believe that it is necessary to explicitly integrate the ridesharing platform's multi-passenger matching problem into the model.

The proposed hyper-network approach relaxes the assumption that ridesharing passengers need to make transfers in a multi-passenger ridesharing system. Moreover, the matching stability between ridesharing drivers and passengers is extended to the multi-OD multi-passenger case, by transforming the stable matching problem into a route choice problem in the hyper-network. Solution existence for the proposed general equilibrium model is shown following a three-step procedure.

One of the key contributions of this paper is the hyper-network approach. Assuming travelers make mode choice decisions among 4 alternative modes: driver alone, ridesharing driver, ridesharing passenger, and public transport, the original networks are extended to the multi-modal setting. Furthermore, the multi-modal extended network is expanded for the matching sequences. The matching-sequence expanded network provides a tool that not only can handle multi-passenger ridesharing without the need to make transfers, but also \DIFdelbegin \DIFdel{allow }\DIFdelend \DIFaddbegin \DIFadd{allows }\DIFaddend handling the stable matching problem in the hyper-network and determining the ridesharing driver routes endogenously.

Within a non-cooperative game-theoretic framework, platforms and travelers can make their decisions. We explicitly consider these interactions occurred from their decisions. Given the platform's matching decisions, ridesharing drivers and passengers can choose their preferred matching sequence, such that stable matching is achieved at equilibrium. Through matching sequence, the ridesharing platform can also influence the drivers' routing decisions for picking up and dropping off passengers. 

The proposed equilibrium model falls into the generalized type of Nash equilibrium, in which solution existence is not trivial to show. A three-step procedure is developed to tackle the shared and asymmetric constraints and the lack of explicit bound on the decision variables, by adapting the VI solution existence theorem for unbounded sets (\cite{Nagurney2009}) and penalty-based method for relaxing the asymmetric constraints (\cite{ban2019general}).

A sequence-bush assignment algorithm is developed for solving the multi-passenger ridesharing equilibrium problem, in which the overall problem is decomposed with respect to RD-RP groups in terms of sequences. By the introduction of route-flow variables, the sequence-bush approach avoids path enumeration and storage by connecting bushes in sequence, with which the complicate ridesharing constraints can be handled implicitly. 

Results illustrate the proposed sequence-bush algorithm is able to find equilibria for the joint stable matching and route choice problem\DIFaddbegin \DIFadd{, and it outperforms general-purpose solver like GAMS/PATH in terms of runtime and number of iterations}\DIFaddend . Results indicate that the introduction of ridesharing can bring network benefits, and ridesharing trips are typically longer than average trip lengths. Sensitivity analysis suggests that a properly designed ridesharing unit price is necessary to achieve network benefits, and travelers with relatively lower values of time are more likely to participate in ridesharing.

The proposed general equilibrium model provides a planning tool for studying optimal pricing and regulatory policymaking problems. Further research is needed to design efficient and effective pricing strategies for transportation systems with shared mobility services.

\newpage
\printbibliography

\newpage
\appendix
\section*{Appendix}
\addcontentsline{toc}{section}{Appendices}
\renewcommand{\thesubsection}{A\Alph{subsection}}
\noindent

\subsubsection{Hyper-network definition}
\label{appendix.hyper_net_def}
In this subsection, we formally define the matching-sequence expanded hyper-network $\mathcal{G}$. Let $\mathcal{G}_{Veh}$ and $\mathcal{G}_{PT}$ denote the original vehicle and public transport networks, respectively; where $\mathcal{G}_{Veh} = (\mathcal{N}_{Veh}, \mathcal{E}_{Veh})$ and $\mathcal{G}_{PT} = (\mathcal{N}_{PT}, \mathcal{E}_{PT})$ with $\mathcal{N}_{Veh}, ~\mathcal{N}_{PT}$ and $\mathcal{E}_{Veh}, ~\mathcal{E}_{PT}$ denoting nodes and edges in $\mathcal{G}_{Veh}$ and $\mathcal{G}_{PT}$; and the set of traveler origins and destinations are denoted as $\mathcal{N}_O$  and $\mathcal{N}_D$, respectively.  The hyper-network $\mathcal{G}$ is composed of four subnetworks $\mathcal{G}_{DA}, ~\mathcal{G}_{RD}, ~\mathcal{G}_{RP}, ~\mathcal{G}_{PT}$ as summarized in \cref{tab:hypernet}:

\begin{table}[H]
    \caption{Components of the hyper-network.}
    \label{tab:hypernet}
    \centering
    \scriptsize
    \setlength\tabcolsep{6pt}
    \renewcommand{\arraystretch}{1.5}
    \begin{tabular}{cccc}
    \toprule
    \textbf{Subnetwork} & \textbf{Notation} & \textbf{Nodes} & \textbf{Links} \\
    \midrule
    \midrule
    Drive alone (DA) & 
      $\mathcal{G}_{DA}$ & 
        $\mathcal{N}_{DA} \coloneqq \mathcal{N}_{Veh}$ & 
          $\mathcal{E}_{DA} \coloneqq \mathcal{E}_{Veh}$ \\
    
    \midrule
    {\multirow{12}{*}{\makecell{Ridesharing driver (RD)}}} 
      & {\multirow{12}{*}{\makecell{$\mathcal{G}_{RD}$}}}
        & \multicolumn{1}{l}{\tabitem RD node}
          & \multicolumn{1}{l}{\tabitem RD links} \\
      & 
        & $\mathcal{N}_{RD} \coloneqq \{\mathcal{N}_{RD}^l\}$  
          & $\mathcal{E}_{RD, 0/1} \coloneqq \{\mathcal{E}_{RD, 0/1}^l \cup \mathcal{E}_{RD, 0/1}^{l,l+1}\}$ \\
      & 
        & \multicolumn{1}{l}{where,} 
          & \multicolumn{1}{p{0.25\linewidth}}{where, 0: without-passenger, and 1: with-passenger}\\
      & 
        & $\mathcal{N}_{RD}^l \coloneqq \mathcal{N}_{Veh}, \forall l \in \{1,...,L\}$ 
          & $\mathcal{E}_{RD,0/1}^l \coloneqq \mathcal{E}_{Veh}, \forall l \in \{1,...,L\}$ \\
      & 
        & \multicolumn{1}{l}{is the RD node at level $l$}
          & \multicolumn{1}{l}{is the RD links within level $l$, and} \\
      & 
        & \multicolumn{1}{l}{\tabitem Virtual RD origin node, $\widehat{O}$} 
          & $\mathcal{E}_{RD,0/1}^{l, l+1} \coloneqq \{(i_{RD}^{l}, i_{RD}^{l+1})\}, \forall l \in \{1,...,L-1\}$ \\
      & 
        & $\mathcal{N}_{RD, \widehat{O}} \coloneqq \mathcal{N}_{O}$
          & \multicolumn{1}{l}{is the virtual links connecting level $l$ and $l+1$} \\
      
      & 
        & 
          & \multicolumn{1}{l}{\tabitem Virtual RD links} \\
      & 
        & 
          & $\mathcal{E}_{\widehat{RD}} \coloneqq \{(i_{\widehat{RD}}, i_{RD}^{1}), (i_{\widehat{RD}}, i_{DA}) \}$ \\
      & 
        & 
          & \multicolumn{1}{l}{where,$(i_{\widehat{RD}}, i_{RD}^{1})$ represents travel as RD} \\
      & 
        & 
          & \multicolumn{1}{l}{$(i_{\widehat{RD}}, i_{DA})$ represents leaving ridesharing} \\
    
    \midrule
    {\multirow{12}{*}{\makecell{Ridesharing passenger (RP)}}} 
      & {\multirow{12}{*}{\makecell{$\mathcal{G}_{RP}$}}}
        & \multicolumn{1}{l}{\tabitem RP node}
          & \multicolumn{1}{l}{\tabitem RP links} \\
      & 
        & $\mathcal{N}_{RP} \coloneqq \{\mathcal{N}_{RP}^l\}$  
          & $\mathcal{E}_{RP} \coloneqq \{\mathcal{E}_{RP}^l \cup \mathcal{E}_{RP}^{l,l+1}\}$ \\
      & 
        & \multicolumn{1}{l}{where,} 
          & \multicolumn{1}{p{0.25\linewidth}}{where,}\\
      & 
        & $\mathcal{N}_{RP}^l \coloneqq \mathcal{N}_{Veh}, \forall l \in \{1,...,L\}$ 
          & $\mathcal{E}_{RP}^l \coloneqq \mathcal{E}_{Veh}, \forall l \in \{1,...,L\}$ \\
      & 
        & \multicolumn{1}{l}{is the RP node at level $l$}
          & \multicolumn{1}{l}{is the RP links within level $l$, and} \\
      & 
        & \multicolumn{1}{l}{\tabitem Virtual RP origin node, $\widehat{O}$} 
          & $\mathcal{E}_{RP}^{l, l+1} \coloneqq \{(i_{RP}^{l}, i_{RP}^{l+1})\}, \forall l \in \{1,...,L-1\}$ \\
      & 
        & $\mathcal{N}_{RP, \widehat{O}} \coloneqq \mathcal{N}_{O}$
          & \multicolumn{1}{l}{is the virtual links connecting level $l$ and $l+1$} \\
      
      & 
        & 
          & \multicolumn{1}{l}{\tabitem Virtual RP links} \\
      & 
        & 
          & $\mathcal{E}_{\widehat{RP}} \coloneqq \{(i_{\widehat{RP}}, i_{RP}^{l}), (i_{\widehat{RP}}, i_{PT}) \}$ \\
      & 
        & 
          & \multicolumn{1}{l}{where, $(i_{\widehat{RP}}, i_{PT})$ represents leaving ridesharing} \\
      & 
        & 
          & \multicolumn{1}{l}{$(i_{\widehat{RP}}, i_{RP}^{l}), \forall l \in \{1, ..., L\}$ represents picked up as $l^{th}$ task} \\
    \midrule
    Public transport (PT) & 
      $\mathcal{G}_{PT}$ & 
        $\mathcal{N}_{PT} \coloneqq \mathcal{N}_{PT}$ & 
          $\mathcal{E}_{PT} \coloneqq \mathcal{E}_{PT}$ \\

    \bottomrule
    \end{tabular}
\end{table}

\begin{proposition}
  \label{prop:equivalence_stable_matching}
  The proposed multi-passenger ridesharing stable matching problem formulation \eqref{eq:RD_stable_route}-\eqref{eq:stable_matching_constraint} is equivalent to the many-to-one stable matching problem of \textcite{peng2022many}, which is a special case of the college admission problem (\cite{gale1962college}).
\end{proposition}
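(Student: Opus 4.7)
The plan is to establish the equivalence by showing that the MCP conditions~\eqref{eq:RD_stable_route}--\eqref{eq:stable_matching_constraint} encode exactly the two defining properties of a many-to-one stable matching in the sense of \textcite{peng2022many}: (i) \emph{individual rationality} (no matched agent would rather be unmatched), and (ii) \emph{absence of blocking coalitions} (no alternative RD--RP group would make all its participants strictly better off). First, I would interpret the node-potential variables $\pi_{\widehat{i_{RD}}}^{e,RD}$ and $\pi_{\widehat{i_{RP}}}^{e,RP}$ as the minimum attainable ridesharing disutility for the corresponding RD/RP (a standard reading of the multipliers of the demand-conservation constraints~\eqref{eq:RD_stable_matching_conservation},\eqref{eq:RP_stable_matching_conservation}). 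Under this reading, a positive flow $x_{\widehat{i_{RD}},i^1}^{n,RD}>0$ via the complementarity~\eqref{eq:RD_seq_choice} asserts $\pi_{i^1}^{n,RD}+\tilde c_{\widehat{i_{RD}},i^1}^{n,RD}=\pi_{\widehat{i_{RD}}}^{e,RD}$, i.e.\ sequence $n$ is (weakly) most-preferred by the driver, with the symmetric statement for each matched passenger via~\eqref{eq:RP_seq_choice}.

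Next, I would establish the forward direction (MCP solution $\Rightarrow$ stable matching). Define the matching induced by a feasible $(x,\pi)$ by declaring RD and RPs linked whenever the stable-matching constraint~\eqref{eq:stable_matching_constraint} is active with positive flow on the corresponding sequence $n$. Individual rationality follows from~\eqref{eq:RD_to_DA} and~\eqref{eq:RP_to_PT}: for any matched participant, the disutility of the chosen sequence cannot exceed the disutility of the outside option (DA for RD, PT for RP), because otherwise the complementarity would force $x_{\widehat{i_{RD}},i_{DA}}^{e,RD}$ (resp.\ $x_{\widehat{i_{RP}},i_{PT}}^{e,RP}$) to absorb the flow. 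To rule out a blocking coalition, suppose for contradiction that some sequence $n^\ast$ with participants $\{w\}_{n^\ast}$ strictly dominates their current matches, i.e.\ $\pi^{n^\ast,w}<\pi^{n,w}$ for every $w\in\{w\}_{n^\ast}$. Combining this with the optimality statements~\eqref{eq:RD_seq_choice}--\eqref{eq:RP_seq_choice}, the stability summation of \citet{sotomayor1992multiple} recalled in Definition~1 is violated, giving the required contradiction. The coupling constraint~\eqref{eq:stable_matching_constraint} is essential here: it guarantees that the ``matched together'' relation is symmetric, so that a deviation by the RD must be mirrored by a deviation of every RP in $\{w\}_{n^\ast}$, which is exactly the multi-agent blocking condition of many-to-one stability.

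For the reverse direction (stable matching $\Rightarrow$ MCP solution), I would start from a stable many-to-one matching $\{n\}$ with associated matched counts $Z_n$ and ridesharing disutilities $\pi^{n,w}$. Set $x_{\widehat{i_{RD}},i^1}^{n,RD}=x_{\widehat{i_{RP}},i^l}^{n,e,RP}=Z_n$ (so~\eqref{eq:stable_matching_constraint} and~\eqref{eq:RD_matching_cap}--\eqref{eq:RP_matching_cap} hold), route unmatched demand through the quit links, and define $\pi_{\widehat{i_{RD}}}^{e,RD}$ (resp.\ $\pi_{\widehat{i_{RP}}}^{e,RP}$) to be the minimum of the chosen sequence disutility and the outside-option disutility. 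Stability of the matching implies that no unused sequence offers a strictly lower disutility to all its participants, which translates to the nonnegativity brackets in~\eqref{eq:RD_seq_choice} and~\eqref{eq:RP_seq_choice}; demand conservation~\eqref{eq:RD_stable_matching_conservation}--\eqref{eq:RP_stable_matching_conservation} is immediate by construction. Thus the reconstructed $(x,\pi)$ satisfies the full MCP system.

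The hardest part will be the forward direction, and specifically translating the ``no blocking coalition'' condition for a group of heterogeneous participants into the link-based complementarity system. A blocking coalition in the multi-passenger setting requires \emph{simultaneous} strict preference of one RD and several RPs for a different sequence $n^\ast$; in the MCP formulation these preferences are expressed separately on the RD and RP sides through~\eqref{eq:RD_seq_choice} and~\eqref{eq:RP_seq_choice}, and only the shared constraint~\eqref{eq:stable_matching_constraint} links them. A careful bookkeeping of the dual variables of~\eqref{eq:stable_matching_constraint} (which, as noted in the text, need not be identical on the two sides) is therefore required to ensure that the aggregated preference inequality matches Sotomayor's summation form of stability and yields the contradiction cleanly.
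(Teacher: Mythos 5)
Your route is genuinely different from the paper's. The paper proves the proposition by recalling the integer-programming formulation of \textcite{peng2022many} explicitly, discretizing the flow variables $x_{\widehat{i_{RD}},i^1}^{n,RD}$ and $x_{\widehat{i_{RP}},i^l}^{n,e,RP}$ into binary assignment variables $y_n$ (their \cref{eq:eqv_RD_disc}--\eqref{eq:discrete}), and showing that the complementarity conditions reproduce the preference orderings needed for Peng's no-blocking constraint \eqref{eq:peng_stable_constraint}, while equating the driver- and passenger-side assignment constraints recovers \eqref{eq:stable_matching_constraint}. You instead argue directly in terms of individual rationality and blocking coalitions. Your forward direction is essentially right, but you overcomplicate it: complementarity in \eqref{eq:RD_seq_choice} and \eqref{eq:RP_seq_choice} gives, for every participant with positive flow on $n$, the \emph{per-individual} weak preference $\pi^{n,w}\le\pi^{n',w}$ for all $n'$, which is exactly condition \eqref{eq:def_stability_condition} and is strictly stronger than ``no blocking coalition''; no bookkeeping of the asymmetric duals of \eqref{eq:stable_matching_constraint} is needed there, and Sotomayor's summation follows by simply adding the individual inequalities.

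The genuine gap is in your reverse direction, and you have misdiagnosed where the difficulty lies. A matching that is stable in the many-to-one sense of \textcite{peng2022many} only guarantees that no group is \emph{unanimously} preferred; it does not preclude an individual driver strictly preferring some sequence $n^\ast$ that a passenger in $n^\ast$ blocks. For such an $n^\ast$ your construction sets $\pi_{\widehat{i_{RD}}}^{e,RD}$ equal to the driver's current (larger) disutility, so the bracket in \eqref{eq:RD_seq_choice} evaluated at $n^\ast$ is strictly negative and the MCP is violated --- unless the \emph{generalized} cost $\tilde c_{\widehat{i_{RD}},i^1}^{n^\ast,RD}$ absorbs the difference. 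That is precisely the job of the free multipliers $\phi_5^{n,l}$ of \eqref{eq:stable_matching_constraint} and the capacity multipliers of \eqref{eq:RD_matching_cap}--\eqref{eq:RP_matching_cap}, which enter $\tilde c$; your proof never constructs them, it only defines $\pi$ from the primal disutilities. To close the argument you must exhibit multiplier values (e.g., set $\phi_5^{n^\ast,l}$ or $\mu_3^{i,n^\ast}$ large enough on every individually-preferred-but-blocked sequence) and verify they are consistent with the remaining complementarity conditions. Note that the paper's own proof sidesteps this by ``assuming virtual link costs are zero'' and arguing only the forward implication in substance, so your attempt at a full two-sided equivalence is more ambitious, but as written it does not go through.
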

\begin{proof}
(\cite{peng2022many}) \textit{Many-to-one stable matching problem}

Given a set of driver-passenger groups $b_i \in \mathcal{B}$, a set of drivers $t \in T$, and a set of passengers $p \in P$, decision variable $y_i=1$ if driver-passenger group $b_i$ is matched, and $y_i=0$ otherwise. The many-to-one stable matching problem is formulated as follows:
\begin{subequations}
  \begin{alignat}{2}
    &\max_{y}       & \qquad  & H(y)  \nonumber \\
    &\textrm{subject to}&       & \nonumber \\
    &           &       & \sum_{i:t \in b_i} {y_i} =1, \forall t \in T \label{eq:peng_T_match} \\
    &           &       & \sum_{i:p \in b_i} {y_i} =1, \forall p \in P \label{eq:peng_P_match} \\
    &           &       & y_i 
                      + \sum_{t \in b_i} {\sum_{b_j \succ_{t} b_i} {y_j}} 
                      + \sum_{p \in b_i} {\sum_{b_k \succ_{t} b_i} {y_k}} 
                      \geq 1, \forall b_i \in \mathcal{B}
                      \label{eq:peng_stable_constraint} \\
    &           &       & y_i \in \{0, 1\}, \forall b_i \in \mathcal{B}
\end{alignat}
\end{subequations}

\cref{eq:peng_T_match}-\eqref{eq:peng_P_match} state that, driver and passenger are matched to exactly one driver-passenger group $b_i$, and if $y_i = 1$, $\sum_{t \in b_i} {\sum_{b_j \succ_{t} b_i} {y_j}} =0$, and $\sum_{p \in b_i} {\sum_{b_k \succ_{t} b_i} {y_k}}=0$ (\cref{eq:peng_stable_constraint}). This is to say, for driver $t$, $b_i \succeq_{t} b_j, \forall j : t\in b_j$, which implies the disutility of $b_i$ is the minimum, $\pi_t^{i} \leq \pi_t^{j}, \forall j : t\in b_j$; and similarly, $\pi_p^{i} \leq \pi_p^{k}, \forall k : p\in b_j$. Such relation can be written as complementarity conditions as in proposed multi-passenger ridesharing stable matching problem formulation \eqref{eq:RD_stable_route}-\eqref{eq:RP_stable_route}. By equating \eqref{eq:peng_T_match} with \eqref{eq:peng_P_match} ($\sum_{i:t \in b_i} {y_i}=\sum_{i:p \in b_i} {y_i}$), we also obtain constraint \eqref{eq:stable_matching_constraint} in our formulation.

Suppose $x_{\widehat{i_{RD}}, i^1}^{n, RD}$ and $x_{\widehat{i_{RP}}, i^l}^{n, e, RP}$ are discretized as follows:
\begin{subequations}
  \begin{alignat}{2}
    & x_{\widehat{i_{RD}}, i^1}^{n, RD} = \sum_{0}^{x_{\widehat{i_{RD}}, i^1}^{n, RD}} {\sum_{n}{y_n}} & \label{eq:eqv_RD_disc}\\
    & x_{\widehat{i_{RP}}, i^l}^{n, e, RP} = \sum_{0}^{x_{\widehat{i_{RP}}, i^l}^{n, e, RP}} {\sum_{n}{y_n}} & \label{eq:eqv_RP_disc}\\
    & \sum_{n} {y_n}=1 & \label{eq:discrete}\\
    & y_n \in \{0, 1\}, \forall n &
  \end{alignat}
\end{subequations}
It is easy to see that, \cref{eq:eqv_RD_disc}-\eqref{eq:discrete} are equivalent to \eqref{eq:peng_T_match}-\eqref{eq:peng_P_match}. By complementarity conditions \eqref{eq:RD_stable_route}-\eqref{eq:RP_stable_route} (assume virtual link costs are zero), if $x_{\widehat{i_{RD}}, i^1}^{n, RD} > 0$,  $\pi_{i^1}^{n, RD} \leq \pi_{i^1}^{n', RD}, \forall n' : s_1(n',0,(i, \cdot))=1$; and similarly, $\pi_{i^l}^{n, e, RP} \leq pi_{i^{l'}}^{n', e, RP}, \forall (n',l') : s_1(n',l',(i,e))=1$. Together by \cref{eq:eqv_RD_disc}-\eqref{eq:discrete} and the complementarity conditions, we have that, if $y_n=1$, $n \succeq_{RD} n', \forall n' : s_1(n',0,(i, \cdot))=1$ and $(n, l) \succeq_{RP} (n',l'), \forall (n',l') : s_1(n',l',(i,e))=1$. 

Therefore, if $y_n=1$, 
\begin{flalign}
  \sum_{i \in \{i:s_1(n,0,(i, \cdot))=1\}} {\sum_{n' \succ_{RD_i} n} {y_{n'}}} =0 \nonumber \\
  \sum_{(i, e) \in \{(i, e)| s_1(n,l,(i,e))=1\}} {\sum_{(n', l') \succ_{t} (n, l)} {y_{n'}}}=0 \nonumber
\end{flalign}

\end{proof}

\subsubsection{MCP formulation for the joint stable matching and route choice problem}
\label{appendix.MCP_network_model}

\textbf{Stable matching multipliers}\; Let $\mu_{3}^{i, n}$ denote the multiplier for constraint \eqref{eq:RD_matching_cap}, $\mu_{4}^{i, e, n, l}$ denote the multiplier for constraint \eqref{eq:RP_matching_cap}, $\phi_{5}^{n, l},\widehat{\phi}_{5}^{n, l} $ denote the multiplier for constraint \eqref{eq:stable_matching_constraint} for RD, and for RP, respectively; and the set of passengers in matching sequence $n$, traveling between $(j, e)$ and being picked up as the $l^{th}$ task, denoted as $\zeta_{RP}^{n} \coloneqq \{(j, l, e)|s_1(n, l, (j, e)) = 1, 1 \leq l \leq L - 1\}$. 

\textbf{Route choice multipliers}\; Let $\pi_{i}^{DA}$ denote the dual variables (node potential) of DA flow conservation at node $i_{DA}$, and $\pi_{i^{l}}^{n, RD}$ denote the dual variables of RD conservation at node $i_{RD}^{l}$ for matching sequence $n$, and similarly $\pi_{i^{l}}^{n, e, RP}$ denote the dual variables of RP conservation at node $i_{RP}^{l}$ for matching sequence $n$ and RP with destination $e$, and $\pi_{i}^{PT}$ denote the dual variables of PT conservation at node $i_{PT}$.

\textbf{Multi-passenger ridesharing multipliers}\; We denote the dual variables for the coupling constraint \eqref{eq:RD_RP_coupling_upper} as $\lambda_{6}^{i^l, j^{l'}, n}$, and $\widehat{\lambda}_{6}^{i^l, j^{l'},n}$; and $\lambda_{7}^{i^l, j^{l'}, n}$, and $\widehat{\lambda}_{7}^{i^l, j^{l'}, n}$ for the coupling constraint \eqref{eq:RD_RP_coupling_lower}, for ridesharing drivers and passengers, respectively. Let $\lambda_{8}^{i^l, n}$, $\lambda_{9}^{i^l, n}$, $\lambda_{10}^{i^l, n}$, $\lambda_{11}^{i^l, n}$ denote the dual variables for constraints \eqref{eq:intermediate_conservation_dropoff},\eqref{eq:intermediate_conservation_pickup}, \eqref{eq:RD_no_pickup_remain}, and \eqref{eq:RD_pickup_cross}, respectively; and let $\lambda_{8}^{i^1, n}=0$, and $\lambda_{9}^{i^1, n}=0$ for completeness.

\noindent
[MCP - Joint stable matching and route choice model]
\\\textbf{Multi-passenger ridesharing constraints}

\textit{Ridesharing driver-passenger coupling constraints}
\begin{subequations}
\label{eq:mcp_group_coupling_constraints}
    \begin{alignat}{2}
    0 \leq \left[  
            \left( 
              \sum_{\omega}  {\sum_{l'' \leq l' - 1} \left(s_1(n, l'', \omega) - s_{-1}(n, l'', \omega)\right) } - 1 
            \right)
            \cdot 
              x_{i^{l}, j^{l'}, 1}^{n, RD}
            - \sum_{e \in \mathcal{D}}{x_{i^{l}, j^{l'}}^{n, e, RP}}
      \right] 
    & \perp 
            \lambda_{6}^{i^l, j^{l'}, n}
    \geq 0, \label{eq:mcp_group_coupling_1} \\
             &\forall i, n, l, j^{l'}:(i^l, j^{l'}) \in \mathcal{E}_{RD, 1} \nonumber\\
    0 \leq \left[  
            \sum_{e \in \mathcal{D}}{x_{i^{l}, j^{l'}}^{n, e, RP}}
            - x_{i^{l}, j^{l'}, 1}^{n, RD}
      \right] 
     \perp 
            \lambda_{7}^{i^l, j^{l'},n}
    \geq 0, 
             \forall i, n, l, j^{l'}:(i^l, j^{l'}) \in \mathcal{E}_{RD, 1}\label{eq:mcp_group_coupling_2} &.\\
    0 \leq \left[  
            \left( 
              \sum_{\omega}  {\sum_{l'' \leq l' - 1} \left(s_1(n, l'', \omega) - s_{-1}(n, l'', \omega)\right) } - 1 
            \right)
            \cdot 
              x_{i^{l}, j^{l'}, 1}^{n, RD}
            - \sum_{e \in \mathcal{D}}{x_{i^{l}, j^{l'}}^{n, e, RP}}
      \right] 
    & \perp 
            \widehat{\lambda}_{6}^{i^l, j^{l'},n}
    \geq 0, \label{eq:mcp_group_coupling_1_hat}\\
             &\forall i, n, l, j^{l'}:(i^l, j^{l'}) \in \mathcal{E}_{RD, 1} \nonumber\\
    0 \leq\left[  
            \sum_{e \in \mathcal{D}}{x_{i^{l}, j^{l'}}^{n, e, RP}}
            - x_{i^{l}, j^{l'}, 1}^{n, RD}
      \right] 
     \perp 
            \widehat{\lambda}_{7}^{i^l, j^{l'}, n}
    \geq 0, 
             \forall i, n, l, j^{l'}:(i^l, j^{l'}) \in \mathcal{E}_{RD, 1} \label{eq:mcp_group_coupling_2_hat} &.
\end{alignat}
\end{subequations}

\textit{Conservation of with-passenger RD flow at intermediate nodes}
\begin{subequations}
\label{eq:mcp_group_intermediate_conservation}
    \begin{alignat}{2}
    \begin{split}
      0 \leq & \left[ 
              \left(
                  \sum_{o \in \mathcal{O}}{s_{-1}(n, l, (o, i))}  
              \right)
              \cdot \left(
                \sum_{k^{l''}:(k^{l''}, i^{l}) \in \mathcal{E}_{RD, 0}} {x_{k^{l''}, i^{l}, 0}^{n, RD}}
                + \sum_{k^{l''}:(k^{l''}, i^{l}) \in \mathcal{E}_{RD, 1}} {x_{k^{l''}, i^{l}, 1}^{n, RD}}
                \right) \right. \\
            & \left.
            - \sum_{k^{l''}:(k^{l''}, i^{l}) \in \mathcal{E}_{RD, 1}} {x_{k^{l''}, i^{l}, 1}^{n, RD}}
            + \sum_{j^{l'}:(i^{l}, j^{l'}) \in \mathcal{E}_{RD, 1}} {x_{i^{l}, j^{l'}, 1}^{n, RD}}
        \right]
        \perp 
          \lambda_{8}^{i^l, n}
    , \forall i, n, l \geq 2
    \end{split} \label{eq:mcp_group_intermediate_conservation_1} \\
    \begin{split}
      0 \leq & \left[ 
              \left(
                  \sum_{o \in \mathcal{O}}{s_{1}(n, l, (o, i))}  
              \right)
              \cdot \left(
                \sum_{k^{l''}:(k^{l''}, i^{l}) \in \mathcal{E}_{RD, 0}} {x_{k^{l''}, i^{l}, 0}^{n, RD}}
                + \sum_{k^{l''}:(k^{l''}, i^{l}) \in \mathcal{E}_{RD, 1}} {x_{k^{l''}, i^{l}, 1}^{n, RD}}
                \right) \right. \\
            & \left.
            - \sum_{j^{l'}:(i^{l}, j^{l'}) \in \mathcal{E}_{RD, 1}} {x_{i^{l}, j^{l'}, 1}^{n, RD}}
            + \sum_{k^{l''}:(k^{l''}, i^{l}) \in \mathcal{E}_{RD, 1}} {x_{k^{l''}, i^{l}, 1}^{n, RD}}
        \right]
        \perp 
          \lambda_{9}^{i^l, n}
    , \forall i, n, l \geq 2
    \end{split} \label{eq:mcp_group_intermediate_conservation_2}
    \end{alignat}
\end{subequations}

\textit{Matching sequence intra-task constraint}
\begin{equation}    
\label{eq:mcp_group_no_pickup_remain}        
    \begin{split}
  0 \leq & \left[
             \left(
              \sum_{e\in \mathcal{D}} {s_1(n,l,(i, e))} 
              + \sum_{o\in \mathcal{O}} {s_{-1}(n,l,(o, i))}
            \right) \right. \\
            {}&\cdot \left( 
              x_{\widehat{i_{RD}}, i^{l}}^{n, RD}
              + \sum_{k^{l''}:(k^{l''}, i^{l}) \in \mathcal{E}_{RD, 0}} {x_{k^{l''}, i^{l}, 0}^{n, RD}}
              + \sum_{k^{l''}:(k^{l''}, i^{l}) \in \mathcal{E}_{RD, 1}} {x_{k^{l''}, i^{l}, 1}^{n, RD}} 
            \right) \\
            &\left.  - \sum_{j^{l+1}:(i^{l}, j^{l+1}) \in \mathcal{E}_{RD, 0}^{l, l+1}} {x_{i^{l}, j^{l+1}, 0}^{n, RD}}
            - \sum_{j^{l+1}:(i^{l}, j^{l+1}) \in \mathcal{E}_{RD, 1}^{l, l+1}} {x_{i^{l}, j^{l+1}, 1}^{n, RD}}   
        \right]
      \perp
            \lambda_{10}^{i^l, n}
      \geq 0
      ,  \forall i, n, l
  \end{split}
\end{equation}

\textit{Matching sequence inter-task constraint}
\begin{equation}  
\label{eq:mcp_group_pickup_cross}          
    \begin{split}
  0 = \left(
      \sum_{j^{l}:(i^{l}, j^{l}) \in \mathcal{E}_{RD, 0}^{l, l}} {x_{i^{l}, j^{l}, 0}^{n, RD}} \right.
        +& \left. \sum_{j^{l}:(i^{l}, j^{l}) \in \mathcal{E}_{RD, 1}^{l, l}} {x_{i^{l}, j^{l}, 1}^{n, RD}}
      \right) \\
      {}& \cdot
        \left(
        \sum_{e\in \mathcal{D}} {s_1(n,l,(i, e))}
        + \sum_{o\in \mathcal{O}} {s_{-1}(n,l,(o, i))}
       \right)
      \perp
            \lambda_{11}^{i^l, n}
      \quad \text{free}
      ,  \forall i, n, l
  \end{split}
\end{equation}
\\\textbf{Route choice model}
\\Drive alone (DA) route choice:
\begin{flalign} 
\label{eq:mcp_DA_route_choice_final}
    0 \leq \left[\pi_{j}^{DA} + {c}_{i, j}^{DA} - \pi_{i}^{DA}\right] \perp x_{i, j}^{e, DA} \geq 0
    \quad \quad , \forall (i, j) \in \mathcal{E}_{DA}, e \in \mathcal{D}
\end{flalign} 
Ridesharing driver (RD) route choice:

\textit{Ridesharing without passenger}
\begin{flalign} 
\label{eq:mcp_RD_route_choice_empty_final}
\begin{split}
    0 \leq & \left[
            \pi_{j^{l'}}^{n, RD} 
            + {c}_{i^{l},j^{l'}, 0}^{RD} \right. \\
            {}& - \sum_{o \in \mathcal{O}} {s_{-1}(n, l', (o, j))} \cdot \lambda_{8}^{j^{l'}, n} \text{~~Avoid intermediate node drop-off}\\
            {}& - \sum_{o \in \mathcal{O}} {s_{1}(n, l', (o, j))} \cdot \lambda_{9}^{j^{l'}, n} \text{~~Avoid intermediate node drop-off}\\
            {}& + \lambda_{10}^{i^l, n} - \left(
                              \sum_{e \in \mathcal{D}} {s_{1}(n, {l'}, (j, e))} 
                              + \sum_{o \in \mathcal{O}} {s_{-1}(n, {l'}, (o, j))}
                             \right) 
                             \cdot 
                                \lambda_{10}^{j^{l'}, n} \text{~~Ensure pickup/drop-off execution} \\
            {}&  + \left(
        \sum_{e\in \mathcal{D}} {s_1(n,l,(i, e))}
        + \sum_{o\in \mathcal{O}} {s_{-1}(n,l,(o, i))}
       \right) \cdot \lambda_{11}^{i^l, n} \text{~~Ensure execution follows matching sequence} \\
            {}& \left. 
            - \pi_{i^{l}}^{n, RD}
        \right] 
    \perp 
            x_{i^{l}, j^{l'}, 0}^{n, RD} 
    \geq 0
    \quad \quad , \forall i, n, l, j^{l'}:(i^{l}, j^{l'}) \in \mathcal{E}_{RD, 0}
\end{split}
\end{flalign}    

\textit{Ridesharing with passenger}
\begin{flalign} 
\label{eq:mcp_RD_route_choice_onboard_final}
\begin{split}
    0 \leq & \left[
            \pi_{j^{l'}}^{n, RD} 
            + {c}_{i^{l},j^{l'}, 1}^{RD} \right.\\
            {}&
            - \left( 
              \sum_{\omega}  {\sum_{l'' \leq l' - 1} \left(s_1(n, l'', \omega) - s_{-1}(n, l'', \omega)\right) } - 1 
              \right)
              \cdot 
                \lambda_{6}^{i^l, j^{l'}, n} 
            + \lambda_{7}^{i^l, j^{l'}, n} \text{~~RD-RP coupling} \\
            {}& - \lambda_{8}^{i^l, n} + (1 - \sum_{o \in \mathcal{O}} {s_{-1}(n, {l'}, (o, j))}) \cdot \lambda_{8}^{j^{l'}, n} \text{~~Avoid intermediate node drop-off}\\
            {}& + \lambda_{9}^{i^l, n} - (1 + \sum_{o \in \mathcal{O}} {s_{1}(n, {l'}, (o, j))}) \cdot \lambda_{9}^{j^{l'}, n} \text{~~Avoid intermediate node drop-off}\\
            {}& + \lambda_{10}^{i^l, n} - \left(
                          \sum_{e \in \mathcal{D}} {s_{1}(n, {l'}, (j, e))} 
                          + \sum_{o \in \mathcal{O}} {s_{-1}(n, {l'}, (o, j))}
                        \right) \cdot \lambda_{10}^{j^{l'}, n} \text{~~Ensure pickup/drop-off execution} \\
            {}&  + \left(
        \sum_{e\in \mathcal{D}} {s_1(n,l,(i, e))}
        + \sum_{o\in \mathcal{O}} {s_{-1}(n,l,(o, i))}
       \right) \cdot \lambda_{11}^{i^l, n} \text{~~Ensure execution follows matching sequence} \\
            {}& \left. - \pi_{i^{l}}^{n, RD}
        \right]
    \perp 
            x_{i^{l}, j^{l'}, 1}^{n, RD} 
    \geq 0
    \quad \quad , \forall i, n, l, j^{l'}:(i^{l}, j^{l'}) \in \mathcal{E}_{RD, 1}
\end{split}
\end{flalign} 
Ridesharing passenger (RP) route choice:
\begin{flalign} 
\label{eq:mcp_RP_route_choice_final}
\begin{split}
    0 \leq & \left[
            \pi_{j^{l'}}^{n, e, RP} 
            + c_{i^{l}, j^{l'}}^{RP} \right. \\
            {}&+ \widehat{\lambda}_{6}^{i^l, j^{l'}, n} - \widehat{\lambda}_{7}^{i^l, j^{l'}, n} \text{~~RD-RP coupling} \\
            {}& \left. 
            - \pi_{i^{l}}^{n, e, RP}
        \right] 
    \perp 
            x_{i^{l}, j^{l'}}^{n, e, RP} 
    \geq 0
    \quad \quad , \forall i, n, l, j^{l'}:(i^{l}, j^{l'}) \in \mathcal{E}_{RP}, e \in \mathcal{D}
\end{split}
\end{flalign}
Public transport passenger (PT) route choice:
\begin{flalign} 
\label{eq:mcp_PT_route_choice_final}
    0 \leq \left[\pi_{j}^{PT} + {c}_{i, j}^{PT} - \pi_{i}^{PT}\right] \perp x_{i, j}^{e, PT} \geq 0
    \quad \quad , \forall (i, j) \in \mathcal{E}_{PT}, e \in \mathcal{D}
\end{flalign} 

\noindent
\textbf{Stable matching}

\textit{For ridesharing driver:}
\begin{subequations}
\label{eq:mcP_RD_stable_route_final}
    \begin{alignat}{2}
    0 \leq \left[  
            \pi_{i^1}^{n, RD}
            + \mu_{3}^{i, n}
            + \sum_{l} {\phi_{5}^{n, l}}
            - 
        \sum_{e\in \mathcal{D}} {s_1(n,1,(i, e))}
       \cdot \lambda_{10}^{i^1, n}
            - \pi_{\widehat{i_{RD}}}^{e, RD}  
      \right] 
    & \perp 
            x_{\widehat{i_{RD}}, i^1}^{n, RD}
    \geq 0, 
            & \forall i, n  \label{eq:mcp_RD_seq_choice_final}  \\
    0 \leq \left[  
            \pi_{i}^{e, DA} 
            - \pi_{\widehat{i_{RD}}}^{e, RD}  
      \right] 
    & \perp 
            x_{\widehat{i_{RD}}, i_{DA}}^{e, RD} 
    \geq 0, 
            & \forall i, e \in \mathcal{D}  \label{eq:mcp_RD_to_DA_final} \\
    0 \leq \left[ 
            \sum_{n\in \{n|s_1(n,0,(i,e))=1\}} {x_{\widehat{i_{RD}}, i^1}^{n, RD}}
            + x_{\widehat{i_{RD}}, i_{DA}}^{e, RD}
            - q_{(i,e)}^{RD}
      \right] 
    & \perp 
            \pi_{\widehat{i_{RD}}}^{e, RD}
    \geq 0, 
            & \forall i, e \in \mathcal{D}  \label{eq:mcp_RD_conservation_final} \\ 
    x_{\widehat{i_{RD}}, i^1}^{n, RD} = x_{\widehat{j_{RP}}, j^l}^{n, e, RP} & \perp  \phi_{5}^{n, l} \text{~free}, 
             \begin{split}{}&\forall n,  \\
               &i:s_1(n, 0, (i, \cdot))=1,          \\
               &(j, l, e) \in \zeta_{RP}^{n}       
    \end{split}  
             \label{eq:mcp_RD_stable_matching_final} \\
    0 \leq \left[ 
            s_1(n, 0, (i, \cdot)) \cdot Z_n
            - x_{\widehat{i_{RD}}, i^1}^{n, RD}
      \right] 
    & \perp 
            \mu_{3}^{i, n}
    \geq 0, 
            & \forall n, i \in O  \label{eq:mcp_RD_matching_cap_final}
    \end{alignat}
\end{subequations}
\textit{For ridesharing passenger:}
\begin{subequations}
\label{eq:mcP_RP_stable_route_final}
    \begin{alignat}{2}
    0 \leq \left[  
            \pi_{i^l}^{n, e, RP} 
            + \mu_{4}^{i, e, n, l}
            - \widehat{\phi}_{5}^{n, l}
              - \pi_{\widehat{i_{RP}}}^{e, RP}
      \right] 
    & \perp 
            x_{\widehat{i_{RP}}, i^l}^{n, e, RP}
    \geq 0, 
            & \forall i, e, n, l  \label{eq:mcp_RP_seq_choice_final}  \\
    0 \leq \left[  
            \pi_{i}^{e, PT} 
            - \pi_{\widehat{i_{RP}}}^{e, RP}  
      \right] 
    & \perp 
            x_{\widehat{i_{RP}}, i_{PT}}^{e, RP} 
    \geq 0, 
            & \forall i, e \in \mathcal{D}  \label{eq:mcp_RP_to_PT_final} \\
    0 \leq \left[
            \sum_{n} {\sum_{1 \leq l \leq L-1} {x_{\widehat{i_{RP}}, i^l}^{n, e, RP}}} 
            + x_{\widehat{i_{RP}}, i_{PT}}^{e, RP}
            - q_{(i,e)}^{RP}
      \right] 
    & \perp 
            \pi_{\widehat{i_{RP}}}^{e, RP}
    \geq 0, 
            & \forall i, e \in \mathcal{D} \label{eq:mcp_RP_conservation_final} \\ 
    x_{\widehat{i_{RD}}, i^1}^{n, RD} = x_{\widehat{j_{RP}}, j^l}^{n, e, RP} & \perp  \widehat{\phi}_{5}^{n, l} \text{~free}, \begin{split}{}&\forall n,  \\
                  &i:s_1(n, 0, (i, \cdot))=1,         \\
                  &(j, l, e) \in \zeta_{RP}^{n}     
    \end{split}  
             \label{eq:mcp_RP_stable_matching_final} \\
    0 \leq \left[ 
            s_1(n, l, (i, e)) \cdot Z_n
            - x_{\widehat{i_{RP}}, i^l}^{n, e, RP}
      \right] 
    & \perp 
            \mu_{4}^{i, e, n, l}
    \geq 0, 
            & \forall n, (i, e): \omega, 1 \leq l \leq L-1  \label{eq:mcp_RP_matching_cap_final}
    \end{alignat}
\end{subequations}

Note that, multipliers for the \textit{multi-passenger ridesharing constraints} \eqref{eq:mcp_group_coupling_constraints}-\eqref{eq:mcp_group_pickup_cross} are related to routing decisions, which are included in the route choice mode \eqref{eq:mcp_DA_route_choice_final}-\eqref{eq:mcp_PT_route_choice_final}. Moreover, multiplier $\lambda_{10}^{i^1, n}$ for matching sequence intra-task constraint \eqref{eq:mcp_group_no_pickup_remain} is also incorporated into the ridesharing driver stable matching model \eqref{eq:mcp_RD_seq_choice_final}. $\lambda_{10}^{i^1, n}$ can be interpreted as the additional incentives for drivers to pick up passengers from the driver's origin. For drivers departing from node $i$, if there are passengers to be picked up at node $i$ as the $1^{st}$ task (i.e., $\sum_{e \in \mathcal{D}} {s_{1}(n, l, (i, e))} = 1$), all the flows enter to level $2$ at node $i^1$, due to Constraint \eqref{eq:RD_pickup_cross} and Constraint \eqref{eq:flow_conservation}.

The dual variables of the complementarity constraints \eqref{eq:mcp_RD_matching_cap_final} and \eqref{eq:mcp_RP_matching_cap_final} are typically interpreted as the "demand price" or "surge price" in the ridesharing literature. When there is a surplus of matching, both ridesharing drivers and passengers need not to wait; when the RD/RP demand is higher than the matching (which is jointly determined by the travelers and platform), drivers and passengers might experience matching waiting, or pay a surge pee for a timely ridesharing service. The un-signed multipliers for the stable matching constraints \eqref{eq:stable_matching_constraint}, $\phi_{5}^{n, l}$ and $\widehat{\phi}_{5}^{n, l}$, might be interpreted as the willingness-to-share of the the ridesharing travelers. If $\phi_{5}^{n, l}>0$, the more passengers in a matching sequence, the less likely the driver is willing to share a ride (e.g., considered too cumbersome). If $\phi_{5}^{n, l}<0$, it might suggest the driver is happy to share their ride with more passengers (e.g., picking up passengers from the same OD), $\phi_{5}^{n, l}=0$ might indicate the driver is indifferent to share or not share.

\subsubsection{Solution existence}
\label{sec:Existence}
The solution existence for the proposed general equilibrium are established in 3 main steps: first, the proposed general equilibrium model is cast into a VI with penalization of the asymmetric shared constraints \eqref{eq:RD_demand_constraint}-\eqref{eq:RP_demand_constraint}, and \eqref{eq:RD_matching_cap}-\eqref{eq:RP_matching_cap}; then, the solution existence is established for the VI problem with penalization; last, the proposed general equilibrium model is shown to be recovered when the penalty tending to infinity. Note that, compared to the solution existence for the path-based model in \textcite{ban2019general}, we extend the penalty-based method to the link-node formulation by using VI existence theorem for unbounded sets (\cite{Nagurney2009}). The following subsections correspond to the 3 main steps mentioned above.

\noindent
\textbf{VI formulation with penalization of the asymmetric shared constraints}

As outlined above, the proof of solution starts from formulating a VI with penalization of the asymmetric constraints for the general equilibrium model \eqref{eq:mcp_mode_choice}-\eqref{eq:mcp_mode_conservation}, \eqref{eq:mcp_matching_obj}-\eqref{eq:mcp_RP_constraint}, and \eqref{eq:mcp_group_coupling_constraints}-\eqref{eq:mcP_RP_stable_route_final}. Let $\mathbf{\rho} > 0$ denote the penalty factors, the following vector function $\mathbf{F}^{\rho}$ correspond to a set of penalty factors $\mathbf{\rho}$:

\begin{flalign}
\label{VI_1}
  \begin{split}
    &\mathbf{F}^{\rho}(\mathbf{w}) \triangleq \\
    & \begin{pmatrix*}[l]
      & \textbf{Mode choice} & & 
      \\
      & C_{\omega}^{m} - \pi_{\omega} & : & (\omega, m) \in \mathcal{W} \times \mathcal{M} \\
      \midrule
      & \textbf{Ridesharing matching} & & 
      \\
      & -R_{n} & : & n 
      \\[1.5ex]
      & \underbrace{\rho_1 \cdot \left(Y_{\omega}^{RD} - q_{\omega}^{RD} \right)}_{\text{Penalty term}} & : & \omega \in \mathcal{W} 
      \\[1.5ex]
      & \underbrace{\rho_2 \cdot \left(Y_{\omega}^{RP} - q_{\omega}^{RP} \right)}_{\text{Penalty term}} & : & \omega \in \mathcal{W} 
      \\
      \midrule
      & \textbf{Stable matching} & & 
      \\
      & \pi_{i}^{e, DA} - \pi_{\widehat{i_{RD}}}^{e, RD} & : & i, e \in \mathcal{D} 
      \\[1.5ex]
      & \underbrace{\eta_{5}^{n} \cdot \left(\pi_{i^{1}}^{n, RD} - \pi_{\widehat{i_{RD}}}^{e, RD} \right)}_{\text{Normalized term}} 
        + \underbrace{\rho_3 \cdot
            \left[
                  x_{\widehat{i_{RD}}, i^1}^{n, RD} 
                  - s_1(n, 0 (i, \cdot)) \cdot Z_n
            \right]}_{\text{Penalty term}}
      & : & n, i \in \mathcal{O} 
      \\[1.5ex]
      & \pi_{i^{l}}^{n, e, RP} - \pi_{\widehat{i_{RP}}}^{e, RP}
        + \underbrace{\rho_4 \cdot
            \left[
                  x_{\widehat{i_{RP}}, i^l}^{n, e, RP} 
                  - s_1(n, l (i, \cdot)) \cdot Z_n
            \right]}_{\text{Penalty term}}
      & : & n, (i, e) \in \mathcal{W}, 1 \leq l \leq L -1 
      \\[1.5ex]
      & \pi_{i}^{e, PT} - \pi_{\widehat{i_{RP}}}^{e, RP} & : & i, e \in \mathcal{D} 
      \\[1.5ex]
      \midrule
      & \textbf{Route choice} & & 
      \\
      & \pi_{j}^{e, DA} + c_{i, j}^{DA}- \pi_{i}^{e, DA} & : & (i,j) \in \mathcal{E}_{DA}, e \in \mathcal{D} 
      \\[1.5ex]
      & \underbrace{\eta_{67}^{i^l,j^{l'},n} \cdot \left(\pi_{j^{l'}}^{n, RD} + c_{i^{l}, j^{l'}, 0/1}^{RD}- \pi_{i^{l}}^{n, RD}\right)}_{\text{Normalized term}} & : & i, n, l, j^{l'}:(i^{l},j^{l'}) \in \mathcal{E}_{RD, 0/1} 
      \\[1.5ex]
      & \pi_{j^{l'}}^{n, e, RP} + c_{i^{l}, j^{l'}}^{RP}- \pi_{i^{l}}^{n, e, RP} & : & i, n, l, j^{l'}:(i^{l},j^{l'}) \in \mathcal{E}_{RP}, e \in \mathcal{D} 
      \\
      & \pi_{j}^{e, PT} + c_{i, j}^{PT}- \pi_{i}^{e, PT} & : & (i,j) \in \mathcal{E}_{PT}, e \in \mathcal{D} 
      \\[1.5ex]
      &\textit{Conservation constraints} & & 
      \\
      & \sum_{m \in \mathcal{M}} {q_{\omega}^{m}} - q_{\omega} & : & \forall \omega \in \mathcal{W}
      \\[1.5ex]
      & \sum_{n} {x_{\widehat{i_{RD}}, i^1}^{n, RD}} 
        + x_{\widehat{i_{RD}}, i_{DA}}^{e, RD}
        - q_{(i,e)}^{RD} & : & \forall (i, e) \in \mathcal{W}
      \\[1.5ex]
      & \sum_{n} {\sum_{1 \leq l \leq L-1} {x_{\widehat{i_{RP}}, i^l}^{n, e, RP}}} 
            + x_{\widehat{i_{RP}}, i_{PT}}^{e, RP}
            - q_{(i,e)}^{RP} & : & \forall (i, e) \in \mathcal{W}
      \\[1.5ex]
      &\sum_{j:(i,j)\in \mathcal{E}_{DA}} {x_{i, j}^{e, DA}} 
        - q_{i,e}^{DA} 
        - x_{\widehat{i_{RD}}, i_{DA}}^{e, RD} 
        - \sum_{k:(k,i)\in \mathcal{E}_{DA}} {x_{k, i}^{e, DA}} & : & \forall i, e \in \mathcal{D} 
      \\[1.5ex]
      &
        \sum_{j^{l'}:(i^{l}, j^{l'}) \in \mathcal{E}_{RD, 0}} {x_{i^{l}, j^{l'}, 0}^{n, RD}}
        + \sum_{j^{l'}:(i^{l}, j^{l'}) \in \mathcal{E}_{RD, 1}} {x_{i^{l}, j^{l'}, 1}^{n, RD}} 
        - x_{\widehat{i_{RD}}, i^1}^{n, RD}
      & :
      & \forall i, n, l 
      \\[1ex]
      & ~~ - \sum_{k^{l''}:(k^{l''}, i^{l}) \in \mathcal{E}_{RD, 0}} {x_{k^{l''}, i^{l}, 0}^{n, RD}}
        - \sum_{k^{l''}:(k^{l''}, i^{l}) \in \mathcal{E}_{RD, 1}} {x_{k^{l''}, i^{l}, 1}^{n, RD}}
      &
      & 
      \\[1.5ex]
      & \sum_{j^{l'}:(i^{l}, j^{l'}) \in \mathcal{E}_{RP}} {x_{i^{l}, j^{l'}}^{n, e, RP}}
        - \sum_{k^{l''}:(k^{l''}, i^{l}) \in \mathcal{E}_{RP}} {x_{k^{l''}, i^{l}}^{n, e, RP}}
        - x_{\widehat{i_{RP}}, i^l}^{n, e, RP}
      & :
      & \forall i, e \in \mathcal{D} , n, l 
      \\[1.5ex]
      & \sum_{j:(i,j)\in \mathcal{E}_{PT}} {x_{i, j}^{e, PT}}  
        - q_{i,e}^{PT} 
        - x_{\widehat{i_{RP}}, i_{PT}}^{e, RP} 
        - \sum_{k:(k,i)\in \mathcal{E}_{PT}} {x_{k, i}^{e, PT}} 
      & :
      & \forall i, e \in \mathcal{D}
    \end{pmatrix*}
  \end{split}
\end{flalign}
where $\mathbf{w} \triangleq (\mathbf{q} ,\mathbf{Z} ,\mathbf{Y} ,\mathbf{x}, \mathbf{\pi})\geq 0$ is contained in the set $\Omega$, and satisfying:

\begin{align}
  & \textbf{Ridesharing matching} & \nonumber \\ 
  & \underbrace{
    \sum_{n} {s_1{(n,0,\omega)} \cdot Z_n} = Y_{\omega}^{RD}
    }_{\text{Definitional constraint}} 
    & ,\forall \omega 
    \label{eq:proof1_def_constraint1} 
    \\
  & \underbrace{
    \sum_{n} {\sum_{1 \leq l \leq L-1} {s_1{(n,l,\omega)} \cdot Z_n} = Y_{\omega}^{RP}}
    }_{\text{Definitional constraint}} 
    & ,\forall \omega 
    \label{eq:proof1_def_constraint2} 
    \\[2ex]
  & \textbf{Stable matching constraint }
    & \text{\eqref{eq:stable_matching_constraint}}
    \nonumber \\
  & \textbf{Ridesharing driver-passenger coupling constraints }
    &  \text{\eqref{eq:RD_RP_coupling_upper}-\eqref{eq:RD_RP_coupling_lower}}
    \nonumber \\
  & \textbf{Ridesharing passenger-transfer avoidance constraints }
    &  \text{\eqref{eq:intermediate_conservation_dropoff}-\eqref{eq:RD_pickup_cross}}
    \nonumber
\end{align}
Note that, in the penalized VI problem \eqref{VI_1} $VI(\mathbf{F^{\rho}}, \Omega)$, the remaining coupling shared constraints are: stable matching constraints \eqref{eq:stable_matching_constraint}; RD-RP coupling constraints \eqref{eq:RD_RP_coupling_upper}-\eqref{eq:RD_RP_coupling_lower}; and the definitional constraints \eqref{eq:proof1_def_constraint1}-\eqref{eq:proof1_def_constraint2}. While the asymmetric shared constraints: RD and RP demand constraints for ridesharing matching \eqref{eq:RD_demand_constraint}-\eqref{eq:RP_demand_constraint}; and matched RD RP constraints \eqref{eq:RD_matching_cap}-\eqref{eq:RP_matching_cap} are transferred into the vector function $\mathbf{F^{\rho}}$ via penalizations. Also, the functions corresponding to multipliers $\lambda$ and $\phi$ are not included in $\mathbf{F^{\rho}}$, but can be retrieved from the constraints defining the set $\Omega$. Moreover, there are multiplicative factors $\eta$ within $\mathbf{F^{\rho}}$ but are not in the original complementarity problem (GEM-mpr). This construction is needed for the proof. The additional definitional constraints \eqref{eq:proof1_def_constraint1}-\eqref{eq:proof1_def_constraint2} (and the corresponding variable $Y$) are introduced for the convenience of the proof (to avoid redundant index in the penalization), which can be interpreted as the total number of RD and RP being matching for each OD pair $\omega$.
 
\noindent
\textbf{Solution existence of the penalized VI}

One of the challenges to directly prove solution existence for $VI(\mathbf{F^{\rho}}, \Omega)$ is the lack of explicit compactness of the set $\Omega$. Instead of assuming compactness a prior, we utilize the existence theorem for VI with unbounded set (\cite{Nagurney2009}). The procedure of the proof is twofold: we first consider a restricted problem $VI(\mathbf{F^{\rho}}, \Omega_R)$ by imposing additional constraints on the primary variable set for its compactness. Under the compactness assumption, solution existence can be readily shown. We later show that, these additional constraints are redundant at equilibrium, such that solution existence for the $VI(\mathbf{F^{\rho}}, \Omega)$ is also established for the unbounded set (due to \cite{Nagurney2009}).

\noindent
\textit{Solution existence of the restricted penalized $VI(\mathbf{F^{\rho}}, \Omega_R)$}

In this subsection, we first introduce additional constraints to the restricted $VI(\mathbf{F^{\rho}}, \Omega_R)$, then show solution existence for the $VI(\mathbf{F^{\rho}}, \Omega_R)$.

We assume the polyhedron $\Omega_R$ are constrained by:
\begin{align}
  & \text{Ridesharing matching definitional constraints} 
    & \text{\eqref{eq:proof1_def_constraint1}-\eqref{eq:proof1_def_constraint2}} 
    \nonumber \\ 
  & \text{Stable matching constraint }
    & \text{\eqref{eq:stable_matching_constraint}}
    \nonumber \\
  & \text{Ridesharing driver-passenger coupling constraints }
    &  \text{\eqref{eq:RD_RP_coupling_upper}-\eqref{eq:RD_RP_coupling_lower}}
    \nonumber \\
  & \text{Ridesharing passenger-transfer avoidance constraints }
    &  \text{\eqref{eq:intermediate_conservation_dropoff}-\eqref{eq:RD_pickup_cross}}
    \nonumber \\
  & \textbf{Compactness constraints }
    & 
    \nonumber \\
  & q_{\omega}^{m} \leq \kappa \cdot \underbrace{\sum_{\omega} {q_{\omega}}}_{\text{Model input}}
    & \forall m \in \mathcal{M}, \omega \in \mathcal{W} \label{eq:proof_compact_1}\\
  & Z_n \leq \kappa \cdot \sum_{\omega} {q_{\omega}}
    & \forall n \\
  & Y_{\omega}^{RD} \leq \kappa \cdot \sum_{\omega} {q_{\omega}}
    & \forall \omega \in \mathcal{W} \\
  & Y_{\omega}^{RP} \leq \kappa \cdot \sum_{\omega} {q_{\omega}}
    & \forall \omega \in \mathcal{W} \\
  & x_{i, j}^{e, m} \leq \kappa \cdot \sum_{\omega} {q_{\omega}}
    & \forall (i, j) \in \mathcal{E}, m \in \mathcal{M}, e \in \mathcal{D} \\
  & \pi_{i}^{e, m} \leq \kappa \cdot \sum_{(i, j) \in \mathcal{E} } {c_{ij} \textstyle\left( \sum_{\omega} {q_{\omega}} \right)}
    & \forall i \in \mathcal{N}, m \in \mathcal{M} , e \in \mathcal{D} \label{eq:proof_compact_last}
\end{align}

For some constants $\kappa>0$ (construction needed for dropping these compactness constraints later), the resulting polyhedron $\Omega_R$ is compact, such that the solution existence immediately follows:
\begin{lemma}
\label{restricted_VI_existence}
  If the link cost functions $c$ are nonnegative and continuous, and the set $\Omega_R$ is not empty, then $VI(\mathbf{F^{\rho}}, \Omega_R)$ has a solution, $\mathbf{w}_R^{*}$, for every $\rho > 0$.
\end{lemma}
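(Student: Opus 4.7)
The plan is to invoke the classical existence theorem for variational inequalities defined on a nonempty, compact, convex subset of a Euclidean space with a continuous vector-valued map (see, e.g., Theorem 1.4 in \cite{Nagurney2009}, originally due to Hartman and Stampacchia). Once the three prerequisites (nonemptiness, compactness with convexity, and continuity of $\mathbf{F}^{\rho}$) are verified for $VI(\mathbf{F}^{\rho}, \Omega_R)$, solution existence follows immediately. Nonemptiness is assumed in the statement of the lemma, so the proof reduces to establishing the other two properties.

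First I would verify that $\Omega_R$ is compact and convex. Convexity follows from observing that every constraint defining $\Omega_R$ is linear in the primary variables $\mathbf{w}=(\mathbf{q},\mathbf{Z},\mathbf{Y},\mathbf{x},\bm{\pi})$. In particular, although constraints~\eqref{eq:RD_RP_coupling_upper}--\eqref{eq:RD_pickup_cross} involve products, each product is between a decision variable (a link flow) and a constant term that depends only on the pickup/drop-off incidence matrices $s_1, s_{-1}$ of the fixed matching sequences; hence these constraints are affine, so $\Omega_R$ is a polyhedron. Compactness is then an immediate consequence of the explicit upper bounds~\eqref{eq:proof_compact_1}--\eqref{eq:proof_compact_last}, together with the implicit lower bound $\mathbf{w}\geq 0$: the polyhedron $\Omega_R$ is contained in a box in Euclidean space and is closed, hence compact.

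Next I would verify that $\mathbf{F}^{\rho}$ is continuous on $\Omega_R$ for any fixed $\rho>0$. The modal costs $C_{\omega}^{m}$ defined in \eqref{eq:DA_modal_costs}--\eqref{eq:PT_modal_costs} are linear combinations of travel times and distances, which in turn are continuous in the link flows because the link cost functions $c$ are assumed continuous and travel times are continuous functions of aggregated link flows; the same holds for the matching-sequence objective values $R_n$, which are computed by accumulating linear link costs along the trajectories defined by $n$. The penalty terms appearing in $\mathbf{F}^{\rho}$ (e.g., $\rho_1(Y_{\omega}^{RD}-q_{\omega}^{RD})$ and $\rho_3[x_{\widehat{i_{RD}}, i^1}^{n, RD}-s_1(n,0,(i,\cdot))Z_n]$) are linear in the decision variables. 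The normalized terms scale the node-potential differences and link cost differentials by positive constants $\eta_{5}^{n}, \eta_{67}^{i^l,j^{l'},n}$, which preserves continuity. The conservation block at the bottom of $\mathbf{F}^{\rho}$ is linear. Thus every component of $\mathbf{F}^{\rho}$ is continuous on $\Omega_R$.

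Combining these three properties, the Hartman--Stampacchia existence theorem applies and yields a solution $\mathbf{w}_R^{*}\in\Omega_R$ to $VI(\mathbf{F}^{\rho},\Omega_R)$ for every $\rho>0$. The main subtlety is not the application of the theorem itself but the verification that the complex ridesharing constraints (particularly the apparently bilinear forms in \eqref{eq:RD_RP_coupling_upper}--\eqref{eq:RD_pickup_cross}) are genuinely affine once the platform's matching incidence data $(s_1, s_{-1})$ is treated as a fixed parameter of the problem rather than a decision variable; this observation is what enables the polyhedral (and hence convex) structure of $\Omega_R$ on which the entire argument rests.
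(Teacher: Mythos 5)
Your proposal is correct and follows essentially the same route as the paper, which likewise applies the standard existence theorem for variational inequalities on a nonempty compact convex set with a continuous map (Theorem 1 in \textcite{Nagurney2009}), noting that $\Omega_R$ is a compact convex polyhedron and that continuity of the link cost functions yields continuity of $\mathbf{F}^{\rho}$. Your additional verification that the apparently bilinear ridesharing constraints are in fact affine (since $s_1,s_{-1}$ are fixed data) is a useful elaboration of a point the paper leaves implicit, but it does not change the argument.
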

\begin{proof}(Theorem 1, \cite{Nagurney2009})
  Since the polyhedron $\Omega_R$ is convex compact, and cost functions $c$ are continuous, $VI(\mathbf{F^{\rho}}, \Omega_R)$ admits at least one solution for every $\rho > 0$.
\end{proof}

Note that the compactness constraints \eqref{eq:proof_compact_1}-\eqref{eq:proof_compact_last} do not appear in the original MCP formulation (GEM-mpr), we need to show these bounds are redundant and establish the solution existence of the unrestricted $VI(\mathbf{F^{\rho}}, \Omega)$. In the following subsection, we first derive the upper bounds for the primary variables, and show that with some constants $\kappa$, these compactness constraints are not active, and establish solution existence for the unbounded set.

\noindent
\textit{Solution existence of the penalized $VI(\mathbf{F^{\rho}}, \Omega)$}

The goal of this subsection is to show that the compactness constraints \eqref{eq:proof_compact_1}-\eqref{eq:proof_compact_last} will not be active at equilibrium, such that solution existence for $VI(\mathbf{F^{\rho}}, \Omega)$ with the unbounded (unrestricted) set $\Omega$ is established. In the following paragraphs, we derive the upper bounds for the primary decision variables.

Since modal split $q_{\omega}^{m} \geq 0$, and $\sum_{m \in \mathcal{M}} {q_{\omega}^{m}} =  q_{\omega}, \forall \omega$, it follows that $q_{\omega}^{m} \leq q_{\omega} \leq \sum_{\omega} {q_{\omega}}, \forall m \in \mathcal{M}$. Also, from RD demand constraint $\sum_{n} {s_1{(n,0,\omega)} \cdot Z_n} \leq q_{\omega}^{RD}, \forall \omega$, by summing over all OD pair we can derive $\sum_{\omega} {\sum_{n} {s_1(n,0,\omega) \cdot Z_n}}=\sum_{n} {Z_n} \leq \sum_{\omega} {q_{\omega}}$, which results in $Z_n \leq \sum_{n} {Z_n} \leq \sum_{\omega} {q_{\omega}}$. With the same notion, by definitional constraints \eqref{eq:proof1_def_constraint1}-\eqref{eq:proof1_def_constraint2}, we also have $Y_{\omega}^{RD} \leq \sum_{\omega} {q_{\omega}}$ and $Y_{\omega}^{RP} \leq \sum_{\omega} {q_{\omega}}$. 

For link flow variables $x$ , we derive in the following separately for virtual link flows, and network link flows. For virtual link flows, by conservation constraints for the RD/RP virtual origin nodes \eqref{eq:RD_stable_matching_conservation}-\eqref{eq:RP_stable_matching_conservation} and non-negativity conditions, virtual RD link flows $x_{\widehat{i_{RD}}, i'}^{n, RD} \leq q_{(i,e)}^{RD} \leq \sum_{\omega:(i,e)} {q_{\omega}}, \forall (\widehat{i_{RD}}, i') \in \mathcal{E}_{\widehat{RD}}$, and virtual RP link flows $x_{\widehat{i_{RP}}, i'}^{n, e, RP} \leq q_{(i,e)}^{RP} \leq \sum_{\omega:(i,e)} {q_{\omega}}, \forall (\widehat{i_{RP}}, i') \in \mathcal{E}_{\widehat{RP}}$.

To derive upper bounds for the network link flow variables, we first show that, under the Wardrop's first principle, if link cost $c_{ij}>0$, links with positive flows will never form a cycle, such that network link flows are bounded. We delay the formal proof of the following proposition in Appendix \ref{proof.proposition1}.

\begin{proposition}
  \label{bounded_link_flows}
  If link cost 
  \begin{align}
  c_{ij}>0 \quad \quad, \forall (i, j) \in \mathcal{E} \setminus (\mathcal{E}_{\widehat{RD}} \cup \mathcal{E}_{\widehat{RP}})
  \end{align}
  then, the network link flow $x_{i, j}^{m}$ is bounded by the overall demands:
  \begin{align}
  x_{i, j}^{m} \leq \sum_{\omega} {q_{\omega}} \quad \quad, \forall m \in \mathcal{M} ,(i, j) \in \mathcal{E} \setminus (\mathcal{E}_{\widehat{RD}} \cup \mathcal{E}_{\widehat{RP}}) \nonumber
  \end{align}
\end{proposition}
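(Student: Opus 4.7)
The plan is to prove the bound through the classical two-step argument: first establish that the equilibrium link-flow pattern, restricted to each class, forms an acyclic subgraph of the physical network; then invoke flow decomposition to bound the link flows by the total travel demand.

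For the acyclicity step, I would fix a class, i.e., a choice of mode $m$, destination $e$, and matching sequence $n$ when applicable, and consider the subgraph consisting of physical links (excluding $\mathcal{E}_{\widehat{RD}} \cup \mathcal{E}_{\widehat{RP}}$) that carry strictly positive class-specific flow. Because the coupling multipliers $\lambda_{6},\ldots,\lambda_{11}$ and the stability multipliers $\phi_{5}$ are absorbed into the defining constraints of $\Omega_R$ rather than into $\mathbf{F}^{\rho}$, the route-choice row of the VI reduces on used links to the clean Wardrop equality $\pi_{j}+c_{ij}-\pi_{i}=0$ (up to the positive normalizing factor $\eta_{67}$ in the with/without-passenger RD case). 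Suppose toward contradiction that the positive-flow subgraph contains a directed cycle $C$. Summing the tight equality around $C$ telescopes the node potentials to zero and forces
\begin{equation*}
\sum_{(i,j)\in C} c_{ij} \;=\; 0,
\end{equation*}
contradicting the standing hypothesis $c_{ij}>0$ on every physical link. Hence no such cycle exists, and the positive-flow subgraph in each class is acyclic.

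Given acyclicity, I would invoke the standard flow-decomposition theorem to write the class-specific flow as a non-negative combination of simple source-to-sink path flows $\{f_{p}\}$, with $\sum_{p}f_{p}$ equal to the class demand enforced by the conservation constraints of $\Omega_{R}$ (that is, $q_{\omega}^{m}$ for DA and PT, and the matching-sequence-specific $F_{n}$ for RD and RP, whose totals are themselves bounded by $q_{\omega}^{RD}, q_{\omega}^{RP}$ via the definitional and stability constraints). Because a simple path uses each link at most once, the link flow on $(i,j)$ is at most the sum of all path flows in the class, hence at most the class demand; summing over destinations and matching sequences and using $q_{\omega}^{m}\leq q_{\omega}$ yields $x_{ij}^{m}\leq \sum_{\omega}q_{\omega}$, which is the desired bound and justifies choosing $\kappa=1$ in the compactness constraints \eqref{eq:proof_compact_1}--\eqref{eq:proof_compact_last}.

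The main obstacle will be carefully bookkeeping the class index in the hyper-network for the with-passenger RD and RP flows, whose conservation equations span multiple task levels connected by the virtual inter-level links $(\cdot^{l},\cdot^{l+1})$ and by the RD-RP coupling identities \eqref{eq:RD_RP_coupling_upper}--\eqref{eq:RD_RP_coupling_lower}. Since virtual links are explicitly excluded from the statement and since the acyclicity argument operates level-by-level within each copy of $\mathcal{G}_{Veh}$, the per-level path decomposition can be chained across levels using the matching-sequence inter-task conservation condition \eqref{eq:RD_pickup_cross}, so that the uniform bound $\sum_{\omega}q_{\omega}$ propagates through the full sequence-indexed hyper-network without further loss.
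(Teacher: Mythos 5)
Your proposal is correct and follows essentially the same route as the paper: the paper likewise first shows that positive-flow links cannot form a cycle at equilibrium (using the tight Wardrop condition on a used link together with a positive-flow path closing the cycle, which is the same mechanism as your telescoping-sum argument), and then bounds each class-specific link flow by the corresponding demand via a path/portion decomposition $\delta_{ij}^{o,e}\in[0,1]$, summed over origins and classes. Your explicit appeal to flow decomposition on the acyclic positive-flow subgraph and your remark on chaining the bound across hyper-network levels are just slightly more formal renderings of what the paper does.
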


Based on the \textit{no-cycle} observation, the upper bound can be derived for the node potential variables $\pi^{*}$at equilibrium by showing that the node potentials are accumulated recursively in the network with monotone link costs. The full proof is stated in Appendix \ref{proof.proposition2}.
\begin{proposition}
  \label{bounded_node_potentials}
  If link cost functions are monotone and
  \begin{align}
  c_{ij}>0 \quad \quad, \forall (i, j) \in \mathcal{E} \setminus (\mathcal{E}_{\widehat{RD}} \cup \mathcal{E}_{\widehat{RP}})
  \end{align}
  then, there exists ${\pi_{i}^{e,m}}^{*}$ at equilibrium such that:
  \begin{align}
  {\pi_{i}^{e,m}}^{*} \leq \sum_{(i, j) \in \mathcal{E}} {c_{ij}(\textstyle\sum_{\omega}{q_{\omega}})} \quad \quad, \forall i, m \in \mathcal{M} ,e \in \mathcal{D}  \nonumber
  \end{align}
\end{proposition}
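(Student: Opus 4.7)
The plan is to exploit the complementarity conditions on route choice together with the acyclicity of used links established in Proposition~\ref{bounded_link_flows}, and to construct $\pi^*$ by backward accumulation from destinations. Because the MCP conditions \eqref{eq:mcp_DA_route_choice_final}--\eqref{eq:mcp_PT_route_choice_final} impose $\pi_i^{e,m} \leq \pi_j^{e,m} + c_{ij}$ for every outgoing link $(i,j)$ with equality on links carrying positive flow, any equilibrium potential is determined (up to slack on unused links) by the cumulative link costs along used paths toward $e$.

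First, I would fix a destination $e$ and mode $m$ and set the terminal potential $\pi_e^{e,m*}=0$ as a normalization consistent with the conservation complementarity at destination nodes. Next, partition the node set into nodes $\mathcal{N}^+$ from which there is a used path (positive equilibrium flow) to $e$, and the rest $\mathcal{N}^0$. On $\mathcal{N}^+$, since Proposition~\ref{bounded_link_flows} guarantees the subgraph of positive-flow links is acyclic, one can topologically order these nodes by their distance-in-used-links to $e$. Propagating backward along any used link $(i,j)$, the equality case of the complementarity gives $\pi_i^{e,m*} = \pi_j^{e,m*} + c_{ij}(x_{ij}^*)$, and by acyclicity no link is revisited. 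Hence the potential at any $i \in \mathcal{N}^+$ equals the cost of some simple used path to $e$, which involves at most $|\mathcal{E}|$ distinct links. Monotonicity of $c_{ij}$ and the flow bound $x_{ij}^* \leq \sum_\omega q_\omega$ then yield $c_{ij}(x_{ij}^*) \leq c_{ij}(\sum_\omega q_\omega)$ on every used link, so summing gives the claimed bound $\pi_i^{e,m*} \leq \sum_{(i,j) \in \mathcal{E}} c_{ij}(\sum_\omega q_\omega)$.

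For nodes in $\mathcal{N}^0$, there is no used flow, so the complementarity is vacuously satisfied for positivity, and we have freedom to choose $\pi_i^{e,m*}$ subject only to $\pi_i^{e,m*} \leq \pi_j^{e,m*} + c_{ij}(x_{ij}^*)$ for every outgoing $(i,j)$. I would assign these by reverse topological induction on the whole hyper-network (which is finite), setting $\pi_i^{e,m*} := \min_{(i,j)} \{\pi_j^{e,m*} + c_{ij}(x_{ij}^*)\}$ where the minimum ranges over outgoing links whose head $j$ already has a value assigned. Because the hyper-network is connected to destinations through physical links and intra/inter-level virtual links (the latter having zero cost), the recursion terminates and each node receives a value at most $\sum_{(i,j)} c_{ij}(\sum_\omega q_\omega)$, preserving the uniform bound. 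The virtual ridesharing origin potentials $\pi_{\widehat{i_{RD}}}^{e,RD*}$ and $\pi_{\widehat{i_{RP}}}^{e,RP*}$ are then set via the stable-matching route-choice complementarity \eqref{eq:RD_stable_route}--\eqref{eq:RP_stable_route}, each bounded by the corresponding in-network potential plus zero virtual cost, which falls under the same uniform bound.

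The main obstacle will be handling $\mathcal{N}^0$ in the \emph{hyper}-network consistently, because virtual ridesharing links, intra-task conservation constraints~\eqref{eq:intermediate_conservation_dropoff}--\eqref{eq:RD_pickup_cross}, and the RD--RP coupling~\eqref{eq:RD_RP_coupling_upper}--\eqref{eq:RD_RP_coupling_lower} make several complementarity relations depend on multipliers $\lambda$ and $\phi$ rather than purely on link costs. I would argue that these multipliers are themselves bounded (since each appears as a dual of a linear constraint whose primary variables are already bounded via Proposition~\ref{bounded_link_flows} and the linear complementarity lemma invoked in Section~\ref{sec:proof_preview}), so the same backward-accumulation argument, with the generalized link costs $\tilde{c}$ replacing $c$, yields a uniform bound that can be absorbed into a single constant $\kappa$ in \eqref{eq:proof_compact_last}. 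Monotonicity is used precisely to replace the equilibrium flow by the worst case $\sum_\omega q_\omega$ in evaluating each $c_{ij}$, which is what makes the bound independent of the equilibrium solution itself.
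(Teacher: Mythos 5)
Your proof takes essentially the same route as the paper's: propagate the equality case of the link complementarity backward along positive-flow links (which form no cycle by Proposition~\ref{bounded_link_flows}), so the potential at any node equals the accumulated cost of a simple used path, bounded by $\sum_{(i,j)\in\mathcal{E}} c_{ij}(\sum_\omega q_\omega)$ via monotonicity. You additionally handle the nodes with no used path to $e$ and the multiplier terms in the generalized hyper-network costs, which the paper's proof passes over silently; these refinements are consistent with how the bound is actually used (any finite $\kappa$ suffices in Proposition~\ref{redundancy}).
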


After deriving upper bounds for all the primary variables, the following proposition states that, for some constants $\kappa$, the compactness constraints \eqref{eq:proof_compact_1}-\eqref{eq:proof_compact_last} are not active. This is proposition is needed for showing solution existence for VI problem with unbounded sets. 

\begin{proposition}
\label{redundancy}
  Under the assumptions of link cost functions \eqref{eq:mcp_proof_node_potentials_link_complementarity} and Propositions \eqref{bounded_link_flows}-\eqref{bounded_node_potentials}, for some constants $\kappa > 1$ , at equilibrium, we have the following:
  \begin{align}
    & {q_{\omega}^{m}}^{*} < \kappa \cdot \sum_{\omega} {q_{\omega}}
    & \forall m \in \mathcal{M}, \omega \in \mathcal{W} \nonumber\\
  & {Z_n}^{*} < \kappa \cdot \sum_{\omega} {q_{\omega}}
    & \forall n \nonumber\\
  & {Y_{\omega}^{RD}}^{*} < \kappa \cdot \sum_{\omega} {q_{\omega}}
    & \forall \omega \in \mathcal{W} \nonumber\\
  & {Y_{\omega}^{RP}}^{*} < \kappa \cdot \sum_{\omega} {q_{\omega}}
    & \forall \omega \in \mathcal{W} \nonumber\\
  & {x_{i, j}^{e, m}}^{*} < \kappa \cdot \sum_{\omega} {q_{\omega}}
    & \forall (i, j) \in \mathcal{E}, m \in \mathcal{M}, e \in \mathcal{D} \nonumber\\
  & {\pi_{i}^{e, m}}^{*} < \kappa \cdot \sum_{(i, j) \in \mathcal{E} } {c_{ij} \textstyle\left( \sum_{\omega} {q_{\omega}} \right)}
    & \forall i \in \mathcal{N}, m \in \mathcal{M} , e \in \mathcal{D} \nonumber
  \end{align}
  
\end{proposition}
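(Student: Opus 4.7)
The plan is to show that Proposition~\ref{redundancy} follows essentially as a corollary from Propositions~\ref{bounded_link_flows}--\ref{bounded_node_potentials} together with the conservation constraints that directly define the feasible region $\Omega$. The strict inequality required here is obtained simply by choosing $\kappa>1$, since the bounds already established in the previous propositions are of the form $\leq \sum_\omega q_\omega$ (or $\leq \sum_{(i,j)}c_{ij}(\sum_\omega q_\omega)$ for node potentials), so scaling the right-hand side by any $\kappa>1$ will turn $\leq$ into $<$.

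First, I would collect the \emph{direct} upper bounds obtained from the constraint set $\Omega$ itself (not from Propositions~\ref{bounded_link_flows}--\ref{bounded_node_potentials}). These are already derived in the paragraphs immediately before Proposition~\ref{bounded_link_flows}: nonnegativity of $q_\omega^m$ with total demand conservation $\sum_m q_\omega^m = q_\omega$ gives $q_\omega^m \leq q_\omega \leq \sum_\omega q_\omega$; summing the RD demand constraint~\eqref{eq:RD_demand_constraint} over $\omega$ yields $\sum_n Z_n \leq \sum_\omega q_\omega$, hence $Z_n \leq \sum_\omega q_\omega$; the definitional constraints \eqref{eq:proof1_def_constraint1}--\eqref{eq:proof1_def_constraint2} immediately give the same bound on $Y_\omega^{RD},Y_\omega^{RP}$; and the virtual RD/RP link flow bounds follow from the conservation constraints \eqref{eq:RD_stable_matching_conservation}--\eqref{eq:RP_stable_matching_conservation} together with nonnegativity. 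For each of these, multiplying the right-hand side by $\kappa>1$ converts the weak inequality into the strict inequality required by the proposition.

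Next, for the network link flows on $\mathcal{E}\setminus(\mathcal{E}_{\widehat{RD}}\cup\mathcal{E}_{\widehat{RP}})$, I would directly invoke Proposition~\ref{bounded_link_flows}, which gives $x_{ij}^{m}\leq \sum_\omega q_\omega$. Combined with the virtual-link bounds obtained above, this covers all link classes. Likewise, for the node potentials, I would invoke Proposition~\ref{bounded_node_potentials}, which guarantees the existence of an equilibrium ${\pi_i^{e,m}}^*\leq \sum_{(i,j)\in\mathcal{E}} c_{ij}(\sum_\omega q_\omega)$. The existence qualifier matters: the proposition only asserts \emph{some} equilibrium value of $\pi$ satisfies the bound (node potentials are generally not unique), so the conclusion of Proposition~\ref{redundancy} should be read as: there exists an equilibrium selection at which the compactness constraints are inactive.

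The only real subtlety, and arguably the conceptual hard part, is justifying that this selection is compatible with the compactness constraints being \emph{nonbinding} as a group, i.e., that we can simultaneously satisfy all strict inequalities with a single choice of $\kappa$. This is immediate by taking $\kappa$ strictly greater than $1$ once and for all, since each individual bound is $\leq$ with a right-hand side proportional to $\sum_\omega q_\omega$ (a fixed model input) or to $\sum_{(i,j)}c_{ij}(\sum_\omega q_\omega)$ (a fixed quantity under the continuity assumption on $c$). I expect no further technical difficulty; the significance of Proposition~\ref{redundancy} is that it licenses dropping constraints \eqref{eq:proof_compact_1}--\eqref{eq:proof_compact_last}, so that Lemma~\ref{restricted_VI_existence} can be lifted from $VI(\mathbf{F}^{\rho},\Omega_R)$ to $VI(\mathbf{F}^{\rho},\Omega)$ via the standard VI existence result for unbounded sets in \textcite{Nagurney2009}.
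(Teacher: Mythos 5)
Your proposal is correct and matches the paper's argument: the paper's proof simply cites Propositions~\ref{bounded_link_flows}--\ref{bounded_node_potentials}, with the bounds on $q$, $Z$, $Y$, and the virtual link flows having been derived from the conservation and definitional constraints in the preceding text, exactly as you reconstruct them, and the strict inequality obtained by taking $\kappa>1$. Your added remark about the existence qualifier on the node potentials is a fair reading of Proposition~\ref{bounded_node_potentials} and does not change the argument.
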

\begin{proof}
See \cref{bounded_link_flows} and \cref{bounded_node_potentials}.
\end{proof}

\begin{lemma}
\label{VI_existence}
Under the assumptions of \cref{restricted_VI_existence} and \cref{redundancy}, the variational inequality problem $VI(\mathbf{F^{\rho}}, \Omega)$ admits a solution.
\end{lemma}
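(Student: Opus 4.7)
The plan is to exploit the standard result that a solution to a variational inequality on an unbounded set exists provided one can produce a solution of a suitable restricted (bounded) VI whose solution lies strictly inside the added bounds; see, e.g., Theorem 2 of \textcite{Nagurney2009}. Concretely, I would first invoke \cref{restricted_VI_existence} to obtain a solution $\mathbf{w}_R^{*} = (\mathbf{q}^{*}, \mathbf{Z}^{*}, \mathbf{Y}^{*}, \mathbf{x}^{*}, \bm{\pi}^{*})$ of the restricted problem $VI(\mathbf{F^{\rho}}, \Omega_R)$, for the given penalty vector $\rho > 0$. Here the set $\Omega_R$ is $\Omega$ together with the auxiliary compactness box constraints \eqref{eq:proof_compact_1}--\eqref{eq:proof_compact_last}, so compactness and continuity of $\mathbf{F^{\rho}}$ already guarantee this restricted solution.

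Next, I would verify that at $\mathbf{w}_R^{*}$ none of the compactness constraints is active. This is where \cref{redundancy} is applied: choosing the constant $\kappa > 1$ ensures that every upper bound in \eqref{eq:proof_compact_1}--\eqref{eq:proof_compact_last} is satisfied with \emph{strict} inequality by the componentwise upper bounds derived from \cref{bounded_link_flows}, \cref{bounded_node_potentials}, and the demand conservation / definitional constraints \eqref{eq:proof1_def_constraint1}--\eqref{eq:proof1_def_constraint2}. In other words, $\mathbf{w}_R^{*}$ lies in the relative interior of the bounding box introduced to enforce compactness.

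Having established that the added bounds are not binding at $\mathbf{w}_R^{*}$, I would then argue that $\mathbf{w}_R^{*}$ is also a solution of the unrestricted $VI(\mathbf{F^{\rho}}, \Omega)$. The argument is standard: for any $\mathbf{w} \in \Omega$, consider the convex combination $\mathbf{w}_\lambda = \lambda \mathbf{w} + (1-\lambda)\mathbf{w}_R^{*}$ for $\lambda \in (0,1]$ small enough. Because $\mathbf{w}_R^{*}$ is strictly interior to the compactness box, $\mathbf{w}_\lambda \in \Omega_R$ for all sufficiently small $\lambda$. Applying the restricted VI inequality at $\mathbf{w}_\lambda$, dividing by $\lambda > 0$, and letting $\lambda \downarrow 0$ yields $\langle \mathbf{F^{\rho}}(\mathbf{w}_R^{*}), \mathbf{w} - \mathbf{w}_R^{*}\rangle \geq 0$ for all $\mathbf{w} \in \Omega$, which is precisely the defining inequality of $VI(\mathbf{F^{\rho}}, \Omega)$.

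The main obstacle here is not the VI-level argument, which is a textbook reduction, but rather ensuring that the strict-inequality statement in \cref{redundancy} really holds componentwise for \emph{every} primary variable --- in particular the link-flow and node-potential components, since these inherit their a priori bounds from the acyclicity / monotone accumulation arguments of \cref{bounded_link_flows}--\cref{bounded_node_potentials}, which depend on the positivity assumption on the physical link costs. Once that strict interior placement is confirmed with $\kappa > 1$, the passage from restricted to unrestricted VI goes through cleanly.
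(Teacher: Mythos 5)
Your proposal is correct and follows essentially the same route as the paper: invoke \cref{restricted_VI_existence} for the compact restricted problem, use \cref{redundancy} to confirm the solution lies strictly inside the auxiliary bounds, and conclude via the unbounded-set existence result (Theorem 2 of \textcite{Nagurney2009}). The only difference is that you spell out the convex-combination argument underlying that theorem, whereas the paper simply cites it.
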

\begin{proof}(Theorem 2, \cite{Nagurney2009})
  $VI(\mathbf{F^{\rho}}, \Omega)$ admits a solution if and only if there exists a bound $R>0$ on the unbounded feasible set, and a solution of $VI(\mathbf{F^{\rho}}, \Omega_R)$, $\mathbf{w_R^{*}}$, such that $||\mathbf{w_R^{*}}|| < R$.
\end{proof}

With \cref{VI_existence}, we establish the solution existence for the penalized $VI(\mathbf{F^{\rho}}, \Omega_R)$. The goal of the next subsection is to show that, the original MCP formulation (GEM-mpr) can be recovered from the $VI(\mathbf{F^{\rho}}, \Omega_R)$, when the penalty is tending to infinity.

\noindent
\textbf{Solution existence for the proposed general equilibrium model}

The solution existence of the proposed the general equilibrium model is shown using the penalized $VI(\mathbf{F^{\rho}}, \Omega_R)$. We first formulate an equivalent MCP for the penalized $VI(\mathbf{F^{\rho}}, \Omega_R)$; then show the relaxed asymmetric constraints can be recovered from the penalized $VI(\mathbf{F^{\rho}}, \Omega_R)$ by taking the limiting arguments.

\noindent
\textit{MCP formulation for the penalized VI}

Recall that, the penalized VI is formulated for the normalized equilibrium problem, in which the multipliers corresponding to a coupled shared constraint are proportional, and defined as $\widehat{\phi}_5^{n,l} =\eta_5^{n}{\phi}_5^{n,l}$, $\widehat{\lambda}_{6}^{i^l,j^{l'},n}=\eta_{67}^{i^l,j^{l'},n}{\lambda}_{6}^{i^l,j^{l'},n}$ , and $\widehat{\lambda}_{7}^{i^l,j^{l'},n}=\eta_{67}^{i^l,j^{l'},n}{\lambda}_{7}^{i^l,j^{l'},n}$, with $\eta$ being some positive constants. We now rewrite the penalized asymmetric constraints in the form of multipliers:
\begin{align}
  & \mu_{1}^{\omega, \rho_1} = \rho_1 \cdot \left(Y_{\omega}^{RD} - q_{\omega}^{RD} \right) & \forall \omega \in \mathcal{W} \nonumber \\
  & \mu_{2}^{\omega, \rho_2} = \rho_2 \cdot \left(Y_{\omega}^{RP} - q_{\omega}^{RP} \right) & \forall \omega \in \mathcal{W} \nonumber \\
  & \mu_{3}^{i, n, \rho_3} = \frac{\rho_3}{\eta_5^{n}} \cdot \left[
                x_{\widehat{i_{RD}}, i^1}^{n, RD} 
                - s_1(n, 0 (i, \cdot)) \cdot Z_n
          \right] & \forall i, n \nonumber \\
  & \mu_{4}^{i, e, n, l, \rho_4} = \rho_4 \cdot \left[
                x_{\widehat{i_{RP}}, i^l}^{n, e, RP} 
                - s_1(n, l (i, \cdot)) \cdot Z_n
          \right] & \forall n, (i, e) \in \mathcal{W}, 1 \leq l \leq L-1 \nonumber
\end{align}
where, the multipliers $\mu^{\rho}$ for each asymmetric constraint depend on its corresponding penalty parameters $\rho$. By rewriting the asymmetric multipliers, the MCP for the penalized VI is shown in the following:

\noindent
[MCP - Penalized VI]
\begin{align}
  & \textbf{Mode choice model} 
    & \text{\eqref{eq:mcp_mode_choice}-\eqref{eq:mcp_mode_conservation} }
    \nonumber \\[1.5ex] 
  & \textbf{Ridesharing matching model}
    & 
    \nonumber \\
  & \sum_{n} {s_1{(n,0,\omega)} \cdot Z_n} = Y_{\omega}^{RD}
    \perp
    \phi_{12}
    \quad \text{free}
    & , \forall \omega \in \mathcal{W} \label{eq:mcp_proof_def1}
     \\
  & \sum_{n} {\sum_{1 \leq l \leq L-1} {s_1{(n,l,\omega)} \cdot Z_n} = Y_{\omega}^{RP}}
    \perp
    \phi_{13}
    \quad \text{free}
    & , \forall \omega \in \mathcal{W} \label{eq:mcp_proof_def2}
     \\
  & 0 \leq \left[ 
                \mu_{1}^{\omega, \rho_1} - \phi_{12}
    \right]
    \perp
    Y_{\omega}^{RD}
    \geq 0
    & , \forall \omega \in \mathcal{W} \label{eq:mcp_proof_YRD}
     \\
  & 0 \leq \left[ 
                \mu_{2}^{\omega, \rho_2} - \phi_{13}
    \right]
    \perp
    Y_{\omega}^{RP}
    \geq 0
    & , \forall \omega \in \mathcal{W} \label{eq:mcp_proof_YRP}
     \\
  & 0 \leq \left[ 
                -R_n + \phi_{12} \cdot s_1(n,0,\omega) + \phi_{13} \cdot \sum_{1 \leq l \leq L-1} {s_1(n,l,\omega)}
    \right]
    \perp
    Z_n
    \geq 0
    & , \forall n \label{eq:mcp_proof_Z}
\end{align}

By complementarity conditions \eqref{eq:mcp_proof_YRD}-\eqref{eq:mcp_proof_YRP}, if $Y_{\omega}^{RD} > 0$ and $Y_{\omega}^{RP} > 0$, we have $\phi_{12}=\mu_{1}^{\omega, \rho_1}$ and $\phi_{13}=\mu_{2}^{\omega, \rho_2}$. This means the RD and RP demand constraints (in the penalty form) are passed to the decision variables $Z_n$, in which complementarity condition \eqref{eq:mcp_proof_Z} becomes:
\begin{align}
& 0 \leq \left[ 
                -R_n +\mu_{1}^{\omega, \rho_1} \cdot s_1(n,0,\omega) + \mu_{2}^{\omega, \rho_2} \cdot \sum_{1 \leq l \leq L-1} {s_1(n,l,\omega)}
    \right]
    \perp
    Z_n
    \geq 0
    & , \forall n \nonumber
\end{align}

As discussed in the previous subsection, the definitional constraints \eqref{eq:proof1_def_constraint1}-\eqref{eq:proof1_def_constraint2} are only for the convenience of the proof. The above complementarity condition for the ridesharing matching problem is equivalent to the original ridesharing matching problem of GEM-mpr, with asymmetric multipliers $\mu_{1}^{\omega}$ and $\mu_{2}^{\omega}$ in their penalized form $\mu_{1}^{\omega, \rho_1}$ and $\mu_{2}^{\omega, \rho_2}$.

\textbf{Stable matching model}
\begin{subequations}
\label{eq:mcP_RD_stable_route_final}
    \begin{align}
    0 \leq \left[  
            \pi_{i^1}^{n, RD}
            + \mu_{3}^{i, n, \rho_3}
            + \sum_{l} {\phi_{5}^{n, l}}
            - \lambda_{10}^{i^1, n}
            - \pi_{\widehat{i_{RD}}}^{e, RD}  
      \right] 
    & \perp 
            x_{\widehat{i_{RD}}, i^1}^{n, RD}
    \geq 0, 
            & \forall i, n  \label{eq:mcp_RD_seq_choice_final}  \\
    0 \leq \left[  
            \pi_{i^l}^{n, e, RP} 
            + \mu_{4}^{i, e, n, l, \rho_4}
            - \eta_{5}^{n} \cdot {\phi}_{5}^{n, l}
              - \pi_{\widehat{i_{RP}}}^{e, RP}
      \right] 
    & \perp 
            x_{\widehat{i_{RP}}, i^l}^{n, e, RP}
    \geq 0, 
            & \forall i, e, n, l  \label{eq:mcp_RP_seq_choice_final}  \\
    0 \leq \left[  
            \pi_{i}^{e, DA} 
            - \pi_{\widehat{i_{RD}}}^{e, RD}  
      \right] 
    & \perp 
            x_{\widehat{i_{RD}}, i_{DA}}^{e, RD} 
    \geq 0, 
            & \forall i, e \in \mathcal{D}  \label{eq:mcp_RD_to_DA_final} \\
    0 \leq \left[  
            \pi_{i}^{e, PT} 
            - \pi_{\widehat{i_{RP}}}^{e, RP}  
      \right] 
    & \perp 
            x_{\widehat{i_{RP}}, i_{PT}}^{e, RP} 
    \geq 0, 
            & \forall i, e \in \mathcal{D}  \label{eq:mcp_RP_to_PT_final} \\
    0 \leq \left[ 
            \sum_{n\in \{n|s_1(n,0,(i,e))=1\}} {x_{\widehat{i_{RD}}, i^1}^{n, RD}}
            + x_{\widehat{i_{RD}}, i_{DA}}^{e, RD}
            - q_{(i,e)}^{RD}
      \right] 
    & \perp 
            \pi_{\widehat{i_{RD}}}^{e, RD}
    \geq 0, 
            & \forall i, e \in \mathcal{D}  \label{eq:mcp_RD_conservation_final} \\
    0 \leq \left[
            \sum_{n} {\sum_{1 \leq l \leq L-1} {x_{\widehat{i_{RP}}, i^l}^{n, e, RP}}} 
            + x_{\widehat{i_{RP}}, i_{PT}}^{e, RP}
            - q_{(i,e)}^{RP}
      \right] 
    & \perp 
            \pi_{\widehat{i_{RP}}}^{e, RP}
    \geq 0, 
            & \forall i, e \in \mathcal{D} \label{eq:mcp_RP_conservation_final} \\ 
    x_{\widehat{i_{RD}}, i^1}^{n, RD} = x_{\widehat{j_{RP}}, j^l}^{n, e, RP} & \perp  \phi_{5}^{n, l} \text{~free}, 
             \begin{split}{}&\forall n,  \\
               &i:s_1(n, 0, (i, \cdot))=1,          \\
               &(j, l, e) \in \zeta_{RP}^{n}      
    \end{split}  
             \label{eq:mcp_RD_stable_matching_final}
    \end{align}
\end{subequations}
We illustrate in the following how the penalized VI \eqref{VI_1} with normalized factor is formulated into the above MCP. Recall that, the vector function associated with $x_{\widehat{i_{RD}}, i^1}^{n, RD}$ has the form of:
\begin{align}
F_{x_{\widehat{i_{RD}}, i^1}^{n, RD}}^{\rho}  = \eta_{5}^{n} \cdot \left(\pi_{i^{1}}^{n, RD} - \pi_{\widehat{i_{RD}}}^{e, RD} \right) 
        + \rho_3 \cdot
            \left[
                  x_{\widehat{i_{RD}}, i^1}^{n, RD} 
                  - s_1(n, 0 (i, \cdot)) \cdot Z_n
            \right] \nonumber
\end{align}
, dividing the above equation with $\eta_{5}^{n}$, it follows:
\begin{align}
F_{x_{\widehat{i_{RD}}, i^1}^{n, RD}}^{\rho}  = \left(\pi_{i^{1}}^{n, RD} - \pi_{\widehat{i_{RD}}}^{e, RD} \right) 
        + \frac{\rho_3}{\eta_{5}^{n}} \cdot
            \left[
                  x_{\widehat{i_{RD}}, i^1}^{n, RD} 
                  - s_1(n, 0 (i, \cdot)) \cdot Z_n
            \right] = \pi_{i^{1}}^{n, RD} - \pi_{\widehat{i_{RD}}}^{e, RD} + \mu_{3}^{i, n, \rho_3}
\end{align}
By incorporating multipliers for the remaining constraints, the MCP for the stable matching problem in the penalized VI is formulated.

\textbf{Route choice model}
\begin{align}
  & \text{DA route choice} 
    & \text{\eqref{eq:mcp_DA_route_choice_final} }
    \nonumber \\
  & \text{RD route choice} 
    & \text{\eqref{eq:mcp_RD_route_choice_empty_final}-\eqref{eq:mcp_RD_route_choice_onboard_final} }
    \nonumber \\
  & \text{RP route choice} 
    &
    \nonumber \\
  & 0 \leq \left[
            \pi_{j^{l'}}^{n, e, RP} 
            + c_{i^{l}, j^{l'}}^{RP}
            + \eta_{67}^{i^l,j^{l'},n}{\lambda}_{6}^{i^l, j^{l'}, n} - \eta_{67}^{i^l,j^{l'},n}{\lambda}_{7}^{i^l, j^{l'}, n}
            - \pi_{i^{l}}^{n, e, RP}
        \right] 
    \perp 
            x_{i^{l}, j^{l'}}^{n, e, RP} 
    \geq 0
    &
     \\
  & 
    \quad \quad \quad \quad \quad \quad \quad \quad \quad \quad \quad \quad\quad \quad \quad \quad \quad \quad \quad \quad \quad, \forall i, n, l, j^{l'}:(i^{l}, j^{l'}) \in \mathcal{E}_{RP}, e \in \mathcal{D}
    &
    \nonumber \\
  & \text{PT route choice} 
    & \text{\eqref{eq:mcp_PT_route_choice_final} }
    \nonumber \\
  & \text{Multi-passenger ridesharing constraints} 
    & \text{\eqref{eq:mcp_group_intermediate_conservation}-\eqref{eq:mcp_group_pickup_cross} }
    \nonumber \\
  & \text{Flow conservation constraints} 
    & \text{\eqref{eq:flow_conservation} }
    \nonumber
\end{align}

As shown above, the resulting penalized MCP is one step closer to the original MCP for GEM-mpr, except for the asymmetric RD/RP demand constraints, and the number of matched RD and RP constraints. In the following subsection, a limiting argument is applied to recover the relaxed asymmetric constraints and the multipliers.

\noindent
\textit{Recovery of the relaxed asymmetric constraints}

In the previous subsection, solution existence has been established for the penalized VI problem, in which the asymmetric shared constraints are relaxed and transferred to the vector function. The goal of this subsection is to show that, these relaxed constraints can be recovered when the penalty tends to infinity, such that solution of the penalized VI is indeed a desired equilibrium solution of the original GEM-mpr problem. Similar to \textcite{ban2019general}, a limiting argument is applied on an arbitrary sequence $\left\{\mathbf{\rho_{\alpha}}: \left[\rho_{1,\alpha}, \rho_{2,\alpha}, \rho_{3,\alpha}, \rho_{4,\alpha} \right]  \right\}$ of (vector of) positive scalars tending to $\infty$, with $\alpha$ denote its index in the sequence. 

For every $\alpha$, the $VI(\mathbf{F^{\rho_{\alpha}}}, \Omega)$ is solved with respect to the penalty factor $\rho_{\alpha}$, which corresponds to a solution $\mathbf{w^{\alpha}} \in \Omega$. As a result, the solution sequence  $\left\{\mathbf{w_{\alpha}} \right\}$ also belongs to the set $\Omega$. If there exist bounds for the primary variables in the set $\Omega$ (i.e., the set is compact), the solution sequence contains a convergent subsequence. However, only the set defining $(q_{\alpha}, Z_{\alpha}, Y_{\alpha}, x_{\alpha})$ is compact, while bounds for the node potential variables $\pi$ are established at equilibrium (see \cref{bounded_node_potentials}). Therefore, only $\left\{\mathbf{w_{\alpha}^{'}}:(q_{\alpha}, Z_{\alpha}, Y_{\alpha}, x_{\alpha}) \right\}$ is assumed to converge to a limit $\mathbf{w_{\infty}^{'}} \triangleq (q_{\infty}, Z_{\infty}, Y_{\infty}, x_{\infty})$. To deal with the possible unboundedness of $\pi$ and the multipliers $\lambda, \phi$, we first resort to the following linear complementarity theorem:
\begin{lemma} (\cite{cottle2009linear})
\label{linear_complementarity}
Consider the following complementarity conditions:
\begin{flalign}
    0 \leq G(y) + A^{T} \cdot \phi + B^{T} \cdot \lambda
    \perp 
            y 
    \geq 0 \\
    0 = Cy - e
     \perp 
            \phi \quad \text{free} \label{eq:linear_comp_1}\\
    0 \leq Dy - f 
     \perp 
            \lambda 
    \geq 0 \label{eq:linear_comp_2}
\end{flalign}
where, $G$ is a continuous function. Let $\left\{(y_{k}, \phi_{k}, \lambda_{k}) \right\}$ be a sequence of solutions corresponding to the convergent sequence $\left\{(e_{k}, f_{k}) \right\}$ with $\lim_{k \rightarrow \infty}{e_k} = e_{\infty}$ and $\lim_{k \rightarrow \infty}{f_k} = f_{\infty}$. If $\lim_{k \rightarrow \infty}{y_k} = y_{\infty}$, then there exists $\left(\phi_{\infty}, \lambda_{\infty}\right) $ such that the tuple $(y_{\infty}, \phi_{\infty}, \lambda_{\infty})$ is a solution corresponding to $(e_{\infty}, f_{\infty})$.
\pushQED{\qed} 
{}\qedhere
\popQED
\end{lemma}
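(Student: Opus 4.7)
My plan is to combine a boundedness argument for the multipliers with passage to the limit by continuity, exploiting the fact that the system is linear in $(\phi,\lambda)$ and that the feasibility constraints on $\lambda$ are polyhedral. The primary variable $y_\infty$ is given by hypothesis, so the remaining task is to exhibit some limiting multiplier pair satisfying the KKT system with right-hand side $(e_\infty,f_\infty)$.

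First, I would try to establish boundedness of $\{(\phi_k,\lambda_k)\}$ along a subsequence. The natural route is a normalization argument: assume for contradiction that $M_k := \|(\phi_k,\lambda_k)\| \to \infty$ and consider $(\tilde\phi_k,\tilde\lambda_k) := (\phi_k,\lambda_k)/M_k$, which sits on the unit sphere and can therefore be assumed to converge to some $(\tilde\phi,\tilde\lambda)$ with $\|(\tilde\phi,\tilde\lambda)\|=1$ and $\tilde\lambda\geq 0$. Dividing the primal inequality $0 \leq G(y_k) + A^{T}\phi_k + B^{T}\lambda_k$ by $M_k$ and using continuity of $G$ together with $y_k\to y_\infty$, the term $G(y_k)/M_k$ vanishes, so the limit gives $0 \leq A^{T}\tilde\phi + B^{T}\tilde\lambda$. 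Dividing the complementarity $y_k^{T}[G(y_k)+A^{T}\phi_k+B^{T}\lambda_k]=0$ and $\lambda_k^{T}(Dy_k - f_k)=0$ by $M_k$ yields $y_\infty^{T}[A^{T}\tilde\phi + B^{T}\tilde\lambda] = 0$ and $\tilde\lambda^{T}(Dy_\infty - f_\infty) = 0$. Hence $(\tilde\phi,\tilde\lambda)$ is a nonzero recession direction of the multiplier polyhedron attached to the limit data, which the linear structure of the problem (effectively a constraint qualification on $(A,B,C,D)$ implicit in the lemma's hypotheses) is meant to preclude; this gives the contradiction needed to secure boundedness.

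Second, once a bounded subsequence has been extracted, say $(\phi_k,\lambda_k)\to(\phi_\infty,\lambda_\infty)$, I would pass each complementarity relation to the limit. Continuity of $G$ transfers $0 \leq G(y_k)+A^{T}\phi_k+B^{T}\lambda_k$ to $0 \leq G(y_\infty)+A^{T}\phi_\infty+B^{T}\lambda_\infty$, and complementarity with $y_k \geq 0$ persists because the inner product is continuous and the nonnegative orthant is closed. The equalities $Cy_k = e_k$ and the polyhedral complementarity $0 \leq Dy_k - f_k \perp \lambda_k \geq 0$ pass to the limit directly. This delivers the tuple $(y_\infty,\phi_\infty,\lambda_\infty)$ as a solution corresponding to $(e_\infty,f_\infty)$.

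The main obstacle, and the step I would be most careful about, is the boundedness of the multiplier sequence without assuming a constraint qualification explicitly. My backup plan, which I expect to be the cleanest, is to invoke Hoffman's error bound on the polyhedron of admissible multipliers at each $k$: since its defining data (the complementarity support indices determined by $y_k$ and $Dy_k-f_k$, together with the right-hand side $G(y_k)$) depend continuously on $(y_k,e_k,f_k)$, and the limit polyhedron is nonempty by construction, Hoffman's lemma produces a selection of multipliers at each $k$ whose distance to the limit multiplier set tends to zero. Replacing the original (possibly unbounded) $(\phi_k,\lambda_k)$ by this bounded Hoffman selection leaves the KKT system at index $k$ satisfied and renders the limit argument of the second paragraph immediate. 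This polyhedral route sidesteps any explicit constraint qualification and aligns with the stability theorems in Cottle, Pang, and Stone from which the lemma originates.
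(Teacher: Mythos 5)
First, note that the paper does not actually prove this lemma: it is quoted from Cottle, Pang and Stone's monograph on the linear complementarity problem and closed immediately with a qed symbol, so there is no in-paper argument to compare yours against. Judged on its own terms, your proposal has two genuine gaps. The primary route (normalize $(\phi_k,\lambda_k)$ by $M_k$ and derive a contradiction from a nonzero recession direction) cannot close, because the lemma's hypotheses contain no constraint qualification: the multiplier polyhedron attached to each $(y_k,e_k,f_k)$ may well have a nontrivial recession cone, in which case the \emph{given} sequence $(\phi_k,\lambda_k)$ can legitimately diverge while the conclusion of the lemma still holds. Your normalization argument then produces a perfectly valid recession direction rather than a contradiction, so boundedness of the supplied multipliers is simply not available. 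The backup route via Hoffman's error bound is circular: Hoffman's inequality controls the distance to the limit multiplier polyhedron only if that polyhedron is nonempty, and its nonemptiness is precisely the content of the lemma --- the primal facts $y_\infty\ge 0$, $Cy_\infty=e_\infty$, $Dy_\infty-f_\infty\ge 0$ all follow trivially by continuity, and the whole difficulty is exhibiting \emph{some} $(\phi_\infty,\lambda_\infty)$.

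The repair is close to your polyhedral instinct but uses Farkas-type stability rather than an error bound. Pass to a subsequence along which the index sets $I=\{i:\, y_{k,i}>0\}$ and $J=\{j:\, (Dy_k-f_k)_j>0\}$ are constant; for $k$ large these contain the corresponding index sets at the limit. The multipliers at step $k$ then satisfy a linear inequality/equality system with a \emph{fixed} coefficient matrix (built from $A^{T}$, $B^{T}$ and the equality pattern dictated by $I$ and $J$) and a right-hand side depending continuously on $G(y_k)$. Since the set of right-hand sides $b$ for which a polyhedron $\{x:\, Mx\ge b\}$ is nonempty is itself closed (a direct consequence of Farkas' lemma, equivalently of the closedness of the polyhedral set $M\mathbb{R}^n-\mathbb{R}^m_{+}$), nonemptiness survives the passage $G(y_k)\to G(y_\infty)$, and the resulting limit polyhedron is contained in the multiplier set of the limit problem because $I$ and $J$ only over-constrain it. Any point of it serves as $(\phi_\infty,\lambda_\infty)$; no boundedness of the original multipliers, and no error bound, is needed.
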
 

Following \cref{linear_complementarity}, existence of $\pi_{\infty}, \phi_{\infty}$ and $\lambda_{\infty}$ can be established by constructing a tuple of solution $(x_{\infty}, \pi_{\infty}, \phi_{\infty}, \lambda_{\infty})$ corresponding to $(q_{\infty}, Z_{\infty}, Y_{\infty})$, such that $\lim_{k \rightarrow \infty}{x_k} = x_{\infty}$ and $\pi_{\infty}, \phi_{\infty}, \lambda_{\infty}$ correspond to the linear complementarity conditions (in the forms of \cref{eq:linear_comp_1}-\eqref{eq:linear_comp_2}). Let $\mathbf{w_{\infty}} \triangleq (q_{\infty}, Z_{\infty}, Y_{\infty}, x_{\infty}, \pi_{\infty})$ denote a solution to $VI(\mathbf{F^{\rho_{\infty}}}, \Omega)$, we would like to verify $\mathbf{w_{\infty}}$, or more specifically $(Z_{\infty}, Y_{\infty}, x_{\infty})$, satisfies the asymmetric constraints using the following lemma:
\begin{lemma}
\label{asymmetric_recovery}
Suppose there exists nonnegative tuple $q$ satisfies conservation constraints \eqref{eq:mcp_mode_conservation}, and nonnegative $Z$ satisfies definitional constraints \eqref{eq:proof1_def_constraint1}-\eqref{eq:proof1_def_constraint2}, then the limiting tuple $(Z_{\infty}, Y_{\infty}, x_{\infty})$ satisfies the following:
\begin{align}
& {Y_{\omega}^{RD}}_{\alpha} \leq {q_{\omega}^{RD}}_{\alpha} & \forall \omega \in \mathcal{W} \label{proof_start} \\
& {Y_{\omega}^{RP}}_{\alpha} \leq {q_{\omega}^{RP}}_{\alpha} & \forall \omega \in \mathcal{W}  \\
&x_{\widehat{i_{RD}}, i^1}^{n, RD} \leq s_1(n, 0, (i, \cdot)) \cdot Z_n, & \forall n, i \in O  \\
&x_{\widehat{i_{RP}}, i^l}^{n, e, RP} \leq s_1(n, l, (i, e)) \cdot Z_n,  & \forall n, (i, e): \omega, 1 \leq l \leq L-1 \label{proof_end}
\end{align}
\end{lemma}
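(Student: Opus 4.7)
The plan is to exploit the fact that each penalty term in $\mathbf{F}^{\rho_\alpha}$ is multiplied by a factor $\rho_{\cdot,\alpha}\to\infty$, so if any of the four inequalities \eqref{proof_start}--\eqref{proof_end} were strictly violated in the limit, then the corresponding component of $\mathbf{F}^{\rho_\alpha}(\mathbf{w}_\alpha)$ would diverge, and by matching this divergence against the VI inequality evaluated at a suitably chosen feasible test point I will obtain a contradiction. Throughout, feasibility of such a test point is guaranteed by the hypotheses of the lemma: the tuple $q$ satisfies demand conservation \eqref{eq:mcp_mode_conservation}, and a nonnegative $Z$ satisfying the definitional constraints \eqref{eq:proof1_def_constraint1}--\eqref{eq:proof1_def_constraint2} exists (in particular, $Z\equiv 0$ together with $Y\equiv 0$ is always admissible, and $x$ restricted to the DA and PT subnetworks can absorb all demand).

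First I would treat \eqref{proof_start}. Consider a test point $\bar{\mathbf{w}}$ obtained from $\mathbf{w}_\alpha$ by replacing only the component $Y_\omega^{RD}$ by $\min\{Y_\omega^{RD},q_\omega^{RD}\}$, leaving every other coordinate unchanged; by the hypotheses this remains in $\Omega$ because $Y$ enters only the definitional constraints \eqref{eq:proof1_def_constraint1}--\eqref{eq:proof1_def_constraint2} and nonnegativity. Evaluating $\mathbf{F}^{\rho_\alpha}(\mathbf{w}_\alpha)^\top(\bar{\mathbf{w}}-\mathbf{w}_\alpha)\geq 0$ leaves only the penalty component $\rho_{1,\alpha}(Y_{\omega,\alpha}^{RD}-q_{\omega,\alpha}^{RD})$ times the nonpositive difference $\min\{Y_{\omega,\alpha}^{RD},q_{\omega,\alpha}^{RD}\}-Y_{\omega,\alpha}^{RD}$, which yields
\begin{equation*}
\rho_{1,\alpha}\bigl(\max\{0,Y_{\omega,\alpha}^{RD}-q_{\omega,\alpha}^{RD}\}\bigr)^{2}\leq 0.
\end{equation*}
Hence $Y_{\omega,\alpha}^{RD}\leq q_{\omega,\alpha}^{RD}$ for every $\alpha$, and the inequality is preserved in the limit. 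The same single-coordinate perturbation argument, applied to $Y_\omega^{RP}$, $x_{\widehat{i_{RD}},i^1}^{n,RD}$, and $x_{\widehat{i_{RP}},i^l}^{n,e,RP}$, yields \eqref{proof_end} and the remaining two inequalities.

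The main obstacle is verifying that the single-coordinate perturbation is feasible for $\Omega$ when the perturbed variable appears in \emph{coupled} shared constraints (stable matching \eqref{eq:stable_matching_constraint}, RD-RP coupling \eqref{eq:RD_RP_coupling_upper}--\eqref{eq:RD_RP_coupling_lower}, transfer-avoidance \eqref{eq:intermediate_conservation_dropoff}--\eqref{eq:RD_pickup_cross}, and flow conservation \eqref{eq:flow_conservation}). For the bound on $x_{\widehat{i_{RD}},i^1}^{n,RD}$ I would decrease this virtual-origin link flow and simultaneously absorb the shed flow into $x_{\widehat{i_{RD}},i_{DA}}^{e,RD}$ so that \eqref{eq:mcp_RD_conservation_final} still holds, while cascading the decrement through the corresponding RD trajectory so that constraints \eqref{eq:intermediate_conservation_dropoff}--\eqref{eq:RD_pickup_cross} remain satisfied; by the stable matching constraint the matched RP virtual flow is decreased consistently, whose shed flow is absorbed into $x_{\widehat{i_{RP}},i_{PT}}^{e,RP}$. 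The perturbation for the RP virtual-origin link bound is symmetric. Once feasibility of the perturbation is established the penalty-based quadratic bound above yields the desired inequalities. Finally, by the continuity of the components that survive in $\mathbf{F}^{\rho_\alpha}$ and the convergence $\mathbf{w}_\alpha^{'}\to\mathbf{w}_\infty^{'}$, each of the four inequalities passes to the limit, completing the proof.
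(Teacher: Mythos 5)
Your overall strategy --- playing the VI inequality at a feasible comparison point against the diverging penalty terms --- is the same as the paper's, but two steps do not go through as written. First, the single-coordinate perturbation of $Y_{\omega}^{RD}$ is not feasible: inside $\Omega$ this variable is tied to $Z$ by the \emph{equality} definitional constraint \eqref{eq:proof1_def_constraint1}, so replacing $Y_{\omega,\alpha}^{RD}$ by $\min\{Y_{\omega,\alpha}^{RD},q_{\omega,\alpha}^{RD}\}$ while ``leaving every other coordinate unchanged'' exits $\Omega$ precisely in the case you need to rule out. The same objection applies to the virtual link flows, which are pinned by the stable-matching equality \eqref{eq:stable_matching_constraint} and by flow conservation; you acknowledge this for $x$ and sketch a cascading perturbation, but that necessarily turns the test point into a multi-coordinate perturbation, and you never construct it explicitly or verify it against the coupling and transfer-avoidance constraints.

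Second, once the perturbation touches coordinates other than the penalized one, the inner product $\mathbf{F}^{\rho_\alpha}(\mathbf{w}_\alpha)^{\top}(\bar{\mathbf{w}}-\mathbf{w}_\alpha)$ acquires additional terms (node-potential differences along the rerouted RD/RP trajectories, the $-R_n$ component if $Z$ must move with $Y$, the DA/PT components for the absorbed flow). These terms are bounded but not sign-controlled, so your clean conclusion $\rho_{1,\alpha}\bigl(\max\{0,Y_{\omega,\alpha}^{RD}-q_{\omega,\alpha}^{RD}\}\bigr)^{2}\le 0$, and hence exact satisfaction \emph{at every} $\alpha$, is not justified; at finite $\alpha$ the nonnegative extra terms can offset a negative penalty term. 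What survives is only the limiting statement: the right-hand side is bounded above by $-\sum\rho_{\cdot,\alpha}(\text{violation})^{2}$ plus a quantity independent of $\rho_\alpha$, so nonnegativity forces each squared violation to vanish as $\rho_\alpha\to\infty$. This is exactly how the paper argues --- it perturbs all four groups of relaxed variables jointly in a single feasible comparison point $\bar{\mathbf{w}}_\alpha$, collects all non-penalty contributions into a bounded ``remaining part,'' and concludes only that the violations tend to zero, which is all the lemma asserts about the limiting tuple. To repair your proof you would need to (i) construct the comparison point globally and verify its membership in $\Omega$ (consistent $\bar{Z},\bar{Y},\bar{x}$, e.g.\ routing all shed RD/RP demand to DA/PT), and (ii) weaken the per-$\alpha$ claim to a statement about the limit.
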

\begin{proof}
Given a pair of feasible $(q_{\alpha}, Z_{\alpha})$, let $\bar{x}_{\alpha}$ and $\bar{Y}_{\alpha}$ being the variables such that the tuple $(Z_{\alpha}, \bar{Y}_{\alpha}, \bar{x}_{\alpha})$ satisfies \eqref{proof_start}-\eqref{proof_end}. Following the definition of $VI(\mathbf{F^{\rho_{\alpha}}}, \Omega)$, we have
\begin{align}
F^{\rho_{\alpha}}\left(\mathbf{w_{\alpha}} \right)^\top \left(\mathbf{w} - \mathbf{w_{\alpha}} \right) \geq 0, \forall \mathbf{w} \in \Omega \nonumber
\end{align}
Let $\bar{\mathbf{w}}_{\alpha} = (q_{\alpha}, Z_{\alpha}, \bar{Y_{\alpha}}, \bar{x_{\alpha}}, \pi_{\alpha}) \in \Omega$, differs from $\mathbf{w_{\alpha}}$ only in the $x$ and $Y$ variables, more specifically, $Y_{\omega}^{RD}, Y_{\omega}^{RP}, x_{\widehat{i_{RD}}, i^1}^{n, RD}, x_{\widehat{i_{RP}}, i^l}^{n, e, RP}$, and it follows that:
\begin{align}
0 &\leq F^{\rho_{\alpha}}\left(\mathbf{w_{\alpha}} \right)^\top \left(\mathbf{w} - \mathbf{w_{\alpha}} \right) \nonumber \\
  & = \sum_{\omega} 
    {\rho_{1, \alpha} 
    \left[{Y_{\omega, \alpha}^{RD}}
    - {q_{\omega, \alpha}^{RD}} \right] 
    \left({\bar{Y}_{\omega, \alpha }^{RD}}
    - {Y_{\omega, \alpha}^{RD}} \right)} \nonumber\\
  & + \sum_{\omega} 
    {\rho_{2, \alpha} 
    \left[{Y_{\omega, \alpha}^{RP}}
    - {q_{\omega, \alpha}^{RP}} \right] 
    \left({\bar{Y}_{\omega, \alpha }^{RP}}
  - {Y_{\omega, \alpha}^{RP}} \right)} \nonumber\\
  & + \sum_{(i,e)} {\sum_{n} \left[ {\eta_{5}^{n} \cdot \left(\pi_{i^{1}, \alpha}^{n, RD} - \pi_{\widehat{i_{RD}}, \alpha}^{e, RD} \right)
        + \rho_{3, \alpha} \cdot
            \left(
                  x_{\widehat{i_{RD}}, i^1, \alpha}^{n, RD} 
                  - s_1(n, 0 (i, \cdot)) \cdot Z_{n, \alpha}
            \right)} \right] \left(\bar{x}_{\widehat{i_{RD}}, i^1, \alpha}^{n, RD} - x_{\widehat{i_{RD}}, i^1, \alpha}^{n, RD} \right) } \nonumber \\
  & + \sum_{(i, e)} {
          \sum_{n} {
              \sum_{l} {
                \left[
                \pi_{i^{l}, \alpha}^{n, e, RP} - \pi_{\widehat{i_{RP}}, \alpha}^{e, RP}
                + \rho_{4, \alpha} \cdot
                    \left(
                          x_{\widehat{i_{RP}}, i^l, \alpha}^{n, e, RP} 
                          - s_1(n, l (i, \cdot)) \cdot Z_{n, \alpha}
                    \right)
                \right] \left(\bar{x}_{\widehat{i_{RP}}, i^l, \alpha}^{n, e, RP} - x_{\widehat{i_{RP}}, i^l, \alpha}^{n, e, RP} \right)
              }
          }
  } \nonumber \\
  & = \rho_{1, \alpha} \cdot \sum_{\omega} 
    { 
    \left[{Y_{\omega, \alpha}^{RD}}
    - {q_{\omega, \alpha}^{RD}} \right] 
    \left({\bar{Y}_{\omega, \alpha }^{RD}}
    - {Y_{\omega, \alpha}^{RD}} \right)} \nonumber\\
  & + \rho_{2, \alpha} \cdot \sum_{\omega} 
    { 
    \left[{Y_{\omega, \alpha}^{RP}}
    - {q_{\omega, \alpha}^{RP}} \right] 
    \left({\bar{Y}_{\omega, \alpha }^{RP}}
  - {Y_{\omega, \alpha}^{RP}} \right)} \nonumber\\
  & + \sum_{(i,e)} {\sum_{n} \left[ {\eta_{5}^{n} \cdot \left(\pi_{i^{1}, \alpha}^{n, RD} - \pi_{\widehat{i_{RD}}, \alpha}^{e, RD} \right)} \right] \left(\bar{x}_{\widehat{i_{RD}}, i^1, \alpha}^{n, RD} - x_{\widehat{i_{RD}}, i^1, \alpha}^{n, RD} \right) } \nonumber \\
  & + \rho_{3, \alpha} \cdot \sum_{(i,e)} {\sum_{n}  {
            \left(
                  x_{\widehat{i_{RD}}, i^1, \alpha}^{n, RD} 
                  - s_1(n, 0 (i, \cdot)) \cdot Z_{n, \alpha}
            \right)} \left(\bar{x}_{\widehat{i_{RD}}, i^1, \alpha}^{n, RD} - x_{\widehat{i_{RD}}, i^1, \alpha}^{n, RD} \right) } \nonumber \\
  & + \sum_{(i, e)} {
          \sum_{n} {
              \sum_{l} {
                \left[
                \pi_{i^{l}, \alpha}^{n, e, RP} - \pi_{\widehat{i_{RP}}, \alpha}^{e, RP}
                \right] \left(\bar{x}_{\widehat{i_{RP}}, i^l, \alpha}^{n, e, RP} - x_{\widehat{i_{RP}}, i^l, \alpha}^{n, e, RP} \right)
              }
          }
  } \nonumber \\
  & + \rho_{4, \alpha} \cdot \sum_{(i, e)} {
          \sum_{n} {
              \sum_{l} {
                \left[
                    \left(
                          x_{\widehat{i_{RP}}, i^l, \alpha}^{n, e, RP} 
                          - s_1(n, l (i, \cdot)) \cdot Z_{n, \alpha}
                    \right)
                \right] \left(\bar{x}_{\widehat{i_{RP}}, i^l, \alpha}^{n, e, RP} - x_{\widehat{i_{RP}}, i^l, \alpha}^{n, e, RP} \right)
              }
          }
  } \nonumber  
\end{align}
Since, by the choice of ${\bar{Y}_{\omega, \alpha }^{RD}}$, it follows ${\bar{Y}_{\omega, \alpha }^{RD}} \leq q_{\omega, \alpha}^{RD}$, therefore:
\begin{align}
\rho_{1, \alpha} \cdot \sum_{\omega} 
    { 
    \left[{Y_{\omega, \alpha}^{RD}}
    - {q_{\omega, \alpha}^{RD}} \right] 
    \left({\bar{Y}_{\omega, \alpha }^{RD}}
    - {Y_{\omega, \alpha}^{RD}} \right)} 
    & \leq \rho_{1, \alpha} \cdot \sum_{\omega} 
    { \left[{Y_{\omega, \alpha}^{RD}}
    - {q_{\omega, \alpha}^{RD}} \right] 
    \left(q_{\omega, \alpha}^{RD}
    - {Y_{\omega, \alpha}^{RD}} \right)} \nonumber \\
    & = -\rho_{1, \alpha} \cdot \sum_{\omega} {\left({Y_{\omega, \alpha}^{RD}}
    - {q_{\omega, \alpha}^{RD}} \right)^{2} } \nonumber
\end{align}
Similarly, we have:
\begin{align}
\rho_{2, \alpha} \cdot \sum_{\omega} 
    { 
    \left[{Y_{\omega, \alpha}^{RP}}
    - {q_{\omega, \alpha}^{RP}} \right] 
    \left({\bar{Y}_{\omega, \alpha }^{RP}}
    - {Y_{\omega, \alpha}^{RP}} \right)} 
    & \leq -\rho_{2, \alpha} \cdot \sum_{\omega} {\left({Y_{\omega, \alpha}^{RP}}
    - {q_{\omega, \alpha}^{RP}} \right)^{2} } \nonumber
\end{align}
Also, by the choice of $\bar{x}_{\widehat{i_{RD}}, i^1, \alpha}^{n, RD}$, it follows $\bar{x}_{\widehat{i_{RD}}, i^1, \alpha}^{n, RD} \leq  s_1(n, 0 (i, \cdot)) \cdot Z_{n, \alpha}$, therefore:
\begin{align}
\rho_{3, \alpha} & \cdot \sum_{(i,e)} {\sum_{n}  {
            \left(
                  x_{\widehat{i_{RD}}, i^1, \alpha}^{n, RD} 
                  - s_1(n, 0 (i, \cdot)) \cdot Z_{n, \alpha}
            \right)} \left(\bar{x}_{\widehat{i_{RD}}, i^1, \alpha}^{n, RD} - x_{\widehat{i_{RD}}, i^1, \alpha}^{n, RD} \right) } \nonumber \\
            & \leq -\rho_{3, \alpha} \sum_{(i,e)} {\sum_{n}  {
            \left(
                  x_{\widehat{i_{RD}}, i^1, \alpha}^{n, RD} 
                  - s_1(n, 0 (i, \cdot)) \cdot Z_{n, \alpha}
            \right)^{2}} } \nonumber
\end{align}
Similarly, it follows:
\begin{align}
\rho_{4, \alpha} & \cdot \sum_{(i, e)} {
          \sum_{n} {
              \sum_{l} {
                \left[
                    \left(
                          x_{\widehat{i_{RP}}, i^l, \alpha}^{n, e, RP} 
                          - s_1(n, l (i, \cdot)) \cdot Z_{n, \alpha}
                    \right)
                \right] \left(\bar{x}_{\widehat{i_{RP}}, i^l, \alpha}^{n, e, RP} - x_{\widehat{i_{RP}}, i^l, \alpha}^{n, e, RP} \right)
              }
          }
      }\nonumber \\
    & \leq - \rho_{4, \alpha} \sum_{(i, e)} {
          \sum_{n} {
              \sum_{l} {
                    \left(
                          x_{\widehat{i_{RP}}, i^l, \alpha}^{n, e, RP} 
                          - s_1(n, l (i, \cdot)) \cdot Z_{n, \alpha}
                    \right)^{2}
              }
          }
      }\nonumber
\end{align} 
To summarize, we have:
\begin{align}
0 &\leq F^{\rho_{\alpha}}\left(\mathbf{w_{\alpha}} \right)^\top \left(\mathbf{w} - \mathbf{w_{\alpha}} \right) \nonumber \\
  & \leq -\rho_{1, \alpha} \cdot \sum_{\omega} {\left({Y_{\omega, \alpha}^{RD}}
    - {q_{\omega, \alpha}^{RD}} \right)^{2} } \nonumber \\
  & - \rho_{2, \alpha} \cdot \sum_{\omega} {\left({Y_{\omega, \alpha}^{RP}}
    - {q_{\omega, \alpha}^{RP}} \right)^{2} } \nonumber \\
  & -\rho_{3, \alpha} \sum_{(i,e)} {\sum_{n}  {
              \left(
                  x_{\widehat{i_{RD}}, i^1, \alpha}^{n, RD} 
                  - s_1(n, 0 (i, \cdot)) \cdot Z_{n, \alpha}
              \right)^{2}
              } 
            } \nonumber \\
  & - \rho_{4, \alpha} \sum_{(i, e)} {
          \sum_{n} {
              \sum_{l} {
                    \left(
                          x_{\widehat{i_{RP}}, i^l, \alpha}^{n, e, RP} 
                          - s_1(n, l (i, \cdot)) \cdot Z_{n, \alpha}
                    \right)^{2}
              }
          }
      } + \text{Remaining part} \nonumber
\end{align}
With sequence $\left\{\mathbf{\rho_{\alpha}}: \left[\rho_{1,\alpha}, \rho_{2,\alpha}, \rho_{3,\alpha}, \rho_{4,\alpha} \right]  \right\}$ of positive scalars tending to $\infty$, and the remaining part has not affected by $\rho_{\alpha}$, this shows that, for $F^{\rho_{\alpha}}\left(\mathbf{w_{\alpha}} \right)^\top \left(\mathbf{w} - \mathbf{w_{\alpha}} \right) \geq 0$, we have:
\begin{align}
& \lim_{\alpha \rightarrow \infty} {\left({Y_{\omega, \alpha}^{RD}} - {q_{\omega, \alpha}^{RD}} \right) = 0} &, \forall \omega  \nonumber  \\
& \lim_{\alpha \rightarrow \infty} {\left({Y_{\omega, \alpha}^{RP}} - {q_{\omega, \alpha}^{RP}} \right) = 0} &, \forall \omega  \nonumber  \\
& \lim_{\alpha \rightarrow \infty} {\left(
                  x_{\widehat{i_{RD}}, i^1, \alpha}^{n, RD} 
                  - s_1(n, 0 (i, \cdot)) \cdot Z_{n, \alpha}
              \right)} = 0 &,  \forall n, i \in O  \nonumber  \\
& \lim_{\alpha \rightarrow \infty} {\left(
                          x_{\widehat{i_{RP}}, i^l, \alpha}^{n, e, RP} 
                          - s_1(n, l (i, \cdot)) \cdot Z_{n, \alpha}
                    \right)} = 0 &,  \forall n, (i, e): \omega, 1 \leq l \leq L-1 \nonumber 
\end{align}
By substituting the definitional constraints \eqref{eq:proof1_def_constraint1}-\eqref{eq:proof1_def_constraint2} into the limiting argument, we also obtain:
\begin{align}
& \lim_{\alpha \rightarrow \infty} {\left(\sum_{n} {s_1{(n,0,\omega)} \cdot Z_{n, \alpha}} - {q_{\omega, \alpha}^{RD}} \right) = 0} &, \forall \omega  \nonumber  \\
& \lim_{\alpha \rightarrow \infty} {
          \sum_{n} {
                    \left(\sum_{1 \leq l \leq L-1} {s_1{(n,l,\omega)} \cdot Z_{n, \alpha}}- {q_{\omega, \alpha}^{RP}}\right) 
                    = 0
                          } 
        }&, \forall \omega  \nonumber
\end{align}
The above shows that, when penalty goes to infinity, the relaxed constraints are satisfied.
\end{proof}

\subsubsection{Proof of \cref{bounded_link_flows}}
\label{proof.proposition1}
\begin{proof}
  Assuming a connected network, there exists a path from any node $i$ to any node $j$. For \textit{a} path, let $T_{ij}^{e, m}$ denote the path travel cost between node $i$ and $j$ with destination $e$ and mode $m$, and similarly $f_{ij}^{e, m}$ denote the path flow, which are defined as follows:
  \begin{align}
  & T_{ij}^{e, m} = \sum_{(a,b)} {\delta_{ab}^{ij, e, m} \cdot c_{ab}} &, \forall i,j,e,m \nonumber \\
  & f_{ij}^{e, m} \leq x_{ab}^{e, m} &, \forall i,j,e,m,  \delta_{ab}^{ij, e, m}=1\nonumber
  \end{align}
where, $\delta_{ab}^{ij, e, m}=1$ if link $(a,b)$ is part of the path between node $i$ and $j$ with destination $e$ and mode $m$; and $\delta_{ab}^{ij, e, m}=0$ otherwise. We have the following complementarity condition for the path flow:

\begin{align}
  & 0 \leq \left[ 
                    \pi_{je}^{m} + T_{ij}^{e, m} - \pi_{ie}^{m}
  \right] 
  \perp
  f_{ij}^{e, m}
  \geq 0
  &, \forall i,j,e,m
  \label{eq:mcp_proof_path_complementarity}
  \end{align}
Suppose link $(j, i) \in \mathcal{E}$, we have also the following complementarity condition for the link flow:
\begin{align}
  & 0 \leq \left[ 
                    \pi_{ie}^{m} + c_{ji}^{m} - \pi_{je}^{m}
  \right] 
  \perp
  x_{ji}^{e, m}
  \geq 0
  &, \forall i,j,e,m
  \label{eq:mcp_proof_link_complementarity}
  \end{align}
If link flow $x_{ji}^{e, m} > 0$, by complementarity condition \eqref{eq:mcp_proof_link_complementarity}, we have $\pi_{ie}^{m} + c_{ji}^{m} = \pi_{je}^{m}$. By substituting $\pi_{je}^{m}$ into \eqref{eq:mcp_proof_path_complementarity}, it follows:
\begin{align}
  & \pi_{je}^{m} + T_{ij}^{e, m} - \pi_{ie}^{m} = \pi_{ie}^{m} + c_{ji}^{m} + T_{ij}^{e, m} - \pi_{ie}^{m} = c_{ji}^{m} + T_{ij}^{e, m} > 0
  &, \forall i,j,e,m
  \nonumber
  \end{align}
, which indicates $f_{ij}^{e, m} =0$. 

This implies, at equilibrium, there is no cycle formed by links with positive flows (i.e., either $x_{ji}^{e, m} > 0$ or $f_{ij}^{e, m} > 0$; but not both).

Relying on this \textit{no-cycle} observation, we derive the link flow upper bounds for each mode, with $\delta_{ij}^{o,e} \in [0, 1]$ represents the portion of demands from $o$ to $e$ traversing link $(i, j)$, as follows:
\begin{align}
  x_{ij}^{e, DA} &= \sum_{o \in \mathcal{O}} {\delta_{ij}^{o,e} \cdot \left(q_{(o, e)}^{DA} + x_{\widehat{o_{RD}}, o_{DA}}^{e, RD} \right)  } \leq \sum_{o \in \mathcal{O}} {\left(q_{(o, e)}^{DA} + x_{\widehat{o_{RD}}, o_{DA}}^{e, RD} \right)  }
  &
  \nonumber \\
  & \leq \sum_{o \in \mathcal{O}} {\left(q_{(o, e)}^{DA} + q_{(o, e)}^{RD} \right)  } \leq \sum_{\omega :(o, e)} {q_{\omega}}
  &, \forall i,j,e,m
\end{align}
\begin{align}
  x_{ij}^{e, PT} &= \sum_{o \in \mathcal{O}} {\delta_{ij}^{o,e} \cdot \left(q_{(o, e)}^{PT} + x_{\widehat{o_{RP}}, o_{PT}}^{e, RP} \right)  } \leq \sum_{o \in \mathcal{O}} {\left(q_{(o, e)}^{PT} + x_{\widehat{o_{RP}}, o_{PT}}^{e, RP} \right)  }
  &
  \nonumber \\
  & \leq \sum_{o \in \mathcal{O}} {\left(q_{(o, e)}^{PT} + q_{(o, e)}^{RP} \right)  } \leq \sum_{\omega :(o, e)} {q_{\omega}}
  &, \forall i,j,e,m
\end{align}
\begin{align}
  x_{i^{l}j^{l'}}^{n, RD} &= \delta_{i^{l}j^{l'}}^{o,e} \cdot x_{\widehat{o_{RD}}, o^1}^{n, RD} \leq x_{\widehat{o_{RD}}, o^1}^{n, RD} \leq \sum_{\omega} {q_{\omega}}
  &, \forall i, n, l, j^{l'}: (i^{l}, j^{l'}) \in \mathcal{E}_{RD}, (o,e):s_1(n, 0, (o,e)) = 1
\end{align}
\begin{align}
  x_{i^{l}j^{l'}}^{n, e, RP} &= \sum_{o \in \widehat{\mathcal{O}}} {\delta_{i^{l}j^{l'}}^{o,e} \cdot \sum_{1\leq l'' \leq L-1} {x_{\widehat{o_{RP}}, o^{l''}}^{n, e, RP}}}
  &, \forall i, n, l, j^{l'}: (i^{l}, j^{l'}) \in \mathcal{E}_{RP}, e \in \mathcal{D} \nonumber \\
  & \leq \sum_{o \in \widehat{\mathcal{O}}} {\sum_{1\leq l'' \leq L-1} {x_{\widehat{o_{RP}}, o^{l''}}^{n, e, RP}}} 
  \leq \sum_{o \in \widehat{\mathcal{O}}} {q_{(o, e)}^{RP}} \leq  \sum_{\omega} {q_{\omega}}
  & ,\widehat{\mathcal{O}} \coloneqq \left\{o | s_1(n, l'', (o,e)) = 1, 1 \leq l'' \leq L-1 \right\}
\end{align}
\end{proof}

\subsubsection{Proof of \cref{bounded_node_potentials}}
\label{proof.proposition2}
\begin{proof}
  Suppose link $(i, j) \in \mathcal{E}$, and we have the following complementarity condition:
\begin{align}
  & 0 \leq \left[ 
                    \pi_{je}^{m} + c_{ij}^{m} - \pi_{ie}^{m}
  \right] 
  \perp
  x_{ij}^{e, m}
  \geq 0
  &, \forall i,j,e,m
  \label{eq:mcp_proof_node_potentials_link_complementarity}
  \end{align}
If there is link flow ${x_{ij}^{e, m}}^{*} > 0$ at equilibrium with link cost ${c_{ij}^{m}}^{*}$, by complementarity condition \eqref{eq:mcp_proof_node_potentials_link_complementarity}, we have:
\begin{align}
        {\pi_{je}^{m}}^{*} + {c_{ij}^{m}}^{*} = {\pi_{ie}^{m}}^{*} > 0 \nonumber
  \end{align}
By conservation at node $i$, if ${x_{ij}^{e, m}}^{*} > 0$ and $j \neq e$, there exists link flow ${x_{jk}^{e, m}}^{*} > 0$. Therefore, by similar complementarity conditions, it follows:
\begin{align}
        {\pi_{ke}^{m}}^{*} + {c_{jk}^{m}}^{*} = {\pi_{je}^{m}}^{*} \nonumber
  \end{align}
Then, the node potential at node $i$, ${\pi_{ie}^{m}}^{*}$ can be recursively defined by traversing links with positive flows, and accumulating their link costs:
\begin{align}
        {\pi_{ie}^{m}}^{*} = {c_{ij}^{m}}^{*} + {c_{jk}^{m}}^{*} + ... = \sum_{(j, k): {x_{jk}^{e, m}}^{*} > 0} {c_{jk}^{m}}^{*} \leq \sum_{(j, k) \in \mathcal{E} } {c_{jk}^{m}}^{*} \nonumber
\end{align}
Furthermore, under the monotone link cost assumption, we have:
\begin{align}
    {c_{jk}^{m}}^{*} = {c_{jk}^{m}}({x_{ij}^{e, m}}^{*}) \leq {c_{jk}^{m}}(\textstyle\sum_{\omega}{q_{\omega}})     \nonumber
\end{align}
Therefore,
\begin{align}
    {\pi_{ie}^{m}}^{*} \leq \sum_{(i, j) \in \mathcal{E} } {{c_{ij}^{m}}(\textstyle\sum_{\omega}{q_{\omega}})}     \nonumber
\end{align}
\end{proof}

\begin{longtable}{ll}
\caption{Definitions of key index, parameters, and decision variables} \label{tab:index_list}\\ \toprule
\textbf{Notation} & \textbf{Definition}  \\ \midrule
\endfirsthead
\caption{Definitions of key index, parameters, and decision variables (Cont.)} \label{tab:index_list}\\
\toprule
\textbf{Notation} & \textbf{Definition} \\ \midrule
\endhead
\hline
\multicolumn{2}{r}{Continued on next page}\\   \bottomrule
\endfoot
\bottomrule
\endlastfoot
\textbf{Index} & ~ \\ 
$o$ & origin \\ 
$d, e$ & destination \\ 
$n$ & matching sequence \\ 
$w:(o, d)$ & origin-destination pair \\ 
$l$ & level in the hyper-network \\ 
$i,j$ & node \\ 
$\widehat{i}$, $\widehat{j}$ & virtual node \\ 
$(i,j)$ & link \\ 
$m$ & mode \\ 
$k$ & route \\ 
$o_{nl}, d_{nl}$ & OD of matching sequence $n$ at level $l$ \\ 
$p$ & iteration number \\ 
$\bar{\cdot}$ & costliest \\ 
$\underline{\cdot}$ & cheapest \\ 
$\kappa$ & sequence-route \\ 
~ & ~ \\ 
\textbf{Sets} & ~ \\ 
$\mathcal{O}$ & origins \\ 
$\mathcal{D}$ & destinations \\ 
$\mathcal{W}$ & OD pairs \\ 
$\mathcal{N}$ & nodes \\ 
$\mathcal{E}$ & links \\ 
$\mathcal{M}$ & modes \\ 
$\mathcal{S}_w$ & matching sequences for $w$ \\ 
$\Psi^{w,m}$ & flow class for mode $m$ \\ 
$K_{od}$ & unrestricted path set \\ 
$B$ & bushes \\ 
~ & ~ \\ 
\textbf{Parameters} & ~ \\ 
$s_1$ & Pickup incident \\ 
$s_{-1}$ & drop-off incident \\ 
$L$ & Max sequence length \\ 
$q^w$ & OD demand \\ 
$C_w^m$ & modal cost \\ 
$C_n^w$ & matching sequence cost \\ 
$\alpha^m$ & VOT of traveler using $m$ \\ 
$\beta$ & unit vehicle operation cost \\ 
$\gamma$ & shared portion of a RD trip \\ 
$\tau_t^m$, $\tau_d^m$ & unit inconvenience cost \\ 
$\nu_t$, $\nu_d$ & unit ridesharing price \\ 
$R_n$ & matching sequence objective value \\ 
$c_{ij}^m$ & mode-specific link cost \\ 
$\tilde{c}_{ij}^m$ & generalized link cost \\ 
$d_{ij}$ & link length \\ 
$\mathcal{A}$ & node-link incident \\ 
$B_{n,l}$ & RD occupancy status \\ 
$s(n,l)$ & task node mapping \\ 
$\rho, \tilde{\mu}$ & penalty factor for augmented Lagrangian \\ 
$\theta$ & step sizes \\ 
$\mathcal{T}_i^o$ & topology order of $i$ in bush $o$ \\ 
$\tilde{\pi}$ & bush min-cost label \\ 
$\tilde{U}$ & bush max-cost label \\ 
$\sigma$ & augmented Lagrangian parameter \\ 
$\mathcal{L}$ & Lagrangian function of network subproblem \\ 
~ & ~ \\ 
\textbf{Decision variables} & ~ \\ 
$q_w^m$ & modal demand \\ 
$\pi$ & multiplier for the conservation constraints \\ 
$Z_n$ & platform matching decision \\ 
$\mu$ & multiplier for inequality constraints (e.g., ridesharing matching demand) \\ 
$x$ & link flow \\ 
$t_{ij}$ & link travel time \\ 
$F_n$ & sequence flow \\ 
$f_k$ & route-flow variable for unrestricted path set \\
\end{longtable}

\DIFaddbegin 
\begin{figure}[H]
    \centering
    \includegraphics[width=0.95\textwidth]{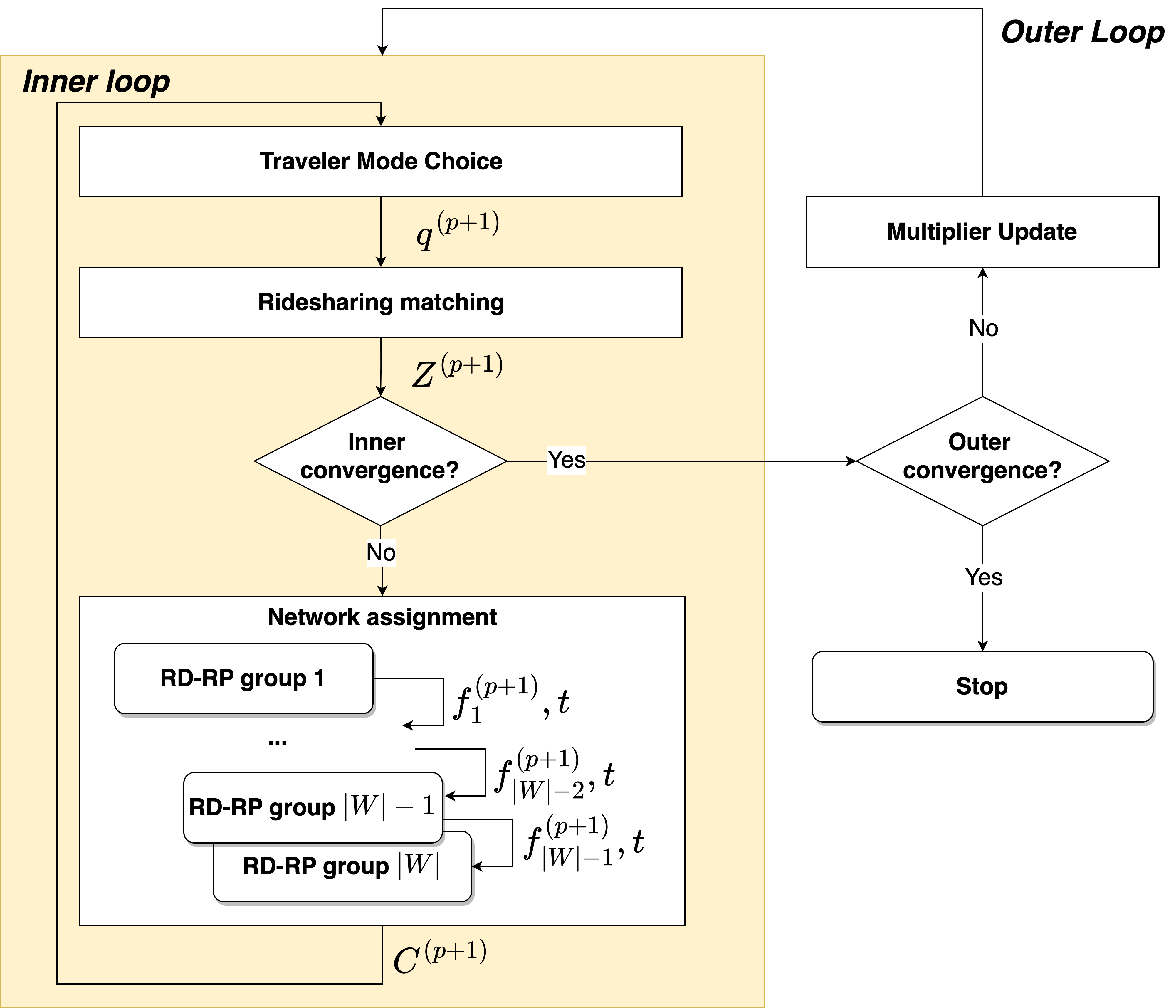}
    \caption{\DIFaddFL{Flowchart of sequence-flow assignment algorithm}}
    \label{fig:algorithm_flowchart}
\end{figure}
\DIFaddend 

\end{document}